\newcommand{\ig}[2][scale=1.0]{\includegraphics[#1]{#2}}
\newcommand{\etal}{et al.~}
\newcommand{\ie}{i.e.~}
\newcommand{\eg}{e.g.~}
\newcommand{\wloge}{without loss of generality,}
\newcommand{\mc}{\mathcal}
\newcommand{\poly}{\mathrm{poly}}
\newcommand{\nca}{\mathrm{nca}}
\newcommand{\dist}{\mathrm{dist}}
\newcommand{\ds}[1]{\gamma(#1)}
\newcommand{\dsg}[2]{\ds{#1 \mid #2}}
\newcommand{\cds}[1]{\gamma_{c}(#1)}
\newcommand{\cdsa}[2]{\cds{#1\; \dag\; #2}}
\newcommand{\eds}[1]{\rho(#1)}
\newcommand{\mm}[1]{\nu(#1)}
\newcommand{\HT}{\ensuremath{\widehat{H}}}
\DeclareSymbolFont{AMSb}{U}{msb}{m}{n}
\DeclareSymbolFontAlphabet{\mathbb}{AMSb}
\newtheorem{theorem}{Theorem}[section]
\newtheorem{proposition}[theorem]{Proposition}
\newtheorem{lemma}[theorem]{Lemma}
\newtheorem{corollary}[theorem]{Corollary}
\newtheorem{definition}[theorem]{Definition}
\newtheorem{example2}[theorem]{Example}
\newenvironment{proof}{\begin{genproof}}{\end{genproof}}
\newenvironment{genproof}[1][]{\begin{trivlist}\item \textbf{Proof#1:} }{\nolinebreak\end{trivlist}}
\newcounter{cclaim}[theorem]
\renewcommand{\thecclaim}{\thetheorem.\arabic{cclaim}}
\newenvironment{cclaim}{\refstepcounter{cclaim}\par\medskip\noindent\textbf{Claim~\thecclaim}}{}
\newenvironment{cproof}{\par\smallskip\noindent\textbf{Proof:}\ }{\nolinebreak\qquad\nolinebreak$\triangle${}\par\medskip}
\newcommand{\ccase}[1]{\medskip\noindent\textbf{Case #1:} }
\newcommand{\csubcase}[1]{\smallskip\noindent\textbf{Case #1:} }
\newcommand{\joinTime}{O(nm^{3/2})}
\newcommand{\stripTime}{O(n^{2}m^{3/2})}
\newcommand{\problemDS}{\textsc{Dominating Set}}
\newcommand{\problemCDS}{\textsc{Connected Dominating Set}}
\newcommand{\problemCEDS}{\textsc{Connected Edge Dominating Set}}
\newcommand{\problemEDS}{\textsc{Edge Dominating Set}}
\newcommand{\problemCVC}{\textsc{Connected Vertex Cover}}
\newcommand{\problemVC}{\textsc{Vertex Cover}}
\newcommand{\problemWDS}{\textsc{Weighted Dominating Set}}
\newcommand{\problemWCDS}{\textsc{Weighted Connected Dominating Set}}
\newcommand{\problemSAT}{\textsc{3-SAT}}
\newcommand{\problemMCC}{\textsc{Multicolored Clique}}
\newcommand{\Wh}[1][1]{\textnormal{W[#1]}}
\newcommand{\NPh}{\textnormal{NP}}
\newcommand{\defparproblem}[4]{

\bigskip
\noindent\fbox{
	\begin{minipage}{.96\linewidth}
	\textsc{#1}
	
	\smallskip
	\noindent\begin{tabular}{@{}l@{ }l}
	\emph{Input:} & \begin{minipage}[t]{\linewidth-\widthof{Parameter:\ \ \ \ }} #3\end{minipage}\\[.3pt]
	\emph{Parameter:} & \begin{minipage}[t]{\linewidth-\widthof{Parameter:\ \ \ \ }} #2\end{minipage}\\[.3pt]
	\emph{Question:} & \begin{minipage}[t]{\linewidth-\widthof{Parameter:\ \ \ \ }} #4\end{minipage}
	\end{tabular}
	\end{minipage}
}

\medskip
}
\begin{document}
\title{Domination when the Stars Are Out\footnote{An extended abstract of this paper appeared in the proceedings of ICALP 2011, LNCS vol.~6755, Springer, Berlin, 2011, pp.~462--473. This work is supported by the People Programme (Marie Curie Actions) of the European Union's Seventh Framework Programme (FP7/2007-2013) under REA grant agreement number 631163.11, by the Israel Science Foundation (grant no.\ 551145/14), by the European Research Council under ERC Starting Grant 306465 (BeyondWorstCase), and by DFG grant MN 59/1-1.}}
\author{%
Danny Hermelin
\and Matthias Mnich
\and Erik Jan van Leeuwen
\and\\ Gerhard J. Woeginger
}
\date{\today}
\maketitle

\begin{abstract}
  We algorithmize the structural characterization for claw-free graphs by Chudnovsky and Seymour.
  Building on this result, we show that \textsc{Dominating Set} on claw-free graphs is (i) fixed-parameter tractable and (ii) even possesses a polynomial kernel.
  To complement these results, we establish that \textsc{Dominating Set} is unlikely to be fixed-parameter tractable on the slightly larger class of graphs that exclude $K_{1,4}$ as an induced subgraph ($K_{1,4}$-free graphs).
  We show that our algorithmization can also be used to show that the related \textsc{Connected Dominating Set} problem is fixed-parameter tractable on claw-free graphs.
  To complement that result, we show that \textsc{Connected Dominating Set} is unlikely to have a polynomial kernel on claw-free graphs and is unlikely to be fixed-parameter tractable on $K_{1,4}$-free graphs.
  Combined, our results provide a dichotomy for \textsc{Dominating Set} and \textsc{Connected Dominating Set} on $K_{1,\ell}$-free graphs and show that the problem is fixed-parameter tractable if and only if $\ell \leq 3$.
\end{abstract}

\section{Introduction}
\label{sec:intro}
\textsc{(Connected) Dominating Set} is the problem to determine whether a given graph $G$ has a (connected) dominating set of size at most $k$.
A subset $D \subseteq V(G)$ is a \emph{dominating set} if every vertex in $G$ is either contained in $D$ or adjacent to some vertex in $D$, and $D$ is a \emph{connected dominating set} if $D$ is a dominating set and $G[D]$ (the graph induced by $D$) is connected. 
Dominating sets play a prominent role in both algorithmics and combinatorics (see \eg\cite{HaynesHS1998a,HaynesHS1998b}).
Since the \textsc{(Connected) Dominating Set} problem is hard in its decision~\cite{Karp1972,GareyJ1979}, approximation~\cite{Feige1998,GuhaK1998}, and parameterized versions~\cite{DowneyF1992}, research has focused on finding graph classes for which the problem becomes tractable.

In this paper, we focus on the class of claw-free graphs.
A graph is \emph{claw-free} if no vertex has three pairwise nonadjacent neighbors, \ie if it does not contain $K_{1,3}$ as an induced subgraph.
The class of claw-free graphs contains several well-studied graph classes, including line graphs, unit interval graphs, complements of triangle-free graphs, and graphs of several polyhedra and polytopes.
Throughout the years, this graph class attracted much interest, and is by now the subject of hundreds of mathematical research papers; for an overview we refer to the survey by Faudree~\etal\cite{FaudreeFR1997}.

In the context of algorithms, most research on claw-free graphs has focussed on the \textsc{Independent Set} problem.
Building on Edmond's classical polynomial-time algorithm~\cite{Edmonds1965} for \textsc{Independent Set} on line graphs (better known as \textsc{Matching}), Sbihi~\cite{Sbihi1980} and Minty~\cite{Minty1980} (the latter corrected by Nakamura and Tamura~\cite{NakamuraT2001}) already gave polynomial-time algorithms for \textsc{Independent Set} on claw-free graphs over 30 years ago. Recently, significantly faster algorithms have been discovered that follow a similar approach~\cite{FaenzaOS2014,NobiliS2015}.

In contrast to the \textsc{Independent Set} problem, however, \textsc{Dominating Set} is $\mathsf{NP}$-complete on claw-free graphs.
In fact, \textsc{Dominating Set} is $\mathsf{NP}$-complete even on line graphs~\cite{YannakakisG1980}.
Nevertheless, Fernau~\cite{Fernau2006} recently showed that \textsc{Dominating Set} on line graphs (also known as \textsc{Edge Dominating Set}) has an $f(k) \cdot n^{O(1)}$ time algorithm, where $k$ is the size of the solution, meaning that this problem is \emph{fixed-parameter tractable} (see~\eg\cite{cygan-book,DowneyF1999,DowneyFellows2013,FlumGrohe2006}).
Moreover, Fernau~\cite{Fernau2006} showed that any instance of \textsc{Dominating Set} on line graphs can be reduced in polynomial time to have $O(k^{2})$ vertices, which implies that the problem admits a \emph{polynomial kernel} (see~\eg\cite{cygan-book,DowneyF1999,DowneyFellows2013,FlumGrohe2006}).
Both results were recently slightly improved~\cite{XiaoKP2013,IwaideN2015}.
It has been left an open question, however, whether such  algorithms also exist for \textsc{Dominating Set} on claw-free graphs.

In a wider picture, claw-free graphs are a member of the more general family of graphs that exclude $K_{1,\ell}$ as an induced subgraph for $\ell \in \mathbb{N}$, \ie \emph{$\ell$-claw-free graphs}.
These graphs generalize many important classes of geometric intersection graphs; for example, unit square graphs are $K_{1,5}$-free, and unit disk graphs are $K_{1,6}$-free.
Marx showed that \textsc{Dominating Set} is $\mathsf{W}[1]$-hard on unit square graphs when parameterized by the solution size $k$, implying it is $\mathsf{W}[1]$-hard on $K_{1,5}$-free graphs~\cite{Marx2006}, which makes it unlikely that the problem is fixed-parameter tractable (see~\eg\cite{cygan-book,DowneyF1999,DowneyFellows2013,FlumGrohe2006}).
However, the problem becomes trivial on $K_{1,2}$-free graphs, since these graphs are just disjoint unions of cliques.
Hence, the computational and parameterized complexity of \textsc{Dominating Set} has been left open on $K_{1,3}$-free (claw-free) and $K_{1,4}$-free graphs.

In this paper, we resolve the question whether \textsc{Dominating Set} is fixed-parameter tractable on $K_{1,3}$-free and/or $K_{1,4}$-free graphs.
We answer this same question also for the related \textsc{Connected Dominating Set} problem.
Additionally, we determine the $\ell$-claw-free graphs on which these problems admit a polynomial kernel.
These results combined completely settle the parameterized complexity of both problems on $\ell$-claw-free graphs. 

\paragraph{Our Results.}
In the first part of the paper, we present our main contribution: an algorithmic version of a recent, highly nontrivial structural decomposition theorem for claw-free graphs by Chudnovsky and Seymour.
This decomposition shows that every claw-free graph can be built by applying certain gluing operations to certain atomic structures.
The proof that such a decomposition exists is contained in a sequence of seven papers by Chudnovsky and Seymour~\cite{ChudnovskyS2007-1,ChudnovskyS2008-2,ChudnovskyS2008-3,ChudnovskyS2008-4,ChudnovskyS2008-5,ChudnovskyS2010-6,ChudnovskyS2012-7}; an accessible summary of the proof can be found in the announcement of their results~\cite{ChudnovskyS2005}.
The original proof of their decomposition theorem, however, does not directly imply an algorithm to find the decomposition. 

In order to obtain our algorithmic decomposition theorem for claw-free graphs, we first develop polynomial-time algorithms that undo the aforementioned gluing operations (Sect.~\ref{sec:joins}).
These algorithms could be of independent interest, as the structures that these algorithms find are not specific to claw-free graphs.
Then, we develop polynomial-time algorithms that recognize several of the atomic structures that make up claw-free graphs (Sect.~\ref{sec:recog}).
Finally, we make several changes to the proof of the original decomposition theorem by Chudnovsky and Seymour that simplify it and make it easier to algorithmize (Sect.~\ref{sec:structure}).
Finally, we put all these pieces together to give an algorithmic decomposition theorem for claw-free graphs that runs in $\stripTime$ time (Sect.~\ref{sec:decomp}).

The structural result for claw-free graphs that is implied by our algorithmic decomposition theorem is inspired by a claim of Chudnovsky and Seymour~\cite[Claim~3.1]{ChudnovskyS2005} in the announcement of their structural result for claw-free graphs. However, as far as we are aware, this claim is not explicitly proved in their final work~\cite{ChudnovskyS2007-1,ChudnovskyS2008-2,ChudnovskyS2008-3,ChudnovskyS2008-4,ChudnovskyS2008-5,ChudnovskyS2010-6,ChudnovskyS2012-7}.
Our algorithmic decomposition theorem can be seen as a variant or an interpretation of~\cite[Claim~3.1]{ChudnovskyS2005} with the hindsight of knowing the final work of Chudnovsky and Seymour, as well as an explicit proof and an algorithm to actually find the decomposition.

\medskip
In the second part of the paper, we apply our algorithmic decomposition theorem for claw-free graphs to establish the following:

\begin{itemize}
  \item \textsc{Dominating Set} on claw-free graphs is fixed-parameter tractable when parameterized by the solution size.
  To be precise, we show that we can decide the existence of a dominating set of size at most $k$ in $9^{k} \cdot O(n^{5})$ time when the graph is claw-free (Sect.~\ref{sec:ds}).
  \item \textsc{Connected Dominating Set} on claw-free graphs is fixed-parameter tractable when parameterized by the solution size. 
  To be precise, we show that we can decide the existence of a connected dominating set of size at most $k$ in $36^{k} \cdot n^{O(1)}$ time when the graph is claw-free (Sect.~\ref{sec:cds}).
  This resolves an open question by Misra~\etal\cite{MisraPRS2010}.
  \item \textsc{Dominating Set} on claw-free graphs has a polynomial kernel when parameterized by the solution size.
  To be precise, we show that given a claw-free graph $G$ and an integer $k$, we can output a graph $G'$ with $O(k^{3})$ vertices and an integer $k'$ such that $G$ has a dominating set of size $k$ if and only if $G'$ has a dominating set of size $k'$; the presented algorithm runs in $O(n^{5})$ time (Sect.~\ref{sec:kernel}).
\end{itemize}

In the third part of the paper (Sect.~\ref{sec:hard}), we complement the above results and show that:

\begin{itemize}
  \item \textsc{Dominating Set} and \textsc{Connected Dominating Set} are $\mathsf{W}[1]$-hard on $K_{1,4}$-free graphs. 
  \item \textsc{Connected Dominating Set} has no polynomial kernel on claw-free graphs (even on line graphs), unless the polynomial hierarchy collapses to the third level. 
  \item \textsc{Dominating Set} and \textsc{Connected Dominating Set} on claw-free graphs (even on line graphs) cannot have an algorithm that runs in $2^{o(k)} n^{O(1)}$ time, where $k$ is the size of the solution, unless the Exponential Time Hypothesis fails.
  \item The weighted variants of \textsc{Dominating Set} and \textsc{Connected Dominating Set} are $\mathsf{W}[1]$-hard on claw-free graphs (even on cobipartite graphs).
\end{itemize}
Combined, this sequence of results completely determines the computational and parameterized complexity of \textsc{Dominating Set} and \textsc{Connected Dominating Set} in $K_{1,\ell}$-free graphs for all $\ell$.

\paragraph{Further Applications and Outlook.}
Since the publication of the extended abstract of our paper, several results have appeared that apply our algorithmic decomposition for claw-free graphs to other problems.
Martin~\etal\cite{MartinPvL2018,MartinPvL2018-arxiv} showed that the \textsc{Disconnected Cut} problem (find a vertex cut that itself induces at least two connected components) is polynomial-time solvable on claw-free graphs.
Golovach~\etal\cite{GolovachPvL2015} showed that the \textsc{Induced Disjoint Paths} problem (find disjoint and `nonadjacent' paths between each of $k$ given pairs of terminal vertices) is fixed-parameter tractable on claw-free graphs.
Hermelin~\etal\cite{HermelinMvL2014} showed that a generalization of \textsc{Independent Set} called \textsc{Induced $H$-Matching} (find $k$ independent copies of a fixed graph $H$ in a given graph) is fixed-parameter tractable and has a polynomial kernel on claw-free graphs.
Both results rely heavily on the algorithmic decomposition theorem for claw-free graphs that we develop in this paper. 
In fact, the latter paper presents a stronger version of the algorithmic decomposition theorem than we present here; that version, however, would not be possible without the structural results and fundamental algorithms that we develop in Sect.~\ref{sec:joins} and~\ref{sec:structure} of this paper.

The basic idea behind all three application of our technique (\textsc{(Connected) Dominating Set} in this paper, and \textsc{Induced Disjoint Paths} and \textsc{Induced $H$-Matching} in follow-up work~\cite{GolovachPvL2015,HermelinMvL2014}) builds upon the constructs and notions that we develop in this paper.
However, significant technical effort is needed to tailor the decomposition theorem and the way we apply it to the problem at hand. 
We believe, however, that using the algorithmic techniques developed in this paper there is a strong potential to develop algorithms for many other problems on claw-free graphs.

\paragraph{Related Work.}
Since the announcement of the Chudnovsky-Seymour decomposition theorem for claw-free graphs, several results appeared that use it to attack problems on the class of claw-free graphs, and particularly on a subclass called quasi-line graphs (see \eg\cite{ChudnovskyO2007,ChudnovskyF2010,GalluccioGV2010}).
However, most of these results are structural and give no algorithms to find the decomposition.

Recently and independent of our work, two papers appeared that find an algorithmic decomposition theorem for claw-free graphs. First, the decomposition theorem obtained by King and Reed~\cite{King2009,KingR2015} is based on the work of Chudnovsky and Seymour and has subtle differences when compared to ours, but his algorithmic methods are completely different. Intuitively, King and Reed find the individual parts (strips) of the decomposition by considering the local structure around $5$-wheels, while we find the structures (joins) that hold the different parts together and only then recognize the parts. Since the joins are central to the decomposition by Chudnovsky and Seymour, we think that our approach is more flexible. Moreover, the approach by King and Reed is heavily geared towards an application to the coloring problem and therefore does not need to identify several parts in their full generality.

Second, the decomposition theorem obtained by Faenza~\etal\cite{FaenzaOS2014} is not based on the proof of the decomposition theorem by Chudnovsky and Seymour, and thus is substantially different compared to ours.
Faenza~\etal use their decomposition to obtain a faster polynomial-time algorithm for \textsc{Weighted Independent Set} on claw-free graphs (recently improved by Nobili and Sassano~\cite{NobiliS2015}).
Although their decomposition can potentially be used to show that \textsc{Dominating Set} on claw-free graphs is fixed-parameter tractable~\cite{Stauffer2011}, it is not clear whether a polynomial kernel would follow as well.
Conversely, the ideas behind our work can be adapted to give a polynomial-time algorithm for \textsc{Weighted Independent Set} on claw-free graphs, albeit with a worse run time than the algorithms by Faenza~\etal and Nobili and Sassano.

A recent paper by Cygan~\etal\cite{CyganPPPW2012} also proves that \textsc{Dominating Set} is fixed-parameter tractable on claw-free graphs.
Their algorithm does not use the decomposition theorem by Chudnovsky and Seymour, and runs in time $2^{O(k^{2})} \cdot n^{O(1)}$, compared to our $9^{k} \cdot n^{O(1)}$-time algorithm.
Moreover, their methods do not extend to a polynomial kernel.
Cygan~\etal\cite{CyganPPPW2012} also show that \textsc{Dominating Set} and \textsc{Connected Dominating Set} are $\mathsf{W}[2]$-complete on $K_{1,4}$-free graphs, compared to our (slightly weaker) $\mathsf{W}[1]$-hardness results.

From the perspective of approximation algorithms, it is known that \textsc{Dominating Set} has a polynomial-time $(\ell-1)$-approximation algorithm on $\ell$-claw-free graphs, and \textsc{Connected Dominating Set} has a polynomial-time $2(\ell-1)$-approximation algorithm~\cite{MaratheBHRR1995}.
These approximation factors, however, are not known to be tight~\cite{ChlebikC2008}.

Finally, it is important to note that in our paper we exclude $K_{1,\ell}$ as an \emph{induced} subgraph, and not as a subgraph. In the latter case, \textsc{Dominating Set} is already known to be fixed-parameter tractable~\cite{PhilipRS2009}.

\section*{Part I -- Algorithmic Decomposition Theorem for Claw-Free Graphs}

\section{Definitions}
The definitions given in this section are the same as those by Chudnovsky and Seymour~\cite{ChudnovskyS2008-5}.
Although the reader could find them there, in order to be self-contained, we repeat those definitions that we need for our algorithmic structure theorem.
Moreover, we sometimes need additional properties that can be somewhat hidden in Chudnovsky and Seymour~\cite{ChudnovskyS2008-5}, and it will be convenient to highlight them here explicitly.
We also highlight the most important definitions for our algorithmic structure theorem in explicit definitions.

\subsection{Trigraphs and Basic Definitions}
In this paper, we work with a generalization of the notion of a graph, called a trigraph.
Roughly speaking, a trigraph is a graph with a distinguished subset of edges that are called \emph{semi-edges} and that form a matching in the graph.
Intuitively, semi-edges capture the idea of `fuzzy' edges in the graph that are both there and not there at the same time.
This intuition will become more clear later, when we consider the notion of a thickening. 
\begin{definition}[Trigraph]
  A \emph{trigraph} $G$ has a finite set of vertices $V(G)$ and an adjacency function $\phi_{G}: V(G) \times V(G) \rightarrow \{-1, 0, 1\}$, such that
  \begin{itemize}
    \item $\phi_{G}(v,v) = 0$ for all $v \in V(G)$,
    \item $\phi_{G}(u,v) = \phi_{G}(v,u)$ for all $u,v \in V(G)$,
    \item at most one of $\phi_{G}(u,v), \phi_{G}(u,w) = 0$ for all distinct $u,v,w \in V(G)$.
  \end{itemize}
\end{definition}
The pairs $u,v \in V(G)$ with $u\not= v$ for which $\phi_{G}(u,v) \in \{1,-1\}$ are the regular edges and nonedges, respectively, whereas those for which $\phi_{G}(u,v) = 0$ constitute the semi-edges.
Observe that, by definition, the set of semi-edges is indeed a matching.
We also note that a trigraph without semi-edges is just a normal graph.
Conversely, given a graph $G'$, it can be \emph{regarded as a trigraph} $G$ by setting $V(G) = V(G')$ and $\phi_{G}(u,v) = 1$ if $u,v$ are adjacent in $G'$ and $\phi_{G}(u,v)=-1$ otherwise.

At first sight, it would appear that trigraphs only form a distraction and complicate our graph-theoretic framework.
However, it turns out that trigraphs make it much easier to describe claw-free graphs, and in particular, the decomposition theorems for them.

We now extend some classical notions of graph theory (such as adjacency) to trigraphs.
In a trigraph $G$ with adjacency function $\phi_{G}$, we say that distinct $u,v \in V(G)$ are \emph{strongly adjacent} if $\phi_{G}(u,v) = 1$, \emph{semiadjacent} if $\phi_{G}(u,v) = 0$, and \emph{strongly antiadjacent} if $\phi_{G}(u,v) = -1$.
We then say that $u,v$ are \emph{adjacent} if $u,v$ are either strongly adjacent or semiadjacent, and $u,v$ are \emph{antiadjacent} if $u,v$ are either strongly antiadjacent or semiadjacent.
Moreover, a vertex $u$ is a \emph{neighbor} of a vertex~$v$ if $u,v$ are adjacent and $u,v$ are \emph{strong neighbors} if $u$ and $v$ are strongly adjacent.
The notions of \emph{antineighbor} and \emph{strong antineighbor} are similarly defined.
We denote by $N(v)$ the set of neighbors of a vertex $v$ and define $N[v] = N(v) \cup \{v\}$ as the closed neighborhood of $v$. 
Similarly, we define $N(X) = \{v \mid v \in N(x) \setminus X, x \in X\}$ and $N[X] = N(X) \cup X$.

Given disjoint sets $A,B \subseteq V(G)$, we say that $A$ is \emph{complete to $B$} or \emph{$B$-complete} if every vertex of $A$ is adjacent to every vertex of $B$.
We say that $A$ is \emph{strongly complete to $B$} or \emph{strongly $B$-complete} if every vertex of $A$ is strongly adjacent to every vertex of $B$.
If we say that $a \in V(G)$ is (strongly) complete to $B$, we mean that $\{a\}$ is (strongly) complete to $B$.
The notions of \emph{anticomplete} and \emph{strongly anticomplete} are defined similarly.

A set $C \subseteq V(G)$ is a \emph{clique} if every pair of vertices of $C$ is adjacent, and a \emph{strong clique} if every pair of vertices of $C$ is strongly adjacent.
A vertex $v$ of a trigraph is \emph{simplicial} if $N[v]$ is a clique, and \emph{strongly simplicial} if $N[v]$ is a strong clique. 

\begin{definition}[Stable set, $\alpha(G)$]
  A set $I \subseteq V(G)$ is \emph{stable} or \emph{independent} if every pair of vertices of~$I$ is antiadjacent, and \emph{strongly stable} or \emph{strongly independent} if every pair of vertices of $I$ is strongly antiadjacent.
  Let $\alpha(G)$ denote the size of a largest subset of $V(G)$ that is stable.
  Sometimes, $\alpha(G)$ will be called the \emph{stability number} or \emph{independence number} of $G$.
\end{definition}

The following notion of a thickening is crucial to turn semi-edges into `normal edges'.
See also Fig.~\ref{fig:thick}.

\begin{definition}[Thickening]
  A trigraph $G$ is a \emph{thickening} of a trigraph $G'$ if there is a set $\mathcal{W} = \{W_{v} \subseteq V(G) \mid v \in V(G')\}$ such that each set $W_{v}$ is nonempty, such that
  \begin{itemize}
    \item $W_{u} \cap W_{v} = \emptyset$ for all distinct $u,v \in V(G')$ and $\bigcup_{v \in V(G')} W_{v} = V(G)$,
    \item $W_{v}$ is a strong clique in $G$ for each $v \in V(G')$,
    \item if $u,v$ are strongly adjacent in $G'$, then $W_{u}$ is strongly complete to $W_{v}$ in $G$,
    \item if $u,v$ are strongly antiadjacent in $G'$, then $W_{u}$ is strongly anticomplete to $W_{v}$ in $G$,
    \item if $u,v$ are semiadjacent in $G'$, then $W_{u}$ is neither strongly complete nor strongly anticomplete to $W_{v}$ in $G$.
  \end{itemize}
  We sometimes talk about the thickening $\mathcal{W}$ of $G'$ to $G$.
\end{definition}
Note that a trigraph is always a thickening of itself.
Also note that if $G$ is a thickening of $G'$ and $G'$ is a thickening of $G''$, then $G$ is also a thickening of $G''$.
Finally, note that if a graph $G$ is a thickening of trigraph $G'$ and $u,v$ are semiadjacent in $G'$, then $|W_{u}|+|W_{v}| \geq 3$.

\tikzset{tn/.style = {circle, fill=green},tt/.style = {circle, fill=black, inner sep=0.8mm}}

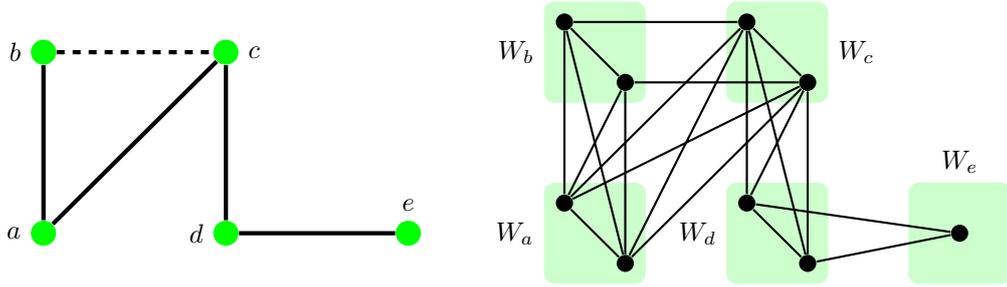
\begin{figure}[t]
  \centering
  \begin{tikzpicture}[scale=0.8]
    \node[tn,label=left:{$a$}] (v1) at (0,0) {};
    \node[tn,label=left:{$b$}] (v2) at (0,3) {};
    \node[tn,label=left:{$d$}] (v3) at (3,0) {};
    \node[tn,label=right:{$c$}] (v4) at (3,3) {};
    \node[tn,label=above:{$e$}] (v5) at (6,0) {};
    \node at (7,0) {};
    \node at (-1,-1) {};
    \draw [-, ultra thick] (v2) -- (v1) -- (v4) -- (v3) -- (v5);
    \draw [-, ultra thick, dashed] (v2) -- (v4);
  \end{tikzpicture}
  \begin{tikzpicture}[scale=0.8]
    \node at (-1,-1) {};
    \node[tt] (v1a) at (-0.5,0.5) {};
    \node[tt] (v1b) at (0.5,-0.5) {};
    \node[tt] (v2a) at (-0.5,3.5) {};
    \node[tt] (v2b) at (0.5,2.5) {};
    \node[tt] (v3a) at (2.5,0.5) {};
    \node[tt] (v3b) at (3.5,-0.5) {};
    \node[tt] (v4a) at (2.5,3.5) {};
    \node[tt] (v4b) at (3.5,2.5) {};
    \node[] (sv5a) at (5.5,0.5) {};
    \node[] (sv5b) at (6.5,-0.5) {};
    \node[tt] (v5) at (6,0) {};

    \draw [-, thick] (v1a) -- (v1b) -- (v2b) -- (v2a) -- (v1a) -- (v2b);
    \draw [-, thick] (v1b) -- (v2a);
    \draw [-, thick] (v4a) -- (v1b) -- (v4b) -- (v4a) -- (v1a) -- (v4b);
    \draw [-, thick] (v2a) -- (v4a);
    \draw [-, thick] (v2b) -- (v4b);
    \draw [-, thick] (v3a) -- (v3b) -- (v4b) -- (v3a) -- (v4a) -- (v3b);
    \draw [-, thick] (v3a) -- (v5) -- (v3b);

    \begin{pgfonlayer}{background}
      \node [rectangle, rounded corners, fill=green!20, fit=(v1a) (v1b), inner sep=1.5mm,label=left:{$W_a$}] {};
      \node [rectangle, rounded corners, fill=green!20, fit=(v2a) (v2b), inner sep=1.5mm,label=left:{$W_b$}] {};
      \node [rectangle, rounded corners, fill=green!20, fit=(v3a) (v3b), inner sep=1.5mm,label=left:{$W_d$}] {};
      \node [rectangle, rounded corners, fill=green!20, fit=(v4a) (v4b), inner sep=1.5mm,label=right:{$W_c$}] {};
      \node [rectangle, rounded corners, fill=green!20, fit=(sv5a) (sv5b), inner sep=1.5mm,label=above:{$W_e$}] {};
    \end{pgfonlayer}
  \end{tikzpicture}
  \caption{The left panel shows a trigraph $G'$.
    Here the dotted line represents a semi-edge, the thick lines represent edges, and no line between two vertices represents a nonedge.
    The right panel shows a graph $G$ that is a thickening $\mc{W}$ of $G'$.
    Observe that since $b$ is strongly adjacent to $a$, all vertices of $W_b$ are adjacent to all vertices of $W_a$.
    Moreover, since $b$ is strongly antiadjacent to $d$, all vertices of $W_b$ are antiadjacent to all vertices of $W_d$.
    Finally, since $b$ is semiadjacent to $c$, $W_b$ is neither complete nor anticomplete to $W_c$.
    Note that $(W_b,W_c)$ is a W-join in $G$; in fact, it is a proper W-join.
    The sets $W_a,W_d,W_e$ each form a homogeneous clique or twin set in~$G$.}
\label{fig:thick}
\end{figure}

For any $X \subseteq V(G)$, $G[X]$ is the trigraph \emph{induced by} $X$, which is the trigraph with vertex set $X$ and adjacency determined by the restriction of $\phi_{G}$ to $X \times X$.
We say that $H$ is an \emph{induced subtrigraph} of~$G$ if $H$ is isomorphic to $G[X]$ for some $X \subseteq V(G)$. We define $G \setminus X = G[V(G)\setminus X]$.
Isomorphism between trigraphs is defined as expected.

\begin{definition}[claw, claw-free]
  A \emph{claw} is a trigraph with four vertices $a,b,c,d$, such that $\{b,c,d\}$ is stable and complete to $a$.
  Then $a$ is the \emph{center} of the claw.
  If no induced subtrigraph of a trigraph~$G$ is isomorphic to a claw, then $G$ is \emph{claw-free}.
\end{definition}

\subsection{Twins and Joins}
\label{sec:def:joins}
In this subsection, we describe the basic graph structures used in the decomposition theorem of claw-free (tri)graphs.
Later, in Sect.~\ref{sec:joins}, we will describe algorithms to actually find these structures in a graph.

Throughout this section, let $G$ be a trigraph.
A strong clique $X$ of $G$ is \emph{homogeneous} if every vertex in $G\setminus X$ is either strongly complete or strongly anticomplete to $X$.
This is equivalent to requiring that for $x,x' \in X$, $x$ and $x'$ have the same closed neighborhoods and all their neighbors are strong neighbors. 

\begin{definition}[twins]
  A homogeneous strong clique is sometimes also called a \emph{twin set}.
  Then $G$ admits \emph{twins} if $G$ has a homogeneous strong clique of size $2$.
\end{definition}
See also Fig.~\ref{fig:thick}.

A pair of strong cliques $(A,B)$ is \emph{homogeneous} if every vertex $v \in V(G)\setminus(A\cup B)$ is either strongly complete or strongly anticomplete to $A$, and is either strongly complete or strongly anticomplete to~$B$.
In other words, the set $V(G) \setminus (A \cup B)$ can be partitioned into four sets: those vertices strongly adjacent to $A$ and strongly antiadjacent to $B$, those strongly adjacent to $B$ and strongly antiadjacent to $A$, those strongly adjacent to both $A$ and $B$, and those strongly antiadjacent to both $A$ and $B$.
Observe that if $G$ is a graph and $V(G) \setminus (A \cup B) = \emptyset$, then $G$ is a cobipartite graph (\ie the complement of a bipartite graph).

\begin{definition}[(proper) W-join]
  A homogeneous pair of cliques $(A,B)$ is a \emph{W-join} if $A$ is neither strongly complete nor strongly anticomplete to $B$, and $A$ or $B$ has size at least $2$.
  A W-join is \emph{proper} if no member of $A$ is strongly complete or strongly anticomplete to $B$ and no member of $B$ is strongly complete or strongly anticomplete to $A$.
\end{definition}

It is important to observe that if a trigraph $G$ is a thickening of a trigraph $G'$ with $\{W_{v} \mid v \in V(G')\}$ and $|W_{v}| > 1$ for some $v \in V(G')$, then $W_{v}$ is a twin set if $v$ is not semiadjacent to any vertex of $G'$, and $(W_{u},W_{v})$ is a W-join if $v$ is semiadjacent to a vertex $u \in V(G')$.
The latter observation follows from the fact that the semi-edges form a matching in $G'$.
See also Fig.~\ref{fig:thick}.

A partition $(V_{1}, V_{2})$ of $V(G)$ is a \emph{$0$-join} if $V_{1}$ is strongly anticomplete to $V_{2}$ and $V_{1},V_{2} \not= \emptyset$.
If $G$ does not admit a $0$-join, then $G$ is called \emph{connected}.
Observe that if $G$ is a graph rather than a trigraph, then this corresponds to the standard notion of connectedness.

A partition $(V_{1}, V_{2})$ of $V(G)$ is a \emph{$1$-join} if there are sets $A_{1} \subseteq V_{1}$, $A_{2} \subseteq V_{2}$ such that
\begin{itemize}
  \item $A_{1} \cup A_{2}$ is a strong clique,
  \item $V_{1}\setminus A_{1}$ is strongly anticomplete to $V_{2}$, and $V_{2}\setminus A_{2}$ is strongly anticomplete to $V_{1}$,
  \item $A_{i}, V_{i}\setminus A_{i} \not= \emptyset$ for $i=1,2$.
\end{itemize}

\begin{figure}[t]
  \centering
  \begin{tikzpicture}[scale=0.95, every node/.style={transform shape}]
\node[tt] (a11) at (1,1.5) {};
\node[] (a12) at (1,2.0) {};
\node[tt] (a13) at (1,2.5) {};
\node (fa1) at (0,4.5) {};
\node (fa2) at (1,4.5) {};
\node (fa3) at (0,0) {};
\node (fa4) at (1,0) {};
\draw [-, thick] (a13) -- (fa1);
\draw [-, thick] (a13) -- (fa2);
\draw [-, thick] (a11) -- (fa3);
\draw [-, thick] (a11) -- (fa4);
\draw [-, thick] (a11) -- (a13);

\node[tt] (b11) at (2.5,1.5) {};
\node[] (b12) at (2.5,2.0) {};
\node[tt] (b13) at (2.5,2.5) {};
\node (fb1) at (2.5,4.5) {};
\node (fb2) at (3.5,4.5) {};
\node (fb3) at (2.5,0) {};
\node (fb4) at (3.5,0) {};
\draw [-, thick] (b13) -- (fb1);
\draw [-, thick] (b13) -- (fb2);
\draw [-, thick] (b11) -- (fb3);
\draw [-, thick] (b11) -- (b13);

\draw [-, thick] (b11) -- (a11) -- (b13) -- (a13) -- (b11);

\begin{pgfonlayer}{background}
\node [rectangle, rounded corners, fill=green!20,fit=(fa1) (fa2) (fa3) (fa4), inner sep=1.5mm] {};
\node [rectangle, rounded corners, fill=green,fit=(a11) (a12) (a13), inner sep=1.5mm] {};
\node [rectangle, rounded corners, fill=red!20,fit=(fb1) (fb2) (fb3) (fb4), inner sep=1.5mm] {};
\node [rectangle, rounded corners, fill=red,fit=(b11) (b12) (b13), inner sep=1.5mm] {};
\end{pgfonlayer}
\node at (4,0) {};
\end{tikzpicture}
\begin{tikzpicture}[scale=0.95, every node/.style={transform shape}]
\node[tt] (a11) at (1,0.5) {};
\node[tt] (a13) at (1,1.5) {};
\node (fa1) at (0,4.5) {};
\node (fa2) at (1,4.5) {};
\node (fa3) at (0,0) {};
\node (fa4) at (1,0) {};
\node (fa5) at (0,1.5) {};
\draw [-, thick] (a11) -- (fa3);
\draw [-, thick] (a11) -- (fa4);
\draw [-, thick] (a13) -- (fa5);
\draw [-, thick] (a11) -- (a13);

\node[tt] (b11) at (2.5,0.5) {};
\node[tt] (b13) at (2.5,1.5) {};
\node (fb1) at (2.5,4.5) {};
\node (fb2) at (3.5,4.5) {};
\node (fb3) at (2.5,0) {};
\node (fb4) at (3.5,0) {};
\node (fb5) at (3.5,1.5) {};
\draw [-, thick] (b11) -- (fb3);
\draw [-, thick] (b11) -- (fb4);
\draw [-, thick] (b13) -- (fb5);
\draw [-, thick] (b11) -- (b13);

\draw [-, thick] (b11) -- (a11) -- (b13) -- (a13) -- (b11);

\node[tt] (a21) at (1,3.0) {};
\node[tt] (a23) at (1,4.0) {};
\node (fa25) at (0,3.0) {};
\draw [-, thick] (a23) -- (fa1);
\draw [-, thick] (a23) -- (fa2);
\draw [-, thick] (a21) -- (fa25);
\draw [-, thick] (a21) -- (a23);

\node[tt] (b21) at (2.5,3.0) {};
\node[tt] (b23) at (2.5,4.0) {};
\node (fb25) at (3.5,3.0) {};
\draw [-, thick] (b23) -- (fb1);
\draw [-, thick] (b23) -- (fb2);
\draw [-, thick] (b21) -- (fb25);
\draw [-, thick] (b21) -- (b23);

\draw [-, thick] (b21) -- (a21) -- (b23) -- (a23) -- (b21);

\begin{pgfonlayer}{background}
\node [rectangle, rounded corners, fill=green!20,fit=(fa1) (fa2) (fa3) (fa4), inner sep=1.5mm] {};
\node [rectangle, rounded corners, fill=green,fit=(a11) (a13), inner sep=1.5mm] {};
\node [rectangle, rounded corners, fill=red!20,fit=(fb1) (fb2) (fb3) (fb4), inner sep=1.5mm] {};
\node [rectangle, rounded corners, fill=red,fit=(b11) (b13), inner sep=1.5mm] {};
\node [rectangle, rounded corners, fill=green,fit=(a21) (a23), inner sep=1.5mm] {};
\node [rectangle, rounded corners, fill=red,fit=(b21) (b23), inner sep=1.5mm] {};
\end{pgfonlayer}
\node at (4,0) {};
\end{tikzpicture}
\begin{tikzpicture}[scale=0.95, every node/.style={transform shape}]
\node[tt] (a11) at (1,0.5) {};
\node[tt] (a13) at (1,1.5) {};
\node (fa1) at (0,4.5) {};
\node (fa2) at (1,4.5) {};
\node (fa3) at (0,0) {};
\node (fa4) at (1,0) {};
\node (fa5) at (0,1.5) {};
\draw [-, thick] (a11) -- (fa3);
\draw [-, thick] (a11) -- (fa4);
\draw [-, thick] (a13) -- (fa5);
\draw [-, thick] (a11) -- (a13);

\node[tt] (b11) at (3.5,0.5) {};
\node[tt] (b13) at (3.5,1.5) {};
\node (fb1) at (3.5,4.5) {};
\node (fb2) at (4.5,4.5) {};
\node (fb3) at (3.5,0) {};
\node (fb4) at (4.5,0) {};
\node (fb5) at (4.5,1.5) {};
\draw [-, thick] (b11) -- (fb3);
\draw [-, thick] (b11) -- (fb4);
\draw [-, thick] (b13) -- (fb5);
\draw [-, thick] (b11) -- (b13);

\draw [-, thick] (b11) -- (a11) -- (b13) -- (a13) -- (b11);

\node[tt] (a21) at (1,3.0) {};
\node[tt] (a23) at (1,4.0) {};
\node (fa25) at (0,3.0) {};
\draw [-, thick] (a23) -- (fa1);
\draw [-, thick] (a23) -- (fa2);
\draw [-, thick] (a21) -- (fa25);
\draw [-, thick] (a21) -- (a23);

\node[tt] (b21) at (3.5,3.0) {};
\node[tt] (b23) at (3.5,4.0) {};
\node (fb25) at (4.5,3.0) {};
\draw [-, thick] (b23) -- (fb1);
\draw [-, thick] (b23) -- (fb2);
\draw [-, thick] (b21) -- (fb25);
\draw [-, thick] (b21) -- (b23);

\draw [-, thick] (b21) -- (a21) -- (b23) -- (a23) -- (b21);

\node[tt] (v01) at (2.25,2.5) {};
\node[tt] (v02) at (2.25,2.0) {};

\draw [-, thick] (b21) -- (v01) -- (b23) -- (v02) -- (b21);
\draw [-, thick] (a21) -- (v01) -- (a23) -- (v02) -- (a21);
\draw [-, thick] (b11) -- (v01) -- (b13) -- (v02) -- (b11);
\draw [-, thick] (a11) -- (v01) -- (a13) -- (v02) -- (a11);
\draw [-, thick] (v01) -- (v02);

\begin{pgfonlayer}{background}
\node [rectangle, rounded corners, fill=green!20,fit=(fa1) (fa2) (fa3) (fa4), inner sep=1.5mm] {};
\node [rectangle, rounded corners, fill=green,fit=(a11) (a13), inner sep=1.5mm] {};
\node [rectangle, rounded corners, fill=red!20,fit=(fb1) (fb2) (fb3) (fb4), inner sep=1.5mm] {};
\node [rectangle, rounded corners, fill=red,fit=(b11) (b13), inner sep=1.5mm] {};
\node [rectangle, rounded corners, fill=green,fit=(a21) (a23), inner sep=1.5mm] {};
\node [rectangle, rounded corners, fill=red,fit=(b21) (b23), inner sep=1.5mm] {};
\node [circle, fill=purple,fit=(v01) (v02), inner sep=1.5mm] {};
\end{pgfonlayer}
\end{tikzpicture}
  \caption{The left panel shows a graph that admits a (pseudo-)$1$-join.
    The sets $A_1,A_2$ that define the (pseudo-)$1$-join are highlighted in dark green and dark red respectively.
    The sets $V_1,V_2$ are green and red respectively.
    The middle panel shows a graph that admits a (pseudo-)$2$-join.
    The sets $A_1,A_2,B_1,B_2$ that define the \mbox{(pseudo-)}$2$-join are highlighted in dark green and dark red respectively. 
    The sets $V_1,V_2$ are green and red respectively.
    The right panel shows a graph that admits a generalized $2$-join and a pseudo-$2$-join.
    The sets $A_1,A_2,B_1,B_2$ that define the generalized $2$-join and pseudo-$2$-join are highlighted in dark green and dark red respectively.
    The sets $V_0,V_1,V_2$ are purple, green, and red respectively.}
\label{fig:joins}
\end{figure}
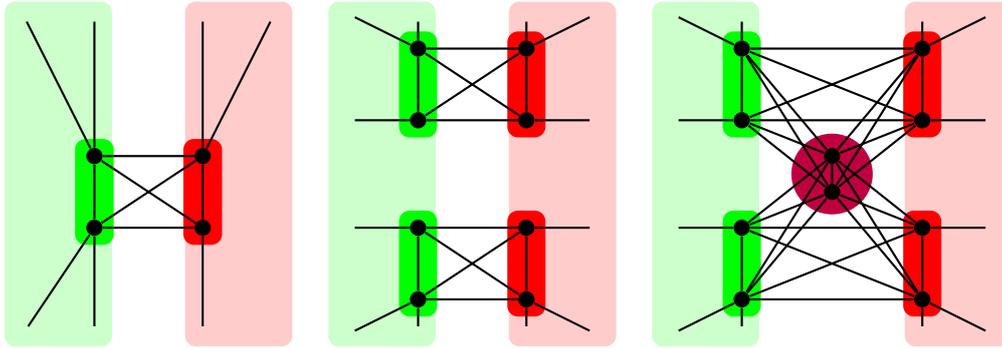

A partition $(V_{1}, V_{2})$ of $V(G)$ is a \emph{pseudo-$1$-join} if there are sets $A_{1} \subseteq V_{1}$, $A_{2} \subseteq V_{2}$ such that
\begin{itemize}
  \item $A_{1} \cup A_{2}$ is a strong clique,
  \item $V_{1}\setminus A_{1}$ is strongly anticomplete to $V_{2}$, and $V_{2}\setminus A_{2}$ is strongly anticomplete to $V_{1}$,
  \item neither $V_{1}$ nor $V_{2}$ is a strong stable set.
\end{itemize}
See the left panel of Fig.~\ref{fig:joins} for an example.

A trigraph admitting a $1$-join and no $0$-join admits a pseudo-$1$-join.
We will denote (pseudo-) 1-joins either by the partition $(V_{1},V_{2})$ of the vertices, or the `connecting subsets' $(A_{1},A_{2})$.
Note that $V_{1},V_{2}$ can be easily determined if we just know $A_{1}, A_{2}$, and vice versa.

A partition $(V_{0}, V_{1}, V_{2})$ of $V(G)$ forms a \emph{generalized $2$-join} if for $i=1,2$ there are disjoint sets $A_{i}, B_{i} \subseteq V_{i}$ such that
\begin{itemize}
  \item $V_{0}$, $V_{1}$, and $V_{2}$ are pairwise strongly anticomplete, except that $V_{0} \cup A_{1} \cup A_{2}$ and $V_{0} \cup B_{1} \cup B_{2}$ form a strong clique,
  \item $A_{i}, B_{i}, V_{i}\setminus (A_{i} \cup B_{i}) \not= \emptyset$ for $i=1,2$.
\end{itemize}
If $V_{0} = \emptyset$, then we call it a \emph{$2$-join}.

A partition $(V_{0}, V_{1}, V_{2})$ of $V(G)$ forms a \emph{pseudo-$2$-join} if for $i=1,2$ there are disjoint sets $A_{i}, B_{i} \subseteq V_{i}$ such that
\begin{itemize}
  \item $V_{0}$, $V_{1}$, and $V_{2}$ are pairwise strongly anticomplete, except that $V_{0} \cup A_{1} \cup A_{2}$ and $V_{0} \cup B_{1} \cup B_{2}$ form a strong clique,
  \item neither $V_{1}$ nor $V_{2}$ is a strong stable set.
\end{itemize}
See the middle and right panels of Fig.~\ref{fig:joins} for examples.

A graph admitting a (generalized) $2$-join and no $0$-join admits a pseudo-$2$-join.
We will use both the notation $(V_{0},V_{1},V_{2})$ or $(V_{1},V_{2})$ and the notation $(A_{1},A_{2},B_{1},B_{2})$ for (generalized/pseudo-) 2-joins.

A \emph{three-cliqued trigraph} $(G;A,B,C)$ consists of a trigraph $G$ and three pairwise disjoint strong cliques $A,B,C$ in $G$ such that $V(G) = A \cup B \cup C$.

A \emph{hex-join} of two three-cliqued trigraphs $(G_{1};A_{1},B_{1},C_{1})$ and $(G_{2};A_{2},B_{2},C_{2})$ is the three-cliqued trigraph $(G;A,B,C)$, where $A = A_{1} \cup A_{2}$, $B = B_{1} \cup B_{2}$, $C = C_{1} \cup C_{2}$, and $G$ is the trigraph with vertex set $V(G) = V(G_{1}) \cup V(G_{2})$ and adjacency as follows:
\begin{itemize}
  \item $G[V(G_{1})] = G_{1}$ and $G[V(G_{2})] = G_{2}$,
  \item $A_{1}$ is strongly complete to $V(G_{2}) \setminus B_{2}$, $B_{1}$ is strongly complete to $V(G_{2}) \setminus C_{2}$, and $C_{1}$ is strongly complete to $V(G_{2}) \setminus A_{2}$,
  \item the pairs $(A_{1},B_{2})$, $(B_{1},C_{2})$, $(C_{1},A_{2})$ are strongly anticomplete.
\end{itemize}
A trigraph $G$ admits a \emph{hex-join} if $G$ has three strong cliques $A,B,C$ such that $(G;A,B,C)$ is a three-cliqued trigraph expressible as a hex-join.

We often implicitly use the following important observation, which is immediate from the above definitions.
\begin{proposition}
\label{prp:join-thicken}
  Let $G$ be a trigraph that is a thickening of a trigraph $G'$.
  If $G'$ admits twins, a (proper) W-join, a $0$-join, a (pseudo-) $1$-join, a (generalized/pseudo-) $2$-join, or a hex-join, then $G$ admits twins, a (proper) W-join, a $0$-join, a (pseudo-) $1$-join, a (generalized/pseudo-) $2$-join, or a hex-join, respectively.
\end{proposition}

\label{def:indecomp}
The following definition is crucial to the structure theorem of Chudnovsky and Seymour for claw-free graphs.
A trigraph $G$ is \emph{indecomposable} if $G$ neither admits twins, nor a W-join, nor a $0$-join, nor a $1$-join, nor a generalized $2$-join, nor a hex-join.

\subsection{Strips and Stripes}
The notions in this section are of central importance to our algorithmic structure theorem, as well as to its applications.

\begin{definition}[strip-graph]
  A \emph{strip-graph} $H$ consists of disjoint finite sets $V(H)$ and $E(H)$, and an incidence relation between $V(H)$ and $E(H)$ (\ie a subset of $V(H) \times E(H)$). 
\end{definition}
Given a strip-graph $H$, for any $F \in E(H)$, let $\overline{F}$ denote the set of $h \in V(H)$ incident with $F$.
Note that the definition of strip-graphs is close to the definition of hypergraphs, except that we allow multiple edges and empty edges here.

Let $G$ be a trigraph and let $Y \subseteq V(G)$.
Then a family $(X_{1},\ldots,X_{k})$ of subsets of $Y$ is a \emph{circus} in $Y$ if
\begin{itemize}
  \item for $1 \leq i \leq k$ and $x \in X_{i}$, the set of neighbors of $x$ in $Y \setminus X_{i}$ is a strong clique,
  \item for $1 \leq i < j \leq k$, $X_{i} \cap X_{j}$ is strongly anticomplete to $Y \setminus (X_{i} \cup X_{j})$,
  \item for $1 \leq h < i < j \leq k$, $X_{h} \cap X_{i} \cap X_{j} = \emptyset$.
\end{itemize}

\begin{definition}[strip-structure]
  A \emph{strip-structure} $(H,\eta)$ of a trigraph $G$ is a strip-graph $H$ with $E(H) \not= \emptyset$ and a function $\eta$ that takes one or two parameters, an $F \in E(H)$ or an $F \in E(H)$ and a $h \in \overline{F}$, satisfying:
  \begin{itemize}
    \item $\eta(F) \subseteq V(G)$ and $\eta(F,h) \subseteq \eta(F)$ for each $F \in E(H)$ and each $h \in \overline{F}$.
    \item The sets $\eta(F)$ ($F \in E(H)$) are nonempty, pairwise disjoint, and have union $V(G)$.
    \item For each $h \in V(H)$, the union of the sets $\eta(F,h)$ for all $F \in E(H)$ with $h \in \overline{F}$ is a strong clique of $G$.
    \item For all distinct $F_{1},F_{2} \in E(H)$, if $v_{1} \in \eta(F_{1})$ and $v_{2} \in \eta(F_{2})$ are adjacent in $G$, then there exists $h \in \overline{F_{1}} \cap \overline{F_{2}}$ such that $v_{1} \in \eta(F_{1},h)$ and $v_{2} \in \eta(F_{2},h)$.
    \item For each $F \in E(H)$, the family $\eta(F,h)$ ($h \in \overline{F}$) is a circus in $\eta(F)$.
  \end{itemize}
  To simplify notation, we define $\eta(h) = \bigcup_{F \mid h \in \overline{F}} \eta(F,h)$ for all $h \in V(H)$.
\end{definition}

\begin{definition}[strip]
  Let $(H,\eta)$ be a strip-structure of a trigraph $G$ and let $F \in E(H)$, where $\overline{F} = \{h_{1},\ldots,h_{k}\}$.
  Let $z_{1},\ldots,z_{k}$ be new vertices and let $J$ be the trigraph obtained from $G[\eta(F)]$ by adding $z_{1},\ldots,z_{k}$, and for each $i$ making $z_{i}$ strongly complete to $\eta(F,h_{i})$ and strongly anticomplete to $J \setminus \eta(F,h_{i})$. Then $(J, \{z_{1},\ldots,z_{k}\})$ is the \emph{strip} corresponding to $F$. 
\end{definition}
Note that $Z = \{z_1,\ldots,z_k\}$ is a strong stable set in $J$ and that $Z \cap V(G) = \emptyset$.

Observe that if $G$ is claw-free, stating that the family $\eta(F,h)$ ($h \in \overline{F}$) is a circus in $\eta(F)$ is equivalent to stating that the strip corresponding to $F$ is claw-free.

\begin{definition}[spot, stripe]
  A strip $(J,Z)$ is a \emph{spot} if $J$ consists of three vertices $v$, $z_{1}$, $z_{2}$ such that $v$ is strongly adjacent to $z_{1}$ and $z_{2}$, and $z_{1}$ is strongly antiadjacent to $z_{2}$, and $Z = \{z_{1},z_{2}\}$.

  A strip $(J,Z)$ is a \emph{stripe} if $J$ is a claw-free trigraph and $Z \subseteq V(J)$ is a set of strongly simplicial vertices, such that no vertex of $V(J) \setminus Z$ is adjacent to more than one vertex of $Z$.
\end{definition}

We say that a stripe $(J,Z)$ is a \emph{thickening} of a stripe $(J',Z')$ if $J$ is a thickening of $J'$ with sets $W_{v}$ ($v \in V(J')$) such that $|W_{z}| = 1$ for each $z \in Z'$ and $Z = \bigcup_{z \in Z'} W_{z}$.

The \emph{nullity} of a strip-structure $(H,\eta)$ is the number of pairs $(F, h)$ with $F \in E(H)$, $h \in \overline{F}$, and $\eta(F,h) = \emptyset$.

Given a strip-structure $(H,\eta)$ of a trigraph, we say that $F \in E(H)$ is \emph{purified} if either the sets $\eta(F,h)$ ($h \in \overline{F}$) are pairwise disjoint, or $\overline{F} = \{h_{1},h_{2}\}$, $|\eta(F)| = 1$, and $\eta(F,h_{1}) = \eta(F,h_{2}) = \eta(F)$.
A strip-structure $(H,\eta)$ is \emph{purified} if each $F \in E(H)$ is purified.

Observe that saying that $F \in E(H)$ is purified is equivalent to saying that the strip corresponding to $F$ is either a stripe or a spot.

\tikzset{vertex/.style={minimum size=2mm,circle,fill=black,draw,inner sep=0pt},
        decoration={markings,mark=at position .5 with
{\arrow[black,thick]{stealth}}}}

\begin{figure}[t]
 \centering
 \begin{tikzpicture}[scale=0.5]
%this is the claw-free graph
   \draw[fill=green,fill opacity=0.2] (0,-4.5) ellipse (3.2cm and 1.4cm);
   \draw[fill=green] (2,-4.5) ellipse (0.5cm and 1cm);
   \draw[fill=green] (-2,-4.5) ellipse (0.5cm and 1cm);
   \draw[fill=red, fill opacity=0.2] (-3,-2) ellipse (0.6cm and 0.6cm);
   \draw[fill=red] (-3,-2) ellipse (0.35cm and 0.35cm);
   \draw[fill=orange, fill opacity=0.2] (0,-2) ellipse (1.8cm and 0.6cm);
   \draw[fill=orange] (-1,-2) ellipse (0.35cm and 0.35cm);
   \draw[fill=orange] (1,-2) ellipse (0.35cm and 0.35cm);
   \draw[fill=purple, fill opacity=0.2] (3,-2) ellipse (0.6cm and 0.6cm);
   \draw[fill=purple] (3,-2) ellipse (0.35cm and 0.35cm);
   \draw[fill=blue, fill opacity=0.2] (-3,0) ellipse (0.6cm and 0.6cm);
   \draw[fill=blue] (-3,0) ellipse (0.35cm and 0.35cm);
   \draw[fill=gray,fill opacity=0.2] (-0.5,1.5) ellipse (1.2cm and 1.2cm);
   \draw[fill=gray] (-1,1) ellipse (0.45cm and 0.45cm);
   \draw[fill=gray] (0,1) ellipse (0.45cm and 0.45cm);
  \node (a) at (-3.95,-2){$a$};
  \node (b) at (-3.95,0){$b$};
  \node (c) at (-2.3,1.75){$c$};
  \node (d) at (0,-1){$d$};
 \node (g) at (3.95,-2){$e$};
  \node (h) at (0,-6.5){$f$};
   \node (6) at (-2,-4) [vertex]{};
   \node (7) at (2,-4) [vertex]{};
   \node (8) at (-2,-5) [vertex]{};
   \node (9) at (2,-5) [vertex]{};
   \draw[thick] (7)--(9);
   \draw[thick] (6)--(8);
   \node (10) at (-1,-4) [vertex]{};
   \node (11) at (-0.5,-5) [vertex]{};
   \node (12) at (0.5,-4) [vertex]{};
   \node (12a) at (0.3,-5) [vertex]{};
   \draw[thick] (6)--(10);
   \draw[thick] (6)--(11);
   \draw[thick] (8)--(11);
   \draw[thick] (11)--(12);
   \draw[thick] (10)--(11);
   \draw[thick] (10)--(12);
   \draw[thick] (12)--(7);
   \draw[thick] (12)--(9);
   \draw[thick] (12a)--(10);
   \draw[thick] (12a)--(12);
   \draw[thick] (12a) -- (7);
   \draw[thick] (12a) -- (9);
   \node (13) at (3,-2) [vertex]{};
   \node (14) at (-3,-2) [vertex]{};
   \draw[thick] (13)--(7);
   \draw[thick] (13)--(9);
   \draw[thick] (14)--(6);
   \draw[thick] (14)--(8);
   \node (15a) at (-1,-2) [vertex]{};
   \draw[thick] (15a)--(13);
   \draw[thick] (15a)--(14);
   \node (15b) at (1,-2) [vertex]{};
   \node (16) at (-1,1) [vertex]{};
   \node (17) at (0,1) [vertex]{};
   \draw[thick] (16)--(17);
   \draw[thick] (17)--(13);
   \draw[thick] (17)--(15b);
   \draw[thick] (16)--(15a);
   \draw[thick] (16)--(14);
   \node (19) at (-3,0) [vertex]{};
   \draw[thick] (19)--(14);
   \draw[thick] (19)--(16);
   \draw[thick] (13)--(14);
   \node (20) at (-1,2) [vertex]{};
   \node (21) at (0,2) [vertex]{};
   \draw[thick] (21)--(17);
   \draw[thick] (20)--(16);
   \draw[thick] (20)--(21);
   \draw[thick] (15a)--(19);
 \end{tikzpicture}
 \tikzset{every loop/.style={min distance=20mm,looseness=10}}
 \begin{tikzpicture}
%this is the strip graph
  \node (0) at (-3,2) [vertex]{};
  \node (1) at (0,2) [vertex]{};
  \node (2) at (-3,3.5) [vertex]{};
  \node (3) at (0,3.5) [vertex]{};
  \draw[thick,green] (0)--(1);
  \draw[thick,purple] (1)--(3);
  \draw[thick,red] (2)--(0);
  \draw[thick,orange] (2)--(3);
  \node (hh) at (-4,4.5){};
  \draw[thick,blue] (2) -- (hh);
  \draw[thick,gray] (2) to[bend right=-50] (3);
  \node[red] (a) at (-3.25,2.75){$a$};
  \node[blue] (b) at (-3.5,3.75){$b$};
  \node[gray] (c) at (-1.5,4.45){$c$};
  \node[orange] (d) at (-1.5,3.25){$d$};
  \node[purple] (g) at (0.25,2.75){$e$};
  \node[green] (h) at (-1.5,1.75){$f$};
  \node at (-3,1){};
 \end{tikzpicture}
 \qquad
 \begin{tikzpicture}[scale=0.5]
 %these are the examples of strips and spot
   \draw[fill=red, fill opacity=0.2] (0,0) ellipse (1cm and 1cm);
   \draw[fill=red] (0,0) ellipse (0.5cm and 0.5cm);
   \node (1) at (0,0) [vertex] {};
   \node (2) at (-2,0) [vertex] {};
   \node (3) at (2,0) [vertex] {};
   \draw[thick] (1)--(2);
   \draw[thick] (1)--(3);
   \node at (0,-1.5){spot};
   \node at (-2,-0.8){$Z$};
   \node at (2,-0.8){$Z$};
   %------
   \draw[fill=green, fill opacity=0.2] (0,-4.5) ellipse (2.5cm and 2cm);
   \draw[fill=green] (2,-4.5) ellipse (0.5cm and 1cm);
   \draw[fill=green] (-2,-4.5) ellipse (0.5cm and 1cm);
   \node (4) at (-3,-4.5) [vertex]{};
   \node (5) at (3,-4.5) [vertex]{};
   \node (6) at (-2,-4) [vertex]{};
   \node (7) at (2,-4) [vertex]{};
   \node (8) at (-2,-5) [vertex]{};
   \node (9) at (2,-5) [vertex]{};
   \draw[thick] (7)--(9);
   \draw[thick] (6)--(8);
   \draw[thick] (4)--(6);
   \draw[thick] (4)--(8);
   \draw[thick] (5)--(7);
   \draw[thick] (5)--(9);
   \node (10) at (-1,-4) [vertex]{};
   \node (11) at (-0.5,-5) [vertex]{};
   \node (12) at (0.5,-4) [vertex]{};
   \node (12a) at (0.3,-5) [vertex]{};
   \draw[thick] (6)--(10);
   \draw[thick] (6)--(11);
   \draw[thick] (8)--(11);
   \draw[thick] (11)--(12);
   \draw[thick] (10)--(11);
   \draw[thick] (10)--(12);
   \draw[thick] (12)--(7);
   \draw[thick] (12)--(9);
   \draw[thick] (12a)--(10);
   \draw[thick] (12a)--(12);
   \draw[thick] (12a) -- (7);
   \draw[thick] (12a) -- (9);
   \node at (0,-7){stripe};
   \node at (-3.2,-5.3){$Z$};
   \node at (3.0,-5.3){$Z$};
 \end{tikzpicture}
 \quad
 \quad
  \caption{This figure is inspired by~\cite[Fig.~1]{HermelinMvL2014}.
    The left panel shows a claw-free graph $G$.
    A strip-structure $(H,\eta)$ has been marked in the panel.
    The middle panel shows the strip-graph $H$; note that $E(H) = \{a,b,c,d,e,f\}$ and that $|\overline{b}|=1$.
    The colored ellipses in the left panel show $\eta(F)$ for each $F \in E(H)$.
    The darker ellipses show the sets $\eta(F,h)$ for each $F \in E(H)$ and $h \in \overline{F}$. 
    The right panel shows the two types of strips $(J,Z)$: spots and stripes.
    The colored ellipses show $\eta(F) = V(J) \setminus Z$ and the darker ellipses show $\eta(F,h)$ for $F \in E(H)$ and $h \in \overline{F}$.
    Note that strips $a$, $b$, and $e$ in the left panel are spots; these always look as pictured.
    Conversely, strips $c$, $d$, and $f$ are stripes and might look different depending on $\eta(F)$; the stripe in the right panel corresponds to $f$ (the stripes corresponding to $c$ and $d$ are not pictured).
    From the picture, it is clear that the set $Z \subseteq V(J)$ is not part of $G$.}
\label{fig:stripstructure}
\end{figure}
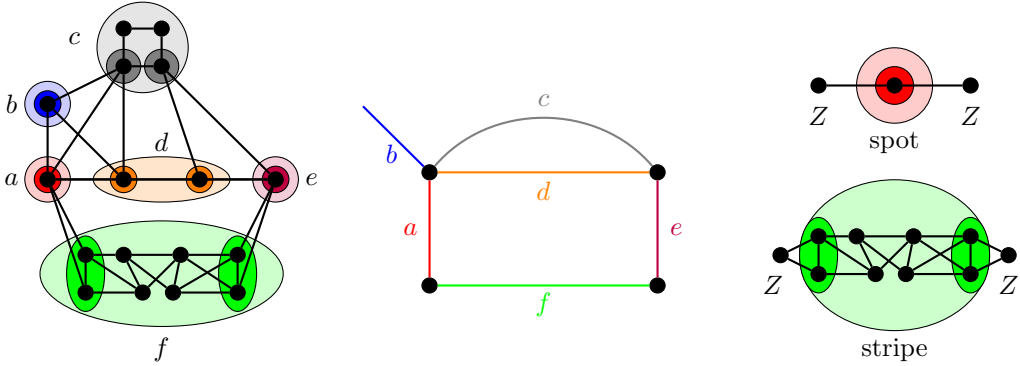

\subsection{Special Trigraphs}
In the definitions below, whenever adjacency between two vertices is not specified, they are strongly antiadjacent.
Moreover, if two vertices are said to be adjacent, they can be either strongly adjacent or semiadjacent, unless otherwise specified.
Similarly, if two vertices are said to be antiadjacent, they can be either strongly antiadjacent or semiadjacent, unless otherwise specified.

A \emph{line trigraph} $G$ of some graph $H$ is a trigraph where $V(G) = E(H)$ and $e,f \in E(H)$ are adjacent in $G$ if and only if $e$ and $f$ share an endpoint in $H$.
Moreover, $e,f$ are strongly adjacent if $e$ and $f$ share an endpoint of degree at least three.
The class of all line trigraphs is denoted \emph{$\mc{S}_{0}$}.
If $G$ is a graph (\ie $G$ has no semi-edges) and $G$ is the line trigraph of some graph $H$, then we call $G$ the \emph{line graph} of $H$, a classic notion in graph theory.
In this case, we call $H$ the \emph{pre-image} of $G$.
Note that the line graph of a graph $H$ is unique, while the pre-image of a line graph $G$ might not be unique (this happens only when $G$ is a triangle, in which case the pre-image is either a triangle or a claw).

The \emph{icosahedron} is the (planar) graph $G$ with $V(G) = \{v_{1},\ldots,v_{12}\}$ such that
\begin{itemize}
  \item for $i = 1,\ldots, 10$, $v_{i}$ is adjacent to $v_{i+1}$ and $v_{i+2}$ (indices modulo $10$),
  \item $v_{11}$ is adjacent to $v_{1},v_{3},v_{5},v_{7},v_{9}$,
  \item $v_{12}$ is adjacent to $v_{2},v_{4},v_{6},v_{8},v_{10}$.
\end{itemize}
This graph regarded as a trigraph is denoted by $G_{0}$.
Let $G_{1} = G_{0} \setminus \{v_{12}\}$.
Let $G_{2}$ be obtained from $G_{1} \setminus\{v_{10}\}$ by possibly making $v_{1}$ semiadjacent to $v_{4}$ or making $v_{6}$ semiadjacent to $v_{9}$.
The class of trigraphs denoted by \emph{$\mc{S}_{1}$} consists precisely of $G_{0}$, $G_{1}$, and the four possibilities for $G_{2}$.

\begin{definition}[XX-trigraph, $\mc{S}_{2}$]
  Let $G$ be the trigraph with $V(G) = \{v_{1},\ldots,v_{13}\}$ such that
  \begin{itemize}
    \item $v_{i}$ is adjacent to $v_{i+1}$ for $i = 1,\ldots,5$ and $v_{6}$ is adjacent to $v_{1}$; also $v_{i}$ is antiadjacent to $v_{j}$ for each $i = 1,\ldots,4$ and each $i+2 \leq j \leq 6$,
    \item $v_{7}$ is strongly adjacent to $v_{1}$ and $v_{2}$,
    \item $v_{8}$ is strongly adjacent to $v_{4}$, $v_{5}$, and possibly adjacent to $v_{7}$,
    \item $v_{9}$ is strongly adjacent to $v_{1}$, $v_{2}$, $v_{3}$, and $v_{6}$,
    \item $v_{10}$ is strongly adjacent to $v_{3}$, $v_{4}$, $v_{5}$, and $v_{6}$, and adjacent to $v_{9}$,
    \item $v_{11}$ is strongly adjacent to $v_{1}$, $v_{3}$, $v_{4}$, $v_{6}$, $v_{9}$, and $v_{10}$,
    \item $v_{12}$ is strongly adjacent to $v_{2}$, $v_{3}$, $v_{5}$, $v_{6}$, $v_{9}$, and $v_{10}$,
    \item $v_{13}$ is strongly adjacent to $v_{1}$, $v_{2}$, $v_{4}$, $v_{5}$, $v_{7}$, and $v_{8}$.
  \end{itemize}
  Then $G \setminus X$ for any $X \subseteq \{v_{7},v_{11},v_{12},v_{13}\}$ is an \emph{XX-trigraph}.
  The class of all XX-trigraphs is denoted~\emph{$\mc{S}_{2}$}.
\end{definition}

\begin{figure}[t]
  \begin{center}
    \begin{tikzpicture}[scale=1.5]
      \node (v1) at (1,1)[vertex]{};
      \node (v1l) at (0.75,1){$v_{1}$};
      \node (v2) at (2,2)[vertex]{};
      \node (v2l) at (2,2.25){$v_{2}$};
      \node (v3) at (3,2)[vertex]{};
      \node (v3l) at (3,2.25){$v_{3}$};
      \node (v4) at (4,1)[vertex]{};
      \node (v4l) at (4.25,1){$v_{4}$};
      \node (v5) at (3,0)[vertex]{};
      \node (v5l) at (3,-0.25){$v_{5}$};
      \node (v6) at (2,0)[vertex]{};
      \node (v6l) at (2,-0.25){$v_{6}$};
      \draw[thick](v1)--(v2);
      \draw[thick](v2)--(v3);
      \draw[thick](v3)--(v4);                  
      \draw[thick](v4)--(v5);
      \draw[thick](v5)--(v6);
      \draw[thick](v6)--(v1);
      \node (v7) at (1,2)[vertex]{};
      \node (v7l) at (0.75,2){$v_{7}$};
      \node (v8) at (4,0)[vertex]{};
      \node (v8l) at (4.25,0){$v_{8}$};
      \node (v9) at (2,1)[vertex]{};
      \node (v9l) at (1.8,1.2){$v_{9}$};
      \node (v10) at (3,1)[vertex]{};
      \node (v10l) at (3.25,1.2){$v_{10}$};
      \draw[thick](v7)--(v1);
      \draw[thick](v7)--(v2);
      \draw[thick](v8)--(v4);
      \draw[thick](v8)--(v5);
      \draw[thick](v9)--(v1);
      \draw[thick](v9)--(v2);
      \draw[thick](v9)--(v3);
      \draw[thick](v9)--(v6);
      \draw[thick](v10)--(v3);
      \draw[thick](v10)--(v4);
      \draw[thick](v10)--(v5);
      \draw[thick](v10)--(v6);
      \draw[thick](v10)--(v9);
    \end{tikzpicture}  
  \end{center}
  \caption{A claw-free XX-trigraph with $X = \{v_{11},v_{12},v_{13}\}$ and where $v_{7}$ and $v_{8}$ are strongly antiadjacent. A solid line illustrates strong adjacency and no line indicates strong antiadjacency. Note that there are no semiadjacencies per Proposition~\ref{prp:xx:claw-free}.}
\label{fig:xxtrigraphs}
\end{figure}
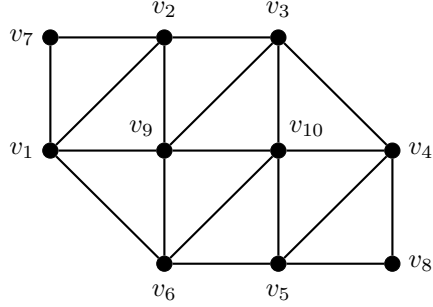
See Fig.~\ref{fig:xxtrigraphs} for an example of a member of $\mc{S}_{2}$.

\begin{proposition} \label{prp:xx:claw-free}
Let $G$ be the trigraph defined above on vertices $v_{1},\ldots,v_{13}$ and let $X \subseteq\{v_{11},v_{12},v_{13}\}$. If $G$ is claw-free, then the (anti)adjacencies between $v_{1},\ldots,v_{6}$ are strong.
\end{proposition}
\begin{proof}
First, suppose that:
\begin{itemize}
\item $v_{1}$ is semiadjacent to $v_{3}$. Then $v_{3}$ is strongly antiadjacent to $v_{6}$ by the definition of a trigraph. Hence, $v_{1},v_{3},v_{6},v_{7}$ forms a claw with center $v_{1}$, a contradiction.
\item $v_{1}$ is semiadjacent to $v_{4}$. Then $v_{1},v_{4},v_{8},v_{10}$ forms a claw with center $v_{4}$, a contradiction.
\item $v_{1}$ is semiadjacent to $v_{5}$. Then $v_{1},v_{5},v_{8},v_{10}$ forms a claw with center $v_{5}$, a contradiction.
\item $v_{2}$ is semiadjacent to $v_{4}$. Then $v_{2},v_{4},v_{8},v_{10}$ forms a claw with center $v_{4}$, a contradiction.
\item $v_{2}$ is semiadjacent to $v_{5}$. Then $v_{2},v_{5},v_{8},v_{10}$ forms a claw with center $v_{5}$, a contradiction.
\item $v_{2}$ is semiadjacent to $v_{6}$. Then $v_{3}$ is strongly antiadjacent to $v_{6}$ by the definition of a trigraph. Hence, $v_{2},v_{3},v_{6},v_{7}$ is a claw with center $v_{2}$, a contradiction.
\item $v_{3}$ is semiadjacent to $v_{5}$. Then $v_{3}$ is strongly antiadjacent to $v_{6}$ by the definition of a trigraph. Hence, $v_{3},v_{5},v_{6},v_{8}$ is a claw with center $v_{5}$, a contradiction.
\item $v_{3}$ is semiadjacent to $v_{6}$. Then $v_{6}$ is strongly antiadjacent to $v_{2}$ and to $v_{4}$ by the definition of a trigraph. Moreover, we already established that $v_{2}$ is strongly antiadjacent to $v_{4}$. Hence, $v_{2},v_{3},v_{4},v_{6}$ is a claw with center $v_{3}$, a contradiction.
\item $v_{4}$ is semiadjacent to $v_{6}$. Then $v_{3}$ is strongly antiadjacent to $v_{6}$ by the definition of a trigraph. Hence, $v_{3},v_{5},v_{6},v_{8}$ is a claw with center $v_{4}$, a contradiction.
\end{itemize}
Now it remains to remark that Hermelin \etal\cite[{Proposition~1}]{HermelinMvL2014} already showed that $v_{i}$ is strongly adjacent to $v_{i+1}$ for $i = 1,\ldots,5$ and $v_{6}$ is strongly adjacent to $v_{1}$.
\end{proof}

Consider the sphere $\mathbb{S}_{1}$ and a set $\mc{I} = \{I_{1}, \ldots, I_{\ell}\}$ of subsets of $\mathbb{S}_{1}$, such that no distinct $I_{i}, I_{j}$ share an endpoint and no three members of $\mc{I}$ have union $\mathbb{S}_{1}$.
Let $P$ be a finite subset of $\mathbb{S}_{1}$ and let $G$ be the trigraph with $V(G) = P$ such that distinct $u,v \in P$ are adjacent in $G$ if and only if $u,v \in I_{i}$ for some $i=1,\ldots,\ell$.
Moreover, $u,v$ are strongly adjacent if at least one of $u,v$ is in the interior of $I_{i}$.
Call such trigraphs \emph{circular interval trigraphs} and denote the class of all circular interval trigraphs by \emph{$\mc{S}_{3}$}.

Recall that the \emph{intersection graph} $G$ of a set $\mc{F}$ of subsets of some given universe is the graph where each vertex corresponds to a set in $\mc{F}$ and there is an edge between two vertices if and only if the corresponding sets of $\mc{F}$ intersect.
If $G$ is the intersection graph of $\mc{F}$, then we call $\mc{F}$ a \emph{model} for $G$. 

\begin{definition}[(Proper) Circular-arc graph]
  A graph is a \emph{circular-arc graph} if it is the intersection graph of some set of arcs of the sphere $\mathbb{S}_{1}$.
  A graph is a \emph{proper circular-arc graph} if it is the intersection graph of some set of arcs of the sphere $\mathbb{S}_{1}$ such that no arc of the set properly contains another.
\end{definition}
Note that this definition allows arcs to be the same.

\begin{proposition}
\label{prp:proper-circ}
  Any circular interval trigraph without semi-edges is a proper circular-arc graph.
\end{proposition}
\begin{proof}
  Let $G$ be a circular interval trigraph that has no semiadjacent edges.
  Then, in fact, $G$ is a graph.
  Consider the sphere $\mathbb{S}_{1}$.
  Let $\mc{I} = \{I_{1}, \ldots, I_{\ell}\}$ be subsets of $\mathbb{S}_{1}$ and $P$ be a finite subset of $\mathbb{S}_{1}$ such that $P = V(G)$ and distinct $u,v \in P$ are adjacent in $G$ if and only if $u,v \in I_{i}$ for some $i=1,\ldots,\ell$.
  Note that any adjacency must in fact be strong adjacency, by assumption.
  Define some orientation on $\mathbb{S}_{1}$ that we call clockwise.
  For each $u \in P$, let $I_{u}$ denote the subset of $\mathbb{S}_{1}$ in $\mc{I}$ that extends furthest clockwise from $u$ on $\mathbb{S}_{1}$.
  Let $I'_{u}$ denote the subset of $I_{u}$ that comes clockwise after~$u$ (including $u$ itself).
  Consider the set $\mc{I'} = \{I'_{u} \mid u \in P\}$.
  It is not hard to see that no arc of~$\mc{I'}$ contains another (possible after infinitesimal extensions of some arcs, \ie of $I'_v$ if $I_u = I_v$ and $v$ comes after $u$), and that $G$ is the intersection graph of $\mc{I'}$.
\end{proof}
Although one can prove a converse of this proposition if one makes more assumptions on the arcs of a proper circular-arc graph, this will not be relevant to this paper, and we thus omit it.

Let $H$ be a graph with $V(H) = \{h_{1},\ldots,h_{7}\}$ such that
\begin{itemize}
  \item $\{h_{1},\ldots,h_{5}\}$ is a cycle with vertices in this order,
  \item $h_{6}$ is adjacent to at least three of $h_{1},\ldots,h_{5}$,
  \item $h_{7}$ is adjacent to $h_{6}$ and no other vertices,
  \item all adjacencies not specified so far can be arbitrary.
\end{itemize}
Let $G$ be the graph obtained from the line graph of $H$ by adding a new vertex adjacent to those edges of $E(H)$ not incident with $h_{6}$, and then regarding it as a trigraph.
If $h_{4},h_{5}$ both have degree two in $H$, possibly make the vertices of $G$ corresponding to edges $h_{3}h_{4}$ and $h_{1}h_{5}$ semiadjacent.
The class of trigraphs containing precisely these trigraphs is denoted by \emph{$\mc{S}_{4}$}.

Let $G$ be a trigraph that is the disjoint union of three $n$-vertex strong cliques $A = \{a_{1},\ldots,a_{n}\}$, $B = \{b_{1},\ldots,b_{n}\}$, and $C = \{c_{1},\ldots,c_{n}\}$ for $n \geq 2$ and five vertices $\{d_{1},\ldots,d_{5}\}$ such that for some $X \subseteq A \cup B \cup C$ with $|X \cap A|,|X \cap B|,|X \cap C| \leq 1$,
\begin{itemize}
  \item for $1 \leq i, j \leq n$, $a_{i}$ and $b_{j}$ are adjacent if and only if $i=j$, and $c_{i}$ is strongly adjacent to $a_{j}$ if and only if $i \not= j$, and $c_{i}$ is strongly adjacent to $b_{j}$ if and only if $i \not= j$,
  \item $a_{i}$ is semiadjacent to $b_{i}$ for at most one value of $i \in \{1,\ldots,n\}$, and if so then $c_{i} \in X$,
  \item $a_{i}$ is semiadjacent to $c_{i}$ for at most one value of $i \in \{1,\ldots,n\}$, and if so then $b_{i} \in X$,
  \item $b_{i}$ is semiadjacent to $c_{i}$ for at most one value of $i \in \{1,\ldots,n\}$, and if so then $a_{i} \in X$,
  \item no two of $A\setminus X$, $B\setminus X$, $C\setminus X$ are strongly complete to each other,
  \item $d_{1}$ is strongly complete to $A\cup B \cup C$,
  \item $d_{2}$ is strongly complete to $A \cup B$ and adjacent to $d_{1}$,
  \item $d_{3}$ is strongly complete to $A \cup \{d_{2}\}$,
  \item $d_{4}$ is strongly complete to $B \cup \{d_{2},d_{3}\}$,
  \item $d_{5}$ is strongly adjacent to $d_{3}$ and $d_{4}$.
\end{itemize}
The class of all trigraphs $G\setminus X$ is denoted by \emph{$\mc{S}_{5}$}.

The following trigraphs are called \emph{near-antiprismatic} or \emph{antihat} trigraphs.
Let $G$ be a trigraph that is the disjoint union of three $n$-vertex strong cliques $A = \{a_{1},\ldots,a_{n}\}$, $B = \{b_{1},\ldots,b_{n}\}$, and $C = \{c_{1},\ldots,c_{n}\}$ for $n \geq 2$ and two vertices $a_{0}, b_{0}$ such that for some $X \subseteq A \cup B \cup C$ with $|C \setminus X| \geq 2$,
\begin{itemize}
  \item for $1 \leq i,j \leq n$ with $i \not= j$, $a_i$ is strongly antiadjacent to $b_j$ and $a_i$ and $b_i$ are strongly adjacent to $c_j$,
  \item for $1 \leq i \leq n$, $a_i$ and $b_i$ are adjacent, $c_i$ and $a_i$ are antiadjacent, $c_i$ and $b_i$ are antiadjacent.
  All such pairs are strongly (anti)adjacent, except possibly
  \begin{itemize}
    \item $a_{i}$ is semiadjacent to $b_{i}$ for at most one value of $i \in \{1,\ldots,n\}$, and if so then $c_{i} \in X$,
    \item $a_{i}$ is semiadjacent to $c_{i}$ for at most one value of $i \in \{1,\ldots,n\}$, and if so then $b_{i} \in X$,
    \item $b_{i}$ is semiadjacent to $c_{i}$ for at most one value of $i \in \{1,\ldots,n\}$, and if so then $a_{i} \in X$.
  \end{itemize}
  \item $a_{0}$ is strongly complete to $A$,
  \item $b_{0}$ is strongly complete to $B$,
  \item $a_{0}$ is antiadjacent to $b_{0}$.
\end{itemize}
The class of all trigraphs $G\setminus X$ is denoted by \emph{$\mc{S}_{6}$}.

A trigraph $G$ is \emph{antiprismatic} if for every $X \subseteq V(G)$ with $|X| = 4$, $X$ is not a claw, and at least two pairs of vertices in $X$ are strongly adjacent.
The class of antiprismatic trigraphs is denoted by \emph{$\mc{S}_{7}$}.

The following is the main result from the work of Chudnovsky and Seymour on claw-free graphs.
Recall the definition of indecomposable trigraphs from page~\pageref{def:indecomp}.

\begin{theorem}[Chudnovsky and Seymour~\cite{ChudnovskyS2008-4,ChudnovskyS2008-5}]
\label{thm:chudsey-main}
  Every indecomposable trigraph belongs to $\mc{S}_{0} \cup \cdots \cup \mc{S}_{7}$.
\end{theorem}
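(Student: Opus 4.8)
This is the Chudnovsky--Seymour structure theorem, and within the present paper it is used as a black box: the ``proof'' is the reference to \cite{ChudnovskyS2008-4,ChudnovskyS2008-5}, which occupy the bulk of the seven-paper program on claw-free graphs. If I were to reconstruct the argument, the plan would be a structural induction on $|V(G)|$: take a claw-free trigraph $G$ that is indecomposable --- no twins, no W-join, and none of the $0$-, $1$-, generalized $2$-, or hex-join --- and show it lies in one of $\mc{S}_{0},\ldots,\mc{S}_{7}$. Working with trigraphs rather than graphs is forced by this set-up: thickenings manufacture twins and W-joins, so the first move is always to peel those off and classify the reduced object, and the classes $\mc{S}_{i}$ are stated so as to be closed under the relevant ``fuzzing''.

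The argument would then split along two fault lines. The first is the quasi-line / non-quasi-line dichotomy (a graph is \emph{quasi-line} if every neighborhood is covered by two cliques). For a non-quasi-line $G$ one uses Fouquet's observation~\cite{Fouquet1993} that some vertex carries an induced $C_{5}$ in its neighborhood; starting from that $C_{5}$ and repeatedly exploiting claw-freeness, one grows a rigid local skeleton, and I would argue that either it extends to one of the icosahedral trigraphs ($\mc{S}_{1}$), an XX-trigraph ($\mc{S}_{2}$), a member of $\mc{S}_{5}$, a near-antiprismatic trigraph ($\mc{S}_{6}$), or an antiprismatic trigraph ($\mc{S}_{7}$), or else two incompatible local patterns coexist and I can exhibit one of the forbidden joins, contradicting indecomposability. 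Quasi-line graphs I would treat via their own structure theorem (fuzzy circular-interval graphs glued along linear-interval strips): since the gluing operations there are exactly the ones ruled out by indecomposability, the composition collapses, leaving a line trigraph ($\mc{S}_{0}$) or a circular-interval trigraph ($\mc{S}_{3}$). The second fault line is $\alpha(G)$: when $\alpha(G)$ is small the trigraph is a bounded object and I would just enumerate the finitely many configurations, while when $\alpha(G)$ is large long induced antiholes and disjoint cliques are unavoidable and push $G$ toward the (near-)antiprismatic classes or toward a hex-join --- again excluded.

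The main obstacle is not a single idea but the volume of case analysis: each basic class springs from a distinct local configuration, and the entire difficulty is showing that these configurations cannot partially overlap in a graph without creating one of the decomposition structures. Verifying this faithfully is the content of \emph{Claw-free graphs I--VII}; for the purposes of this paper I take the theorem as given and instead carry out the algorithmic refinement --- tracking which class each strip falls into and, under the extra hypothesis $\alpha(G)>3$, pruning the list down to the simpler menu that appears in Theorem~\ref{thm:all-together-base}.
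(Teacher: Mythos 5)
You correctly identify that the paper gives no proof of this theorem: it is stated with citations to the Chudnovsky--Seymour papers only and used as a black box, which is exactly your approach. Your supplementary sketch of the quasi-line/non-quasi-line dichotomy and the role of Fouquet's lemma is a fair high-level description of the cited proof, but it plays no role in this paper and need not be verified here.
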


\subsection{Special Stripes}
\label{sec:defs:stripes}
In the definitions below, whenever adjacency between two vertices is not specified, they are strongly antiadjacent.
Moreover, if two vertices are said to be adjacent, they can be either strongly adjacent or semiadjacent, unless otherwise specified.
Similarly, if two vertices are said to be antiadjacent, they can be either strongly antiadjacent or semiadjacent, unless otherwise specified.

Consider a trigraph $J$ with vertex set $\{v_{1},\ldots,v_{n}\}$ with $n \geq 2$ such that for $1 \leq i < j < k \leq n$, if~$v_{i}$ and $v_{k}$ are adjacent in $J$, then $v_{j}$ is strongly adjacent to both $v_{i}$ and $v_{k}$.
This is a \emph{linear interval trigraph}. 

\begin{definition}[proper interval graph]
  A graph is a \emph{proper interval graph} if it is the intersection graph of a set of intervals of the real line, such that no interval of the set properly contains another. 
\end{definition}
The following observation is implied by a result of Looges and Olariu~\cite[Proposition~1, Theorem~1]{LoogesO1993} (the ``umbrella property'').
\begin{proposition}
\label{prp:proper-int}
  Any linear interval trigraph without semi-edges is a proper interval graph.
\end{proposition}

Let $J$ be a linear trigraph with vertex $\{v_{1},\ldots,v_{n}\}$ as per the definition, let $v_{1},v_{n}$ be strongly antiadjacent, let no vertex be adjacent to both $v_{1}$ and $v_{n}$, and let no vertex be semiadjacent to either~$v_{1}$ or $v_{n}$.
Let $Z = \{v_{1},v_{n}\}$.
The class of all such stripes $(J,Z)$ is denoted by $\mathcal{Z}_{1}$.

Let $J \in \mc{S}_{6}$, let $a_{0},b_{0}$ be as in the definition of $\mc{S}_{6}$, with $a_{0},b_{0}$ strongly antiadjacent, and let $Z = \{a_{0},b_{0}\}$.
The class of all such stripes $(J,Z)$ is denoted by $\mathcal{Z}_{2}$.

Let $H$ be a graph, and let $h_{1},\ldots,h_{5}$ be a path in $H$ such that $h_{1}$ and $h_{5}$ have degree one and every edge of $H$ is incident with one of $h_{2},h_{3},h_{4}$.
Let $J$ be obtained from a line trigraph of $H$ by making the vertices corresponding to edges $h_{2}h_{3}$ and $h_{3}h_{4}$ either semiadjacent or strongly antiadjacent, and let $Z = \{h_{1}h_{2},h_{4}h_{5}\}$.
The class of all such stripes $(J,Z)$ is denoted by $\mathcal{Z}_{3}$.
See Fig.~\ref{fig:z3:H} for an example.

Let $J$ be the trigraph with vertices $\{a_{0},a_{1},a_{2},b_{0},b_{1},b_{2},b_{3},c_{1},c_{2}\}$ such that $\{a_{0},a_{1},a_{2}\}$, $\{b_{0},b_{1},b_{2},b_{3}\}$, $\{a_{2},c_{1},c_{2}\}$, and $\{a_{1},b_{1},c_{2}\}$ are strong cliques, $b_{2},c_{1}$ are strongly adjacent, $b_{2},c_{2}$ are semiadjacent, and $b_{3},c_{1}$ are semiadjacent. Let $Z = \{a_{0},b_{0}\}$. The class of all such stripes $(J,Z)$ is denoted by $\mathcal{Z}_{4}$.
See Fig.~\ref{fig:z4} for an example.

\begin{definition}[$\mc{Z}_{5}$]
  Let $J \in \mc{S}_{2}$, let $v_{1},\ldots,v_{13}, X$ be as in the definition of $\mc{S}_{2}$, let $v_{7},v_{8}$ be strongly antiadjacent in $J$, and let $Z = \{v_{7},v_{8}\}\setminus X$.
  The class of all such stripes $(J,Z)$ is denoted by~$\mathcal{Z}_{5}$.
\end{definition}
See Fig.~\ref{fig:z5} for an example.

Let $J \in \mc{S}_{3}$, let $I_{1},\ldots,I_{n}$ be as in the definition of $\mc{S}_{3}$, let $z \in V(G)$ belong to at most one of $I_{1},\ldots,I_{n}$ and not be an endpoint of one of $I_{1},\ldots,I_{n}$.
Then $z$ is a strongly simplicial vertex of $J$. Let $Z = \{z\}$.
The class of all such stripes $(J,Z)$ is denoted by $\mathcal{Z}_{6}$.

Let $J \in \mc{S}_{4}$, let $H, h_{1},\ldots,h_{7}$ be as in the definition of $\mc{S}_{4}$, let $e$ be the edge in $H$ between $h_{6}$ and~$h_{7}$, and let $Z = \{e\}$.
The class of all such stripes $(J,Z)$ is denoted by $\mathcal{Z}_{7}$.

Let $J \in \mc{S}_{5}$, let $d_{1},\ldots,d_{5},A,B,C$ be as in the definition of $\mc{S}_{5}$, and let $Z = \{d_{5}\}$.
The class of all such stripes $(J,Z)$ is denoted by $\mathcal{Z}_{8}$.

Let $J$ be the trigraph where the vertex set is the union of five sets $\{z\},A,B,C,D$ with $A = \{a_{1},\ldots,a_{n}\}$ and $B = \{b_{1},\ldots,b_{n}\}$ for some $n \geq 1$, such that
\begin{itemize}
  \item $\{z\} \cup D$ is a strong clique,
  \item $A \cup C$ and $B \cup C$ are strong cliques,
  \item for $1 \leq i \leq n$, $a_{i}$ and $b_{i}$ are antiadjacent and every vertex of $D$ is strongly adjacent to precisely one of $a_{i},b_{i}$,
  \item for $1 \leq i < j \leq n$, $\{a_{i},b_{i}\}$ is strongly complete to $\{a_{j},b_{j}\}$, and
  \item the adjacency between $C$ and $D$ is arbitrary.
\end{itemize}
Note that $J$ is antiprismatic.
Let $Z = \{z\}$.
The class of all such stripes $(J,Z)$ is denoted by $\mathcal{Z}_{9}$.

\section{Algorithms to Find Twins and Joins}
\label{sec:joins}
If we want to prove an algorithmic decomposition theorem for claw-free graphs, then Theorem~\ref{thm:chudsey-main} suggests that we develop algorithms that find twins and the joins defined in Sect.~\ref{sec:def:joins} in trigraphs.
Actually, we will see later that we only apply these algorithms to graphs, and so we restrict our attention to graphs instead of trigraphs.
Moreover, we will see later that we do not need an algorithm to find a hex-join in a graph.
Therefore, in this section, we propose algorithms to find twins, proper W-joins, $0$-joins, (pseudo-) $1$-joins, and (pseudo-/generalized) $2$-joins in graphs in polynomial time.

We observe here that in the literature $1$-joins and $2$-joins are usually defined in a different manner than they were defined in this paper, and several fast, polynomial-time algorithms are known to find these alternate $1$- and $2$-joins (see \eg\cite{CornuejolsC1985,Dahlhaus2004,CharbitMR2010,CharbitHTV2010}).
It is not difficult to show that on claw-free graphs that do not admit twins and have a stable set of size at least five the alternate definitions of $1$- and $2$-joins coincide with those defined in this paper.
Under these conditions we can thus use the fast algorithms available in the literature.
However, we emphasize that on general (claw-free) graphs, the alternate definitions diverge from those in our paper, which led us to study them separately.
Our algorithms work on general graphs, and thus may be of independent interest. 

Throughout this section, let $G$ be a graph, let $n = |V(G)|$ and $m = |E(G)|$, and let $\Delta(G)$ be the maximum degree of any vertex in $G$.

\subsection{Finding \texorpdfstring{$0$}{0}-joins, Proper W-joins, and Twins}
The results of this section follow from known results or were known before.
First, finding $0$-joins corresponds to standard connectivity testing.

\begin{proposition}
  In $O(n+m)$ time, one can find a $0$-join in a graph $G$, or report that $G$ has no $0$-join.
\end{proposition}

King and Reed~\cite{KingR2008} showed that proper W-joins can be found in $O(n^{2}m)$ time.
\begin{theorem}[\cite{KingR2008}]
\label{thm:proper-wjoin-algo}
  In $O(n^{2}m)$ time, one can find a proper W-join in a graph $G$, or report that $G$ does not admit a proper W-join.
\end{theorem}
Other algorithms for finding (proper) W-joins and applications of such algorithms are considered in several papers (see~\eg\cite{ChudnovskyK2012,FaenzaOS2011,KingR2008,OrioloPS2008}).

Twins can be found in linear time.
The algorithm is actually implicitly given by Habib \etal~\cite{HabibPV1998}.
We provide it here only for completeness.

\begin{theorem}
\label{thm:twins-algo}
  In $O(n+m)$ time, one can find twins in a graph $G$, or report that $G$ does not admit twins.
\end{theorem}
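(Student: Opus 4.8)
The plan is to reduce finding twins to finding two \emph{adjacent} vertices with equal closed neighborhoods, and then to test that condition edge by edge. Since $G$ is a graph (no semi-edges), a homogeneous strong clique of size $2$ is exactly a pair $\{a,b\}$ with $ab \in E(G)$ such that every $w \in V(G)\setminus\{a,b\}$ is adjacent to both $a$ and $b$ or to neither; that is, $ab \in E(G)$ and $N[a] = N[b]$. In particular every pair of twins is an edge of $G$, so it suffices to examine the $m$ edges rather than all $\binom{n}{2}$ vertex pairs, which is precisely what makes an $O(m\,\Delta(G))$ bound possible instead of $O(n^{2})$.

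Concretely, I would first build adjacency lists, compute all degrees, and allocate a Boolean array $\mathrm{mark}[\cdot]$ on $V(G)$ initialized to \textbf{false}, all in $O(n+m)$ time. Then, for each edge $\{u,v\} \in E(G)$: if $\deg(u) \neq \deg(v)$, discard the edge; otherwise set $\mathrm{mark}[w] \leftarrow \textbf{true}$ for every $w \in N[u]$, scan $N[v]$ to check that each of its members is marked, and then reset $\mathrm{mark}[w] \leftarrow \textbf{false}$ for every $w \in N[u]$. If the scan succeeds then $N[u] \subseteq N[v]$, and since $|N[u]| = |N[v]|$ this forces $N[u] = N[v]$, so $\{u,v\}$ witnesses that $G$ admits twins; output it. If no edge succeeds, report that $G$ does not admit twins. (Running the same loop to completion lists \emph{all} pairs of twins, which is the statement used in the proof of Theorem~\ref{thm:ds:fpt}.)

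Correctness is immediate from the characterization above. For the running time, the preprocessing costs $O(n+m)$, and for each of the $m$ edges the marking, scanning, and resetting each touch at most $\max(|N[u]|,|N[v]|) \le \Delta(G)+1$ vertices — the reset undoing exactly what the marking set — so each edge costs $O(\Delta(G))$, giving $O(n + m\,\Delta(G))$ in total, which is $O(m\,\Delta(G))$ once we delete isolated vertices in the preprocessing (an isolated vertex has no neighbor, hence cannot lie in a strong clique of size $2$). Since everything here is elementary, there is no genuine obstacle; the only points needing care are the reduction to adjacent, equal-closed-neighborhood pairs (so that the loop ranges over edges and not all vertex pairs) and resetting the mark array in time proportional only to what was set, so that the per-edge cost stays $O(\Delta(G))$.
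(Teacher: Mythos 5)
Your proof is correct and takes the same route as the paper's: iterate over the $m$ edges and, for each edge $\{u,v\}$, test whether $N[u]=N[v]$ in $O(\Delta(G))$ time. The extra details you supply (twins must be adjacent, the mark-and-reset implementation, the degree pre-check) are exactly the implicit steps behind the paper's one-line argument.
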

\begin{proof}
  We use a technique called \emph{partition refinement} (see \eg\cite{HabibMPV2000}).
  Given a universe $\mathbb{U}$, a partition $\mc{P}$ of $\mathbb{U}$, and a subset $S$ of $\mathbb{U}$, the \emph{refinement} of $\mc{P}$ splits each partition class $P$ of $\mc{P}$ into a class $P \cap S$ and a class $P \setminus S$.
  This operation takes $O(|S|)$ time~\cite{HabibMPV2000}.
  Now observe that $u,v \in V(G)$ are twins if and only if for any $x \in V(G)$, either $u,v \in N[x]$ or $u,v \not\in N[x]$.
  Hence if we set $\mathbb{U} = V(G)$, initialize $\mc{P} = \{V(G)\}$, and iteratively refine $\mc{P}$ with $N[x]$ for each $x \in V(G)$, then any two vertices in any non-singleton class of the final partition $\mc{P}'$ are twins.
  Moreover, if $\mc{P}'$ consists only of singleton classes, then $G$ does not admit twins.
  The total run time of this algorithm is $O(\sum_{x \in V(G)} |N[x]|) = O(n+m)$.
\end{proof}

\subsection{Finding \texorpdfstring{$1$}{1}-joins}
In this subsection, we describe how to find a $1$-join in polynomial time. We first need some auxiliary lemmas.

\begin{lemma}
\label{lem:1-join-connected}
  Let $(A_{1}, A_{2})$ be a $1$-join of a connected graph $G$. Then $G[V_{1}]$ and $G[V_{2}]$ are connected.
\end{lemma}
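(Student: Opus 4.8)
The plan is to argue by contradiction: assuming $G[V_{1}]$ is disconnected, I will produce a partition of $V(G)$ into two nonempty parts with no edges between them, contradicting connectedness of $G$. The two facts I will use repeatedly are: (i) by the definition of a $1$-join, every edge of $G$ with one endpoint in $V_{1}$ and the other in $V_{2}$ has both endpoints in $A_{1} \cup A_{2}$, since $V_{1}\backslash A_{1}$ is (strongly) anticomplete to $V_{2}$ and $V_{2}\backslash A_{2}$ is anticomplete to $V_{1}$; and (ii) $A_{1}$ is a clique, because $A_{1} \cup A_{2}$ is a strong clique, hence $G[A_{1}]$ is connected (it is nonempty since $A_{1}\neq\emptyset$). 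I will also note at the outset that $V_{1}$ and $V_{2}$ are both nonempty, again because $A_{1}\subseteq V_{1}$ and $A_{2}\subseteq V_{2}$ are nonempty.

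Now suppose $G[V_{1}]$ is disconnected and let $C$ be a connected component of $G[V_{1}]$ with $\emptyset \neq C \subsetneq V_{1}$. I split into two cases according to whether $C$ meets $A_{1}$. If $C \cap A_{1} = \emptyset$, then $C \subseteq V_{1}\backslash A_{1}$, so $C$ has no edge to $V_{2}$ by (i), and no edge to $V_{1}\backslash C$ since $C$ is a component of $G[V_{1}]$; thus $C$ is a connected component of $G$, and since $C \neq V(G)$ (because $V_{2}\neq\emptyset$) this contradicts connectedness of $G$. If instead $C \cap A_{1} \neq \emptyset$, then by (ii) the connected set $A_{1}$ lies entirely in the single component $C$, so $V_{1}\backslash C \subseteq V_{1}\backslash A_{1}$; hence $V_{1}\backslash C$ is anticomplete to $V_{2}$ by (i) and has no edge to $C$, so $V_{1}\backslash C$ is a union of components of $G$. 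Since $V_{1}\backslash C \neq \emptyset$ and $C \cup V_{2} \neq \emptyset$, this again contradicts connectedness of $G$. Therefore $G[V_{1}]$ is connected, and by the symmetric argument (swapping the roles of $1$ and $2$) so is $G[V_{2}]$.

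I do not expect a serious obstacle here; the one point that requires a little care is the case in which the disconnecting component $C$ contains part of $A_{1}$, where one must observe that the clique structure of $A_{1}$ forces \emph{all} of $A_{1}$ into $C$ so that the complementary side $V_{1}\backslash C$ is anticomplete to $V_{2}$ as well as to $C$. Everything else is a direct unwinding of the definition of a $1$-join.
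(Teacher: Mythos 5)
Your proof is correct and uses essentially the same argument as the paper: the clique $A_{1}$ lies entirely within one component of $G[V_{1}]$, so any other component of $G[V_{1}]$ is contained in $V_{1}\backslash A_{1}$ and hence isolated from $V_{2}$, contradicting connectedness of $G$. The paper phrases it by directly selecting a component $C$ that avoids $A_{1}$, whereas your two-case split handles the same situation slightly more verbosely, but the key observation is identical.
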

\begin{proof}
  Suppose that $G[V_{1}]$ has two connected components $C$ and $C'$.
  Because $G[A_{1}]$ is a clique, at most one of $C, C'$ contains vertices of $A_{1}$.
  Suppose that $C$ contains no vertices of $A_{1}$.
  Because $(A_{1}, A_{2})$ is a $1$-join, $C$ is not connected to $V_{2}$.
  This contradicts that $G$ is connected.
\end{proof}
A spanning tree $T$ of a graph $G$ is a \emph{BFS-spanning tree} if it is obtained from a breadth-first search on $G$.
The \emph{root} of a BFS-spanning tree is the vertex that is the origin of the breadth-first search.
Throughout, for $a,b \in V(G)$, we use $\dist(a,b)$ to denote the length of a shortest path between $a$ and $b$ in $G$, and $\dist_T(a,b)$ to denote the length of a shortest path between $a$ and $b$ in $T$.

The following lemma is inspired by an observation of Cornu{\'e}jols and Cunningham~\cite{CornuejolsC1985}.
\begin{lemma}
  \label{lem:1-join-crossingedge}
  Let $T$ be a BFS-spanning tree of a connected graph $G$ and let $(A_{1}, A_{2})$ be a $1$-join.
  Suppose that the root $r$ of $T$ is in $V_{1}$.
  Then there is a vertex $v \in A_{1}$ such that all vertices of $A_{2}$ are neighbors of $v$ in $T$.
\end{lemma}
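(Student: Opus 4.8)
The plan is to exploit the fact that a BFS-spanning tree connects every non-root vertex to its parent, which lies one level closer to the root, together with the structural restriction that the only edges between $V_1$ and $V_2$ run between $A_1$ and $A_2$. First I would fix a BFS-spanning tree $T$ rooted at $r \in V_1$, and consider the set $A_2 \subseteq V_2$. Since $r \in V_1$ and $A_2 \subseteq V_2$, no vertex of $A_2$ is the root, so every $v_2 \in A_2$ has a well-defined parent $p(v_2)$ in $T$. The edge $v_2 p(v_2)$ is an edge of $G$; I claim $p(v_2) \in A_1$. Indeed, by the definition of a $1$-join, $V_1 \setminus A_1$ is strongly anticomplete to $V_2$ and $V_2 \setminus A_2$ is strongly anticomplete to $V_1$, so the only edges of $G$ with one endpoint in $V_2$ and one endpoint in $V_1$ have their $V_1$-endpoint in $A_1$ (and their $V_2$-endpoint in $A_2$). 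If $p(v_2) \in V_2$, I would derive a contradiction: $p(v_2)$ would lie at BFS-level one less than $v_2$, but by induction on the BFS-level one can show that no vertex of $V_2$ can have smaller BFS-level than the minimum BFS-level attained in $A_2$ unless its tree-path to $r$ crosses into $V_1$ through $A_1 \times A_2$ — more directly, since the BFS tree of a connected graph is connected and $r \in V_1$, any vertex of $V_2$ must be reached from $r$, and the first time the BFS path from $r$ enters $V_2$ it must do so via an $A_1$–$A_2$ edge; hence the level-minimal vertices of $A_2$ have their parents in $A_1$.

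The cleaner route, which I expect to be the one the authors take, is: let $v_2^*$ be a vertex of $A_2$ of minimum BFS-level among all of $A_2$. Its parent $p(v_2^*)$ has strictly smaller level, so $p(v_2^*) \notin A_2$; since $p(v_2^*)$ is adjacent to $v_2^* \in V_2$, and $V_2 \setminus A_2$ is anticomplete to $V_1$ while $V_1 \setminus A_1$ is anticomplete to $V_2$, the only possibility is $p(v_2^*) \in A_1$ (it cannot be in $V_2 \setminus A_2$ because... actually it could a priori; here one uses that $v_2^*$ has minimum level in $A_2$, but a neighbour in $V_2\setminus A_2$ could still have smaller level). To rule that out I would argue instead about the whole set $A_2$ and the vertex $v$ it produces. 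Set $v := p(v_2^*) \in A_1$ (justified as above). Now take any other $v_2 \in A_2$. I want $v_2$ adjacent to $v$ in $T$, i.e. $p(v_2) = v$. Here I would use that $A_1 \cup A_2$ is a strong clique, so $v$ is strongly adjacent to every vertex of $A_2$; hence in the BFS all of $A_2$ lies at level at most $\mathrm{level}(v)+1$. Combined with the fact (to be verified by the anticompleteness conditions) that every $v_2 \in A_2$ has its parent in $A_1$, and that $A_1$ is itself a clique so all vertices of $A_1$ reachable this early lie at levels close to $\mathrm{level}(v)$, one concludes that each $v_2 \in A_2$ is placed by BFS as a child of the first-discovered vertex of $A_1$, which is $v$.

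The main obstacle, and the step I would spend the most care on, is precisely showing that \emph{every} $v_2 \in A_2$ (not just the level-minimal one) is assigned parent $v$, rather than some other vertex of $A_1$ or a vertex of $V_2$. The key levers are: (i) $A_1 \cup A_2$ being a \emph{strong clique}, so $v$ and all of $A_2$ are pairwise adjacent and $A_1$ itself is a clique; (ii) the exact anticompleteness of $V_1 \setminus A_1$ to $V_2$ and of $V_2 \setminus A_2$ to $V_1$, which forces all cross edges through the single "window" $A_1 \times A_2$; and (iii) the BFS tie-breaking, which — since $v$ is discovered no later than any other vertex of $A_1$ can be used to reach $A_2$ — makes $v$ the parent of every $A_2$-vertex discovered via the window. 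I would formalize (iii) by noting that $v_2^*$ enters the queue through $v$, so $v$ is dequeued before $v_2^*$; and since $A_1 \cup A_2$ is complete, every vertex of $A_2$ is a neighbour of $v$ and is discovered (at the latest) when $v$ is processed, hence receives $v$ as its parent unless it was already discovered earlier — and it cannot have been discovered earlier than $v$ is processed, since any earlier discovery would require a neighbour of $v_2\in A_2$ in $V_1\setminus A_1$ (impossible) or a path through $V_2$ reaching $A_2$ before $v$ is processed, which again must traverse the window and therefore pass through a vertex of $A_1$ discovered no earlier than $v$. This yields $p(v_2)=v$ for all $v_2 \in A_2$, completing the proof.
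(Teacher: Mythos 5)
Your plan identifies the right ingredients, and supplies the details behind the paper's one-line ``immediate from the fact that $T$ is a BFS-spanning tree'': every tree-path from a $V_2$-vertex to $r$ must cross the window $A_1 \times A_2$, and $A_1 \cup A_2$ being a strong clique forces the first window vertex reached to see all of $A_2$ at once. The conclusion is correct. However, as written, the argument has two soft spots that you yourself flag but do not close. First, your case analysis at the end (``would require a neighbour of $v_2$ in $V_1\setminus A_1$ \dots or a path through $V_2$'') omits the possibility that $v_2$'s parent is some $v' \in A_1$ with $v' \ne v$ dequeued before $v$; this is exactly the scenario you call ``the main obstacle'', yet it is not covered by either branch. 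The fix reuses the clique once more: such a $v'$ would be adjacent to $v_2^*$, and being dequeued before $v$ it would have claimed $v_2^*$ as its child, contradicting $p(v_2^*) = v$. Second, the claim that \emph{every} $v_2 \in A_2$ has its parent in $A_1$, which you mark ``to be verified'', only follows from your paragraph-one path-crossing argument for the level-minimal $v_2^*$; for general $v_2$ one first needs that all of $A_2$ sits on a single BFS level. A cleaner route, which also makes both of these points transparent, is: let $\ell$ be the minimum BFS level of $A_1$. The crossing argument gives every $v_2 \in A_2$ a strict $A_1$-ancestor, hence level $\geq \ell + 1$; adjacency of $A_2$ to a level-$\ell$ vertex of $A_1$ gives level $\leq \ell + 1$; so all of $A_2$ lies at level exactly $\ell+1$, all parents of $A_2$-vertices lie at level $\ell$, and (again by crossing, since no $V_2$-vertex adjacent to $A_2$ can sit at level $\ell$) these parents lie in $A_1$. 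Since the level-$\ell$ $A_1$-vertices are all adjacent to all of $A_2$, BFS hands every $A_2$-vertex to whichever of them is dequeued first, which is precisely your $v = p(v_2^*)$.
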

\begin{proof}
  This is immediate from the fact that $T$ is a BFS-spanning tree.
\end{proof}

For a tree $T$, call a pair of vertices $(u,v)$ \emph{diametral} if the length of the $u$--$v$-path in $T$ is maximum among all pairs.
\begin{lemma}
\label{lem:diametralpair-find}
  Given a tree $T$ on $n$ vertices, a diametral pair of vertices can be found in $O(n)$ time.
\end{lemma}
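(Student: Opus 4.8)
The statement to prove is the classical fact that a diametral pair of a tree can be found in linear time.

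\medskip

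The plan is to use the well-known double-BFS (equivalently double-DFS) technique. First I would pick an arbitrary vertex $r \in V(T)$ and run a breadth-first (or depth-first) search from $r$, which takes $O(n)$ time on a tree since $|E(T)| = n-1$; during this search I record the distance $\dist(r,x)$ from $r$ to every vertex $x$. Let $u$ be a vertex maximizing $\dist(r,u)$. Then I would run a second BFS, this time rooted at $u$, again recording all distances $\dist(u,x)$, and let $v$ be a vertex maximizing $\dist(u,v)$. I claim $(u,v)$ is a diametral pair, and the whole procedure is clearly $O(n)$: two linear-time searches plus two linear scans to extract the maxima.

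\medskip

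The correctness argument—which is the only nontrivial part—is to show that $u$ (the far endpoint found from an arbitrary root $r$) is an endpoint of some diametral path. Suppose $(a,b)$ is a diametral pair, with $P_{ab}$ the unique $a$--$b$ path in $T$. Consider the unique $r$--$u$ path $P_{ru}$. Since $T$ is a tree and connected, $P_{ru}$ meets $P_{ab}$ in a (possibly single-vertex) subpath; let $w$ be the vertex of $P_{ab}$ closest to $r$ along $P_{ru}$, so that $P_{ru}$ splits as $r \rightsquigarrow w \rightsquigarrow u$ and the $w$--$u$ portion is internally disjoint from $P_{ab}$. I would then argue that $\dist(w,u) \ge \dist(w,a)$ and $\dist(w,u) \ge \dist(w,b)$: if, say, $\dist(w,u) < \dist(w,a)$, then replacing the $w$--$b$ half of $P_{ab}$ (or the appropriate half) by the $w$--$u$ path would yield a path longer than $P_{ab}$ when combined with the $w$--$a$ half, contradicting diametrality — here one uses that $\dist(r,u) \ge \dist(r,a),\dist(r,b)$ by choice of $u$, together with $\dist(r,\cdot) = \dist(r,w) + \dist(w,\cdot)$ for any vertex whose path to $r$ passes through $w$. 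Hence the path from $u$ through $w$ to the farther of $a,b$ has length at least $\dist(a,b)$, so it is itself diametral and has $u$ as an endpoint. Running BFS from $u$ then finds a vertex $v$ at maximum distance, and $\dist(u,v) \ge \dist(a,b)$ combined with maximality of $(a,b)$ gives $\dist(u,v) = \dist(a,b)$, so $(u,v)$ is diametral.

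\medskip

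The main obstacle is purely the correctness proof of the double-BFS heuristic, specifically the case analysis showing $u$ lies on a diametral path; the running-time bound and the algorithm itself are immediate. One must be slightly careful about the degenerate cases where $P_{ru}$ and $P_{ab}$ share no edge (meeting at a single vertex $w$) or where $r$ already lies on $P_{ab}$, but these are handled uniformly by the distance-additivity identity along paths through $w$.
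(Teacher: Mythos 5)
Your proposal is the same double-BFS (or double-DFS) algorithm as the paper uses, and the overall correctness argument — show that the first-phase endpoint $u$ lies on \emph{some} diametral path, then the second BFS finds its mate — is also the same. The only difference is cosmetic: the paper starts from an arbitrary leaf, while you start from an arbitrary vertex; both work.

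There is, however, one imprecise step worth flagging. You write that ``since $T$ is a tree and connected, $P_{ru}$ meets $P_{ab}$.'' This is not a general fact about two paths in a tree — two paths in a tree can easily be vertex-disjoint (take the endpoints of a long path graph). The two specific paths in question \emph{do} intersect, but the justification must invoke the choice of $u$ as a farthest vertex from $r$ together with the diametrality of $(a,b)$: if the paths were disjoint, let $p \in P_{ru}$ and $q \in P_{ab}$ be the endpoints of the unique bridge path between them; then $\dist(r,u) \ge \dist(r,a),\dist(r,b)$ gives $\dist(p,u) \ge \dist(p,q) + \max\{\dist(q,a),\dist(q,b)\}$, and hence $\dist(u,a)$ or $\dist(u,b)$ strictly exceeds $\dist(a,b)$, contradicting diametrality. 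The paper handles this with the terse ``\dots or $(a,b)$ would not be diametral,'' which is exactly this observation. Once the intersection is established correctly, the rest of your argument — the distance-additivity through $w$ and the replacement argument — goes through as you describe.
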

\begin{proof}
  Let $w$ be an arbitrary leaf of $T$ and let $u$ be a vertex furthest away in $T$ from $w$.
  Let $v$ be a vertex furthest away from $u$ in $T$.
  Note that all these vertices can be found in $O(n)$ time.
  If~$u$ is in a diametral pair, then $(u,v)$ is clearly a diametral pair.
  Otherwise, let $(a,b)$ be a diametral pair.
  For vertices $s,t$, let $P_{st}$ denote the unique path between $s$ and $t$ in $T$.
  Let $x$ be a vertex of~$P_{uv}$ that is closest to a vertex of $P_{ab}$.
  By the choice of $v$, $\dist(x,v) \geq \max\{\dist(x,a), \dist(x,b)\}$.
  Recall that a \emph{subtree} of $T$ is any connected subgraph of $T$. 
  Consider the subtree $T_{x}$ of $T$ induced by $x$, the vertices in the subtree of $T\setminus\{x\}$ containing $u$, and the vertices in the subtree of $T\setminus\{x\}$ containing~$v$. 
  Suppose that $w$ is in $V(T) \setminus V(T_{x})$.
  Since $x$ is contained in both the shortest path from~$w$ to~$u$ and the shortest path from $w$ to $v$, the choice of $u$ implies that $\dist(x,u) \geq \dist(x,v)$.
  Hence $\dist(x,u) \geq \max\{\dist(x,a),\dist(x,b)\}$, implying that
  \begin{equation*}
    \dist(u,v)    = \dist(u,x) + \dist(x,v)
               \geq 2 \max\{\dist(x,a), \dist(x,b)\}
               \geq \dist(x,a) + \dist(x,b) = \dist(a,b) \enspace .
  \end{equation*}
  Hence $(u,v)$ is a diametral pair.
  Suppose then that $w$ is in $V(T_{x})$ and let $y$ be the vertex of $P_{uv}$ closest to $w$.
  Note that any shortest path from $w$ to $u$, $v$, $a$, or $b$ must contain $y$, and that any shortest path from $w$ to $a$ or $b$ must contain $x$.
  The choice of $u$ implies that $\dist(y,u) \geq \max\{\dist(y,a),\dist(y,b)\}$, and thus $\dist(x,u) \geq \max\{\dist(x,a),\dist(x,b)\}$.
  As before, this implies that $\dist(u,v) \geq \dist(a,b)$.
  Hence $(u,v)$ is a diametral pair.
\end{proof}
We require some auxiliary notions on trees.
Given any rooted tree $T$, the \emph{nearest common ancestor} of any two vertices $a,b \in V(T)$, denoted by $\nca(a,b)$, is the vertex $c$ in $T$ that is an ancestor of both~$a$ and $b$ and no child of $c$ is an ancestor of both $a$ and $b$.
If $a$ is not an ancestor of $b$ and vice versa, define the \emph{$a$-nearest almost-common ancestor} of $a$ and $b$, or $a$-$\nca(a,b)$, as the child of $\nca(a,b)$ that is an ancestor of $a$.

\begin{lemma}
\label{lem:diametralpair}
  Let $T$ be a BFS-spanning tree of a connected graph $G$.
  Then for any diametral pair $(u,v)$ of $T$, at most one of $u,v$ is in $A_{1} \cup A_{2}$ for any $1$-join $(A_{1}, A_{2})$ of $G$.
\end{lemma}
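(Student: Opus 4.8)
The statement to prove is Lemma~\ref{lem:diametralpair}: if $T$ is a BFS-spanning tree of a connected graph $G$, then for any diametral pair $(u,v)$ of $T$, at most one of $u,v$ lies in $A_1 \cup A_2$ for any $1$-join $(A_1,A_2)$ of $G$. The plan is to argue by contradiction: suppose both $u$ and $v$ lie in $A_1 \cup A_2$. First I would use Lemma~\ref{lem:1-join-connected}, which tells us that $G[V_1]$ and $G[V_2]$ are each connected, and Lemma~\ref{lem:1-join-crossingedge}, which says that (assuming the BFS root $r$ lies in $V_1$, say) there is a single vertex $w \in A_1$ all of whose tree-neighbors include every vertex of $A_2$. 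The key structural fact I would extract is that in $T$, every vertex of $V_2$ hangs off the subtree rooted at some vertex of $A_2$, and every such $A_2$-vertex is a child of the single vertex $w \in A_1$; hence the whole of $V_2$ sits in $T$ at bounded ``horizontal'' distance from $w$ — more precisely, any $T$-path between two vertices of $V_2$ passes within distance $2$ of $w$ (through $w$ itself, or through a common $A_2$-parent).

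The main case analysis is on where $u$ and $v$ sit relative to the split $(V_1,V_2)$. If both $u,v \in A_2$, then both are children of $w$, so $\dist_T(u,v) = 2$, which forces the diameter of $T$ to be $2$; but then $T$ is a star, and one checks directly that a star has no $1$-join with $A_1 \cup A_2$ meeting both $u$ and $v$ at the leaves (the centre would have to be split off), a contradiction — or more cleanly, a diameter-$2$ tree makes the claim vacuous or trivial since we can always pick a diametral pair avoiding $A_1\cup A_2$. If $u \in A_1$ and $v \in A_2$ (the case $u,v$ both in $A_1$ is immediate since $A_1$ is a clique in $G$, but not necessarily in $T$, so it still needs the BFS structure): here $u = w$ is not forced, but since $A_1$ is a strong clique all of $A_1$ is ``close'' in $G$; I would instead consider the $T$-path $P$ from $u$ to $v$ and show it can be rerouted to a strictly longer $T$-path. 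The crucial leverage is this: because $G[V_2]$ is connected and every $T$-path within $V_2$ goes through (or near) $w$, and because $v \in A_2$ is a leaf-ward vertex of $T$ adjacent in $T$ only to $w$ above it, any vertex $x$ of $V_2$ deep in $T$ satisfies $\dist_T(w,x) \geq \dist_T(w,v)$ is not guaranteed — so rather than a depth argument I would pick $x \in V_2$ maximizing $\dist_T(r,x)$ and use the diametral-pair exchange argument already used in Lemma~\ref{lem:diametralpair-find}: from a diametral pair one endpoint can be swapped for a farthest vertex, and if $u$ (or $v$) is a cut-vertex-like separator vertex of $T$ sitting on the $r$-side, there is a vertex on the far side of $V_2$ strictly farther, contradicting diametrality.

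Concretely, the steps in order: (1) assume for contradiction $u,v \in A_1\cup A_2$, and fix the BFS root $r$; \wloge $r \in V_1$. (2) Invoke Lemma~\ref{lem:1-join-crossingedge} to obtain $w \in A_1$ with $A_2 \subseteq N_T(w)$, and note that $w$ separates $V_2 \setminus A_2$ (indeed all of $V_2$) from $r$ in $T$, since $T$ is a BFS tree and all $G$-edges from $V_2$ to $V_1$ go into $A_1 \cap N_T(\cdot)$, forcing the tree parent of every $A_2$-vertex to be $w$ and the tree parent of every vertex of $V_2\setminus A_2$ to lie in $V_2$. (3) Deduce $V_2$ occupies a union of subtrees of $T$ rooted at the children-of-$w$ that lie in $A_2$, so $\diam_T(V_2) \le 2\,\mathrm{ecc}_T(w \text{ restricted to } V_2) \le$ something controlled, and in particular any vertex of $A_1 \cup A_2$ that lies strictly inside $V_2$ (i.e. in $A_2$) has a tree-descendant in $V_2$ at least as deep as any leaf of $T$ in $V_2$. (4) Run the exchange argument: whichever of $u,v$ lies ``closer to $r$'' along the diametral path can be replaced by a deepest descendant in $T$ on the far side, strictly lengthening the path unless that deepest descendant equals the other endpoint — a case ruled out because it would put both endpoints in the same small subtree, contradicting either diametrality or forcing $\diam T \le 2$. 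The main obstacle I anticipate is case (3)–(4): carefully establishing that $w$ genuinely separates all of $V_2$ from $r$ in the BFS tree (one must rule out $T$-edges from $V_2$ directly to $A_1 \setminus \{w\}$ or to $V_1 \setminus A_1$), and then making the exchange argument airtight across the sub-cases of where $u,v$ land — this is where the BFS property (levels differ by exactly one across tree edges, cross edges span at most one level) does the real work and must be spelled out.
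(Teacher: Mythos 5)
Your high-level framework (argue by contradiction, split on how $u,v$ fall into $A_1,A_2$, invoke Lemmas~\ref{lem:1-join-connected} and~\ref{lem:1-join-crossingedge} to get the special vertex $w \in A_1$, then exhibit a longer path to contradict diametrality) matches the paper's proof. However, two of your sub-cases have genuine gaps. First, for $u,v \in A_2$: you note $\dist_T(u,v)=2$, so $T$ has diameter $2$, and then try to escape either via ``a star has no $1$-join'' (but only $T$ is a star, not $G$; $G$ can have many non-tree edges) or via ``we can always pick a diametral pair avoiding $A_1\cup A_2$'' (the lemma quantifies over \emph{all} diametral pairs, so you may not choose a favourable one). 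The correct argument, which the paper uses, is direct: since $G[V_2]$ is connected and $V_2\setminus A_2 \not= \emptyset$, follow a $T$-path from $w$ down through $A_2$ into $V_2\setminus A_2$ to a leaf $x$; then $\dist_T(x,v)\geq 3 > 2 = \dist_T(u,v)$, contradicting diametrality.

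Second, and more seriously, the case $u,v\in A_1$ is the hardest one, and your proposal stops at ``this is where the BFS property does the real work and must be spelled out.'' It must indeed: the paper's treatment of this case is the bulk of the proof. It requires observing that neither $u$ nor $v$ can be a $T$-neighbour of any other $A_1$-vertex (else the common $w$ gives again a deeper leaf in $V_2\setminus A_2$), and then a careful nearest-common-ancestor computation comparing $\nca(u,v)$, $\nca(u,w)$, $\nca(v,w)$: one shows that one of these is a strict descendant of $\nca(u,v)$, and then, using the BFS level inequality $|\dist_T(r,v)-\dist_T(r,w)|\leq 1$ (which holds because $v,w$ are adjacent in $G$), derives $\dist_T(u,x) > \dist_T(u,v)$ for a deep leaf $x$ in $V_2\setminus A_2$. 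Your description of a generic ``exchange argument'' does not supply this case analysis, so as written the proof is incomplete precisely where the BFS structure is load-bearing.
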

\begin{proof}
  Let $T$ be a BFS-spanning tree of a graph $G$, let $(u,v)$ be any diametral pair of $T$, and let $(A_{1}, A_{2})$ be a $1$-join of $G$ with a corresponding partition $(V_1,V_2)$ of $G$.
  Suppose that the root $r$ of $T$ is in $V_{1}$. We distinguish three cases.
  \begin{quote}
    (1) $u,v \in A_{2}$
  \end{quote}
  Note that $u,v$ must be leafs of $T$.
  From Lemma~\ref{lem:1-join-crossingedge}, there is a vertex $w \in A_{1}$ adjacent in $T$ to all vertices of $A_{2}$, and so $\dist_T(u,v)=2$.
  Since $G[V_{2}]$ is connected by Lemma~\ref{lem:1-join-connected}, there is a path in $T \cap (V_{2} \cup \{w\})$ from $w$ to a vertex $x \in V_{2}\setminus A_{2}$ that is a leaf of $T$.
  But then $\dist_{T}(x,v) > \dist_{T}(u,v) = 2$, contradicting that $(u,v)$ is a diametral pair.

  \begin{quote}
    (2) $v \in A_{1}$, $u \in A_{2}$
  \end{quote}
  Consider the vertex $w \in A_{1}$ adjacent to all vertices of $A_{2}$.
  Since $G[V_{2}]$ is connected, there is a path from $w$ to a vertex $x \in V_{2}\setminus A_{2}$ that is a leaf of $T$.
  But then $\dist_{T}(x,v) > \dist_{T}(u,v)$, contradicting that $(u,v)$ is a diametral pair.

  \begin{quote}
    (3) $u,v \in A_{1}$
  \end{quote}
  Consider the vertex $w \in A_{1}$ neighboring all vertices in $A_{2}$.
  Again there is a vertex $x \in V_{2}\setminus A_{2}$ that is a leaf of $T$ with $\dist_{T}(w,x) \geq 2$ and thus $\dist_{T}(r,x) \geq \dist_{T}(r,w) + 2$.
  Note that $w \not= u,v$, as $u$ and $v$ are leafs.
  Moreover, $w$ is not adjacent to $u$ or $v$ in $T$; otherwise, $x$ has larger distance to $u$ or~$v$ than $v$ or $u$ respectively, contradicting that $(u,v)$ is a diametral pair.

  Let $a = \nca(u,v)$, $b = \nca(u,w)$ and $c = \nca(v,w)$. Note that $\dist_{T}(u,v) = \dist_{T}(r,u) + \dist_{T}(r,v) - 2\cdot\dist_{T}(r,a)$.
  Also, $\dist_{T}(u,x) = \dist_{T}(r,u) + \dist_{T}(r,x) -2\cdot\dist_{T}(r,b)$.
  Because $(v,w) \in E(G)$ and $T$ is a BFS-spanning tree, $|\dist_{T}(r,v) - \dist_{T}(r,w)| \leq 1$.
  But then $\dist_{T}(r,x) > \dist_{T}(r,v)$.
  Suppose that $b$ and $c$ are equal to $a$ or an ancestor of $a$.
  Since $\dist_{T}(r,a) \geq \dist_{T}(r,b)$, this implies that $\dist_{T}(u,x) > \dist_{T}(u,v)$.
  This contradicts that $(u,v)$ forms a diametral pair.
  Hence~$b$ or~$c$ is a descendant of $a$.
  Assume, \wloge that $c$ is a descendant of $a$.
  Then
  \begin{equation*}
    \begin{array}{rcl}
      \dist_{T}(c,x) &    = & \dist_{T}(r,x) - \dist_{T}(r,c)\\
                     & \geq & \dist_{T}(r,w) + 2 - \dist_{T}(r,c)\\
                     &    > & \dist_{T}(r,v) - \dist_{T}(r,c)\\
                     &    = & \dist_{T}(c,v).
    \end{array}
  \end{equation*}
  This means that $\dist_{T}(u,x) > \dist_{T}(u,v)$, contradicting that $(u,v)$ is diametral.
\end{proof}

A $1$-join $(A_{1}, A_{2})$ is \emph{minimal} if there is no $A \subset A_{1}$ such that $(A_{1}\setminus A, A_{2} \cup A)$ is $1$-join as well.

\begin{lemma}
\label{lem:1-join-minimal}
  If $(A_{1}, A_{2})$ is a minimal $1$-join, then $G[V_{1}]-E(G[A_{1}])$ is connected.
\end{lemma}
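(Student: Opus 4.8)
The plan is to argue by contradiction. Suppose $(A_1,A_2)$ is minimal, with associated partition $(V_1,V_2)$, but that $G':=G[V_1]-E(G[A_1])$ is disconnected. Since $A_1\cup A_2$ is a strong clique, $E(G[A_1])$ is exactly the set of all edges joining two vertices of $A_1$, so in $G'$ the vertices of $A_1$ are pairwise non-adjacent. I will produce a nonempty proper subset $A$ of $A_1$ such that $(A_1\setminus A,\ A_2\cup A)$ is again a $1$-join of $G$ (for a suitable partition), contradicting minimality.

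The first step is the observation that \emph{every connected component of $G'$ contains a vertex of $A_1$}. If a component $C$ of $G'$ avoided $A_1$, then $C\subseteq V_1\setminus A_1$; any edge of $G$ from $C$ to $V_1\setminus C$ is not an edge of $G[A_1]$, hence is still present in $G'$, so $C$ would not be a component — thus $C$ is anticomplete to $V_1\setminus C$ in $G$. As $C\subseteq V_1\setminus A_1$ is strongly anticomplete to $V_2$ by the definition of a $1$-join, no edge of $G$ leaves $C$, and since $C\neq\emptyset$ and $C\neq V(G)$ (the latter because $A_1\neq\emptyset$ and $A_1\cap C=\emptyset$) this contradicts the connectedness of $G$.

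Now pick any component $C$ of $G'$. By the observation $C\cap A_1\neq\emptyset$, and since $G'$ has at least two components each meeting $A_1$, also $A_1\not\subseteq C$; set $A:=C\cap A_1$, a nonempty proper subset of $A_1$. The candidate reduced join is $(A_1\setminus A,\ A_2\cup A)$ with associated partition $(V_1\setminus C,\ V_2\cup C)$ — note that one must move the \emph{entire} component $C$, not merely $A$, across the partition. What remains is bookkeeping: one checks that $(A_1\setminus A)\cup(A_2\cup A)=A_1\cup A_2$ is still a strong clique; that $(V_1\setminus C)\setminus(A_1\setminus A)=(V_1\setminus A_1)\setminus C$ is strongly anticomplete to $V_2\cup C$ (strong anticompleteness to $V_2$ is inherited from the original join, while strong anticompleteness to $C$ holds because any crossing edge would survive in $G'$ and merge $C$ with the other side); that symmetrically $(V_2\cup C)\setminus(A_2\cup A)=(V_2\setminus A_2)\cup(C\setminus A_1)$ is strongly anticomplete to $V_1\setminus C$; and that the nonemptiness clauses hold ($A_1\setminus A\neq\emptyset$ since $A_1\not\subseteq C$, and $A_2\cup A\supseteq A\neq\emptyset$, with both complements in $V(G)$ clearly nonempty). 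This contradicts minimality, so $G'$ must be connected. The only genuinely delicate point — and the step I expect to be the main obstacle to get right — is choosing the correct associated partition for the reduced join: using $(V_1\setminus A,\ V_2\cup A)$ instead of $(V_1\setminus C,\ V_2\cup C)$ would break the anticompleteness conditions, because the vertices of $V_1\setminus A_1$ that are attached only to $A$ have to be relocated together with $A$.
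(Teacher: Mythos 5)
Your proof is correct and follows essentially the same route as the paper: pick a component $C$ of $G[V_1]-E(G[A_1])$, set $A=C\cap A_1$, and reduce to the $1$-join $(A_1\setminus A,\ A_2\cup A)$. The paper's version is terser (it omits the explicit verification of the new partition $(V_1\setminus C,\ V_2\cup C)$, leaving it to the stated convention that the $V_i$ are determined by the $A_i$, and it invokes Lemma~\ref{lem:1-join-connected} to get $C\cap A_1\neq\emptyset$), but your expansion of these points is accurate and not a different argument.
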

\begin{proof}
  For suppose not and let $\mc{C}$ be the set of connected components of $G[V_{1}]-E(G[A_{1}])$.
  For any $C \in \mc{C}$, observe that $C \cap A_{1} \not= \emptyset$, because $G[V_{1}]$ is connected.
  But since the components are pairwise disjoint, this implies that the components induce a partition of $A_{1}$ into nonempty subsets.
  As $|\mc{C}| \geq 2$, $(A_{1} \setminus (A_{1} \cap C), A_{2} \cup (A_{1} \cap C))$ is a $1$-join for any $C \in \mc{C}$, contradicting the minimality of $(A_{1}, A_{2})$.
\end{proof}

\begin{theorem} \label{thm:1-join-algo-easy}
  In $O(n (n+m))$ time, one can find a $1$-join in a connected graph $G$, or report that $G$ does not have such a join.
\end{theorem}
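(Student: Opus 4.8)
The plan is to exploit Lemmas~\ref{lem:1-join-connected}--\ref{lem:diametralpair} to reduce the search for a $1$-join to a linear number of ``rooted'' subproblems, each solvable in $O(n+m)$ time. First I would check that $G$ is connected; if not, a $0$-join exists and there is no $1$-join in the sense needed here (or we handle each component separately). Assuming $G$ connected, I compute a BFS-spanning tree $T$ from an arbitrary leaf, then find a diametral pair $(u,v)$ of $T$ using Lemma~\ref{lem:diametralpair-find} in $O(n)$ time. By Lemma~\ref{lem:diametralpair}, at most one of $u,v$ lies in $A_1 \cup A_2$ for any $1$-join $(A_1,A_2)$; hence at least one of $u,v$ lies in $V_1 \setminus A_1$ or $V_2 \setminus A_2$, and in particular one of $u,v$ can serve as a root guaranteed to lie strictly on one ``side'' of any hypothetical $1$-join. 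So it suffices to solve the following rooted problem twice (once with $r=u$, once with $r=v$): given a root $r$ known to lie in $V_1\setminus A_1$, decide whether $G$ has a $1$-join.

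For the rooted problem I would recompute a BFS-spanning tree $T_r$ rooted at $r$. By Lemma~\ref{lem:1-join-crossingedge}, if $(A_1,A_2)$ is a $1$-join with $r\in V_1$, there is a vertex $w\in A_1$ all of whose $T_r$-children include every vertex of $A_2$; thus $V_2$ is exactly the union of some subtrees hanging off a single vertex $w$. This means $A_2 = N(w)\cap V_2$ and $V_2$ is a union of connected components of $G - N[w] \cup (\text{children subtrees})$; more precisely, for each candidate $w$ (there are $n$ of them), $V_2$ must be a nonempty union of connected components of $G - (N[w])$ together with the vertices of $N(w)$ that go into $A_2$. I would, for each candidate $w$, let $V_2$ be the set of all vertices separated from $r$ in $T_r$ by passing through $w$ (i.e. vertices $x$ with $w$ on the $r$--$x$ path in $T_r$, excluding $w$ itself), set $A_2 = N(w)\cap V_2$, $A_1 = N(V_2)\setminus V_2$, and $V_1 = V(G)\setminus V_2$, and then verify in $O(n+m)$ time the three defining conditions of a $1$-join: that $A_1\cup A_2$ is a clique, that $V_1\setminus A_1$ is anticomplete to $V_2$ and $V_2\setminus A_2$ anticomplete to $V_1$, and that all four of $A_1, V_1\setminus A_1, A_2, V_2\setminus A_2$ are nonempty. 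Over all $n$ choices of $w$ and the two choices of root, this is $O(n(n+m))$ total. Correctness follows because Lemma~\ref{lem:1-join-crossingedge} forces any genuine $1$-join with root in $V_1$ to have its $V_2$ of exactly this subtree form for $w$ the crossing vertex, so if a $1$-join exists the procedure finds one for the appropriate root.

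The main obstacle I anticipate is pinning down, for a fixed candidate $w$, that the ``subtree below $w$ in $T_r$'' is the only viable candidate for $V_2$ — in principle $V_2$ could be a union of several subtrees hanging off the same $w$, not a single one, and one must argue that $V_2$ is forced to be \emph{all} vertices lying below $w$ (equivalently, that $V_1$ is connected and contains $r$, which pins $V_1$ as the $r$-component of $G - (\text{the part of } V_2 \text{ not in } A_1)$). Here Lemma~\ref{lem:1-join-connected} (both sides connected) is the key lever: since $G[V_1]$ is connected and $r\in V_1$, and $G[V_2]$ is connected and attaches to $V_1$ only through the clique $A_1\cup A_2$ at the single vertex $w\in A_1$, the set $V_2$ is precisely one connected ``pendant blob'' reachable from $w$, and in $T_r$ that blob is exactly the set of descendants of $w$ routed through the appropriate children — so taking all descendants of $w$ is correct. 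A careful but routine argument formalizes this; once it is in place, each candidate check is a straightforward linear-time verification, and the claimed $O(n(n+m))$ bound follows.
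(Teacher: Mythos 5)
Your high-level framing — compute a diametral pair of a BFS tree to obtain (after trying both endpoints) a root $r$ lying outside $A_1 \cup A_2$, then a linear number of $O(n+m)$-verifiable candidates — matches the paper. The gap is in the per-candidate guess: you take $V_2$ to be the set $D(w)$ of strict $T_r$-descendants of a vertex $w$, and your connectivity argument (``$V_2$ is precisely one connected pendant blob reachable from $w$, so taking all descendants of $w$ is correct'') does not hold. The set $D(w)$ of descendants of the crossing vertex $w \in A_1$ is in general a \emph{strict superset} of $V_2$: nothing in the $1$-join definition prevents $w$ from having $T_r$-children in $V_1 \setminus A_1$, and when $|A_1| \geq 2$ other vertices of $A_1$ can also be descendants of $w$. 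Your argument tacitly assumes both $A_1=\{w\}$ and that $w$ has no $V_1$-side children. When that fails, the reconstructed $A_1' := N(D(w)) \setminus D(w)$ picks up vertices outside the cut clique, the clique test rejects, and the procedure returns NO on a yes-instance. For a concrete failure, take two $4$-cycles $u_1u_2u_3u_4$ and $v_1v_2v_3v_4$ and make $\{u_1,u_2,v_1,v_2\}$ a clique; this has the $1$-join $(\{u_1,u_2\},\{v_1,v_2\})$ and no simplicial vertex. Rooting a BFS tree at the diametral vertex $u_3$ can place $u_1$, $v_1$, $v_2$ all as children of $u_2$, so $D(u_2) = \{u_1,v_1,v_2,v_3,v_4\}$, whose external neighborhood $\{u_2,u_4\}$ is a non-edge; one checks that every other candidate $w$ fails for this tree, and the other diametral root fails by symmetry.

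The paper avoids this entirely by enumerating over tree \emph{edges} $(u,v)$ rather than single vertices. For a tree edge with $u \in A_1$ and $v \in A_2$, the set $N[u] \cap N[v]$ equals $A_1 \cup A_2$ exactly, because any common neighbor of $u$ and $v$ must lie in the cut clique ($V_1 \setminus A_1$ is anticomplete to $v$, and $V_2 \setminus A_2$ is anticomplete to $u$). This recovers the cut clique $A$ deterministically, independently of BFS tie-breaking or subtree shape; the split of $A$ into $A_1$ and $A_2$ is then found by a closure-style BFS in $G - A$ starting from a $T$-child of $v$. To repair your proof you would need to compute $A_1 \cup A_2$ from a neighborhood intersection before guessing the partition — essentially what the paper does — rather than reading $V_2$ off the subtree structure, which discards exactly the information that makes the candidate check succeed.
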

\begin{proof}
  Consider any BFS spanning tree of $G$ and some diametral pair $(r,r')$ of this tree.
  If $G$ has a $1$-join $(A_{1}, A_{2})$, we know from Lemma~\ref{lem:diametralpair} that at most one of $r,r'$ is in $A_{1} \cup A_{2}$.
  Assume, \wloge that $r$ is not in $A_{1} \cup A_{2}$ (algorithmically we will actually try both $r$ and~$r'$).
  Construct a BFS spanning tree $T$ with $r$ as its root.

  Let $e = (u,v)$ be any edge of the spanning tree not incident with $r$.
  Assume that $u \in A_{1}$, $v \in A_{2}$, and $v$ has a child $w$ in $T$.
  By Lemma~\ref{lem:1-join-crossingedge} and \ref{lem:1-join-connected}, we can assume that such an edge exists.
  Moreover, $w \in V_{2} \setminus A_{2}$.
  Now find the set of all vertices in $N[u] \cap N[v]$.
  This set forms the candidate set $A := A_{1} \cup A_{2}$.
  This can be done in $O(\Delta(G))$ time.

  What remains is to verify that we have indeed found a $1$-join.
  As a first step, we verify that $A$ is a clique.
  This can be done in $O(|A|^{2})$ or $O(m)$ time.
  Next, we find the partition $V_{1}, V_{2}$.
  To this end, collect the set $R$ of all vertices reachable from $u$ in $G \setminus A$, using say a breadth-first search.
  Let $A' = N(R)$.
  Note that $A' \subseteq A$.
  Then consider $N(A') \setminus A$ and continue the breadth-first search in $G \setminus A$ from those vertices.
  Iteratively apply this procedure.
  If the search visits all vertices of $G \setminus A$, then $G$ has no $1$-join with $(u,v)$ as its basis.
  Otherwise, the set $R$ of visited vertices forms $V_{1}\setminus A_{1}$ and the vertices in $N(R) \cap A$ form $A_{1}$. It is now easy to find $A_{2}$ and $V_{2}$.
  The correctness of this procedure follows from Lemma~\ref{lem:1-join-minimal}.

  The run time for each edge of $T$ is bounded by $O(n+m)$.
  Since we need to consider at most $n-2$ edges of the spanning tree and at most two possible roots, the total run time of the algorithm is $O(n(n+m))$.
\end{proof}
We will now speed-up the part of the algorithm responsible for finding $A_{1}$ and $A_{2}$, which is the most expensive part of the above lemma.
We assume a random access machine model with logarithmic costs.
Let $\alpha(i,j)$ denote the inverse Ackermann function.

\begin{theorem} \label{thm:1-join-algo}
  In $O(m (\Delta(G) + \alpha(m, \Delta(G))))$ time, one can find a $1$-join in a connected graph $G$, or report that $G$ does not have such a join.
\end{theorem}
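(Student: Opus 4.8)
The plan is to keep the overall skeleton of the proof of Theorem~\ref{thm:1-join-algo-easy} and to accelerate only its two costly per‑candidate steps. As before, I would first compute a BFS‑spanning tree of $G$ together with a diametral pair $(r,r')$ of it; by Lemma~\ref{lem:diametralpair} at least one of $r,r'$ lies outside $A_{1}\cup A_{2}$ for every $1$‑join, so the whole procedure is run twice, once with each of $r,r'$ playing the role of the root, and a BFS‑spanning tree $T$ is built rooted there. By Lemmas~\ref{lem:1-join-connected} and~\ref{lem:1-join-crossingedge} we may assume that the crossing part of a hypothetical $1$‑join is witnessed by a tree edge $e=(u,v)$ (with $u$ the parent of $v$) for which $v$ has a child $w$ in $T$, and that then $A:=N[u]\cap N[v]$ equals $A_{1}\cup A_{2}$, with $u\in A_{1}$, $v\in A_{2}$, and $w\in V_{2}\setminus A_{2}$. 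We loop over the $O(n)$ tree edges $e$ of this type; for each, the two things to decide are (i) whether $A$ is a (strong) clique, and (ii) whether $G-A$ admits the required bipartition with $r$ and $w$ on opposite sides. Steps (i)--(ii) are exactly the places where the easy algorithm spends $\Theta(|A|^{2})$ and $\Theta(n+m)$ respectively.

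For step (i) I would maintain, for the parent vertex $u$, a Boolean array marking $N[u]$; then $A$ is extracted in $O(\Delta(G))$ time by scanning $N[v]$, and $|A|\le\Delta(G)+1$. Rather than certifying cliqueness of $A$ by $|A|^{2}$ pair tests, I would fold the clique test into step (ii): build the candidate bipartition and then verify \emph{all} defining properties of a $1$‑join (in particular that $A_{1}\cup A_{2}$ is a strong clique), organising the verification so that the first violating pair, if any, is exposed by a single neighbourhood scan; this yields $O(\Delta(G))$ amortised cost per candidate and accounts for the $O(m\,\Delta(G))$ term. For step (ii) I would use a disjoint‑set (union--find) structure over $V(G)$ and exploit minimality: by Lemma~\ref{lem:1-join-minimal}, for a minimal $1$‑join $G[V_{1}]-E(G[A_{1}])$ is connected, and $G[V_{2}]$ is connected, so the bipartition, if it exists, is \emph{forced} by $A$ — namely $V_{2}\setminus A_{2}$ is precisely the connected component of $w$ in $G-A$, $A_{2}=N(V_{2}\setminus A_{2})\cap A$, $A_{1}=A\setminus A_{2}$, and $V_{1}=V(G)\setminus\bigl(A\cup(V_{2}\setminus A_{2})\bigr)$; it remains only to check $r\notin V_{2}$, $A_{1},A_{2}\neq\emptyset$, cliqueness of $A$, and that $V_{1}\setminus A_{1}$ is anticomplete to $V_{2}$. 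The point of the union--find structure is to amortise the repeated ``component of $w$ in $G-A$'' queries across all candidate edges: processing the candidates in a suitable order (e.g.\ by non‑increasing depth of $v$ in $T$), the relevant sides $V_{2}\setminus A_{2}$ behave monotonically, so each edge of $G$ triggers only $O(1)$ find/union operations, and step (ii) costs $O\bigl(\alpha(m,\Delta(G))\bigr)$ amortised per candidate, giving the $O\bigl(m\,\alpha(m,\Delta(G))\bigr)$ term.

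The main obstacle is the correctness and amortisation of step (ii): one must prove that the union--find state, updated incrementally in the chosen processing order, genuinely answers each candidate's component query, i.e.\ that the family of candidate bipartitions $\{(V_{1}^{e},V_{2}^{e})\}_{e}$ enjoys a strong enough nesting/monotonicity property that no already‑merged component information ever has to be ``un‑merged''. Establishing this structural lemma about candidate bipartitions is the crux; once it is in place, the remaining running‑time bookkeeping (logarithmic‑cost RAM, $|A|\le\Delta(G)+1$, $O(n)\le O(m)$ candidates, two choices of root, $O(m)$ total union--find operations) is routine, and overall correctness follows from the same invariants used in Theorem~\ref{thm:1-join-algo-easy} together with Lemma~\ref{lem:1-join-minimal} — noting that it suffices to find \emph{some} $1$‑join, and that any $1$‑join can be shrunk to a minimal one to which the argument applies.
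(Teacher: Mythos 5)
Your proposal keeps the same outer skeleton as the paper (BFS tree, diametral pair so that one root avoids the join, loop over tree edges $(u,v)$, candidate set $A=N[u]\cap N[v]$, minimality via Lemma~\ref{lem:1-join-minimal}), and your handling of the cliqueness test and of the $O(\Delta(G))$ bound on $|A|$ matches the paper's accounting. But the mechanism you use to speed up the partition step is genuinely different, and it is where the proposal breaks down.

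You propose to keep the search for ``the component of $w$ in $G-A$'' and to amortise it across candidates with a single union--find structure, under a hypothesised nesting/monotonicity of the family of bipartitions $\{(V_{1}^{e},V_{2}^{e})\}_{e}$. You flag this as ``the crux'' but do not establish it, and in fact it is very unlikely to be true. A union--find structure only accumulates merges; for it to correctly answer, at the moment candidate $(u,v)$ is processed, a query about connectivity in $G-A_{u,v}$, the set of merges performed so far must coincide exactly with the partition into components of $G-A_{u,v}$. As $(u,v)$ ranges over tree edges, the deleted sets $A_{u,v}=N[u]\cap N[v]$ are not nested in any useful sense --- they live in disjoint parts of the graph for tree edges in different branches, and even along a root-to-leaf path the neighbourhood intersections do not decrease or increase monotonically --- so the component partitions of $G-A_{u,v}$ are incomparable, and a merge that is correct for one candidate is in general wrong for another. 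Since union--find cannot un-merge, the invariant you need cannot be maintained. This is not a detail to be filled in; it is the load-bearing step, and it appears to fail.

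The paper sidesteps this obstacle entirely: it never asks for components of $G-A$. Instead it observes that, for a candidate tree edge $(u,v)$, the only vertices whose $A_{1}$-versus-$A_{2}$ membership is in doubt are children of $u$, and it classifies each child $c$ by a local, precomputable quantity $h(c)$ (the highest $\nca$ of any $G$-edge touching $T(c)\backslash\{c\}$) together with a ``sequentially linked'' equivalence on siblings. Crucially, the union--find is built \emph{once}, over children grouped by parent, driven by the $\nca$'s of the edges of $G$, and is then read-only; per candidate one only does $O(\Delta(G))$ constant-time lookups. This is what yields the $O(m\,\alpha(m,\Delta(G)))$ preprocessing term plus $O(m\,\Delta(G))$ query term. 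Your approach would need a structural lemma that the paper neither uses nor needs, and which I believe is false, so the proposal does not establish the theorem.
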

\begin{proof}
  Consider again the rooted tree $T$ and edge $e = (u,v)$ from the previous lemma.
  As before, we assume that $u \in A_{1}$, $v \in A_{2}$ for some $1$-join $(A_{1}, A_{2})$.
  Moreover, we may assume that $u$ is closer to the root $r$ of $T$ than $v$.
  This in turn implies that all vertices of $A_{2}$ are further from $r$ than $u$.

  Given any rooted tree $T$ and vertex $t \in V(T)$, define $T_t$ as the subtree of $T$, rooted at $t$, containing~$t$ and all of its descendants.
  If $T$ is a spanning tree of a graph $G$, define $h(t)$ as the $\nca(a,b)$ closest to $r$ for any edge $(a,b) \in E(G)$ for which one of $a,b$ is in $V(T_t)\setminus\{t\}$.
  Clearly, $h(t)$ is either $t$ or an ancestor of $t$.

  Consider a graph $G$ and a rooted spanning tree $T$ of $G$.
  For any vertex $p$ and its set of children~$C$, we say that $c, c' \in C$ are \emph{linked} if there is an edge $(a,b) \in E(G)$ for which $\dist_{T}(a,b) \geq 3$, $a \in V(T_c)$, and $b \in V(T_{c'})$.
  We then say that $c, c' \in C$ are \emph{sequentially linked} if there is a sequence $c = c_{1}, \ldots, c_{i} = c'$ of children of $C$ such that $c_{j}$ is linked to $c_{j+1}$ for any $j = 1,\ldots,i-1$.
  Observe that if $c, c' \in A_{1} \cup A_{2}$ for some $1$-join $(A_{1}, A_{2})$ and $c$ and $c'$ are sequentially linked, then either both~$c$ and $c'$ must be in~$A_{1}$ or both must be in $A_{2}$. 

  Now think back on the algorithm of Theorem~\ref{thm:1-join-algo-easy} and let $A$ be the candidate set for the join.
  All vertices for which we have not yet decided whether they should be in $A_{1}$ or in $A_{2}$ must be children of $u$.
  We say a child $c$ of $u$ is of \emph{type $1$} if $h(c) \not\in V(T_u)$ or it is incident with an edge $e \not\in E(A)$ for which $\nca(e) \not\in V(T_u)$.
  Here $\nca(e)$ is a shorthand for the nearest common ancestor of the two endpoints of $e$.

  Now consider the following observation.
  \begin{quote}
    (1) A child $c$ is in $A_{1}$ for some minimal $1$-join $(A_{1}, A_{2})$ if and only if it is of type $1$ or sequentially linked to a child of type $1$.
  \end{quote}
  This follows immediately from Lemma~\ref{lem:1-join-minimal}.

  Using this observation, we can split $A$ into sets $A_{1}$ and $A_{2}$.
  Note that this choice is only difficult for children of $u$.
  Other vertices must belong to $A_{1}$.
  For children of $u$, we use the above observation.
  If every child of $u$ is, or is sequentially linked to, a vertex of type $1$, then $A$ cannot be split to form a $1$-join.

  To use these ideas algorithmically, we should be able to compute $\nca$ and $h$ efficiently.
  By preprocessing $T$ in linear time, we can compute $\nca$ and $a$-$\nca$ in constant time for any pair of vertices using the algorithms of Lu and Yeh~\cite{LuY2008} (note that finding the $a$-$\nca$ can be simulated by finding the $\nca$, then its depth, and then the appropriate level ancestor of $a$, all in constant time).
  Define $x(t)$ for any $t \in V(T)$ as the $\nca(t,t')$ closest to $r$ for any $(t,t') \in E(G)$.
  By preprocessing in linear time, we know the height of each vertex in the tree.
  Then we can compute $x(t)$ in $O(m)$ time.
  Now set $h(t) = t$ for any leaf of $T$, and for any nonleaf $t$, set $h(t)$ equal to the highest vertex among $x(t')$ and $h(t')$ over all children $t'$ of $t$.
  This takes $O(n)$ time.

  To determine whether two children $c, c'$ of a vertex $p$ are sequentially linked, we use preprocessing with a union-find data structure.
  For each vertex $p$, associate one such data structure, initially with each child $c$ of $p$ in a separate set.
  Now for any edge $(a,b) \in E(G)$, we determine $x = \nca(a,b)$.
  If $x = a$ or $x = b$, then this edge is not relevant.
  Otherwise, determine $y =$ $a$-$\nca(a,b)$ and $z=$ $b$-$\nca(a,b)$.
  If $y=a$ and $z=b$, then again this edge is not relevant.
  Otherwise, we perform a union in the data structure associated with $x$ on $y$ and $z$.
  This takes $O(m \cdot \alpha(m, \Delta(G)))$ time in total~\cite{Tarjan1975}.
  Afterwards, we process all structures such that find-operations will take constant time.
  This takes $O(n)$ time. Using this data structure, we can answer in $O(1)$ time whether two children are sequentially linked.

  This concludes the analysis of all preprocessing.
  Preprocessing takes $O(m \cdot \alpha(m, \Delta(G)))$ time in total.

  We now analyze the algorithm itself.
  Finding the candidate set $A$ can be done in $O(\Delta(G))$ time.
  Since we do this $n-2$ times, this contributes $O(n\cdot\Delta(G))$ to the run time.
  To determine the contribution of the total time it takes to verify that candidate sets form a clique, we note that we check a nonedge at most once per candidate set, so this uses $O(n)$ time.
  An edge is checked only if both endpoints are in the closed neighborhood of the edge $(u,v)$ that is the basis of the candidate set.
  Hence an edge is checked $O(\Delta(G))$ times during the course of the algorithm.
  This gives a contribution of $O(m\cdot\Delta(G))$ to the run time.

  It remains to analyze the run time for splitting the candidate set $A$ into $A_{1}$ and $A_{2}$.
  For this, we only need to check whether a child $c$ of $u$ is of type $1$ or sequentially linked to a child of type $1$. Determining whether $h(c) \not\in V(T_u)$ takes $O(1)$ time.
  Hence we spend no more than $O(n\cdot\Delta(G))$ time on this in the course of the algorithm.
  We determine $\nca(e)$ for an edge at most $\Delta(G)$ times if one of its endpoints is a child of the current vertex $u$.
  Since this happens at most twice, we spend no more than $O(m\cdot\Delta(G))$ time on this.
  If we determine that a vertex is of type $1$, we indicate in the set of the union-find data structure containing this vertex that it contains a vertex of type $1$.
  Finally, the algorithm checks for each child whether it is or is sequentially linked to a vertex of type $1$.
  This takes another $O(n\cdot\Delta(G))$ time over the course of the algorithm. 

  The conclusion of the analysis is a run time for the algorithm of $O(m\cdot\Delta(G) + m\cdot\alpha(m, \Delta(G)))$.
\end{proof}

\begin{corollary}
\label{cor:pseudo-1-join-algo}
  In $O(m (\Delta(G) + \alpha(m,\Delta(G))))$ time, one can find a pseudo-$1$-join in a connected graph $G$, or report that $G$ does not have such a join.
\end{corollary}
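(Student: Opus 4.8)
The plan is to combine the $1$-join algorithm of Theorem~\ref{thm:1-join-algo} with the twin-detection idea behind Theorem~\ref{thm:twins-algo}. The first thing to notice is that, in a connected graph, every $1$-join is already a pseudo-$1$-join: if $(V_1,V_2)$ with connectors $(A_1,A_2)$ is a $1$-join then $V_i\setminus A_i\neq\emptyset$ for $i=1,2$, and any vertex of $V_i\setminus A_i$ has a neighbour, which (being strongly anticomplete to the other side) lies in $V_i$, so $V_i$ is not a stable set. Hence I would first run the algorithm of Theorem~\ref{thm:1-join-algo}; if it reports a $1$-join, output it.

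If it does not, I claim that any pseudo-$1$-join of $G$ then comes from a pair of twins. The structural observation is: in a connected graph a pseudo-$1$-join $(V_1,V_2)$ has $A_1,A_2\neq\emptyset$ (otherwise $(V_1,V_2)$ is a $0$-join and $G$ is disconnected), so if it is \emph{not} a $1$-join then $V_i=A_i$ for some $i$; say $V_2=A_2$ after swapping sides if necessary. Since $V_1\setminus A_1$ is anticomplete to $A_2$ while $A_1$ is complete to $A_2$, we must have $A_1=N(A_2)$, and because $A_1\cup A_2$ is a clique, $A_2$ is a homogeneous clique whose neighbourhood $N(A_2)$ is a clique; moreover $|A_2|\geq 2$ (as $V_2$ is non-stable) and $V(G)\setminus A_2$ is non-stable. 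Now any two vertices of a homogeneous clique have the same closed neighbourhood, hence are twins, and one checks directly that for any edge $\{u,v\}\subseteq A_2$ the quadruple $V_2'=A_2'=\{u,v\}$, $A_1'=N[u]\setminus\{u,v\}$, $V_1'=V(G)\setminus\{u,v\}$ is again a pseudo-$1$-join: $N[u]=A_2\cup N(A_2)$ is a clique (so $u$ is simplicial), $V(G)\setminus N[u]$ is anticomplete to $\{u,v\}$ because $N[u]=N[v]$, and $V_1'\supseteq V(G)\setminus A_2$ is non-stable. Conversely, for \emph{any} edge $\{u,v\}$ of $G$ with $N[u]=N[v]$, $u$ simplicial, and $V(G)\setminus\{u,v\}$ not a stable set, this same quadruple is a pseudo-$1$-join.

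Algorithmically I would therefore: (a) run Theorem~\ref{thm:1-join-algo} and return any $1$-join it finds; (b) precompute, for every vertex $v$, whether $N[v]$ is a clique, which takes $\sum_v |N(v)|^2 = O(m\,\Delta(G))$ time in total; (c) scan the edges $\{u,v\}$ of $G$, and for each edge with $N[u]=N[v]$ (an $O(\Delta(G))$ test, exactly as in Theorem~\ref{thm:twins-algo}), $u$ simplicial (an $O(1)$ table lookup), and $V(G)\setminus\{u,v\}$ non-stable (equivalently, every edge of $G$ meets $\{u,v\}$, \ie $m=|N(u)|+|N(v)|-1$, an $O(1)$ test), output the corresponding pseudo-$1$-join; if nothing is produced, report that $G$ has no pseudo-$1$-join. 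Steps (b) and (c) run in $O(m\,\Delta(G))$ time, so the whole procedure stays within the $O(m(\Delta(G)+\alpha(m,\Delta(G))))$ bound inherited from Theorem~\ref{thm:1-join-algo}.

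The step I expect to need the most care is the structural dichotomy together with the shrinking argument — verifying that replacing the homogeneous-clique connector $A_2$ by an arbitrary two-element subset preserves all five defining properties of a pseudo-$1$-join, in particular that the residual side stays non-stable (which is immediate, since it still contains the original $V(G)\setminus A_2$). Everything past that point is a routine reuse of the complexity analysis already carried out for twins and for $1$-joins.
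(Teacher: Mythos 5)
Your proposal follows essentially the same decomposition as the paper: a pseudo-$1$-join is either a genuine $1$-join (covered by Theorem~\ref{thm:1-join-algo}) or degenerate in the sense that one side equals its own connector, which forces that side to be a homogeneous strong clique, i.e.\ a set of pairwise twins with a clique closed neighborhood. The paper checks twins first and then falls back to the $1$-join algorithm, while you run the $1$-join algorithm first and then scan for the degenerate case, but that ordering is immaterial. You are, if anything, slightly more careful than the paper in two respects: you explicitly verify that a $1$-join in a connected graph always gives a pseudo-$1$-join, and you explicitly test the residual side for non-stability.

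There is, however, a sign slip in the implementation of that last test. You write that \emph{``$V(G)\setminus\{u,v\}$ non-stable''} is equivalent to \emph{``every edge of $G$ meets $\{u,v\}$, i.e.\ $m=|N(u)|+|N(v)|-1$''}, but that equality is exactly the condition under which $V(G)\setminus\{u,v\}$ \emph{is} stable (all $m$ edges touch $\{u,v\}$). The non-stability condition you actually want is $m>|N(u)|+|N(v)|-1$, i.e.\ some edge avoids $\{u,v\}$. As written, step (c) would output only in the degenerate situation where there is no pseudo-$1$-join and stay silent precisely when one exists, so you should invert that test. Once that is fixed, the argument and the $O(m(\Delta(G)+\alpha(m,\Delta(G))))$ running time go through exactly as you describe.
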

\begin{proof}
  Let $(A_{1}, A_{2})$ be a pseudo-$1$-join.
  Suppose that $V_{1}\setminus A_{1} = \emptyset$.
  Since $|A_{1}|$ must be at least two in this case, $A_{1}$ contains twins.
  Hence we should first check whether $G$ has twins whose neighborhood is a clique.
  If no such twins exist, any pseudo-$1$-join of $G$ also is a $1$-join, which can be found using Theorem~\ref{thm:1-join-algo}. Using Theorem~\ref{thm:twins-algo}, we can find all twins in $O(n+m)$ time.
  Using similar ideas as in the proof of Theorem~\ref{thm:1-join-algo}, we can test all neighborhoods of twins for being a clique in $O(m\cdot\Delta(G))$ time in total.
\end{proof}

\subsection{Finding \texorpdfstring{$2$}{2}-joins}
We now describe algorithms to find the various $2$-joins.

\begin{lemma} \label{lem:2-join-connected}
  Let $(A_{1}, A_{2}, B_{1}, B_{2})$ be a (generalized) $2$-join of a connected graph $G$ that does not admit a $1$-join.
  Then $G[V_{1}]$ and $G[V_{2}]$ are connected.
\end{lemma}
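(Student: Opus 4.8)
This statement is a near-verbatim analogue of Lemma~\ref{lem:1-join-connected} (which handles the $1$-join case), so the plan is to adapt that argument. The claim is that if $(A_1,A_2,B_1,B_2)$ is a (generalized) $2$-join of a connected graph $G$ with no $1$-join, then each side $G[V_i]$ is connected.

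\begin{proof}[Proof proposal]
The plan is to argue by contradiction, exactly mirroring the $1$-join case. Suppose $G[V_1]$ is disconnected; let $\mathcal{C}$ be its set of connected components. By the definition of a generalized $2$-join, the only edges leaving $V_1$ go from $A_1$ to $V_0 \cup A_2$ or from $B_1$ to $V_0 \cup B_2$; moreover $A_1$ and $B_1$ are each contained in a strong clique ($V_0 \cup A_1 \cup A_2$ and $V_0 \cup B_1 \cup B_2$ respectively), hence $A_1$ is a clique and $B_1$ is a clique, so each lives inside a single component of $G[V_1]$. Let $C_A$ (resp.\ $C_B$) be the component containing $A_1$ (resp.\ $B_1$); possibly $C_A = C_B$. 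Now take any component $C \in \mathcal{C}$ with $C \neq C_A$ and $C \neq C_B$. Such a $C$ exists unless $\mathcal{C} = \{C_A, C_B\}$ with $C_A \neq C_B$, which I treat separately below. For a component $C$ disjoint from both $A_1$ and $B_1$, no vertex of $C$ has a neighbor outside $V_1$ (all crossing edges use $A_1$ or $B_1$), so $(C, V(G)\setminus C)$ is a $0$-join of $G$, contradicting connectedness of $G$. So the only remaining possibility is $\mathcal{C} = \{C_A, C_B\}$ with $C_A \neq C_B$, $A_1 \subseteq C_A$, $B_1 \subseteq C_B$. But then $V_1$ partitions into the two parts $C_A$ and $C_B$ with $A_1 \subseteq C_A$, $B_1 \subseteq C_B$, and $C_A$ is anticomplete to $C_B$; this makes $(C_A, V(G)\setminus C_A)$ exhibit a $1$-join (or $0$-join) of $G$ with connecting clique $A_1 \cup A_2$ on the $C_A$ side, since $C_B$ has no edges to $C_A$ and $C_A \setminus A_1$ is anticomplete to $V(G)\setminus C_A$. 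Using $A_i, V\setminus A_i \neq \emptyset$ from the $2$-join definition, one checks this is a genuine $1$-join, contradicting the hypothesis that $G$ has none. The symmetric argument handles $G[V_2]$.
\end{proof}

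The main subtlety, and the step I would write out most carefully, is the case analysis around which component contains $A_1$ versus $B_1$: unlike the $1$-join case (where there is a single connecting clique and the argument is a one-liner), here the two cliques $A_1$ and $B_1$ may sit in different components, and one must verify that this configuration itself yields a forbidden $1$-join (or $0$-join) rather than an immediate contradiction. I would double-check that $C_A \setminus A_1$ is strongly anticomplete to $V_2 \cup V_0$ and to $C_B$ — the former from the generalized $2$-join axioms, the latter because $C_A$ and $C_B$ are distinct components — so that $(A_1 \cup A_2, \ldots)$ genuinely witnesses a $1$-join; the nonemptiness side conditions ($A_i, V\setminus A_i \neq \emptyset$) come directly from the $2$-join definition in the excerpt. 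Everything else is bookkeeping with the edge-set restrictions in the definitions of $0$-join, $1$-join, and generalized $2$-join already given.
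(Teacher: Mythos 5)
Your proof is correct and takes essentially the same approach as the paper's: pick a component of $G[V_1]$, note that one lacking any vertices of $A_1\cup B_1$ forces a $0$-join, and otherwise the separation between $A_1$ (all in one component) and $B_1$ (all in another) forces a $1$-join $(A_1, A_2\cup V_0)$. The paper compresses the case analysis — it says ``Assume w.l.o.g.\ that $C$ contains vertices of $A_1$. Then $C$ contains no vertices of $B_1$'' without spelling out the intermediate step that a second component must exist and also intersect $A_1\cup B_1$ — whereas you make that reasoning explicit and also verify the anticomplete conditions for the resulting $1$-join; your phrase ``connecting clique $A_1\cup A_2$ on the $C_A$ side'' is slightly off (only $A_1$ sits in $C_A$; the connecting subsets are $A_1$ and $A_2\cup V_0$, whose union is the strong clique), and the ``(or $0$-join)'' hedge is unnecessary since $A_1, A_2\neq\emptyset$ and they are complete to each other, but neither affects the substance.
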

\begin{proof}
  Suppose that $G[V_{1}]$ has at least two connected components.
  Note that any such connected component must contain a vertex from $A_{1}$ or $B_{1}$, or it would be a connected component of $G$, contradicting that $G$ is connected.
  Hence $G[V_{1}]$ has a connected component $C$ such that $A_{1} \subseteq C$ and $C \cap B_{1} = \emptyset$.
  But then $(A_{1}, A_{2} \cup V_{0}$) is a $1$-join, a contradiction.
\end{proof}

\begin{theorem}
  \label{thm:2-join-algo}
  In $O(n m (\Delta(G) + \alpha(m,\Delta(G))))$ time, one can find a $2$-join in a connected graph $G$ that does not admit a $1$-join, or report that $G$ does not have such a join.
\end{theorem}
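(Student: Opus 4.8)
The plan is to adapt the $1$-join algorithm of Theorem~\ref{thm:1-join-algo} to the situation of two crossing cliques. Throughout, fix a hypothetical $2$-join $(A_{1},A_{2},B_{1},B_{2})$ with $V_{0}=\emptyset$, write $A=A_{1}\cup A_{2}$, $B=B_{1}\cup B_{2}$, and let $V_{1},V_{2}$ be the two sides. By the lemma just proved, $G[V_{1}]$ and $G[V_{2}]$ are connected; consequently the only crossing edges of $G$ between $V_{1}$ and $V_{2}$ are the complete bipartite graph between $A_{1}$ and $A_{2}$ together with the complete bipartite graph between $B_{1}$ and $B_{2}$, the sets $A_{i},B_{i}$ are disjoint and nonempty, each $V_{i}\setminus(A_{i}\cup B_{i})$ is nonempty, and $A$ and $B$ are each strong cliques. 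Deleting $A\cup B$ therefore leaves exactly the two nonempty interiors $V_{1}\setminus(A_{1}\cup B_{1})$ and $V_{2}\setminus(A_{2}\cup B_{2})$, with no edges between them.

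First I would handle the leading factor $n$ by guessing a vertex $r$ that is \emph{interior} to side $1$, i.e.\ $r\in V_{1}\setminus(A_{1}\cup B_{1})$; there are at most $n$ choices, and for each I will run a subroutine costing $O(m(\Delta(G)+\alpha(m,\Delta(G))))$ that either exhibits a $2$-join with $r$ interior to its side, or reports that none such exists (symmetrically one also tries $r$ interior to side $2$). Since the parts are connected and $r$ is interior, every neighbour of $r$ lies in $V_{1}$; rooting a BFS-spanning tree $T$ of $G$ at $r$ then forces all of $A_{2}\cup B_{2}$ to sit strictly below $A_{1}\cup B_{1}$ in $T$, exactly as in Lemmas~\ref{lem:1-join-connected}--\ref{lem:1-join-crossingedge}. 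The subroutine walks $T$ and reconstructs the boundary, but where the $1$-join algorithm had a single clique to recover from one tree edge $(u,v)$, here the boundary $A\cup B$ is a union of \emph{two} strong cliques that must moreover be split correctly across the bipartition.

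To recover the boundary within the $O(m(\Delta+\alpha))$ budget, I would reuse the preprocessing of Theorem~\ref{thm:1-join-algo}: linear-time preprocessing for $\mathrm{nca}$ and $a$-$\mathrm{nca}$, the functions $h(\cdot)$ and $x(\cdot)$, and, for every vertex and its children, a union--find structure capturing the ``sequentially linked'' relation. A candidate for $A\cup B$ is generated from the tree edges leaving $r$'s subtree together with the cross-edge $\mathrm{nca}$ information; the boundary vertices of $V_{1}$ that must be pushed into $A_{1}$, respectively $B_{1}$, are separated by the same ``type-$1$''/``sequentially linked'' test as in Theorem~\ref{thm:1-join-algo}, and the two cliques $A$ and $B$ are distinguished by the extra conditions that $A$ and $B$ are each strong cliques while there is a non-edge between them (coming from the $A_{1}$--$B_{2}$ and $A_{2}$--$B_{1}$ anticompleteness). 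Given a proposal for $A,B,A_{1},A_{2},B_{1},B_{2}$ I would verify in $O(m)$ amortized time, as in Theorem~\ref{thm:1-join-algo-easy}, that $A$ and $B$ are strong cliques, that the claimed crossing edges are exactly the $V_{1}$--$V_{2}$ edges --- by regrowing the components of $G-(A\cup B)$ from $r$ and from a vertex of the opposite interior --- and that all six of $A_{1},A_{2},B_{1},B_{2},V_{1}\setminus(A_{1}\cup B_{1}),V_{2}\setminus(A_{2}\cup B_{2})$ are nonempty. A handful of degenerate cases ($|B_{1}|=1$, or $A$ containing a vertex strongly complete to $B$, or $A_{i}$ consisting of twins) are treated separately, reusing Theorem~\ref{thm:twins-algo} and the $1$-join algorithm, in the spirit of the ``minimal $1$-join'' analysis of Lemma~\ref{lem:1-join-minimal}.

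The main obstacle is the step with no counterpart in the $1$-join case: proving that, for the correct guess of $r$, the BFS tree together with the $\mathrm{nca}$/union--find data determines a \emph{canonical} candidate for the unordered pair of cliques $\{A,B\}$ and for its splitting $(A_{1},B_{1}\mid A_{2},B_{2})$, so that a single pass suffices and one need not iterate over pairs of tree edges (which would cost an extra factor). I expect this to need a minimality notion for $2$-joins analogous to Lemma~\ref{lem:1-join-minimal} --- pushing as much of each side's boundary toward the interior as possible while keeping both cliques intact --- together with a proof that under this normalization the two cliques and their splits are forced by the ``linked'' relation. Once that structural claim is in place, the reconstruction and verification run in $O(m(\Delta(G)+\alpha(m,\Delta(G))))$ per choice of $r$, and summing over the $n$ choices gives the stated $O(nm(\Delta(G)+\alpha(m,\Delta(G))))$ bound.
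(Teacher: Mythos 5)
Your plan is a genuinely different attack from the paper's, and you have already flagged precisely where it is incomplete: you need a minimality/canonicity lemma for $2$-joins asserting that the BFS, nca, and union--find machinery, run from a single interior root $r$, produces a \emph{forced} candidate for the unordered pair of cliques $\{A,B\}$ together with a forced split $(A_1,B_1\mid A_2,B_2)$. That lemma does not follow from the $1$-join analysis: with two crossing cliques the ``sequentially linked'' relation no longer cleanly partitions the boundary children of a vertex into two sides (a boundary vertex can be linked to others through \emph{either} clique), and it is not obvious that there is a unique minimal $2$-join anchored at $r$ from which a canonicity argument could run. Until that structural claim is proved, your reconstruction pass is underdetermined and the subroutine's correctness is open; as written this is a sketch with the key step missing.

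The paper avoids this difficulty altogether via a two-level reduction to the $1$-join algorithm, iterating over tree \emph{edges} rather than interior vertices. Fix a spanning tree $T$ and suppose $(u,v)\in E(T)$ crosses $A$, say $u\in A_1$, $v\in A_2$. Because in a $2$-join with $V_0=\emptyset$ the only $V_1$--$V_2$ edges run between $A_1$ and $A_2$ or between $B_1$ and $B_2$, one has $N[u]\cap N[v]=A_1\cup A_2$ exactly; the candidate for one clique is therefore available immediately, with no reconstruction pass and no canonicity lemma. The paper then deletes the edges inside this candidate, calls the $1$-join algorithm of Theorem~\ref{thm:1-join-algo} on the resulting graph to recover $(B_1,B_2)$, and calls it once more after deleting the $B_1$--$B_2$ edges to recover the split $(A_1,A_2)$. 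This gives at most $n-1$ iterations, each dominated by an $O(m(\Delta(G)+\alpha(m,\Delta(G))))$ call, matching the stated bound with no new structure theory. If you want to salvage your route you would need to supply the normalization lemma you describe; alternatively, note that $N[u]\cap N[v]$ hands you the clique candidate for free and reduce to $1$-joins as above.
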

\begin{proof}
  Observe that if $(A_{1}, A_{2}, B_{1}, B_{2})$ is a $2$-join of $G$, then one of the connected components of $G \setminus (A_{1} \cup A_{2})$ admits a $1$-join.
  Furthermore, if $T$ is a spanning tree of $G$, then there is an edge $e = (u,v)$ of $T$ such that $u \in A_{1}, v \in A_{2}$ or $u \in B_{1}, v \in B_{2}$.

  We now proceed as follows.
  Find a spanning tree of $G$.
  For each edge $(u,v) \in E(T)$, we remove all edges between vertices in $N[u] \cap N[v]$ from $G$ and try to find a $1$-join in one of the connected components of this graph that does not use any vertices of $N[u] \cap N[v]$.
  If no such join exists, then $(u,v)$ cannot be a basis for a $2$-join.
  Otherwise, let $(B_{1}, B_{2})$ be this $1$-join.
  Now remove all edges between $B_{1}$ and $B_{2}$ from $G$ and try to find a $1$-join $(A_{1}, A_{2})$ in the remaining graph that does not use any vertices of $B_{1} \cup B_{2}$ and for which neither $V_{1}\setminus A_{1}$ nor $V_{2}\setminus A_{2}$ equals $B_{1}$ or $B_{2}$.
  If no such $1$-join exists, then $G$ does not have a $2$-join with $(u,v)$ as a basis.
  Otherwise, let $(A_{1}, A_{2})$ be the $1$-join we just found. Then $(A_{1}, A_{2}, B_{1}, B_{2})$ is a $2$-join.

  Using the algorithm of Theorem~\ref{thm:1-join-algo} as a subroutine, the algorithm described above takes $O(n m (\Delta(G) + \alpha(m,\Delta(G))))$ time.
\end{proof}

\begin{theorem} \label{thm:gen-2-join-algo}
  In $O(n m (\Delta(G) + \alpha(m,\Delta(G))))$ time, one can find a generalized $2$-join in a connected graph $G$ that does not admit a $1$-join, or report that $G$ does not have such a join.
\end{theorem}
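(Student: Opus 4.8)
The statement to prove is Theorem~\ref{thm:gen-2-join-algo}: finding a generalized $2$-join in polynomial time, given that $G$ is connected and admits no $1$-join. The plan is to mimic closely the structure of the proof of Theorem~\ref{thm:2-join-algo} for ordinary $2$-joins, with the extra wrinkle that we must now also guess (part of) the set $V_{0}$, which is the ``hub'' strongly anticomplete to $V_1 \setminus (A_1 \cup B_1)$ and to $V_2 \setminus (A_2 \cup B_2)$ but strongly complete to $A_1 \cup A_2$ along with $B_1 \cup B_2$.

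First I would record the structural observation analogous to the one opening the proof of Theorem~\ref{thm:2-join-algo}: if $(V_0,V_1,V_2)$ with connecting sets $(A_1,A_2,B_1,B_2)$ is a generalized $2$-join, then $V_0 \cup A_1 \cup A_2$ and $V_0 \cup B_1 \cup B_2$ are strong cliques, and removing the edges inside these two cliques separates $V_1$ from $V_2$. Moreover, since $G$ admits no $1$-join, the lemma just above shows $G[V_1]$ and $G[V_2]$ are connected, so in any spanning tree $T$ of $G$ there is a tree edge $e=(u,v)$ crossing from $V_1$ to $V_2$; such an edge must lie inside one of the two cliques, say $u \in A_1 \cup V_0$ and $v \in A_2 \cup V_0$ (or the $B$-analogue). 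The key point is that the common neighborhood $N[u] \cap N[v]$ contains all of $V_0 \cup A_1 \cup A_2$ in that case. So, exactly as before, we iterate over the $O(n)$ tree edges $(u,v)$, let $K = N[u]\cap N[v]$ be the candidate for the ``$A$-plus-hub'' clique, verify $K$ is a clique, delete all edges inside $K$, and look inside the connected components of $G - E(G[K])$ for the $B$-side structure. Finding the $B$-side now means finding a $1$-join in (a component of) that graph avoiding $K$; call its connecting sets $(B_1, B_2)$. We then delete the edges inside $B_1 \cup B_2$ as well and search for a second $1$-join $(A_1', A_2')$ in the remaining graph avoiding $B_1 \cup B_2$, with the same side-conditions as in Theorem~\ref{thm:2-join-algo}. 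The set $V_0$ is recovered as $K \setminus (A_1' \cup A_2')$, and one checks the nonemptiness constraints $A_i, B_i, V_i \setminus (A_i \cup B_i) \neq \emptyset$ for $i = 1,2$.

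The only genuinely new issue, and what I expect to be the main obstacle, is that $V_0$ is allowed to be nonempty, so the $1$-join subroutine used to peel off $(B_1,B_2)$ must be applied to the graph with the clique-$K$ edges removed, \emph{and} we must be careful that the $1$-join it returns does not ``absorb'' hub vertices on the wrong side: the vertices of $V_0$ are strongly complete to both $A$ and $B$, hence after deleting the $K$-edges they still see all of $B_1 \cup B_2$, so they behave like candidate members of the $B$-connecting set from the tree's perspective. This is handled by the same device as in Theorem~\ref{thm:2-join-algo}: we only accept $1$-joins avoiding the forbidden vertex set, and we use the $O(m(\Delta(G) + \alpha(m,\Delta(G))))$ algorithm of Theorem~\ref{thm:1-join-algo} as a black box, invoked $O(n)$ times (once per tree edge, with a constant number of further $1$-join calls per component). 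Since $G$ has no $1$-join, one must double-check that the auxiliary graphs obtained after edge deletions can have $1$-joins even though $G$ does not — which is exactly what makes the search meaningful — and that each such call runs within the stated bound; summing over the $O(n)$ tree edges gives the claimed $O(n m (\Delta(G) + \alpha(m,\Delta(G))))$ running time. The correctness argument is then a routine verification that the recovered partition $(V_0,V_1,V_2)$ with sets $(A_1',A_2',B_1,B_2)$ satisfies every clause of the definition of a generalized $2$-join, together with the converse direction showing that any generalized $2$-join is discovered by some choice of tree edge.
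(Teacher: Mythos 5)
Your plan diverges from the paper's and, as written, has a genuine gap in the step that recovers the $B$-side.

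You take $K = N[u]\cap N[v]$ for a tree edge $(u,v)$ crossing from $V_1$ to $V_2$, correctly observe that $K = V_0 \cup A_1 \cup A_2$, delete the edges inside $K$, and then look for a $1$-join $(B_1,B_2)$ of $G - E(G[K])$ whose connecting sets avoid $K$. But when $V_0 \neq \emptyset$ no such $1$-join exists: the hub vertices $V_0 \subseteq K$ keep all their edges to $B_1 \cup B_2$ after you delete only the $K$-internal edges, so in any $1$-join with connecting sets $(B_1,B_2)$ you cannot place $V_0$ on either side of the partition (it would violate anticompleteness of $W_i \setminus C_i$ to the other side). The $1$-joins of $G - E(G[K])$ that do exist have connecting sets like $(B_1,\, B_2 \cup V_0)$, which intersect $K$. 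You notice exactly this difficulty (``they still see all of $B_1 \cup B_2$'') but the proposed remedy — insist on $1$-joins avoiding the forbidden set, ``the same device as in Theorem~\ref{thm:2-join-algo}'' — does not apply here, because that device works precisely when $V_0 = \emptyset$. So your iteration over tree edges would report failure on a graph that does admit a generalized $2$-join with a nonempty hub.

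The paper sidesteps this with a different and simpler idea. It first runs Theorem~\ref{thm:2-join-algo} to catch the case $V_0 = \emptyset$. If that fails, it uses the structural fact that all vertices of $V_0$ share the same closed neighborhood $V_0 \cup A_1 \cup A_2 \cup B_1 \cup B_2$ and are strongly anticomplete to everything else; hence $V_0$ is contained in a single class of vertices with identical closed neighborhood. Guessing one vertex $v \in V_0$ (an $O(n)$-branch) then determines $V_0$ as $\{u \in N[v] : N[u] = N[v]\}$, $V_0$ is removed from the graph \emph{as vertices} (not just its internal edges), and the ordinary $2$-join search of Theorem~\ref{thm:2-join-algo} is run on $G \setminus V_0$ with the added requirement that both $1$-joins lie inside $N(V_0)$. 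Removing $V_0$ outright is what makes $(B_1,B_2)$ and $(A_1,A_2)$ reappear as genuine $1$-joins of the reduced graph, which your edge-deletion variant cannot achieve. If you want to salvage your tree-edge framework, you would need to identify and delete $V_0$ itself (not merely the edges inside $K$) before the $1$-join search, which requires pinning down $V_0$ first — exactly the twin-class observation the paper uses.
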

\begin{proof}
  We first try to find a $2$-join (\ie a generalized $2$-join with $V_{0} = \emptyset$), using Theorem~\ref{thm:2-join-algo}. This takes $O(n m (\Delta(G) + \alpha(m,\Delta(G))))$ time.
  If no $2$-join exists, then only a generalized $2$-join with $V_0 \not= \emptyset$ might exist. Guess a vertex $v \in V_{0}$.
  Then $V_{0}$ is equal to the set of vertices $u \in N[v]$ for which $N[u] = N[v]$.
  Finding $V_0$ takes $O(n+m) = O(m)$ time by Theorem~\ref{thm:twins-algo}.
  Observe that $A_1 \cup A_2 \cup B_1 \cup B_2 = N(V_0)$. Let $a$ be an arbitrary vertex in $N(V_0)$ and without loss of generality assume that $a \in A_1$. Let $B'$ be the set of vertices in $N(V_0)$ that is antiadjacent to $a$; it takes $O(\Delta(G))$ time to find $B'$. Then $B' \subseteq B_1 \cup B_2$ and $B_2 \subseteq B'$ by definition. We now aim to identify $A_1 \cup A_2$ and $B_1 \cup B_2$. After that, we will test whether these two sets indeed form the basis of a generalized $2$-join with the given set $V_0$. We consider four cases:
  
  Suppose $B' \cap B_1 \not= \emptyset$. Then all vertices in $N(V_0) \setminus B'$ that are complete to $B'$ are in $B_1$, and all other vertices in $N(V_0) \setminus B'$ are in $A_1 \cup A_2$. Hence, we have identified $A_1 \cup A_2$ and $B_1 \cup B_2$. This takes $O(m)$ time.
  
  Suppose $B' \cap B_1 = \emptyset$. Then $B' = B_2$ and $a$ is complete to $A_1 \cup A_2 \cup B_1$. Let $b$ be an arbitrary vertex in $B'$. Let $A'$ be the set of vertices in $N(V_0)$ that are antiadjacent to $b$. Then $A' \subseteq A_1 \cup A_2$ and $A_1 \subseteq A'$ by definition. If $A' \cap A_2 \not= \emptyset$, then we can identify $A_1 \cup A_2$ and $B_1 \cup B_2$ as before. This takes $O(m)$ time. Otherwise, $A' \cap A_2 = \emptyset$ and thus, $A' = A_1$. Then $N(V_0) \setminus (A' \cup B')$ is precisely equal to a disjoint union of two cliques by definition; one clique is $A_2$, the other is $B_1$. We consider both possibilities; in each, we have identified $A_1 \cup A_2$ and $B_1 \cup B_2$. This takes $O(m)$ time too.
  
  Now we have identified $A_1 \cup A_2$ and $B_1 \cup B_2$ (to be precise, four possible cases).  Call two vertices $x, y \in V(G) \setminus V_0$ equivalent if there is a path between them in the graph $G'$ obtained from $G$ by removing all edges between vertices in $V_0 \cup A_1 \cup A_2$ and between vertices in $V_0 \cup B_1 \cup B_2$. Let $\mathcal{Q} = \{Q_1,\ldots,Q_k\}$ denote the set of equivalence classes of this relation. We can find these equivalence classes in $O(m)$ time using a depth-first search. Note that $G[V_1]$ and $G[V_2]$ are both connected by Lemma~\ref{lem:2-join-connected}. Moreover, $V_1$ and $V_2$ are antiadjacent, and any path between them in $G$ uses an edge of $E(G) \setminus E(G')$.
  Hence, there exists a partition of $\mathcal{Q}$ into two nonempty sets $\mathcal{Q}_1$ and $\mathcal{Q}_2$ such that for every $i=1,2$, $V_i = \bigcup_{Q \in \mathcal{Q}_i} Q$ and there exists a set in $\mathcal{Q}_i$ (say $Q_i$) such that $A_i \cap Q_i \not= \emptyset$ and $B_i \cap Q_i \not= \emptyset$.
  
  In fact, for any partition of $\mathcal{Q}$ into two nonempty sets $\mathcal{Q}_1'$ and $\mathcal{Q}_2'$ such that for every $i=1,2$ there exists a set in $\mathcal{Q}_i$ (say $Q_i$) such that $(A_1 \cup A_2) \cap Q_i \not= \emptyset$ and $(B_1 \cap B_2) \cap Q_i \not= \emptyset$, there exists a generalized $2$-join with sets $A'_i = (A_1 \cup A_2) \cap \bigcup_{Q \in \mathcal{Q}_i} Q$, $B'_i = (B_1 \cup B_2) \cap \bigcup_{Q \in \mathcal{Q}_i} Q$, and $V'_i = \bigcup_{Q \in \mathcal{Q}_i} Q$ for $i=1,2$. It takes $O(m)$ to identify such a partition, and verify whether the resulting sets indeed constitute a generalized $2$-join.
  
  The running time of this algorithm is $O(m)$ for each vertex $v$ we guess, and thus $O(nm)$ in total. Combined with the test for a $2$-join, this brings the total running time to $O(n m (\Delta(G) + \alpha(m,\Delta(G))))$, as claimed.
\end{proof}

\begin{theorem}
\label{thm:pseudo-2-join-algo}
  In $O(nm (\Delta(G) + \alpha(m,\Delta(G))))$ time, one can find a pseudo-$2$-join in a connected graph $G$ that does not admit a pseudo-$1$-join, or report that $G$ does not have such a join.
\end{theorem}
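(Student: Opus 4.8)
The plan is to build the pseudo-$2$-join algorithm out of the pseudo-$1$-join algorithm (Corollary~\ref{cor:pseudo-1-join-algo}) in exactly the way the (generalized) $2$-join algorithm is built out of the $1$-join algorithm in Theorems~\ref{thm:2-join-algo} and~\ref{thm:gen-2-join-algo}. Note first that since $G$ is connected it admits no $0$-join, and since it admits no pseudo-$1$-join it admits no $1$-join; hence Theorem~\ref{thm:gen-2-join-algo} is applicable to $G$ as a black box.

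I would first dispose of the non-degenerate case. If $(V_{0},V_{1},V_{2})$ with connecting sets $(A_{1},A_{2},B_{1},B_{2})$ is a pseudo-$2$-join in which all six of $A_{1},A_{2},B_{1},B_{2},V_{1}\setminus(A_{1}\cup B_{1}),V_{2}\setminus(A_{2}\cup B_{2})$ are nonempty, then it is literally a generalized $2$-join: the strong-clique and strong-anticompleteness requirements of the two notions coincide, and the remaining requirements match. Conversely, as noted in the text, a graph with no $1$-join that admits a generalized $2$-join admits a pseudo-$2$-join. So the algorithm runs Theorem~\ref{thm:gen-2-join-algo} first and, on success, returns the corresponding pseudo-$2$-join.

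It then enumerates the degenerate cases, in which one of those six sets is empty. If $A_{2}=\emptyset$ then every vertex of $A_{1}$ is strongly anticomplete to $V_{2}$ (the only $V_{1}$--$V_{2}$ edges lie inside the two cliques, whose $V_{2}$-parts are $A_{2}=\emptyset$ and $B_{2}$, and $A_{1}\cap B_{1}=\emptyset$), so $A_{1}$ may be moved into $V_{1}\setminus(A_{1}\cup B_{1})$; thus \wloge $A_{1}=A_{2}=\emptyset$, and symmetrically for the $B_{i}$. If $A_{1}=A_{2}=\emptyset$ with $V_{1}\neq B_{1}$ and $V_{2}\neq B_{2}$, then $(B_{1},B_{2}\cup V_{0})$ is a $1$-join, hence a pseudo-$1$-join, contradicting the hypothesis; and if instead $V_{1}=B_{1}$ (say), then $V_{1}$ is a strong clique of size at least two all of whose vertices have the same closed neighbourhood $V_{1}\cup B_{2}\cup V_{0}$, which is itself a clique, \ie $G$ has twins with clique neighbourhood. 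These can be found in $O(m(\Delta(G)+\alpha(m,\Delta(G))))$ time exactly as in Corollary~\ref{cor:pseudo-1-join-algo}; from such a twin pair a candidate pseudo-$2$-join is read off and verified in linear time. The one remaining degenerate case is $V_{1}\setminus(A_{1}\cup B_{1})=\emptyset$ (or symmetrically $V_{2}$) with $A_{1},A_{2},B_{1},B_{2}$ all nonempty, i.e.\ one side of the join, say $V_{1}=A_{1}\cup B_{1}$, has empty interior. As long as the clique $B_{1}$ is not strongly stable, this is detected automatically by running the Theorem~\ref{thm:2-join-algo}-style search with the pseudo-$1$-join subroutine of Corollary~\ref{cor:pseudo-1-join-algo} in place of the $1$-join subroutine: after deleting the edges of the $A$-clique, the pair $(B_{1},B_{2})$ appears in the relevant component as a pseudo-$1$-join with empty interior on the $B_{1}$ side, which is precisely the twin-induced flavour of pseudo-$1$-join that Corollary~\ref{cor:pseudo-1-join-algo} is designed to catch.

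The main obstacle is the last sliver of this case: $V_{1}=A_{1}\cup B_{1}$ with $|B_{1}|=1$ (or $|A_{1}|=1$, or both), so that $B_{1}=\{b\}$ is a single vertex and $(B_{1},B_{2})$ fails to be a pseudo-$1$-join only because $B_{1}$ is strongly stable. I would handle this by an extra layer of guessing: iterate over all vertices $b\in V(G)$ in the role of $B_{1}$, observe that in $G-b$ the set $A_{1}$ becomes a strong clique with empty interior that, after deleting the $A$-clique edges, witnesses a pseudo-$1$-join $(A_{1},A_{2}\cup V_{0})$ of $G-b$, search for it with Corollary~\ref{cor:pseudo-1-join-algo}, and then check that restoring $b$ with $B_{2}\cup V_{0}\subseteq N(b)$ yields a valid pseudo-$2$-join; the smallest subcase $|V_{1}|=2$ is finished by iterating over all edges of $G$ directly. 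This adds at most a factor $n$ over the pseudo-$1$-join subroutine, so the overall running time stays at $O(nm(\Delta(G)+\alpha(m,\Delta(G))))$, matching the statement. The delicate part of the full proof will be the bookkeeping needed to confirm that these enumerations together cover every pseudo-$2$-join and that every produced partition is genuinely valid; the case split on "which of $A_{i},B_{i}$ is a singleton" is finicky but finite.
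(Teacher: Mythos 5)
Your proposal has the right skeleton — run the generalized-$2$-join algorithm first, then argue that all remaining pseudo-$2$-joins are sufficiently non-degenerate to be caught by a relaxed $2$-join-style search — and it correctly identifies that the only genuinely new degenerate shape is $V_1 = A_1 \cup B_1$. But there is a concrete error in your reduction of the ``some $A_i$ or $B_i$ empty'' case: you claim that if $A_2 = \emptyset$ then $A_1$ may be moved into $V_1 \setminus (A_1 \cup B_1)$ because $A_1$ is strongly anticomplete to $V_2$. That ignores $V_0$: since $V_0 \cup A_1 \cup A_2$ is a strong clique, $A_1$ is strongly \emph{complete} to $V_0$, whereas the interior $V_1 \setminus (A_1 \cup B_1)$ must be strongly anticomplete to $V_0$, so the move produces an invalid partition whenever $V_0 \neq \emptyset$ and $A_1 \neq \emptyset$. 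The correct and simpler argument is direct: if $A_2 = \emptyset$ then $(V_1 \cup V_0, V_2)$ with connecting sets $(B_1 \cup V_0, B_2)$ is already a pseudo-$1$-join of $G$ (both sides contain $V_1$ resp.\ $V_2$, hence are not strongly stable), contradicting the hypothesis. The analogous argument kills $A_1 = \emptyset$, $B_1 = \emptyset$, $B_2 = \emptyset$ too, so none of $A_1, A_2, B_1, B_2$ can be empty and the twin-detection sub-step you introduce is vacuous under the theorem's hypothesis.

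The paper's proof is considerably shorter because it makes one further observation that your proposal lacks and that renders the entire $|B_1| = 1$ worry (and the extra factor-$n$ vertex-guessing layer) unnecessary: if $V_1 = A_1 \cup B_1$ and there is no edge between $A_1$ and $B_1$, then, since $V_1$ is not strongly stable, one of $A_1, B_1$ has at least two elements, say $A_1$, and $(A_1, V(G) \setminus A_1)$ is a pseudo-$1$-join — contradiction. Hence an $A_1$--$B_1$ edge exists, so after deleting the edges inside the candidate $A$-clique the set $A_1$ stays in $B_1$'s component, and the ``$B$-side'' $1$-join has $V_1' \setminus B_1 \supseteq A_1 \neq \emptyset$. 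Consequently the ordinary $1$-join subroutine of Theorem~\ref{thm:1-join-algo} already finds it; one need only drop the constraint in Theorem~\ref{thm:2-join-algo} that $V_1 \setminus A_1 \neq B_1$, exactly as the paper does. Your substitution of the pseudo-$1$-join subroutine and the additional guessing layer are not wrong, but they are workarounds for a case that cannot arise; with the above observation you would recover the paper's one-paragraph proof.
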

\begin{proof}
  We first try to find a generalized $2$-join using Theorem~\ref{thm:gen-2-join-algo}.
  If one exists, $G$ has a pseudo-$2$-join.
  Otherwise, suppose that $G$ has a pseudo-$2$-join ($V_{0},V_{1},V_{2})$ for which $V_{1}\setminus(A_{1} \cup B_{1}) = \emptyset$. Suppose that there is no edge in $G$ between $A_{1}$ and $B_{1}$.
  Then, \wloge $|A_{1}| \geq 2$.
  But then $(A_{1}, V(G)\setminus A_{1})$ would form a pseudo-$1$-join of $G$, a contradiction.
  Hence we can just apply the same idea as in the proof of Theorem~\ref{thm:gen-2-join-algo}, but when using Theorem~\ref{thm:2-join-algo} inside of it, we do not insist that neither $V_{1}\setminus A_{1}$ nor $V_{2}\setminus A_{2}$ equals $B_{1}$ or $B_{2}$.
\end{proof}

From the fact that the maximum degree of a claw-free graph is $O(\sqrt{m})$~\cite[Lemma~4]{KloksKM2000}, we immediate obtain the following corollary.

\begin{corollary}
\label{cor:join-finder}
  \begin{itemize}
    \item In $O(n+m)$ time, one can find a $0$-join in a graph $G$, or report that $G$ has no $0$-join.
    \item In $O(m^{3/2})$ time, one can find a pseudo-$1$-join in a connected graph $G$, or report that $G$ does not have such a join.
    \item In $O(nm^{3/2})$ time, one can find a pseudo-$2$-join in a connected graph $G$ that does not admit a pseudo-$1$-join, or report that $G$ does not have such a join.
  \end{itemize}
\end{corollary}

\section{Recognizing Thickenings of \texorpdfstring{$\mc{Z}_{2}$, $\mc{Z}_{3}$, $\mc{Z}_{4}$, and $\mc{Z}_{5}$}{Z2, Z3, Z4, and Z5}}
\label{sec:recog}
The algorithmic decomposition theorem that we prove later is based on a strip-structure of the graph where most of its stripes are the special stripes defined in Sect.~\ref{sec:defs:stripes}.
However, several classes of stripes defined in Sect.~\ref{sec:defs:stripes} play a more prominent role than others.
In particular, we will need algorithms that recognize stripes that are thickenings of members of $\mc{Z}_{2}$, $\mc{Z}_{3}$, $\mc{Z}_{4}$, and $\mc{Z}_{5}$.

We first make the following observation.
It is important to note that $J'$ might be a trigraph.
\begin{proposition}
\label{prp:recog:verify}
  Let $(J,Z)$ be a stripe such that $J$ is a graph, let $(J',Z')$ be a stripe, and let $\mc{W} = \{W_{v'} \subseteq V(J) \mid v' \in V(J') \}$.
  Then it can be verified in linear time whether $(J,Z)$ is the thickening~$\mc{W}$ of $(J',Z')$.
\end{proposition}
\begin{proof}
  We verify that:
  \begin{itemize}
    \item $W_{v'}$ is a nonempty clique for each $v' \in V(J')$;
    \item $\bigcup_{v' \in V(J')} W_{v'} = V(J)$;
    \item $|W_{z'}| = 1$ for each $z' \in Z'$;
    \item $\bigcup_{z' \in Z'} W_{z'} = Z$;
    \item for each $u',v' \in V(J')$, if the edge between $u'$ and $v'$ in $J'$ is:
      \begin{itemize}
        \item a nonedge, then $W_{u'}$ is anticomplete to $W_{v'}$;
        \item a (regular) edge, then $W_{u'}$ is complete to $W_{v'}$;
        \item a semi-edge, then $W_{u'}$ is neither complete nor anticomplete to $W_{v'}$.
      \end{itemize}
  \end{itemize}
  The first four items are straightforward to verify in linear time.
  For the fifth item (and its subitems), we group the neighbors of each vertex of $W_{v'}$ of each $v' \in V(J')$ according to which $W_{u'}$ it belongs to.
  Since $v \in W_{v'}$ is only supposed to have neighbors that belong to $W_{u'}$ for which $u'$ and $v'$ are adjacent in $J'$, the grouping takes linear time using bucket sort.
  After that, the last three items take linear time to verify as well by simply counting.
\end{proof}

We now describe linear-time recognition algorithms for thickenings of members $\mc{Z}_{2}$, $\mc{Z}_{3}$, $\mc{Z}_{4}$, and $\mc{Z}_{5}$ in turn.

\subsection{Recognizing Thickenings of \texorpdfstring{$\mc{Z}_{2}$}{Z2}}

\begin{lemma}
\label{lem:recog:z2}
  Let $(J,Z)$ be a stripe such that $J$ is a connected graph that does not admit twins.
  Then we can decide in linear time whether $(J,Z)$ is a thickening of a member of $\mc{Z}_{2}$.
  If so, then we can find such a member and its thickening to $(J,Z)$ as well in the same time.
\end{lemma}
\begin{proof}
  The proof of this lemma consists of two parts.
  In the first part, we argue about the structure of (thickenings) of members of $\mc{Z}_{2}$ and show that each (thickening of such a) member has, without loss of generality, a particular structure.
  In the second part, we give an algorithm to determine whether $(J,Z)$ is a thickening of such a structured member.

  Let $(J',Z')$ be a member of $\mc{Z}_{2}$. Let $n$, $a_0$, $b_0$, $A = \{a_{1},\ldots,a_{n}\}$, $B=\{b_{1},\ldots,b_{n}\}$, $C=\{c_{1},\ldots,c_{n}\}$, and $X$ be as in the definition of $\mc{Z}_{2}$.
  Without loss of generality, we may assume that $a_{1},b_{1}$ are semiadjacent (and thus $c_{1} \in X$), that $a_{2},c_{2}$ are semiadjacent (and thus $b_{2} \in X$), and that $b_{3},c_{3}$ are semiadjacent (and thus $a_{3} \in X$). 
  In particular, this implies that $n \geq 3$ and that if $J'$ has no semi-edges, then $a_i,b_i,c_i \in X$ for $i \in \{1,2,3\}$. 

  Suppose that $(J,Z)$ is a thickening $\mc{W}$ of $(J',Z')$.
  We show that without loss of generality $(J',Z')$ and $\mc{W}$ have specific properties.
  Let $A_{J} = \bigcup_{i \in \{1,\ldots,n\} \mid a_{i} \not\in X} W_{a_{i}}$, $B_{J} = \bigcup_{i \in \{1,\ldots,n\} \mid b_{i} \not\in X} W_{b_{i}}$, and $C_{J} = \bigcup_{i \in \{1,\ldots,n\} \mid c_{i} \not\in X} W_{c_{i}}$.

  \begin{cclaim}
  \label{c:z2:wj1-a}
    If $a \in A_{J}$ is complete to $C_{J}$ and anticomplete to $B_{J}$, then without loss of generality $a \in W_{a_{i}}$ for some $i \in \{4,\ldots,n\}$.
  \end{cclaim}
  \begin{cproof}
    Suppose that $a \in W_{a_{1}}$ or $a \in W_{a_{2}}$ (the case $a \in W_{a_{3}}$ is excluded by the previous assumption that $a_{3} \in X$).
    By assumption, this implies that $c_1 \in X$ respectively that $b_2 \in X$.

    We provide a modified member $(J'',Z'')$ of $\mc{Z}_{2}$ and a modified thickening $\mc{W}''$ to $(J,Z)$.
    Initially, $(J'',Z'')$ is equal to $(J',Z')$ and $\mc{W}'' = \mc{W}$, and in particular $n'',a_0'',b_0'',A'',B'',C'',X''$ are the same as $n,a_0,b_0,A,B,C,X$.
    Now add $1$ to $n''$ (effectively adding a $(n+1)$-th vertex to $A'',B'',C''$), add $b_{n+1},c_{n+1}$ to $X''$, remove $a$ from $W''_{a_{1}}$ or $W''_{a_{2}}$ respectively, and add $a$ to $W''_{a_{n+1}}$.
    At present, $(J'',Z'')$ is still a member of $\mc{Z}_{2}$, and $(J,Z)$ is the thickening $W''$ of $(J'',Z'')$, unless $b_{1} \not\in X$ and $W''_{a_{1}}$ is complete or anticomplete to $W''_{b_{1}}$ respectively $c_{2} \not\in X$ and $W''_{a_{2}}$ is complete or anticomplete to $W''_{b_{2}}$.

    Suppose first that $a \in W_{a_{1}}$ and $b_{1} \not\in X$.
    By the definition of a thickening, $W_{a_{1}}$ is not anticomplete to $W_{b_{1}}$.
    Hence, $W_{a_{1}} \setminus \{a\}$ is not anticomplete to $W_{b_{1}}$.
    Suppose that $W_{a_{1}} \setminus \{a\}$ is complete to $W_{b_{1}}$.
    Then add $1$ to $n''$ (effectively adding a $(n+2)$-th vertex to $A'',B'',C''$), add $a_1,b_1,c_{n+2}$ to $X''$, blank~$W''_{a_{1}}$ and $W''_{b_{1}}$, and set $W''_{a_{n+2}} = W_{a_{1}} \setminus \{a\}$ and $W''_{b_{n+2}} = W_{b_{1}}$.
    Then $(J'',Z'')$ is still a member of $\mc{Z}_{2}$, and $(J,Z)$ is the thickening $W''$ of $(J'',Z'')$.

    The case that $a \in W_{a_{2}}$ can be argued similarly as the previous case.
  \end{cproof}
  \noindent Similarly, one can argue the following.

  \begin{cclaim}
  \label{c:z2:wj1-b}
    If $b \in B_{J}$ is complete to $C_{J}$ and anticomplete to $A_{J}$, then without loss of generality $b \in W_{b_{i}}$ for some $i \in \{4,\ldots,n\}$.
  \end{cclaim}
  \medskip

  \noindent Let $A^{1}_{J}$ denote the set of vertices in $A_{J}$ that are complete to $C_{J}$ but not anticomplete to $B_{J}$; let $B^{1}_{J}$ denote the set of vertices in $B_{J}$ that are complete to $C_{J}$ but not anticomplete to $A_{J}$.

  \begin{cclaim}
  \label{c:z2:wj1-c}
    If $a_{1},b_{1} \not\in X$, then without loss of generality $W_{a_{1}} \subseteq A^{1}_{J}$ and $W_{b_{1}} \subseteq B^{1}_{J}$. 
  \end{cclaim}
  \begin{cproof}
    Suppose that $a_{1},b_{1} \not\in X$.
    By assumption, $c_1 \in X$.
    Hence, $a_{1}$ and $b_1$ are strongly complete to $C \setminus X$ in $J'$.
    Therefore, $W_{a_{1}}$ and $W_{b_{1}}$ are strongly complete to $C_{J}$ in $J$.
    By Claim~\ref{c:z2:wj1-a} and~\ref{c:z2:wj1-b}, without loss of generality every vertex in $W_{a_{1}}$ is not anticomplete to $B_{J}$ and every vertex in $W_{b_{1}}$ is not anticomplete to $A_{J}$.
    It follows that $W_{a_{1}} \subseteq A^{1}_{J}$ and $W_{b_{1}} \subseteq B^{1}_{J}$.
  \end{cproof}

  \begin{cclaim}
  \label{c:z2:wj1}
    If $|A^{1}_{J}|, |B^{1}_{J}| \geq 1$ and $|A^{1}_{J}| + |B^{1}_{J}| \geq 3$, then without loss of generality $a_{1},b_{1} \not\in X$, $W_{a_{1}} = A^{1}_{J}$, and $W_{b_{1}} = B^{1}_{J}$.
    Otherwise, without loss of generality $a_1,b_1 \in X$.
  \end{cclaim}
  \begin{cproof}
    Suppose that $|A^{1}_{J}|, |B^{1}_{J}| \geq 1$ and $|A^{1}_{J}| + |B^{1}_{J}| \geq 3$.
    Observe first that if $a_2 \not\in X$, then $W_{a_{2}} \cap A^{1}_{J} = \emptyset$, because $b_2 \in X$ and thus $a_{2}$ is strongly anticomplete to $B \setminus X$ in $J'$, and therefore every vertex of~$W_{a_{2}}$ is anticomplete to $B_{J}$ in $J$. 
    Next, we observe that if $W_{a_{i}} \cap A^{1}_{J} \not= \emptyset$ for some $i \geq 4$, then $W_{a_{i}} \subseteq A^{1}_{J}$. 
    This is immediate from the fact that $J$ does not admit twins and thus $|W_{a_{i}}| = 1$ for $i \geq 4$.
    Mutatis mutandis, we can argue that if $b_3 \not\in X$, then $W_{b_{3}} \cap B^{1}_{J} = \emptyset$, and if $W_{b_{i}} \cap B^{1}_{J} \not= \emptyset$ for some $i \geq 4$, then $W_{b_{i}} \subseteq B^{1}_{J}$. 

    Let $I_A = \{i \in \{4,\ldots,n\} \mid W_{a_{i}} \subseteq A^{1}_{J}\}$ and let $I_B = \{i \in \{4,\ldots,n\} \mid W_{b_{i}} \subseteq B^{1}_{J}\}$. 
    If $i \in I_A$, then $c_i \in X$, because each vertex in $A^{1}_{J}$ is complete to $C_{J}$, and $a_i$ is strongly antiadjacent to $c_i$ for $i \geq 4$ by the definition of $\mc{Z}_{2}$. Similarly, if $i \in I_B$, then $c_i \in X$. 
    We claim that $I_A = I_B$.
    If $i \in I_A$, then $b_i \not\in X$, because each vertex in $A^{1}_{J}$ has a neighbor in $B_{J}$ and $a_i$ is strongly adjacent to $b_i$ and strongly antiadjacent to $B \setminus\{b_{i}\}$ in $J'$ by the definition of $\mc{Z}_{2}$.
    Therefore, $W_{b_i}$ is complete to $W_{a_i}$, and in particular, not anticomplete to $A_{J}$ in $J$.
    Moreover, $c_i \in X$ as observed before; therefore, $b_i$ is strongly complete to $C \setminus X$ in $J'$, and thus $W_{b_i}$ is complete to $C_{J}$ in $J$.
    Hence, $i \in I_B$, and thus $I_A \subseteq I_B$.
    A similar argument shows that $I_B \subseteq I_A$, proving the claim. 

    Now note that $a_{1},b_{1}\not\in X$ or $|I_A| \geq 2$.
    Indeed, if $a_1,b_1 \in X$, then $|I_A| \geq 1$ since $|A^{1}_{J}| \geq 1$, $W_{a_{2}} \cap A^{1}_{J} = \emptyset$, and $a_3 \in X$.
    If $|I_A| = 1$, then let $I_A = \{i\}$.
    Since $I_B = \{i\}$ by the claim in the previous paragraph and since $W_{a_{2}} \cap A^{1}_{J} = \emptyset$ and $W_{b_{3}} \cap B^{1}_{J} = \emptyset$, $A^{1}_{J} = W_{a_{i}}$ and $B^{1}_{J} = W_{b_{i}}$.
    But then the assumption that $|A^{1}_{J}| + |B^{1}_{J}| \geq 3$ implies that $J$ admits twins, a contradiction.

    If $|I_A| = 0$, then without loss of generality $a_1,b_1 \not\in X$, $W_{a_{1}} = A^{1}_{J}$, and $W_{b_{1}} = B^{1}_{J}$ by Claim~\ref{c:z2:wj1-c}.
    So suppose otherwise.
    We provide a modified member $(J'',Z'')$ of $\mc{Z}_{2}$ and a modified thickening $\mc{W}''$ to $(J,Z)$.
    Initially, $(J'',Z'')$ is equal to $(J',Z')$ and $\mc{W}'' = \mc{W}$, and in particular $n'',a_0'',b_0'',A'',B'',C'',X''$ are the same as $n,a_0,b_0,A,B,C,X$.
    Remove $a_1,b_1$ from $X''$ if either is in~$X$.
    Now add $W_{a_{i}}$ to $W''_{a_{1}}$ and $W_{b_{i}}$ to $W''_{b_{1}}$ for each $i \in I_A$, and reduce $n''$ by $|I_A|$ (effectively removing $a_i,b_i,c_i$ from $A'',B'',C''$ respectively and from $X''$ for each $i \in I_A$).
    Since $c_i \in X$ for each $i \in I_A$, $|C''\setminus X''| = |C \setminus X| \geq 2$, and thus $(J'',Z'')$ is indeed still a member of $\mc{Z}_{2}$.
    Now recall that $I_A = I_B$, and that $a_{1},b_{1}\not\in X$ or $|I_A| \geq 2$ thus implies that $W_{a_{1}} \cup\bigcup_{i \in I_A} W_{a_{i}}$ is neither complete nor anticomplete to $W_{b_{1}} \cup\bigcup_{i \in I_A} W_{b_{i}}$.
    Therefore, $W''_{a_{1}}$ is neither complete nor anticomplete to $W''_{b_{1}}$.
    Moreover, $a_{1},b_{1}\not\in X$ or $|I_A| \geq 2$ implies that $W''_{a_{1}}, W''_{b_{1}} \not= \emptyset$.
    Hence, $(J,Z)$ is the thickening $\mc{W}''$ of $(J'',Z'')$.

    For the second part of the claim, suppose that $|A^{1}_{J}| = 0$, $|B^{1}_{J}| = 0$, or $|A^{1}_{J}| + |B^{1}_{J}| < 3$.
    If $a_{1},b_{1} \not\in X$, then by Claim~\ref{c:z2:wj1-c} without loss of generality $W_{a_{1}} \subseteq A^{1}_{J}$ and $W_{b_{1}} \subseteq B^{1}_{J}$.
    However, since $a_1$ and $b_1$ are semiadjacent, this is not possible by the definition of a thickening.
    If $a_1 \not\in X$ but $b_1 \in X$, then without loss of generality we contradict Claim~\ref{c:z2:wj1-a}.
    Similarly, if $a_1 \in X$ but $b_1 \not\in X$, then without loss of generality we contradict Claim~\ref{c:z2:wj1-b}.
    Hence, $a_1,b_1 \in X$.
  \end{cproof}
  The above claim will be sufficient to localize $W_{a_{1}}$ and $W_{b_{1}}$ in $J$, as we show later.
  We now turn our attention to $W_{a_{2}}$ and $W_{c_{2}}$.

  \begin{cclaim}
  \label{c:z2:wj2-a}
    No vertex in $A_{J}$ is anticomplete to $C_{J}$.
  \end{cclaim}
  \begin{cproof}
    Suppose that $a \in W_{a_{i}}$ for some $i \in \{1,\ldots,n\}$ is anticomplete to $C_{J}$.
    Observe that $a_{i}$ is strongly adjacent to $c_{j}$ for each $j \not =i$, and thus $W_{a_{i}}$ (and particularly $a$) is strongly complete to $W_{c_{j}}$. Since $|C \setminus X| \geq 2$ by the definition of $\mc{Z}_{2}$, such a $j$ indeed exists.
    Therefore, $a$ has an edge to a vertex in $C_{J}$, a contradiction.
  \end{cproof}
  The remaining claims and particularly their proofs are similar to Claim~\ref{c:z2:wj1-b}, \ref{c:z2:wj1-c}, and~\ref{c:z2:wj1}. We include them here for sake of completeness.

  \begin{cclaim}
  \label{c:z2:wj2-b}
    If $c \in C_{J}$ is complete to $A_{J}$ and $B_{J}$, then without loss of generality $c \in W_{c_{i}}$ for some $i \in \{4,\ldots,n\}$.
  \end{cclaim}
  \begin{cproof}
    Suppose that $c \in W_{c_{2}}$ or $c \in W_{c_{3}}$ (the case $c \in W_{c_{1}}$ is excluded by the previous assumption that $c_{1} \in X$).
    By assumption, this implies that $b_2 \in X$ respectively that $a_3 \in X$.

    We provide a modified member $(J'',Z'')$ of $\mc{Z}_{2}$ and a modified thickening $\mc{W}''$ to $(J,Z)$.
    Initially, $(J'',Z'')$ is equal to $(J',Z')$ and $\mc{W}'' = \mc{W}$, and in particular $n'',a_0'',b_0'',A'',B'',C'',X''$ are the same as $n,a_0,b_0,A,B,C,X$.
    Now add $1$ to $n''$ (effectively adding a $(n+1)$-th vertex to $A'',B'',C''$), add $a_{n+1},b_{n+1}$ to $X''$, remove $c$ from $W''_{c_{2}}$ or $W''_{c_{3}}$ respectively, and add $c$ to $W''_{c_{n+1}}$.
    At present, $(J'',Z'')$ is still a member of $\mc{Z}_{2}$, and $(J,Z)$ is the thickening $W''$ of $(J'',Z'')$, unless $a_2 \not\in X$ and $W''_{c_{2}}$ is complete or anticomplete to $W''_{a_{2}}$ respectively $b_3 \not\in X$ and $W''_{c_{3}}$ is complete or anticomplete to $W''_{b_{3}}$.

    Suppose first that $c \in W_{c_{2}}$ and $a_2 \not\in X$.
    By the definition of a thickening, $W_{c_{2}}$ is not anticomplete to $W_{a_{2}}$.
    Hence, $W_{c_{2}} \setminus \{c\}$ is not anticomplete to $W_{a_{2}}$.
    Suppose that $W_{c_{2}} \setminus \{c\}$ is complete to $W_{a_{2}}$.
    Then add $1$ to $n''$ (effectively adding a $(n+2)$-th vertex to $A'',B'',C''$), add $a_2,c_2,b_{n+2}$ to $X''$, blank $W''_{a_{2}}$ and $W''_{c_{2}}$, and set $W''_{a_{n+2}} = W_{a_{2}}$ and $W''_{c_{n+2}} = W_{c_{2}} \setminus \{c\}$.
    Then $(J'',Z'')$ is still a member of $\mc{Z}_{2}$, and $(J,Z)$ is the thickening $W''$ of $(J'',Z'')$.
  \end{cproof}

  Let $A^{2}_{J}$ denote the set of vertices in $A_{J}$ that are anticomplete to $B_{J}$ but not complete to $C_{J}$; let $C^{2}_{J}$ denote the set of vertices in $C_{J}$ that are complete to $B_{J}$ but not complete to $A_{J}$.

  \begin{cclaim}
  \label{c:z2:wj2-c}
    If $a_2,c_2 \not\in X$, then without loss of generality $W_{a_{2}} \subseteq A^{2}_{J}$ and $W_{c_{2}} \subseteq C^{2}_{J}$.
  \end{cclaim}
  \begin{cproof}
    Suppose that $a_2,c_2 \not\in X$.
    By assumption, $b_2 \in X$.
    Hence, $a_2$ is strongly anticomplete to $B \setminus X$ in $J'$ and $c_2$ is strongly complete to $B \setminus X$ in $J'$. Therefore, $W_{a_{2}}$ is anticomplete to $B_{J}$ and $W_{c_{2}}$ is complete to $B_{J}$ in $J$.
    By Claim~\ref{c:z2:wj1-a}, without loss of generality no vertex of $W_{a_{2}}$ is complete to $C_{J}$.
    By Claim~\ref{c:z2:wj2-b}, without loss of generality no vertex of $W_{c_{2}}$ is complete to $A_{J}$.
  \end{cproof}

  \begin{cclaim}
  \label{c:z2:wj2}
    If $|A^{2}_{J}|, |C^{2}_{J}| \geq 1$ and $|A^{2}_{J}| + |C^{2}_{J}| \geq 3$, then without loss of generality $a_{2},c_{2} \not\in X$, $W_{a_{2}} = A^{2}_{J}$, and $W_{c_{2}} = C^{2}_{J}$.
    Otherwise, without loss of generality $a_2,c_2 \in X$.
  \end{cclaim}
  \begin{cproof}
    Suppose that $|A^{2}_{J}|, |C^{2}_{J}| \geq 1$ and $|A^{2}_{J}| + |C^{2}_{J}| \geq 3$.
    Observe first that if $a_1 \not\in X$, then $W_{a_{1}} \cap A^{2}_{J} = \emptyset$, because $c_1 \in X$ and thus $a_{1}$ is strongly complete to $C \setminus X$ in $J'$, and therefore every vertex of $W_{a_{1}}$ is complete to $C_{J}$ in $J$.
    Next, we observe that if $W_{a_{i}} \cap A^{1}_{J} \not= \emptyset$ for some $i \geq 4$, then $W_{a_{i}} \subseteq A^{2}_{J}$. 
    This is immediate from the fact that $J$ does not admit twins and thus $|W_{a_{i}}| = 1$ for $i \geq 4$.
    If $c_3 \not \in X$, then $W_{c_{3}} \cap C^{2}_{J} = \emptyset$, because $a_3 \in X$ and thus $c_3$ is strongly complete to $A \setminus X$ in $J'$, and therefore every vertex of $W_{c_{3}}$ is complete to $A_{J}$ in $J$.
    If $W_{c_{i}} \cap C^{2}_{J} \not= \emptyset$ for some $i \geq 4$, then $W_{c_{i}} \subseteq C^{2}_{J}$, because $J$ does not admit twins and thus $|W_{c_{i}}| = 1$ for $i \geq 4$. 

    Let $I_A = \{i \in \{4,\ldots,n\} \mid W_{a_{i}} \subseteq A^{2}_{J}\}$ and let $I_C = \{i \in \{4,\ldots,n\} \mid W_{c_{i}} \subseteq B^{1}_{J}\}$. 
    If $i \in I_A$, then $b_i \in X$, because each vertex in $A^{2}_{J}$ is anticomplete to $B_{J}$, and $a_i$ is strongly adjacent to $b_i$ for $i \geq 4$ by the definition of $\mc{Z}_{2}$.
    If $i \in I_{C}$, then $b_i \in X$, because each vertex in $C^{2}_{J}$ is complete to $B_{J}$, and $c_i$ is strongly antiadjacent to $b_i$ for $i \geq 4$ by the definition of $\mc{Z}_{2}$. 
    We claim that $I_A = I_C$. If $i \in I_A$, then $c_i \not\in X$, because each vertex in $A^{2}_{J}$ has a nonneighbor in $C_{J}$ and $a_i$ is strongly antiadjacent to $c_i$ and strongly adjacent to $C \setminus\{c_{i}\}$ in $J'$ by the definition of $\mc{Z}_{2}$.
    Therefore, $W_{c_i}$ is anticomplete to $W_{a_i}$, and in particular, not complete to $A_{J}$.
    Moreover, $b_i \in X$ as observed before; therefore, $c_i$ is strongly complete to $B \setminus X$ in $J'$, and thus $W_{c_i}$ is complete to $B_{J}$ in $J$.
    Hence, $i \in I_C$ and $I_A \subseteq I_C$.
    A similar argument argument shows that $I_C \subseteq I_A$, proving the claim. 

    Now note that $a_{2},c_{2}\not\in X$ or $|I_A| \geq 2$.
    Indeed, if $a_2,c_2 \in X$, then $|I_A| \geq 1$ since $|A^{2}_{J}| \geq 1$, $W_{a_{1}} \cap A^{2}_{J} = \emptyset$, and $a_3 \in X$.
    If $|I_A| = 1$, then let $I_A = \{i\}$. Since $I_C = \{i\}$ by the claim in the previous paragraph and since $W_{a_{1}} \cap A^{2}_{J} = \emptyset$ and $W_{c_{3}} \cap C^{2}_{J} = \emptyset$, $A^{2}_{J} = W_{a_{i}}$ and $C^{2}_{J} = W_{c_{i}}$.
    But then the assumption that $|A^{2}_{J}| + |C^{2}_{J}| \geq 3$ implies that $J$ admits twins, a contradiction.

    If $|I_A| = 0$, then without loss of generality $a_2,c_2 \not\in X$, $W_{a_{2}} = A^{2}_{J}$, and $W_{c_{2}} = C^{2}_{J}$ by Claim~\ref{c:z2:wj2-c}.
    So suppose otherwise.
    We provide a modified member $(J'',Z'')$ of $\mc{Z}_{2}$ and a modified thickening $\mc{W}''$ to $(J,Z)$.
    Initially, $(J'',Z'')$ is equal to $(J',Z')$ and $\mc{W}'' = \mc{W}$, and in particular $n'',a_0'',b_0'',A'',B'',C'',X''$ are the same as $n,a_0,b_0,A,B,C,X$. Remove $a_2,c_2$ from $X''$ if either is in~$X$.
    Now add $W_{a_{i}}$ to $W''_{a_{2}}$ and $W_{c_{i}}$ to $W''_{c_{2}}$ for each $i \in I_A$, and reduce $n''$ by $|I_A|$ (effectively removing $a_i,b_i,c_i$ from $A'',B'',C''$ respectively and from $X''$ for each $i \in I_A$).
    Since $b_i \in X$ for each $i \in I_A$, $|B''\setminus X''| = |B \setminus X| \geq 1$, and thus $(J'',Z'')$ is indeed still a member of $\mc{Z}_{2}$.
    Now recall that $I_A = I_C$, and that $a_{2},c_{2}\not\in X$ or $|I_A| \geq 2$ thus implies that $W_{a_{2}} \cup\bigcup_{i \in I_A} W_{a_{i}}$ is neither complete nor anticomplete to $W_{c_{2}} \cup\bigcup_{i \in I_A} W_{c_{i}}$.
    Therefore, $W''_{a_{2}}$ is neither complete nor anticomplete to $W''_{c_{2}}$.
    Moreover, $a_{2},c_{2}\not\in X$ or $|I_A| \geq 2$ implies that $W''_{a_{2}}, W''_{c_{2}} \not= \emptyset$. Hence, $(J,Z)$ is the thickening $\mc{W}''$ of $(J'',Z'')$.

    For the second part of the claim, suppose that $|A^{2}_{J}| = 0$, $|C^{2}_{J}| = 0$, or $|A^{2}_{J}| + |C^{2}_{J}| < 3$. If $a_{2},c_{2} \not\in X$, then by Claim~\ref{c:z2:wj2-c} without loss of generality $W_{a_{2}} \subseteq A^{2}_{J}$ and $W_{c_{2}} \subseteq C^{2}_{J}$. However, since $a_2$ and $c_2$ are semiadjacent, this is not possible by the definition of a thickening. If $a_2 \not\in X$ but $c_2 \in X$, then without loss of generality we contradict Claim~\ref{c:z2:wj2-a}. Similarly, if $a_2 \in X$ but $c_2 \not\in X$, then without loss of generality we contradict Claim~\ref{c:z2:wj2-b}. Hence, $a_2,c_2 \in X$.
  \end{cproof}
  The above claim will be sufficient to localize $W_{a_{2}}$ and $W_{c_{2}}$ in $J$, as we show later. We now turn our attention to $W_{b_{3}}$ and $W_{c_{3}}$.

  Let $B^{3}_{J}$ denote the set of vertices in $B_{J}$ that are anticomplete to $A_{J}$ but not complete to $A_{J}$; let $C^{3}_{J}$ denote the set of vertices in $C_{J}$ that are complete to $A_{J}$ but not complete to $B_{J}$.
  Observe that the definition of $\mc{Z}_{2}$ is symmetric with respect to $A$ and $B$.
  Hence, similarly as to Claim~\ref{c:z2:wj2}, we can prove the following.
  \begin{cclaim}
  \label{c:z2:wj3}
    If $|B^{3}_{J}|, |C^{3}_{J}| \geq 1$ and $|B^{3}_{J}| + |C^{3}_{J}| \geq 3$, then without loss of generality $b_{3},c_{3} \not\in X$, $W_{b_{3}} = B^{3}_{J}$, and $W_{c_{3}} = C^{3}_{J}$.
    Otherwise, without loss of generality $b_3,c_3 \in X$.
  \end{cclaim}
  \medskip

  We are now ready to describe the recognition algorithm.
  Observe that the definition of $\mc{Z}_{2}$ is symmetric for the choice of $a_{0}, b_{0}$.
  That is, if $Z = \{z_{1},z_{2}\}$, then we can choose $a_{0} = z_{1}$ and $b_{0} = z_{2}$, or vice versa.
  Initialize a member of $\mc{Z}_{2}$ by setting $n=3$ and $X = \{a_i,b_i,c_i \mid i=1,2,3\}$, and initialize a thickening $\mc{W}$ such that $W_{a_{0}} = \{z_{1}\}$ and $W_{b_{0}} = \{z_{2}\}$. 

  Now let $A_{J} = N(z_{1})$, $B_{J} = N(z_{2})$, and $C_{J} = V(J)\setminus(A_{J} \cup B_{J} \cup Z)$.
  We verify that $A_{J}$, $B_{J}$, and $C_{J}$ are cliques; if not, then $(J,Z)$ is not a thickening of a member of $\mc{Z}_{2}$. Find the sets $A^{1}_{J}, A^{2}_{J}, B^{1}_{J}, B^{3}_{J}, C^{2}_{J}, C^{3}_{J}$.
  This all takes linear time.

  We now deal with the sets $A^{1}_{J}, A^{2}_{J}, B^{1}_{J}, B^{3}_{J}, C^{2}_{J}, C^{3}_{J}$ as prescribed by Claim~\ref{c:z2:wj1}, \ref{c:z2:wj2}, and~\ref{c:z2:wj3}.
  Observe that these sets are pairwise disjoint by definition.
  If $|A^{1}_{J}|, |B^{1}_{J}| \geq 1$ and $|A^{1}_{J}| + |B^{1}_{J}| \geq 3$, then by Claim~\ref{c:z2:wj1}, we can set $a_{1},b_{1} \not\in X$ (note that $c_{1} \in X$ still), $W_{a_{1}} = A^{1}_{J}$, $W_{b_{1}} = B^{1}_{J}$, we make $a_{1},b_{1}$ semiadjacent, and we discard $A^{1}_{J}$ and $B^{1}_{J}$.
  If $|A^{2}_{J}|, |C^{2}_{J}| \geq 1$ and $|A^{2}_{J}| + |C^{2}_{J}| \geq 3$, then by Claim~\ref{c:z2:wj2}, we can set $a_{2},c_{2} \not\in X$, (note that $b_{2} \in X$ still), $W_{a_{2}} = A^{2}_{J}$, $W_{c_{2}} = C^{2}_{J}$, we make $a_{2},c_{2}$ semiadjacent, and we discard $A^{2}_{J}$ and $C^{2}_{J}$.
  If $|B^{3}_{J}|, |C^{3}_{J}| \geq 1$ and $|B^{3}_{J}| + |C^{3}_{J}| \geq 3$, then by Claim~\ref{c:z2:wj3}, we can set $b_{3},c_{3} \not\in X$, (note that $a_{3} \in X$ still), $W_{b_{3}} = B^{3}_{J}$, $W_{c_{3}} = C^{3}_{J}$, we make $b_{3},c_{3}$ semiadjacent, and we discard~$B^{3}_{J}$ and $C^{3}_{J}$.

  Following Claim~\ref{c:z2:wj1}, \ref{c:z2:wj2}, and~\ref{c:z2:wj3}, all vertices not discarded so far should be in $W_{a_{i}}$, $W_{b_{i}}$, or~$W_{c_{i}}$ for some $i \geq 4$.
  But these are straightforward to recognize using the fact that $J$ does not admit twins. For each undiscarded vertex $a \in A_J$, increase $n$ by $1$, set $W_{a_{n}} = \{a\}$, and
  \begin{itemize}
    \item if $a$ is adjacent to a discarded vertex, or $a$ is adjacent to more than one vertex in $B_{J}$, or $a$ is antiadjacent to more than one vertex in $C_J$, or $a$ is adjacent to a single vertex $b$ in $B_{J}$ and $a$ and $b$ are not antiadjacent only to the same vertex $c$ in $C_{J}$, then answer that $(J,Z)$ is not a thickening of a member of $\mc{Z}_{2}$;
    \item if $a$ is anticomplete to $B_{J}$, then add $b_{n}$ to $X$;
    \item if $a$ is complete to $C_{J}$, then add $c_{n}$ to $X$;
    \item if $a$ is adjacent to a single vertex $b \in B_{J}$, then set $W_{b_{n}} = \{b\}$, make $a_{n}$ strongly adjacent to $b_{n}$, and discard $b$;
    \item if $a$ is antiadjacent to a single vertex $c \in C_{J}$, then set $W_{c_{n}} = \{c\}$, make $a_{n}$ strongly antiadjacent to $c_{n}$, and discard $c$.
  \end{itemize}
  Finally, we discard $a$.
  The correctness of the first step follows from Claim~\ref{c:z2:wj1}, \ref{c:z2:wj2}, and~\ref{c:z2:wj3}, which imply that $a \in W_{a_{i}}$ for $i \geq 4$, as well as from the fact that $J$ does not admit twins and the definition of $\mc{Z}_{2}$.
  We can perform similar steps for each undiscarded vertex in $B_{J}$ and $C_{J}$.
  At the end, we verify that $|C \setminus X| \geq 2$; otherwise, answer that $(J,Z)$ is not a thickening of a member of $\mc{Z}_{2}$.

  By construction, the resulting stripe $(J',Z')$ is indeed a member of $\mc{Z}_{2}$, and the resulting sets $W$ indeed form a thickening of $(J',Z')$ to $(J,Z)$.
\end{proof}

\subsection{Recognizing Thickenings of \texorpdfstring{$\mc{Z}_{3}$}{Z3}}
\begin{lemma}
\label{lem:recog:z3}
  Let $(J,Z)$ be a stripe such that $J$ is a connected graph that does not admit twins.
  Then we can decide in linear time whether $(J,Z)$ is a thickening of a member of $\mc{Z}_{3}$.
  If so, then we can find such a member, its underlying graph $H$, and its thickening to $(J,Z)$ as well in the same time.
\end{lemma}
\begin{proof}
  Let $(J',Z')$ be a member of $\mc{Z}_{3}$.
  Let $H$, $h_1,\ldots,h_5$ be as in the definition of $\mc{Z}_{3}$, and let $P = \{h_1,\ldots,h_5\}$.
  Recall that $J'$ is a line trigraph of $H$, where the vertex corresponding to the edge $h_2h_3$ and the vertex corresponding to the edge $h_3h_4$ are made strongly antiadjacent or semiadjacent.

  Let $Z = \{z_1,z_2\}$.
  Suppose that $(J,Z)$ is a thickening $\mc{W}$ of $(J',Z')$.
  \begin{cclaim}
  \label{c:z3:complete}
    $N(z_1)$ is not complete to $N(z_2)$.
  \end{cclaim}
  \begin{cproof}
    Consider $W_{h_2h_3}$ and $W_{h_3h_4}$.
    By the definition of $\mc{Z}_{3}$, $W_{h_2h_3} \subseteq N(z_1)$ and $W_{h_3h_4} \subseteq N(z_2)$ or vice versa.
    Recall that the vertex corresponding to the edge $h_2h_3$ and the vertex corresponding to the edge $h_3h_4$ are made strongly antiadjacent or semiadjacent in $J'$.
    Hence, by the definition of a thickening, $W_{h_2h_3}$ is not complete to $W_{h_3h_4}$.
    Therefore, $N(z_1)$ is not complete to $N(z_2)$.
  \end{cproof}
  \begin{figure}
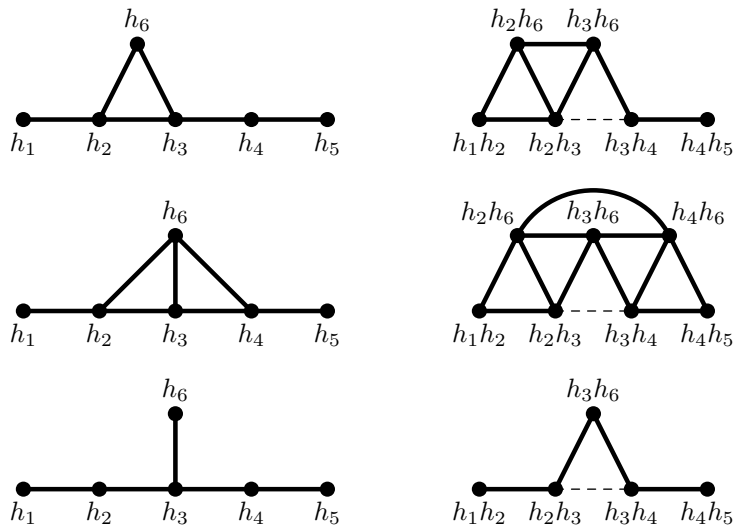

    \begin{center}
      \ig{z3-one-ex1.mps}
    \end{center}
    \begin{center}
      \ig{z3-one-ex3.mps}
    \end{center}
    \begin{center}
      \ig{z3-one-ex2.mps}
    \end{center}
    \caption{Three examples of graphs $H$ with exactly one edge incident to $h_{3}$ and the stripes they induce when used in the definition of $\mc{Z}_{3}$. Observe that the top two graphs in the right column can each be seen as a thickening of the bottom one if $h_2h_3$ is semiadjacent to $h_3h_4$.}
  \label{fig:z3:H}
  \end{figure}

  \begin{cclaim}
  \label{c:z3:one}
    If $h_3$ has degree~$3$, then without loss of generality $H$ is isomorphic to the graph on the bottom left of Fig.~\ref{fig:z3:H} with the possible addition of vertices adjacent to $h_2$ or $h_4$.
  \end{cclaim}
  \begin{cproof}
    Recall that by the definition of $H$, every edge of $H$ must be incident on $h_2$, $h_3$, or $h_4$.
    Consider the graph $H'$ obtained from $H$ by removing any vertices of $V(H) \setminus P$ not adjacent to $h_3$.
    Then the three graphs on the left of Fig.~\ref{fig:z3:H} are the only nonisomorphic possibilities for $H'$.
    However, the top two graphs on the right of Fig.~\ref{fig:z3:H} are thickenings of the bottom graph on the right of Fig.~\ref{fig:z3:H} if $h_2h_3$ is semiadjacent to $h_3h_4$. 
  \end{cproof}

  We are now ready to describe the recognition algorithm.
  Suppose that $V(J) \setminus N[Z] = \emptyset$.
  Then $J$ is a disjoint union of two cliques, $N[z_1]$ and $N[z_2]$.
  We verify that $J$ is indeed a disjoint union of two cliques and that $N(z_1)$ is not complete to $N(z_2)$; otherwise, answer that $(J,Z)$ is not a thickening of a member of $\mc{Z}_{3}$.
  The correctness of this step follows from Claim~\ref{c:z3:complete}.
  Since $J$ is connected, $N(z_1)$ is not anticomplete to $N(z_2)$.
  Hence, we can set $(J',Z')$ as a four-vertex path where the middle edge is a semi-edge.
  Clearly, $(J',Z') \in \mc{Z}_{3}$ and $(J,Z)$ is a thickening of $(J',Z')$.

  Suppose that $V(J) \setminus N[Z] \not= \emptyset$.
  We verify that $N[z_1]$, $N[z_2]$, and $V(J) \setminus N[Z]$ form cliques; otherwise, answer that $(J,Z)$ is not a thickening of a member of $\mc{Z}_{3}$.
  We now argue about a hypothetical member of $\mc{Z}_{3}$ with associated graph $H$, and a thickening $\mc{W}$ of it to $(J,Z)$. 
  Observe that without loss of generality:
  \begin{itemize}
    \item [1.] $e$ is incident on $h_2$ in $H$ if and only if $W_e \subseteq N[z_1]$;
    \item [2.] $e$ is incident on $h_3$ in $H$ if and only if $W_e \cap N[Z] = \emptyset$;
    \item [3.] $e$ is incident on $h_4$ in $H$ if and only if $W_e \subseteq N[z_2]$.
  \end{itemize}
  We note that for each $e \in E(H)$, either $W_e \subseteq N[z_1]$, $W_e \subseteq N[z_2]$, or $W_e \cap N[Z] = \emptyset$ by the definition of a thickening.

  We now identify $W_{h_2h_3}$ and $W_{h_3h_4}$.
  Suppose that $|V(J) \setminus N[Z]| = 1$ and let $v$ be this single vertex.
  By Claim~\ref{c:z3:one} and the above observations, we can focus on $H$ resembling the graph on the bottom left of Fig.~\ref{fig:z3:H}.
  Then, without loss of generality $W_{h_2h_3} = N(z_1) \cap N(v)$ and $W_{h_3h_4} = N(z_2) \cap N(v)$.
  We verify that $N(z_1) \cap N(v)$ and $N(z_2) \cap N(v)$ are both nonempty and not complete to each other; otherwise, answer that $(J,Z)$ is not a thickening of a member of $\mc{Z}_{3}$.
  The correctness follows from the previous discussion.

  Suppose that $|V(J) \setminus N[z]| \geq 2$.
  Let $u,v \in V(J) \setminus N[Z]$.
  Then $u$ and $v$ are from $W_e$ and $W_f$ respectively, where $e \not=f$ because $J$ does not admit twins, and $e$ and $f$ are incident on $h_3$ by Observation~2 above.
  Then, from the definition of $\mc{Z}_{3}$, it follows without loss of generality that $W_{h_2h_3} = N(z_1) \cap N(u) \cap N(v)$ and $W_{h_3h_4} = N(z_2) \cap N(u) \cap N(v)$.
  We verify that $N(z_1) \cap N(u) \cap N(v)$ and $N(z_2) \cap N(u) \cap N(v)$ are both nonempty cliques and that they are not complete to each other; otherwise, answer that $(J,Z)$ is not a thickening of a member of $\mc{Z}_{3}$.
  The correctness follows from the previous discussion.

  The previous paragraphs have identified $W_{h_2h_3}$ and $W_{h_3h_4}$ and that they are not complete to each other.
  If they are anticomplete to each other, then the edge between $h_2h_3$ and $h_3h_4$ is a nonedge; otherwise, this edge is a semi-edge.

  We are now ready to identify the vertices of $V(H) \setminus P$ and their incident edges.
  Let $L_2 = N(z_1) \setminus W_{h_2h_3}$, let $L_3 = V(J) \setminus N[Z]$, and let $L_4 = N(z_2) \setminus W_{h_3h_4}$.
  Note that $L_2$, $L_3$, and $L_4$ correspond to the yet unidentified vertices of $J$ and must belong to $W_e$ for an edge $e$ incident on $h_2$, $h_3$, and $h_4$ respectively by Observation~1, 2, and~3 above.

  \begin{cclaim}
  \label{c:z3:adj}
    Let $v_i \in L_i$ for some $i \in \{2,3,4\}$ be adjacent to a vertex $v_j \in L_j$ for each $j \in \{2,3,4\} \setminus \{i\}$.
    Then these three vertices are pairwise adjacent.
    Moreover, vertex $v_k$ has degree~$2$ outside $L_k \cup Z \cup W_{h_2h_3} \cup W_{h_3h_4}$ for each $k \in \{2,3,4\}$.
  \end{cclaim}
  \begin{cproof}
    Suppose that $i =2$.
    By the above observations, $v_2$, $v_3$, and $v_4$ belong to $W_{e_2}$, $W_{e_3}$, and $W_{e_4}$ respectively, where $e_2$, $e_3$, and $e_4$ are edges incident on $h_2$, $h_3$, and $h_4$ respectively.
    Let $x$ denote the end of $e_2$ that is not $h_2$.
    Since $v_2$ is adjacent to $v_3$, $e_2$ and $e_3$ are both incident on $x$. Since $v_2$ is adjacent to $v_4$, $e_2$ and $e_4$ are incident on $x$. But then $e_2$, $e_3$, $e_4$ are all incident on $x$, and $\{e_2,e_3,e_4\}$ induces a strong clique in $J'$.
    Hence, $v_3$ and $v_4$ are adjacent.
    Since $J$ does not admit twins, $W_{e_k} = \{v_k\}$ for each $k \in \{2,3,4\}$, and thus $v_k$ has degree~$2$ outside $L_k \cup Z \cup W_{h_2h_3} \cup W_{h_3h_4}$.
    The cases that $i$ is $3$ or $4$ follow similarly.
  \end{cproof}
  We first identify the set $D$ of vertices of $V(H) \setminus P$ of degree~$3$.
  Observe that the three vertices corresponding to the edges incident to a vertex of $D$ yield a strong clique in any line trigraph of $H$.
  In particular, the three vertices have no incident semi-edges, because $h_2$, $h_3$, and $h_4$ have degree at least~$3$.
  It follows that every vertex of $D$ creates a triangle of vertices in $J$ with one vertex from each of $L_2$, $L_3$, and $L_4$; moreover, the three vertices of the triangle are complete to $L_2$, $L_3$, and $L_4$ respectively.
  Finally, the vertex of this triangle from $L_2$ is anticomplete to $L_3$ except for the vertex from $L_3$ of the triangle, etc. By Claim~\ref{c:z3:adj}, the converse is also true, that is, if there is a triangle in $J$ with one vertex from each of $L_2$, $L_3$, and $L_4$, then this must be due a vertex of $D$.
  Hence, $|D|$ is equal to the number of triangles in $J$ with one vertex from each of $L_2$, $L_3$, and $L_4$.
  Using this characterization, we can construct $D$ and the sets $W$ corresponding to its incident edges in linear time.
  We first find the vertices in $L_i$ with degree~$1$ to $L_j$ for $i \not= j$.
  These vertices induce an auxiliary graph where every vertex has degree~$2$ outside their own $L_i$, and thus it becomes straightforward to detect the requested triangles in linear time.

  We remove vertices identified in the previous step from $L_2$, $L_3$, and $L_4$.
  By Claim~\ref{c:z3:adj}, every vertex that has not been identified so far is anticomplete to $L_i$ for at least one value of $i \in \{2,3,4\}$.
  Otherwise, answer that $(J,Z)$ is not a thickening of a member of $\mc{Z}_{3}$.
  We now consider $L_2$.
  Suppose that $L_2 \not= \emptyset$.
  Let $L_{2i}$ denote the set of vertices in $L_2$ adjacent to a vertex of $L_i$ for $i = \{3,4\}$.
  By Claim~\ref{c:z3:adj} and the previous step, $L_{23} \cap L_{24} = \emptyset$.
  Otherwise, answer that $(J,Z)$ is not a thickening of a member of $\mc{Z}_{3}$.
  For each $i \in \{3,4\}$, let $N_{2i}$ denote the set of neighbors of $L_{2i}$ in $L_i$.
  Verify that~$N_{23}$ is anticomplete to $N_{24}$; otherwise, answer that $(J,Z)$ is not a thickening of a member of $\mc{Z}_{3}$. The correctness of this step follows from Claim~\ref{c:z3:adj}.
  Add a single edge $e$ to $H$ incident on $h_2$ and a new vertex of $V(H) \setminus P$, set $W_e = L_2 \setminus (L_{23} \cup L_{24})$.
  Suppose that $L_{23} \not= \emptyset$.
  Then $N_{23} \not= \emptyset$.
  Add a new vertex to $V(H) \setminus P$ and make it adjacent to $h_2$ by edge $e_{23}$ and to $h_3$ by edge $f_{23}$.
  Set $W_{e_{23}} = L_{23}$ and $W_{f_{23}} = N_{23}$.
  Since $L_{23}$ is not anticomplete to $N_{23}$ by definition, $W_{e_{23}}$ is either complete to $W_{f_{23}}$ or neither complete nor anticomplete to $W_{f_{23}}$.
  In the first case, we make $e_{23}$ strongly adjacent to $f_{23}$ in~$J'$; in the second case, we make them semiadjacent.
  If $L_{24} \not= \emptyset$, then we apply a similar operation.
  Now remove $N_{23}$ from $L_3$ and $N_{24}$ from $L_{4}$.
  It remains to apply similar operations as above with respect to $L_3$ and $L_4$.

  This completes the description of the recognition algorithm.
  It is easy to see that it runs in linear time.
\end{proof}

\subsection{Recognizing Thickenings of \texorpdfstring{$\mc{Z}_{4}$}{Z4}}

\begin{figure}
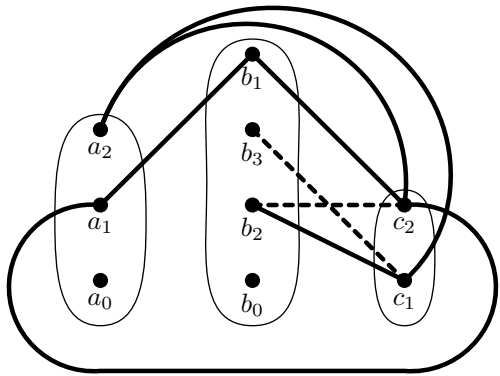

  \begin{center}
    \ig{z4.mps}
  \end{center}
  \caption{The stripe $\mc{Z}_{4}$. The thick dashed lines represent semi-edges. The thick lines represent edges, where the edges of the strong cliques $\{a_0,a_1,a_2\}$, $\{b_0,b_1,b_2,b_3\}$, and $\{c_1,c_2\}$ are not drawn. All other pairs of vertices are strongly antiadjacent.}
\label{fig:z4}
\end{figure}

\begin{lemma}
\label{lem:recog:z4}
  Let $(J,Z)$ be a stripe such that $J$ is a connected graph that does not admit twins.
  Then we can decide in linear time whether $(J,Z)$ is a thickening of a member of $\mc{Z}_{4}$.
  If so, then we can find such a member and its thickening to $(J,Z)$ as well in the same time.
\end{lemma}
\begin{proof}
  We argue about a hypothetical member $(J',Z')$ of $\mc{Z}_{4}$ and its hypothetical thickening~$\mc{W}$ to $(J,Z)$.
  By the definition of $\mc{Z}_{4}$, $z_1$ corresponds to $a_0$ and $z_2$ corresponds to $b_0$ or vice versa.
  Consider the stripe $(J',Z')$ of $\mc{Z}_{4}$ in Fig.~\ref{fig:z4}.
  Then $|N(a_{0})| = 2$, while $|N(b_{0})| \geq 3$. In a thickening of $(J',Z')$ that does not admit twins, these (in)equalities still hold.
  Hence, we can determine the correct correspondence of $z_1$ and $z_2$ in linear time by counting.
  Without loss of generality, $z_1$ corresponds to~$a_0$ and $z_2$ corresponds to $b_0$.
  We verify that $z_1$ indeed has precisely two neighbors and that these are adjacent; otherwise, answer that $(J,Z)$ is not a thickening of a member of $\mc{Z}_{4}$.
  Let $u,v$ denote the neighbors of $z_{1}$.
  Then $W_{a_{1}} = \{u\}$ and $W_{a_{2}} = \{v\}$, or vice versa.
  We determine this more precisely later.
  By definition, we observe that $\{c_{2}\} = (N(a_1) \cap N(a_2))\setminus \{a_{0}\}$ in $J'$.
  Hence, $W_{c_{2}} = (N(u) \cap N(v))\setminus \{z_1\}$ in $J$.
  Similarly, note that $\{b_{1}\} = N(b_{0}) \cap (N(a_1) \cup N(a_2))$ and $\{a_{1}\} = N(b_{1})\setminus(N[b_{0}] \cup \{c_{2}\})$.
  Hence, $W_{b_{1}} = N(z_2) \cap (N(u) \cup N(v))$ and $W_{a_{1}} = N(W_{b_{1}})\setminus(N[z_2] \cup W_{c_{2}})$.
  This enables us to identify whether $W_{a_{1}} = \{u\}$ and $W_{a_{2}} = \{v\}$, or vice versa.
  Without loss of generality, $W_{a_{1}} = \{u\}$ and $W_{a_{2}} = \{v\}$. 

  Now observe that $\{c_{1}\} = N(a_{2}) \setminus \{a_{0},a_{1},c_2\}$ in $J'$.
  Hence, $W_{c_{1}} = N(v) \setminus (\{z_{1},u\} \cup W_{c_{2}})$.
  We have now identified all sets of $\mc{W}$ except $W_{b_2}$ and $W_{b_3}$.
  Let $W_{b_{2}}$ be the subset of the remaining vertices that are complete to $W_{c_{1}}$ and put the other vertices in $W_{b_{3}}$.
  If $W_{b_{3}}$ is anticomplete to $W_{c_{1}}$, then there must be a vertex $v$ we put into $W_{b_{2}}$ that is anticomplete to $W_{c_{2}}$.
  Otherwise, answer that $(J,Z)$ is not a thickening of a member of $\mc{Z}_{4}$.
  Add $v$ to the set $W_{b_{3}}$; then $W_{b_{3}}$ is neither complete nor anticomplete to $W_{c_1}$.

  If the sets we identified are not pairwise disjoint, or not all nonempty, or not all cliques, or do not exhaust $V(J)$, then answer $(J,Z)$ is not a thickening of a member of $\mc{Z}_{4}$.
  Otherwise, by Proposition~\ref{prp:recog:verify}, it can be checked in linear time that $(J,Z)$ is the thickening that we identified of a member of $\mc{Z}_{4}$.
\end{proof}

\subsection{Recognizing Thickenings of \texorpdfstring{$\mc{Z}_{5}$}{Z5}}
\begin{figure}
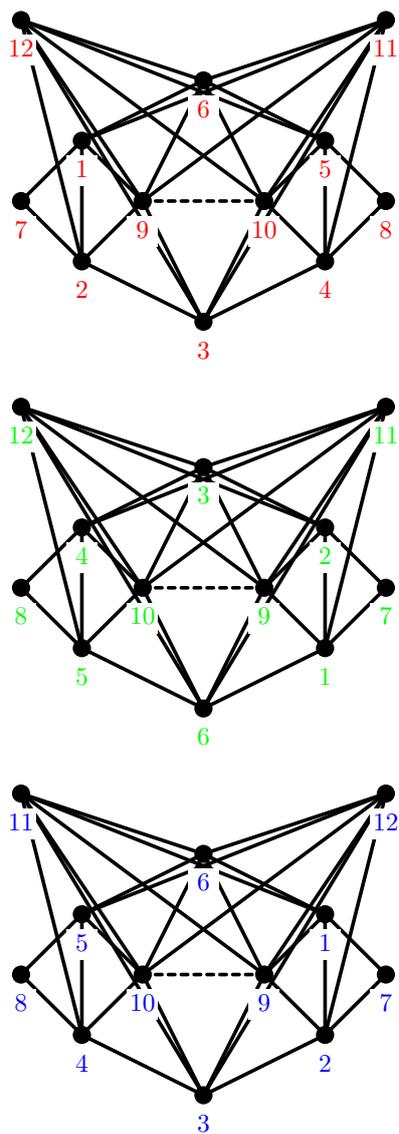

  \begin{center}
    \ig[scale=0.8]{z5-lab1.mps}
  \end{center}
  \begin{center}
    \ig[scale=0.8]{z5-lab2.mps}
  \end{center}
  \begin{center}
    \ig[scale=0.8]{z5-lab3.mps}
  \end{center}
  \caption{The stripe in $\mc{Z}_{5}$ with $|Z|=2$ with three possible labelings. The thick dashed edge is a semi-edge or a nonedge.}
\label{fig:z5}
\end{figure}

\begin{lemma}
\label{lem:recog:z5}
  Let $(J,Z)$ be a stripe such that $J$ is a connected claw-free graph that does not admit twins and $|Z|=2$.
  Then we can decide in linear time whether $(J,Z)$ is a thickening of a member of $\mc{Z}_{5}$.
  If so, then we can find such a member and its thickening to $(J,Z)$ as well in the same time.
\end{lemma}
\begin{proof}
  Suppose that $(J,Z)$ is indeed a thickening of $(J',Z') \in \mc{Z}_{5}$ and let $v_{1},\ldots,v_{13},X$ be as in the definition of $\mc{Z}_{5}$.
  Note that $v_{13}$ is (strongly) adjacent to both $v_{7}$ and $v_{8}$.
  Since $|Z|=2$ and thus $Z' = \{v_{7},v_{8}\}$, and since no vertex of a stripe can be adjacent to more than one vertex of~$Z'$, $v_{13} \in X$.
  There is also quite a bit of symmetry in the labeling of $V(J')$.
  In fact, we can freely swap the labels of $v_{7}$ and $v_{8}$, or of $v_{11}$ and $v_{12}$, by appropriately relabeling the other vertices of $J'$ (\ie relabeling those vertices with label $1,\ldots,6,9,10$).
  See Fig.~\ref{fig:z5}. We also recall Proposition~\ref{prp:xx:claw-free}, which states that the adjacencies along the cycle $v_{1},\ldots,v_{6}$ are strong and the chords of this cycle are nonedges.

  Let $Z = \{z_{1},z_{2}\}$.
  We now argue about a hypothetical member of $\mc{Z}_{5}$ and a thickening $\mc{W}$ of it to $(J,Z)$.
  By the aforementioned symmetry, without loss of generality $W_{v_{7}} = \{z_{1}\}$ and that $W_{v_{8}} = \{z_{2}\}$.
  Observe that $N(v_{7}) = \{v_{1},v_{2}\}$ and $N(v_{8}) = \{v_{4},v_{5}\}$ in $J'$, and that $v_{9}$ is the only vertex adjacent to both $v_{1}$ and $v_{2}$ in $J'$.
  Hence, $W_{v_{9}}$ can be found even though we do not yet know exactly which vertices of $N(v_{7})$ correspond to $W_{v_{1}}$ and which to $W_{v_{2}}$.
  Indeed, since $J$ does not admit twins, $z_1$ has two neighbors, and the common neighbors of these vertices form $W_{v_9}$. Similarly, observe that $\{v_{10}\} = (N(v_{4}) \cap N(v_{5}))\setminus\{v_{8}\}$ in $J'$.
  Hence, $W_{v_{10}}$ can be found in the same way as $W_{v_9}$.
  Now either $W_{v_{9}}$ and $W_{v_{10}}$ are anticomplete to each other and thus $|W_{v_{9}}| = |W_{v_{10}}| = 1$ (because $J$ does not admit twins), or $|W_{v_{9}}|, |W_{v_{10}}| \geq 1$ and $W_{v_{9}}$ and $W_{v_{10}}$ are neither complete nor anticomplete to each other.
  In the first case, the edge between $v_9$ and $v_{10}$ is a nonedge; in the second case, the edge is a semi-edge.

  Now observe that $v_{9}$ and $v_{10}$ share exactly two neighbors in $J'$, namely $v_{3}$ and $v_{6}$, \ie $\{v_{3},v_{6}\} = (N(v_{9})\setminus \{v_{10}\}) \cap (N(v_{10}) \setminus \{v_{9}\})$.
  Hence, $W_{v_3} \cup W_{v_6} = (N(W_{v_{9}})\setminus W_{v_{10}}) \cap (N(W_{v_{10}}) \setminus W_{v_{9}})$.
  Consider the set of remaining vertices.
  These must be split among $W_{v_{11}}$ and $W_{v_{12}}$.
  Recall that $J$ does not admit twins, and since $v_{11}$ and $v_{12}$ have no semi-neighbors in $J'$, $|W_{v_{11}}|, |W_{v_{12}}| \leq 1$.
  Hence, there are three cases.
  If there are no remaining vertices, then $v_{11},v_{12} \in X$.
  If there are two remaining vertices, then they must be antiadjacent in $J$.
  By the aforementioned symmetry, we can make these vertices $W_{v_{11}}$ and $W_{v_{12}}$ arbitrarily.
  Finally, if there is one remaining vertex, then by the aforementioned symmetry, we can make this vertex $W_{v_{11}}$ and add $v_{12}$ to $X$.
  Using the symmetry and the structure of $J'$ (see Fig.~\ref{fig:z5}), it is now straightforward to determine $W_{v_{1}},\ldots,W_{v_{6}}$.

  If the sets we identified are not pairwise disjoint, or not all nonempty, or not all cliques, or do not exhaust $V(J)$, then answer $(J,Z)$ is not a thickening of a member of $\mc{Z}_{5}$.
  Otherwise, by Proposition~\ref{prp:recog:verify}, it can be checked in linear time that $(J,Z)$ is the thickening that we identified of a member of $\mc{Z}_{5}$.
\end{proof}

\section{Towards a Modified Decomposition Theorem for Claw-Free Graphs}
\label{sec:structure}
In this section, we present many supporting lemmas that enable us to slightly modify the decomposition theorem for claw-free graphs proposed by Chudnovsky and Seymour~\cite{ChudnovskyS2008-5}.
Our main contribution compared to their decomposition theorem is that we completely avoid or sidestep certain structures known as bicliques, as well as hex-joins.
In particular, we will not need to find these structures algorithmically later.
We also prove several novel structural results that make it easier for our algorithms to find the decomposition.

The lemmas and theorems in this section draw heavily on the work of Chudnovsky and Seymour~\cite{ChudnovskyS2008-5}, and we explicitly point out these connections.
Hence, lemmas and theorems where we do not mention such a connection are novel and independent of their work.

Throughout the remainder of this section, we only consider trigraphs unless explicitly specified otherwise. 

The following definition of almost-unbreakable stripes $(J,Z)$ is crucial to our structural analysis of claw-free graphs in this section.
It is similar to the definition of unbreakable stripes by Chudnovsky and Seymour~\cite{ChudnovskyS2008-5}, except that we still allow $J$ to contain bicliques (see~\cite{ChudnovskyS2008-5} for their definition) and that $Z$ does not need to be the set of all simplicial vertices in $J$.

\begin{definition}
\label{def:almost-unbreakable}
  We call a stripe $(J,Z)$ \emph{almost-unbreakable} if
  \begin{itemize}
    \item $J$ neither admits a $0$-join, nor a pseudo-$1$-join, nor a pseudo-$2$-join,
    \item there are no twins $u,v \in V(J)\setminus Z$,
    \item there is no $W$-join $(A,B)$ in $J$ such that $Z \cap A, Z \cap B = \emptyset$.
  \end{itemize}
\end{definition}

\subsection{Twins and Proper W-Joins}
We prove some lemmas related to twins and proper W-joins.
The following lemma is inspired by and strengthens~\cite[Theorem~10.1]{ChudnovskyS2008-5}; in particular, we avoid the assumption that $Z$ contains all simplicial vertices of $J$.
\begin{lemma}
\label{lem:twins}
  Let $(J,Z)$ be an almost-unbreakable stripe.
  Then either $J$ does not admit twins, or $J$ is a strong clique with $|V(J)| = 2$ and $|Z| = 1$.
\end{lemma}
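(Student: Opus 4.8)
The plan is to argue by contradiction. Suppose $(J,Z)$ is a thickening of an almost-unbreakable stripe $(J',Z')$, that $J\backslash Z$ has no twins, but that $J$ admits twins via a homogeneous strong clique $\{a,b\}$ of size two, and that $J$ is not a $2$-vertex strong clique with $|Z|=1$. I would first record what thickening buys us: for a trigraph $G'$ of which $J$ is a thickening, the only way $J$ can have twins that do not descend from twins in $G'$ is through some $X_v$ with $|X_v|\ge 2$; but recall that in a thickening each $X_v$ is a strong clique and (crucially) for $z\in Z'$ we have $|X_z|=1$. So I would split on where $a,b$ lie relative to $Z$. If $\{a,b\}\cap Z=\emptyset$, then $a,b\in V(J)\backslash Z$ are twins in $J$, hence also twins in $J\backslash Z$ (since removing $Z$ cannot separate a homogeneous pair outside $Z$), contradicting the assumption on $J\backslash Z$.

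The remaining cases are $|\{a,b\}\cap Z|\in\{1,2\}$. If $\{a,b\}\subseteq Z$, then $a,b$ are strongly adjacent (they form a strong clique), contradicting that $Z$ is strongly stable; so this case is vacuous. Thus exactly one of $a,b$, say $b$, is in $Z$, and $a\notin Z$. Now I would use the defining properties of a stripe: $b\in Z$ is strongly simplicial, $N[b]$ is a strong clique, and no vertex of $V(J)\backslash Z$ is adjacent to more than one vertex of $Z$. Since $\{a,b\}$ is homogeneous and $a$ is strongly adjacent to $b$, every vertex of $V(J)\backslash\{a,b\}$ is strongly complete or strongly anticomplete to $\{a,b\}$; in particular $N[a]=N[b]$ (as $\{a,b\}$ is a strong clique, both contain $a,b$, and the homogeneity forces the outside to agree). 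But $N[b]$ is a strong clique, so $N[a]$ is a strong clique, i.e. $a$ is strongly simplicial too. Moreover $a$ is adjacent to $b\in Z$; if $a$ were adjacent to any other $z'\in Z$ then $a$ would be adjacent to two vertices of $Z$, impossible — so $N(a)\cap Z=\{b\}$, and likewise $N(b)\cap Z\subseteq\{b$'s neighbors$\}$; since $b\in Z$ and $Z$ is strongly stable, $N[b]\cap Z=\{b\}$, hence $N[a]\cap Z=\{b\}$ as well.

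Now $V(J)=N[b]$: indeed, if there were a vertex $w\notin N[b]=N[a]$, then $w$ is strongly anticomplete to $\{a,b\}$; but I would need to rule this out using connectivity. Here is where the almost-unbreakable hypothesis enters: $J$ (or rather the underlying $J'$) admits no $0$-join, so $J$ is connected, and no pseudo-$1$-join. If $V(J)\supsetneq N[b]$, consider the partition of $V(J)$ induced by $N[b]$ and its complement; since $N[b]$ is a strong clique containing the two nonadjacent-to-outside vertices $a,b$, one can exhibit either a $0$-join (if nothing outside $N[b]$ attaches) or, using that $N(a)=N(b)$ and $N[b]$ is a clique, a pseudo-$1$-join with connecting clique inside $N[b]$ — contradicting almost-unbreakability. (The relevant check: $V(J)\backslash N[b]$ is not a strong stable set unless it is very small, and if it is small enough to be stable we handle it directly since $J\backslash Z$ has no twins.) Hence $V(J)=N[b]$, so $V(J)$ is a strong clique. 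Then every pair of vertices in $V(J)\backslash Z$ would be twins unless $|V(J)\backslash Z|\le 1$; since $V(J)\backslash Z$ has no twins, $|V(J)\backslash Z|\le 1$, and combined with $|Z|\ge 1$ and $|\{a,b\}|=2$ with $b\in Z$, $a\notin Z$, we get $|V(J)|=2$ and $|Z|=1$, which is the exceptional conclusion.

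The main obstacle I anticipate is the connectivity step: cleanly extracting the forbidden $0$-join or pseudo-$1$-join from the hypothesis $V(J)\supsetneq N[b]$ while being careful that pseudo-$1$-joins require both sides to be non-(strongly-)stable, so one must either produce genuinely non-stable sides or dispose of the degenerate small cases by hand using the "no twins in $J\backslash Z$" hypothesis. The rest is bookkeeping with the definitions of stripe, thickening, and homogeneous strong clique.
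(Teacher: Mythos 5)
Your case analysis on where the twins $a,b$ sit relative to $Z$ is exactly the paper's, and your final bookkeeping (once $V(J)$ is known to be a small strong clique, the ``no twins in $J\backslash Z$'' hypothesis forces $|V(J)|=2$, $|Z|=1$) is correct. The gap is in the middle step, and it is the step you yourself flag as ``the main obstacle'': the pseudo-$1$-join you propose does not work as stated.

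You take the partition $(V_1,V_2)=(N[b],\,V(J)\backslash N[b])$ and claim this yields a pseudo-$1$-join ``with connecting clique inside $N[b]$.'' A pseudo-$1$-join requires $A_1\subseteq V_1$, $A_2\subseteq V_2$ with $A_1\cup A_2$ a \emph{strong clique} and $V_i\backslash A_i$ strongly anticomplete to $V_{3-i}$. With $V_1=N[b]$, the set $A_1$ must absorb every vertex of $N[b]$ that has a neighbor outside, and $A_2$ must absorb every vertex outside that has a neighbor in $N[b]$; there is no reason whatsoever that $A_1\cup A_2$ is a clique — the vertices of $N[b]\backslash\{a,b\}$ can have arbitrary, mutually nonadjacent external neighbors. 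So this partition simply is not a pseudo-$1$-join in general, and the argument stalls. The paper avoids this entirely by choosing the \emph{smaller} side $V_1=\{u,v\}$ (your $\{a,b\}$). Then $A_1=\{u,v\}$, $A_2=N[u]\backslash\{u,v\}$ gives $A_1\cup A_2=N[u]$, which \emph{is} a strong clique (by simpliciality of $u\in Z$), and $V_2\backslash A_2=V(J)\backslash N[u]$ is strongly anticomplete to $\{u,v\}$ by definition. Since $V_1$ is automatically not strongly stable (it is a strong clique of size $2$), the only way this can fail to be a pseudo-$1$-join is if $V_2=V(J)\backslash\{u,v\}$ is strongly stable; connectedness ($J$ has no $0$-join) plus $N[u]$ being a strong clique then forces $|V(J)\backslash\{u,v\}|\le 1$, and the rest follows as you describe. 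So the fix is not extra bookkeeping around your partition — it is replacing $V_1=N[b]$ by $V_1=\{a,b\}$.
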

\begin{proof}
  Suppose that $J$ admits twins $u,v$.
  Since by the definition of almost-unbreakable $J \setminus Z$ does not admit twins and since $Z$ is strongly stable, exactly one of $u,v$ is in $Z$, and thus $|Z| \geq 1$.
  But then $N[u]=N[v]$ is a strong clique, because each $z \in Z$ is strongly simplicial.
  As $(J,Z)$ is an almost-unbreakable stripe, $(\{u,v\}, V(J)\setminus\{u,v\})$ is not a pseudo-$1$-join of $J$, and thus $V(J) \setminus\{u,v\}$ must be strongly stable.
  Since $J$ does not admit a $0$-join, each vertex of $V(J) \setminus\{u,v\}$ is strongly adjacent to $u$ and $v$. Hence, $J$ is a strong clique, and thus $|Z| = 1$, as $Z$ is strongly stable.
  Moreover, as $J \setminus Z$ does not admit twins, and therefore, $|V(J)| = 2$.
\end{proof}
The following lemma is used implicitly by Chudnovsky and Seymour~\cite{ChudnovskyS2008-5}.
\begin{lemma}
\label{lem:properWjoin}
  Let $(J,Z)$ be a thickening of an almost-unbreakable stripe.
  Then $J$ admits a proper W-join if and only if $J$ admits a proper W-join $(A,B)$ such that $Z \cap A, Z \cap B = \emptyset$.
\end{lemma}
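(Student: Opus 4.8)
The backward implication is immediate: any proper W-join $(A,B)$ with $Z\cap A=Z\cap B=\emptyset$ already witnesses that $J$ admits a proper W-join, so nothing is needed there.

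For the forward implication the plan is to prove the stronger fact that \emph{every} proper W-join of $J$ is automatically disjoint from $Z$; the statement then follows, since the W-join we started with is itself the required one. The only property of $(J,Z)$ that I will use is that, being (a thickening of) a stripe, $Z$ is a set of strongly simplicial vertices of $J$: in a thickening of a stripe each set $X_{z}$ for $z$ in the simplicial set is a singleton, so strong simpliciality is inherited. First I would record the elementary consequence that a strongly simplicial vertex $z$ has only strong neighbours — if $w$ is adjacent to $z$ then $w\in N[z]$, and since $N[z]$ is a strong clique, $z$ and $w$ are strongly adjacent.

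Now suppose $(A,B)$ is a proper W-join of $J$, and for contradiction that $z\in Z\cap A$ (the case $z\in Z\cap B$ is symmetric). Since $(A,B)$ is proper, $z$ is not strongly anticomplete to $B$, so there is some $b\in B$ adjacent to $z$; by the remark above, $b$ is strongly adjacent to $z$, hence $b\in N[z]$. On the other hand $A$ is a strong clique containing $z$, so $A\subseteq N[z]$; as $N[z]$ is a strong clique and $b\in N[z]$, the vertex $b$ is strongly adjacent to every vertex of $A$, i.e.\ $b$ is strongly $A$-complete. But a proper W-join forbids any member of $B$ from being strongly complete (or strongly anticomplete) to $A$ — a contradiction. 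Hence $Z\cap A=\emptyset$, and symmetrically $Z\cap B=\emptyset$, completing the forward direction.

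There is no real obstacle here: the argument is entirely local to one strongly simplicial vertex and uses neither the ``almost-unbreakable'' hypothesis nor that $J$ is a graph. The only point that deserves an explicit sentence is the inheritance of strong simpliciality of the vertices of $Z$ under thickening, which is exactly what the definition of a thickening of a stripe guarantees.
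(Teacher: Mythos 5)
Your proof is correct and follows essentially the same route as the paper: pick $z\in Z\cap A$, use properness to get $b\in N[z]\cap B$, note $A\subseteq N[z]$, and apply strong simpliciality of $z$ to conclude that $b$ is strongly $A$-complete, contradicting properness. The paper's argument is exactly this, just written more tersely.
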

\begin{proof}
  Suppose that $J$ admits a proper W-join $(A,B)$.
  Suppose, \wloge that $Z \cap A \not= \emptyset$.
  As $Z$ is strongly stable, $Z \cap A = \{z\}$ for some $z \in Z$.
  Since $(A,B)$ is a proper W-join, $N[z] \cap B \not= \emptyset$.
  Say $b \in N[z] \cap B$.
  As $A$ is a strong clique and $z \in A$, $A \subseteq N[z]$.
  Since $z$ is strongly simplicial, this implies that $b$ is strongly complete to $A$, contradicting that $(A,B)$ is a proper W-join.
  The converse is trivial.
\end{proof}

\subsection{Circular and Linear Interval Trigraphs}
We show that if a thickening of a circular or linear interval graph has no proper W-join, then it is a proper circular-arc or proper interval graph, respectively.

\begin{lemma}
\label{lem:circular}
  Let graph $G$ be a thickening of a circular interval trigraph.
  Then $G$ is a proper circular-arc graph or $G$ contains a proper W-join.
\end{lemma}
\begin{proof}
  We follow an idea of Eisenbrand~\etal\cite[Lemma~2]{EisenbrandOSV2008}.
  Let $G'$ be a circular interval trigraph with $|V(G')|$ maximum such that there is a set $\mathcal{X} = \{X_{v} \mid v \in V(G')\}$ that is a thickening of $G'$ to~$G$.
  Let $\mc{I}$ be a set of subsets of $\mathbb{S}_{1}$ that defines $G'$.
  Suppose that $G$ does not admit a proper W-join, but that there exist vertices $u',v' \in V(G')$ such that $u',v'$ are semiadjacent.
  Then $(X_{u'},X_{v'})$ is a W-join in $G$.
  Since this is not a proper W-join in $G$, there exists (say) a vertex $w \in X_{u'}$ that is strongly complete or strongly anticomplete to $X_{v'}$ in $G$.
  Create a new vertex $w'$ in $G'$ such that $w'$ is strongly adjacent (if $w$ is compete to $X_{v'}$ in $G$) or strongly antiadjacent (otherwise) to $v'$ and that $w'$ is strongly complete to $N[u'] \setminus \{v'\}$.
  Observe that the resulting graph $G''$ is a circular interval trigraph (by duplicating and slightly moving the point corresponding to $u'$, and possibly adding another subset of $\mathbb{S}_{1}$ to $\mc{I}$).
  Moreover, $G$ is a thickening of $G''$ and $|V(G'')| > |V(G')|$.
  This contradicts the choice of $G'$. Hence, $G'$ has no semi-edges.
  Then it follows from Proposition~\ref{prp:proper-circ} that $G'$ is a proper circular-arc graph.
  As proper circular-arc graphs are closed under inserting twins, the lemma follows.
\end{proof}

\begin{lemma}
\label{lem:interval}
  Let graph $G$ be a thickening of a linear interval trigraph. Then $G$ is a proper interval graph or $G$ contains a proper W-join.
\end{lemma}
\begin{proof}
  We follow the same idea as in the above lemma.
  Let $G'$ be a linear interval trigraph with $|V(G')|$ maximum such that there is a set $\mathcal{X} = \{X_{v} \mid v \in V(G')\}$ that is a thickening of $G'$ to $G$. 
  Let $v_{1},\ldots,v_{\ell}$ be an ordering for the vertices $G'$ as prescribed by the definition of a linear interval trigraph.
  Suppose that $G$ does not admit a proper W-join, but that there exist vertices $v_{i},v_{j} \in V(G')$ with $i < j$ such that $v_{i},v_{j}$ are semiadjacent.
  Then $(X_{v_{i}},X_{v_{j}})$ is a W-join in $G$.
  Since this is not a proper W-join in $G$, there exists (say) a vertex $w \in X_{v_{i}}$ such that $w$ is strongly complete or strongly anticomplete to $X_{v_{j}}$ in $G$.
  Create a new vertex $w'$ in $G'$ such that $w'$ is strongly adjacent (if $w$ is compete to $X_{v_{j}}$ in $G$) or strongly antiadjacent (otherwise) to $v'$ and that $w'$ is strongly complete to $N[v_{i}] \setminus \{v_{j}\}$. 
  Since $v_{i}$ and $v_{j}$ are only semiadjacent, there do not exist $v_{k}$ and $v_{l}$ with $k \leq i$, $j \leq l$, $\{k,l\} \not= \{i,j\}$ such that $v_{k}$ and $v_{l}$ are adjacent.
  Hence, if we insert $w'$ between $v_{i}$ and $v_{i+1}$ when~$w$ is strongly complete to $X_{v_{j}}$, or $w'$ between $v_{i-1}$ and $v_{i}$ when $w$ is strongly anticomplete to $X_{v_{j}}$, then it follows that the resulting graph $G''$ is a linear interval trigraph.
  Moreover, $G$ is a thickening of $G''$ and $|V(G'')| > |V(G')|$.
  This contradicts the choice of $G'$.
  Hence, $G'$ has no semi-edges.
  Then it follows from Proposition~\ref{prp:proper-int} that $G'$ is a proper interval graph.
  As proper interval graphs are closed under inserting twins, the lemma follows.
\end{proof}
The following lemma shows that~\cite[Theorem~12.1]{ChudnovskyS2008-5} goes through under the weaker assumption that $(J,Z)$ is almost-unbreakable.

\begin{lemma}
\label{lem:indecomposable-circular}
  Let $(J,Z)$ be an almost-unbreakable stripe such that $Z \not= \emptyset$.
  If $J$ is a thickening of a circular interval trigraph, then $(J,Z) \in \mathcal{Z}_{1} \cup \mathcal{Z}_{6}$.
\end{lemma}
\begin{proof}
  Let $J$ be a thickening of a circular interval trigraph $J'$.
  By Lemma~\ref{lem:twins}, either $J$ does not admit twins or $J$ is a strong clique with $|V(J)| = 2$ and $|Z|=1$.
  In the latter case, $(J,Z) \in \mc{Z}_{1}$ trivially.
  Hence, we assume that $J$ does not admit twins.

  It remains to follow the proof of~\cite[Theorem~12.1]{ChudnovskyS2008-5}, which relies only on the fact that $Z \not=\emptyset$ and that $J$ does not admit twins, $0$-joins, pseudo-$1$-joins, nor pseudo-$2$-joins.
  This holds by the above paragraph and by the assumption that $(J,Z)$ is almost-unbreakable.
\end{proof}

\subsection{The Union of Two Strong Cliques}
We aim to understand the structure of almost-unbreakable stripes $(J,Z)$ for which $J$ is the union of two strong cliques.

The following lemma shows that~\cite[Theorem~10.2]{ChudnovskyS2008-5} goes through under the weaker assumption that $(J,Z)$ is almost-unbreakable; in particular, we avoid bicliques.

\begin{lemma}
\label{lem:strong-cliques}
  Let $(J,Z)$ be an almost-unbreakable stripe with $|V(J)| > 2$ such that $J$ is the union of two strong cliques.
  Then $|V(J)| \leq 4$, and $(J,Z) \in \mathcal{Z}_{1} \cup \mathcal{Z}_{6}$. 
\end{lemma}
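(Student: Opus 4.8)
The plan is to prove Lemma~\ref{lem:strong-cliques} by directly analyzing the structure of an almost-unbreakable stripe $(J,Z)$ whose vertex set is a union of two strong cliques $K_1, K_2$. First I would observe that since $(J,Z)$ is a stripe, $Z$ is a strongly stable set, so $|Z \cap K_i| \leq 1$ for each $i$, and hence $|Z| \leq 2$. Moreover, every vertex of $Z$ is strongly simplicial, and no vertex of $V(J) \setminus Z$ is adjacent to more than one vertex of $Z$.

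The key step is to bound $|K_1|$ and $|K_2|$. Suppose toward a contradiction that some $K_i$, say $K_1$, contains at least three vertices of $V(J) \setminus Z$, call them $u, v, w$. Each of $u, v, w$ is strongly complete to $K_1 \setminus \{$itself$\}$. Consider how they connect to $K_2$: since $J$ does not admit twins by almost-unbreakability (and $u,v,w \notin Z$, so the ``$J$ is a $2$-vertex clique'' escape of Lemma~\ref{lem:twins} does not apply here), no two of $u,v,w$ can have the same strong neighborhood, so their traces on $K_2$ must be pairwise distinct. I would then argue that this forces a pseudo-$1$-join or pseudo-$2$-join of $J$: roughly, partitioning along the ``split'' between the part of $K_2$ that distinguishes these vertices and the rest yields the connecting cliques $A_1 \cup A_2$ required in the definition of a pseudo-$1$-join or pseudo-$2$-join, with both sides non-stable because each contains $\geq 2$ vertices of a clique. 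Getting the partition exactly right — and in particular handling the vertices of $Z$, which attach to only one of $K_1, K_2$ and to at most one vertex of each — is the delicate bookkeeping. Since almost-unbreakable stripes admit no $0$-join, no pseudo-$1$-join, and no pseudo-$2$-join, this is a contradiction. Hence each $K_i$ contains at most two vertices outside $Z$, and with $|Z \cap K_i| \leq 1$ we get $|K_i| \leq 3$ for each $i$; a further twin/no-pseudo-join argument rules out the case where both reach size $3$ simultaneously, giving $|V(J)| = |K_1 \cup K_2| \leq 4$ (the two cliques overlapping in at most one vertex, or disjoint, pushed down to total size $\leq 4$).

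Once $|V(J)| \leq 4$ is established, the final step is to identify which of the finitely many such stripes actually arise and to check that each lies in $\mathcal{Z}_1 \cup \mathcal{Z}_6$. Here I would simply enumerate: $J$ is a trigraph on at most four vertices that is a union of two strong cliques, is claw-free (automatic on $\leq 4$ vertices that aren't a claw — and a union of two cliques can't be a claw), does not admit twins among $V(J) \setminus Z$, and has $1 \leq |Z| \leq 2$ with $Z$ strongly simplicial and strongly stable. When $|Z| = 1$, the resulting short ``linear-interval-like'' trigraph with a single simplicial end falls into $\mathcal{Z}_6$ (circular-interval stripes with one marked vertex); when $|Z| = 2$, the trigraph is a linear interval trigraph on $\leq 4$ vertices with $v_1, v_n$ the two strongly antiadjacent simplicial ends, which is exactly the description of $\mathcal{Z}_1$. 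I would verify the linear-interval ordering property ($v_i \sim v_k \Rightarrow v_j$ strongly adjacent to both for $i<j<k$) holds in each of the few remaining cases by inspection.

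The main obstacle I anticipate is the contradiction argument ruling out $|K_i \setminus Z| \geq 3$: one must produce a genuine pseudo-$1$-join or pseudo-$2$-join of $J$ (not of $J \setminus Z$), which means the connecting cliques and both sides of the partition must be set up carefully so that (a) both $V_1$ and $V_2$ are non-stable, (b) $A_1 \cup A_2$ (and $B_1 \cup B_2$ in the $2$-join case) is a strong clique, and (c) the cross-anticompleteness conditions hold. The interaction with $Z$ — whose vertices are simplicial and attach to at most one vertex of the opposite clique — is what makes this fiddly, and it is also where the ``no twins in $V(J)\setminus Z$'' hypothesis (via Lemma~\ref{lem:twins}) gets used to avoid degenerate configurations. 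If a clean pseudo-join cannot be exhibited in some corner case, the fallback is a direct twins argument: three vertices of $K_1$ outside $Z$ with sufficiently similar neighborhoods on $K_2$ must contain an actual twin pair, again contradicting almost-unbreakability.
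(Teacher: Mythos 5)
Your high-level plan (bound the size, then enumerate and classify) matches the paper, but the central bounding step is built on the wrong forbidden structure, and I think the argument as you've sketched it would not go through. You propose to show that if some clique $K_i$ has three vertices outside $Z$, then $J$ admits a pseudo-$1$-join or pseudo-$2$-join. But pseudo-joins require strong anticompleteness across the partition (e.g., $V_1 \setminus A_1$ strongly anticomplete to all of $V_2$), which is a very restrictive condition in a graph that is a union of two strong cliques: almost any vertex of one clique can have some edges to the other, so such a partition generally does not exist. Having three vertices with pairwise-distinct traces on the opposite clique does not produce the anticompleteness a pseudo-join needs, and your own hedging (``if a clean pseudo-join cannot be exhibited\ldots'') signals the difficulty. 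The fallback you mention (a direct twins argument) also fails here on its own, since distinct traces are precisely what prevents twins.

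The structure the paper actually exploits is a \emph{W-join} via a \emph{homogeneous pair}, and this is the key insight you are missing. Write $V(J) = A \cup B$ with $A$ maximal, let $W \subseteq A$ be the vertices strongly complete to $B$, and $A' = A \setminus W$. Because every vertex outside $(A' \setminus Z) \cup (B \setminus Z)$ is either in $W$ (strongly complete to both sides) or in $Z$ (strongly simplicial, hence strongly anticomplete to the opposite clique and strongly complete to its own), the pair $(A' \setminus Z,\, B \setminus Z)$ is \emph{automatically} a homogeneous pair disjoint from $Z$. Almost-unbreakability forbids any W-join disjoint from $Z$, and the twin-freeness from Lemma~\ref{lem:twins} rules out the degenerate strongly-complete/strongly-anticomplete cases, so $|A' \setminus Z|, |B \setminus Z| \leq 1$. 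This immediately gives $|V(J)| \leq 5$, which is already sharper than your intermediate $|K_i| \leq 3$ bound, and the quantitative gap between your $\leq 6$ and the claimed $\leq 4$ is never actually closed in your sketch. I'd redo the size bound around the homogeneous-pair observation, then run your enumeration; the case analysis you outline for $|Z|=1$ (into $\mathcal{Z}_6$) versus $|Z|=2$ (into $\mathcal{Z}_1$, the four-vertex path) is essentially correct and matches the paper, modulo the semiadjacency bookkeeping needed to rule out $|V(J)|=5$.
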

\begin{proof}
  We start with some observations about the structure of $(J,Z)$.
  Since $(J,Z)$ is almost-unbreakable, $J \setminus Z$ does not admit twins.
  As $|V(J)| > 2$, Lemma~\ref{lem:twins} implies that $J$ does not admit twins.
  Let $A,B$ be disjoint strong cliques in $J$ with $A \cup B = V(J)$ such that $A$ is maximal (\ie no vertex of $B$ is strongly complete to $A$).
  Let $X$ denote the set of vertices of $A$ that are strongly complete to $B$. Since $J$ does not admit twins and $|V(J)| > 2$, $|X| \leq 1$ and $|A\setminus X|,|B| \geq 1$.
  As $Z$ is strongly stable, $|Z \cap A|, |Z \cap B| \leq 1$.
  Because vertices of $Z$ are strongly simplicial but no member of $A\setminus X$ is strongly complete to $B$ and vice versa, $X \cap Z = \emptyset$.
  For the same reason, each vertex of $Z$ is either strongly anticomplete to $A \setminus X$ or strongly anticomplete to $B$. Since $(J,Z)$ is almost-unbreakable, $Z$ must contain a vertex of any W-join of $J$.
  Observe, however, that $(A,B)$ is a homogeneous pair and one of $|A|,|B| > 1$.
  Therefore, $Z \not= \emptyset$.

  We now prove that $|V(J)| \leq 4$.
  Let $A' = A \setminus X$.
  Since $(A'\setminus Z, B\setminus Z)$ is a homogeneous pair that is not a W-join (as $(J,Z)$ is almost-unbreakable) and $J$ does not admit twins, $|A'\setminus Z|,|B\setminus Z| \leq 1$.
  Hence $|V(J)| \leq 5$. 
  Suppose that $|V(J)| = 5$.
  Since $|A'\setminus Z| + |B\setminus Z| \leq 2$ and $|Z| \leq 2$, $|A'|+|B| \leq 4$ and thus $|X| = 1$.
  As $(J,Z)$ is a stripe, no vertex of $V(J)$ is adjacent to more than one vertex of $Z$.
  As~$X$ is strongly complete to $A'$ and $B$, $|Z| = 1$.
  Recall that $|A'\setminus Z| + |B\setminus Z| \leq 2$, and thus $|A'| + |B| \leq 3$ and $|V(J)| = |A| + |B| \leq 4$, a contradiction.
  It follows that $|V(J)| \leq 4$.

  It remains to prove that $(J,Z) \in \mathcal{Z}_{1} \cup \mathcal{Z}_{6}$.
  We consider two cases, depending on $|V(J)|$.

  Consider first the case that $|V(J)| = 4$.
  Suppose that $|A| = 3$.
  As $J$ has no $0$-join, a vertex of~$A$ is adjacent to $b$, where $B = \{b\}$.
  If more than one vertex of $A$ is strongly adjacent to $b$, or if more than one vertex of $A$ is strongly antiadjacent to $b$, then $J$ admits twins, a contradiction.
  If all three vertices of $A$ are semiadjacent to $b$, then $Z = \emptyset$, a contradiction.
  If two vertices of $A$ are semiadjacent to $b$, then neither is in $Z$, and those two vertices of $A$ together with $b$ form a W-join in $J \setminus Z$, contradicting that $(J,Z)$ is almost-unbreakable.
  Hence, there is a vertex of $A$ that is strongly adjacent to $b$, a vertex of $A$ that is semiadjacent to $b$, and a vertex of $A$ that is strongly antiadjacent to $b$. Then $(J,Z) \in \mc{Z}_{6}$.
  Suppose that $|A| = 1$, where $A = \{a\}$. 
  Following the same reasoning as before (mutatis mutandis), there is a vertex of $B$ that is strongly adjacent to $a$.
  This, however, contradicts the maximality of $A$.
  Therefore, $|A| = 2$.
  Suppose that $B \cap Z \not= \emptyset$ and $|X| = 1$.
  Then $A \cap Z = \emptyset$ and the vertex in $A'$ must be semiadjacent to the vertex in $B \setminus Z$, as $J$ does not admit twins. Hence, $(J,Z) \in \mc{Z}_{6}$.
  Suppose that $B \cap Z \not= \emptyset$ and $X = \emptyset$.
  As $A$ is neither strongly complete nor strongly anticomplete to $B\setminus Z$, $(A,B\setminus Z)$ is a homogeneous pair, and as $(J,Z)$ is almost-unbreakable, $A \cap Z \not= \emptyset$.
  But then $J$ is a four-vertex path and $(J,Z) \in \mc{Z}_{1}$.
  Suppose that $B \cap Z = \emptyset$.
  Then $A \cap Z \not= \emptyset$.
  As $J$ does not admit twins, $X = \emptyset$.
  Since $(A\setminus Z,B)$ is a homogeneous pair, $A\setminus Z$ is neither strongly complete nor strongly anticomplete to $B$, and as $(J,Z)$ is almost-unbreakable, $B \cap Z \not= \emptyset$, a contradiction.
  This completes the analysis in the case that $|V(J)| = 4$.

  Finally, consider the case that $|V(J)| = 3$.
  Since $Z \not= \emptyset$ and $J$ does not admit twins, $J$ contains at least one strongly antiadjacent pair of vertices. 
  But then $J$ must be a three-vertex path, and thus $(J,Z)$ is a member of $\mc{Z}_{6}$.
\end{proof}

\begin{proposition}
\label{prp:union-two}
  Let $G$ be a trigraph that is a thickening of a trigraph $G'$.
  Then $G$ is the disjoint union of two strong cliques if and only if $G'$ is.
\end{proposition}
\begin{proof}
  Suppose that $G'$ is the disjoint union of two strong cliques.
  Then by the definition of a thickening, so is $G$.

  Suppose that $G$ is the disjoint union of two strong cliques $A, B$. Let $\mc{W} = \{W_{v'} \mid v' \in V(G')\}$ be a thickening of $G$ to $G'$.
  We use induction on the number of vertices $v' \in V(G')$ for which $W_{v'} \cap A \not=\emptyset$ and $W_{v'} \cap B \not=\emptyset$.
  In the base case, for each $v' \in V(G')$ it holds that either $W_{v'} \subseteq A$ or $W_{v'} \subseteq B$. Then $G'$ is the disjoint union of two strong cliques by the definition of a thickening.
  In the inductive step, let $v' \in V(G)$ be such that $W_{v'} \cap A \not=\emptyset$ and $W_{v'} \cap B \not=\emptyset$. Since all vertices of $A \setminus W_{v'}$ are strongly complete to $W_{v'} \cap A$ and all vertices of $B \setminus W_{v'}$ are strongly complete to $W_{v'} \cap B$, it follows that each vertex of $G'$ is adjacent to $v'$.
  Moreover, by the definition of a trigraph, at most one vertex of $G'$ is semiadjacent to $v'$ and all other vertices of $G'$ must be strongly adjacent to $v'$.
  Let $u'$ denote the vertex of $G'$ that is semiadjacent to $v'$ if it exists, and let $u'$ be any other vertex of $V(G')\setminus\{v'\}$ otherwise.
  Since, $G \setminus W_{v'}$ is the union of two strong cliques $A \setminus W_{v'}$ and $B \setminus W_{v'}$ and $G \setminus W_{v'}$ is a thickening $\mc{W} \setminus \{W_{v'}\}$ of $G' \setminus v'$, it follows inductively that $G' \setminus v'$ is the disjoint union of two strong cliques $A'$ and $B'$.
  We then add $v'$ to $A'$ or $B'$ when $A'$ or $B'$ respectively does not contain $u'$.
  Hence, $G'$ is the disjoint union of two strong cliques.
\end{proof}

\begin{corollary}
\label{cor:strong-cliques}
  Let $(J,Z)$ be a thickening of an almost-unbreakable stripe $(J',Z')$ such that $J$ is the union of two strong cliques and $Z \not= \emptyset$. Then $(J,Z)$ is a thickening of a member of $\mc{Z}_{1} \cup \mc{Z}_{6}$.
\end{corollary}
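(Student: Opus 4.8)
The plan is to prove that the base stripe $(J',Z')$ itself lies in $\mc{Z}_{1}\cup\mc{Z}_{6}$ (after disposing of the two degenerate small cases $|V(J')|\le 2$), since then $(J,Z)$, being a thickening of $(J',Z')$, is a thickening of a member of $\mc{Z}_{1}\cup\mc{Z}_{6}$, which is exactly the assertion. As $(J',Z')$ is an almost-unbreakable stripe by hypothesis, the one hypothesis of Lemma~\ref{lem:strong-cliques} that is not immediate is that $V(J')$ is a union of two strong cliques; deriving this from the corresponding property of the thickening $J$ is the heart of the matter.

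For the crux, fix a partition $V(J)=A\cup B$ into two strong cliques; replacing $B$ by $B\setminus A$ we may assume $A\cap B=\emptyset$, so that any two vertices of $J$ that are not strongly adjacent have one endpoint in $A$ and one in $B$. Let $(X_{v})_{v\in V(J')}$ be the sets witnessing that $J$ is a thickening of $J'$. Call $v\in V(J')$ \emph{$A$-type} if $X_{v}\subseteq A$, \emph{$B$-type} if $X_{v}\subseteq B$, and \emph{mixed} otherwise. If $u,v$ are strongly antiadjacent in $J'$, then $X_{u}$ is strongly anticomplete to $X_{v}$, so, choosing any $q\in X_{v}$, every vertex of $X_{u}$ fails to be strongly adjacent to $q$ in $J$; hence $X_{u}\subseteq A$ or $X_{u}\subseteq B$, and symmetrically $X_{v}$ lies entirely in the other of $A,B$. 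Thus a strongly antiadjacent pair of $J'$ consists of an $A$-type and a $B$-type vertex, and in particular no mixed vertex has a strongly antiadjacent neighbour in $J'$. Moreover a \emph{semiadjacent} pair of $J'$ cannot have both endpoints $A$-type (the two sets would be strongly complete inside the strong clique $A$) nor both $B$-type; and since the semiadjacent pairs of $J'$ form a matching, every mixed vertex is incident to at most one semiadjacent pair, and hence to at most one pair that is semiadjacent or strongly antiadjacent.

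Now consider the graph $\bar{J'}$ on vertex set $V(J')$ whose edges are exactly the pairs that are semiadjacent or strongly antiadjacent in $J'$; then $V(J')$ is a union of two strong cliques if and only if $\bar{J'}$ is bipartite. By the previous paragraph every mixed vertex has degree at most $1$ in $\bar{J'}$, hence lies on no cycle of $\bar{J'}$, while every edge of $\bar{J'}$ both of whose endpoints are non-mixed joins an $A$-type vertex to a $B$-type vertex (if it is strongly antiadjacent this was shown above; if it is semiadjacent it is neither $A$-type–$A$-type nor $B$-type–$B$-type). Consequently an odd cycle of $\bar{J'}$, should one exist, would avoid every mixed vertex and would therefore be properly $2$-coloured by the $A$-type/$B$-type labels, a contradiction. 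So $\bar{J'}$ is bipartite and $V(J')$ is a union of two strong cliques.

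It remains to finish the case analysis on $|V(J')|$. If $|V(J')|\le 2$, there is exactly one vertex $z$ of $Z'$ (as $\emptyset\ne Z'$ is strongly stable and $J'$ admits no $0$-join), and $z$ is strongly simplicial; this forces $J'$ to be a single vertex or a two-vertex strong clique — a semiadjacent pair is excluded since then $N[z]$ would not be a strong clique, and a strongly antiadjacent pair is excluded since it would be a $0$-join — and in either case $(J',Z')\in\mc{Z}_{6}$, realising $J'$ as a circular-interval trigraph in which $z$ is interior to at most one arc. If $|V(J')|>2$, then $(J',Z')$ is an almost-unbreakable stripe with more than two vertices whose vertex set is a union of two strong cliques, so Lemma~\ref{lem:strong-cliques} gives $(J',Z')\in\mc{Z}_{1}\cup\mc{Z}_{6}$. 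In every case $(J,Z)$ is a thickening of $(J',Z')\in\mc{Z}_{1}\cup\mc{Z}_{6}$, as required. The main obstacle is the bipartiteness argument in the middle two paragraphs: one must keep careful track of how strong antiadjacency and semiadjacency in $J'$ sit relative to the \emph{fixed} bipartition of $J$, the decisive leverage being that semiadjacent pairs form a matching, which confines the potentially troublesome ``mixed'' vertices to degree at most $1$ in $\bar{J'}$.
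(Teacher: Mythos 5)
Your proof is correct and follows the same skeleton as the paper: show that $V(J')$ is a union of two strong cliques, then dispose of $|V(J')|\le 2$ directly and invoke Lemma~\ref{lem:strong-cliques} for $|V(J')|>2$. The paper's proof simply asserts the transfer step (``Since $J$ is a union of two strong cliques, $J'$ also is a union of two strong cliques'') with no justification, whereas you actually prove it via the bipartiteness of the antiadjacency graph $\bar{J'}$, using the matching property of semiedges to confine mixed vertices to degree at most one — this is a genuine gap in the paper that your argument fills correctly.
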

\begin{proof}
  Since $J$ is the union of two strong cliques, $J'$ also is a union of two strong cliques by Proposition~\ref{prp:union-two}.
  If $|V(J')| = 2$, then $J'$ consists of a strongly adjacent pair of vertices.
  Then $(J',Z') \in \mc{Z}_{6}$.
  If $|V(J')| > 2$, then it follows from Lemma~\ref{lem:strong-cliques} that $(J',Z') \in \mc{Z}_{1} \cup \mc{Z}_{6}$.
  The corollary follows.
\end{proof}
The following proposition is inspired by and strengthens~\cite[Claim~2 in Theorem~13.2]{ChudnovskyS2008-5}.
\begin{proposition}
  \label{prp:indecomposable-helper}
  Let $G$ be a trigraph that does not admit a $0$-join nor a pseudo-$2$-join and let $(A,B)$ be a W-join of $G$ such that no vertex of $V(G) \setminus (A \cup B)$ is strongly complete to both $A$ and $B$, that there is a vertex $a \in A$ that is strongly simplicial in $G \setminus B$, and that there is a vertex $b \in B$ that is strongly simplicial in $G \setminus A$.
  Then $G$ is the union of two strong cliques.
\end{proposition}
\begin{proof}
  Let $V_{0} = \emptyset$, $V_{1} = A \cup B$, and $V_{2} = V(G) \setminus (A \cup B)$. Let $A_{1} = A$ and $B_{1} = B$.
  Let $A_{2}$ be the set of neighbors of $a$ in $V_{2}$; since $a$ is strongly simplicial in $G \setminus B$ and every $v \in V_2$ is either strongly complete or strongly anticomplete to $A$ by the definition of a W-join, $A_{1} \cup A_{2}$ is a strong clique. Let $B_{2}$ be the set of neighbors of $b$ in $V_{2}$; using a similar argument as for $A_1 \cup A_2$, $B_{1} \cup B_{2}$ is a strong clique. 
  Note that each vertex of $V_{2} \setminus (A_{2} \cup B_{2})$ is strongly anticomplete to $V_{1}$, because $(A,B)$ is a W-join and thus each vertex is either strongly complete or strongly anticomplete to $A$ or~$B$, and $A_2$ and $B_2$ thus contain all neighbors of $A$ and $B$ in $V_2$.
  Moreover, $A_{2} \cap B_{2} = \emptyset$, as no vertex of $V(G) \setminus (A \cup B)$ is strongly complete to both $A$ and $B$. Since $G$ does not admit a pseudo-$2$-join, $(V_{0},V_{1},V_{2})$ is not a pseudo-$2$-join.
  By the definition of a W-join, $V_{1}$ is not a strong stable set.
  Hence, $V_{2}$ must be a strong stable set.
  Since $G$ does not admit a $0$-join, every vertex in $V_{2}$ is strongly complete to either $A$ or $B$.
  Moreover, at most one vertex of $V_2$ is strongly complete to $A$, because~$V_2$ is a strong stable set and $a$ is strongly simplicial.
  Similarly, at most one vertex of $V_2$ is strongly complete to $B$.
  Therefore, $G$ is the union of two strong cliques ($A_1 \cup A_2$ and $B_1 \cup B_2$).
\end{proof}
The following lemma is an explicit statement of a result that is implicitly obtained in \cite[Theorem~10.3]{ChudnovskyS2008-5}.

\begin{lemma} \label{lem:linetrigraph-join2}
  Let $G$ be a thickening of a line trigraph of a graph $H$ such that $G$ admits no $0$-join, pseudo-$1$-join, or pseudo-$2$-join. Then $G$ is a thickening of the line graph of $H$, or $G$ is the union of two strong cliques.
\end{lemma}
\begin{proof}
  Let $G'$ denote a line trigraph of $H$ such that $G$ is a thickening of $G'$.
  The first paragraph of \cite[Theorem~10.3]{ChudnovskyS2008-5} then proves that $H$ has no vertex of degree two, or $G$ is the union of two strong cliques.
  If $H$ has no vertex of degree two, then the definition of a line trigraph implies that~$G'$ has no semiadjacent vertices.
  In particular, $G'$ is the line graph of $H$.
\end{proof}

\subsection{The Union of Three Strong Cliques}
We aim to understand the structure of (almost-unbreakable) stripes $(J,Z)$ for which $J$ is the union of three strong cliques.

The following lemma is similar to~\cite[Theorem~13.1]{ChudnovskyS2008-5}; in particular, we observe that the proof goes through under the weaker assumption that $(J,Z)$ is any stripe.
We repeat the proof only to be more self-contained.

\begin{lemma}
\label{lem:hexjoin-z}
  Let $(J,Z)$ be a stripe such that $J$ admits a hex-join.
  Then $|Z| \leq 2$.
\end{lemma}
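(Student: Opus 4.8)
The plan is to assume $|Z|\ge 3$ and derive a contradiction. Since $J$ admits a hex-join, $V(J)=A\cup B\cup C$ is a union of three pairwise disjoint strong cliques, where $(J;A,B,C)$ is the hex-join of two three-cliqued trigraphs $(G_1;A_1,B_1,C_1)$ and $(G_2;A_2,B_2,C_2)$; here $A_i=A\cap V(G_i)$, $B_i=B\cap V(G_i)$, $C_i=C\cap V(G_i)$, and, crucially, $V(G_1),V(G_2)\neq\emptyset$ by the definition of a hex-join. Because $Z$ is strongly stable and $A,B,C$ are strong cliques, $Z$ contains at most one vertex from each of $A,B,C$, so $|Z|\le 3$; if $|Z|=3$ we may write $Z=\{z_A,z_B,z_C\}$ with $z_A\in A$, $z_B\in B$, $z_C\in C$.

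\emph{Step 1 (private neighbourhoods).} I would first show $N[z_A]=A$, and symmetrically $N[z_B]=B$ and $N[z_C]=C$. As $z_A$ is strongly simplicial, $N[z_A]$ is a strong clique containing $z_A$; in particular $z_A$ has no semiadjacent neighbour, so $z_A$ is strongly adjacent or strongly antiadjacent to every other vertex, and $A\setminus\{z_A\}\subseteq N[z_A]$. Take any $b\in B$. If $b=z_B$ then $z_A$ and $b$ are strongly antiadjacent since $Z$ is strongly stable. If $b\neq z_B$ then $b$ is strongly adjacent to $z_B\in Z$ (as $B$ is a strong clique), so by the stripe condition the vertex $b\notin Z$ is not adjacent to $z_A$, hence strongly antiadjacent to $z_A$. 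Thus $z_A$ is strongly antiadjacent to all of $B$, and by the same argument to all of $C$; therefore $N[z_A]=A$.

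\emph{Step 2 (eliminating a side of the hex-join).} Each of $z_A,z_B,z_C$ lies in exactly one of $V(G_1),V(G_2)$, so e.g. $z_A\in A_1$ or $z_A\in A_2$. If all three lie in $V(G_1)$, then $z_A\in A_1$ is strongly complete to $V(G_2)\setminus B_2\supseteq C_2$, while by Step 1 it is strongly antiadjacent to $C\supseteq C_2$, forcing $C_2=\emptyset$; likewise $z_B\in B_1$ forces $A_2=\emptyset$ and $z_C\in C_1$ forces $B_2=\emptyset$, so $V(G_2)=\emptyset$, a contradiction. The case where all three lie in $V(G_2)$ is analogous and forces $V(G_1)=\emptyset$. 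Otherwise exactly two of $z_A,z_B,z_C$ lie in one $V(G_i)$; using the cyclic symmetry of a hex-join (the relabelling $(A,B,C)\mapsto(B,C,A)$ applied simultaneously to $G_1$ and $G_2$) we may assume the lone one is $z_C$. If $z_A,z_B\in V(G_1)$ and $z_C\in V(G_2)$, then $z_A\in A_1$ is strongly complete to $C_2$ but, by Step 1, strongly antiadjacent to $C\supseteq C_2$, so $C_2=\emptyset$, contradicting $z_C\in C_2$. If $z_A,z_B\in V(G_2)$ and $z_C\in V(G_1)$, then $z_B\in B_2$ is strongly complete to $C_1$ (since $C_1$ is strongly complete to $V(G_2)\setminus A_2\supseteq B_2$) but strongly antiadjacent to $C\supseteq C_1$, so $C_1=\emptyset$, contradicting $z_C\in C_1$. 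This exhausts all cases, so $|Z|\le 2$.

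The main obstacle is Step 1: one must notice that strong simpliciality of the vertices of $Z$ rules out semiadjacency, so that every relevant (anti)adjacency involving a vertex of $Z$ is strong, which is exactly what makes the ``strongly complete''/``strongly anticomplete'' clauses of the hex-join definition bite in Step 2. Two further points need a little care: that a hex-join has both constituents nonempty (without this the statement is false, e.g. a disjoint union of three cliques with one apex vertex placed in each would be a counterexample), and that the cyclic symmetry of the hex-join used in the case reduction is genuine, which is a routine check against the defining adjacencies.
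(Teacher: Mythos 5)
Your proof is correct. The paper gives no self-contained argument for this lemma, deferring instead to Chudnovsky and Seymour's Theorem 13.1, so there is nothing in-paper to compare against; your route --- first using strong simpliciality plus the stripe condition to show that each $z\in Z$ has $N[z]$ equal to the hex-join clique containing it, then testing every possible placement of three hypothetical $z$'s among $V(G_1),V(G_2)$ against the hex-join's strongly-complete clauses to force one constituent trigraph empty --- is the natural line of attack and almost certainly the cited argument. The cyclic relabelling $(A,B,C)\mapsto(B,C,A)$ you invoke does preserve the hex-join structure (this is a quick check against the three ``strongly complete'' and three ``strongly anticomplete'' clauses), and the all-in-$V(G_2)$ case indeed reduces by the same calculation, so all cases are covered. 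One small inaccuracy in your closing commentary: your proposed counterexample under a degenerate hex-join (three disjoint cliques, each with one pendant apex in $Z$) is disconnected and hence admits a $0$-join, so it is not an almost-unbreakable stripe; it therefore does not actually witness the necessity of the nonempty-parts convention for the statement as written, and connected variants tend to admit pseudo-$1$-joins instead. That said, requiring both constituents of a hex-join to be nonempty is the standard Chudnovsky--Seymour convention, and you are right that the paper's terse definition leaves it implicit; the reliance on it in Step~2 is legitimate.
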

\begin{proof}
  First, we claim that if a three-cliqued trigraph $(G; A,B,C)$ is a hex-join of $(G_{1};A_{1},B_{1},C_{1})$ and $(G_{2};A_{2},B_{2},C_{2})$, then any stable set of size three in $G$ is also contained in either $G_{1}$ or $G_{2}$.
  Suppose that $\{x,y,z\}$ is a stable set.
  By symmetry, we can assume that $x,y \in V(G_{1})$ and $z \in V(G_{2})$, and that $x \in A_{1}$ and $y \in B_{1}$.
  However, in $G$, $A_{1}$ is strongly complete to $V(G_{2}) \setminus B_{2}$ and $B_{1}$ is strongly complete to $V(G_{2}) \setminus C_{2}$.
  In particular, one of $x,y$ is strongly adjacent to $z$ in $G$, a contradiction.
  The claim follows.

  To prove the lemma, suppose for sake of contradiction that $|Z| \geq 3$, and let $z_{1},z_{2},z_{3}$ be three distinct vertices from $Z$.
  Let $J$ be a hex-join of $J_{1}$ and $J_{2}$.
  Since $\{z_{1},z_{2},z_{3}\}$ is a (strong) stable set, it follows from the claim that, \wloge $\{z_{1},z_{2},z_{3}\} \subseteq V(J_{1})$.
  Consider any $v \in V(J_{2})$.
  Since $(J,Z)$ is a stripe, $v$ is (strongly) antiadjacent to at least two of $z_{1},z_{2},z_{3}$.
  This yields a stable set of size three.
  Again, following the claim, the stable set and the two vertices of $z_{1},z_{2},z_{3}$ in particular are in $J_{2}$, a contradiction.
  The lemma follows.
\end{proof}

\begin{lemma}
\label{lem:threecliques}
  Let $(J,Z)$ be a stripe with $|Z| = 2$.
  Then $J$ is the union of three (nonempty) strong cliques if and only if $V(J)\setminus N[Z]$ is a (nonempty) strong clique.
\end{lemma}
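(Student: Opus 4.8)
The plan is to use only the defining properties of a stripe; the hypotheses that $(J,Z)$ is almost-unbreakable and that $|Z|=2$ enter merely through the facts that $Z=\{z_1,z_2\}$ with $z_1,z_2$ strongly simplicial, $Z$ strongly stable, and no vertex of $V(J)\backslash Z$ adjacent to both $z_1$ and $z_2$. First I would record the structure of $N[Z]$. Since $z_1$ and $z_2$ are strongly simplicial, $N[z_1]$ and $N[z_2]$ are strong cliques. Since $Z$ is strongly stable, $z_1\notin N[z_2]$ and $z_2\notin N[z_1]$, and since no vertex outside $Z$ is a common neighbour of $z_1$ and $z_2$, the sets $N[z_1]$ and $N[z_2]$ are disjoint; hence $N[Z]=N[z_1]\cup N[z_2]$, and $V(J)$ is the disjoint union of the three sets $N[z_1]$, $N[z_2]$, and $V(J)\backslash N[Z]$. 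The ``if'' direction is now immediate: if $V(J)\backslash N[Z]$ is a strong clique, then $V(J)=N[z_1]\cup N[z_2]\cup(V(J)\backslash N[Z])$ exhibits $V(J)$ as a union of three strong cliques.

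For the ``only if'' direction, assume $V(J)=C_1\cup C_2\cup C_3$ with each $C_i$ a strong clique. Because $z_1$ and $z_2$ are strongly antiadjacent, no $C_i$ contains both of them. If every $C_i$ contains one of $z_1,z_2$, then every vertex of $V(J)$ is equal or adjacent to one of $z_1,z_2$, so $V(J)=N[Z]$ and $V(J)\backslash N[Z]=\emptyset$, which is a strong clique. Otherwise some $C_i$ contains neither $z_1$ nor $z_2$; index the cliques so that this is $C_3$ and so that $z_1\in C_1$ and $z_2\in C_2$. Given any $v\in V(J)\backslash N[Z]$, the vertex $v$ differs from and is strongly antiadjacent to both $z_1$ and $z_2$, so $v$ cannot lie in $C_1$ (which contains $z_1$) or in $C_2$ (which contains $z_2$); hence $v\in C_3$. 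Therefore $V(J)\backslash N[Z]\subseteq C_3$ is a strong clique.

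There is essentially no obstacle here: the content is entirely in unwinding the definition of a stripe, and the only point requiring a moment's care is the degenerate situation in the ``only if'' direction in which the three cliques are already ``used up'' by the neighbourhoods of $z_1$ and $z_2$, forcing $V(J)\backslash N[Z]=\emptyset$. Note also that, if one prefers, a covering by three strong cliques can be replaced by a partition into three pairwise disjoint strong cliques via the usual trick of subtracting earlier cliques, matching the convention used elsewhere in the paper; this changes neither implication.
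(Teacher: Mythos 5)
Your proof is correct and takes essentially the same route as the paper's: the ``if'' direction follows from $N[z_1]$ and $N[z_2]$ being disjoint strong cliques, and the ``only if'' direction places $z_1, z_2$ into two of the three covering cliques and shows every vertex outside $N[Z]$ must land in the third. You are slightly more careful than the paper in treating the degenerate case where no clique avoids $Z$ (in which case $V(J)\backslash N[Z]=\emptyset$), a case the paper's phrase ``we can assume $z_1\in A$ and $z_2\in B$'' quietly absorbs.
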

\begin{proof}
  Suppose that $V(J)\setminus N[Z]$ is a strong clique.
  Since $(J,Z)$ is a stripe, the vertices in $Z$ are strongly simplicial.
  It follows that $J$ is the union of three strong cliques.

  Let $Z = \{z_{1},z_{2}\}$.
  Suppose that $J$ is the union of three strong cliques $A$, $B$, $C$.
  Since $z_{1}$ and $z_{2}$ are strongly antiadjacent, we can assume that $z_{1} \in A$ and $z_{2} \in B$.
  But then $V(J) \setminus N[Z] \subseteq C$.
\end{proof}
The following lemma and its corollary show that~\cite[Theorem~13.2]{ChudnovskyS2008-5} goes through under the weaker assumption that $(J,Z)$ is almost-unbreakable.
We repeat the proof only to be more self-contained, and to make the connection to our earlier lemmas explicit.

\begin{lemma}
\label{lem:threecliques-zis2}
  Let $(J,Z)$ be an almost-unbreakable stripe with $|Z| = 2$, such that $J$ is the union of three strong cliques.
  Then $(J,Z)$ is a member of $\mc{Z}_{1} \cup \mc{Z}_{2} \cup \mc{Z}_{3} \cup \mc{Z}_{4}$ or $J$ is a line trigraph.
\end{lemma}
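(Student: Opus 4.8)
The plan is to pass to a minimal almost-unbreakable stripe and then run through the structural dichotomy for such stripes, using the two hypotheses $|Z| = 2$ and ``$V(J)$ is a union of three strong cliques'' to discard every candidate class outside $\mc{Z}_1 \cup \mc{Z}_2 \cup \mc{Z}_3 \cup \mc{Z}_4$. First I would choose an almost-unbreakable stripe $(J',Z')$ of minimum order with $(J,Z)$ a thickening of $(J',Z')$; a thickening of a union of three strong cliques is again such a union (and conversely one may re-pick the clique cover, as in the proof of Corollary~\ref{cor:strong-cliques}), and $|Z'| = |Z| = 2$, so it suffices to show $(J',Z')$ is a thickening of a member of $\mc{Z}_1 \cup \mc{Z}_2 \cup \mc{Z}_3 \cup \mc{Z}_4$. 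By Lemma~\ref{lem:threecliques} the hypothesis is equivalent to $V(J') \setminus N[Z']$ being a strong clique, so writing $Z' = \{z_1',z_2'\}$ the three cliques may be taken to be $N[z_1']$, $N[z_2']$ and $V(J') \setminus N[Z']$ (these are strong cliques, and disjoint, since $Z'$ is strongly simplicial and the stripe condition holds). By Lemma~\ref{lem:twins}, since $J' \setminus Z'$ has no twins and $|Z'| = 2$, $J'$ itself admits no twins. Now apply Lemma~\ref{lem:stripe-indecomposable}: either $J'$ is a thickening of an indecomposable member of some $\mc{S}_i$, or $J'$ admits a hex-join.

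In the indecomposable branch I would appeal case by case to lemmas already proved. If $J'$ is a union of two strong cliques, Corollary~\ref{cor:strong-cliques} puts $(J',Z')$ among thickenings of $\mc{Z}_1 \cup \mc{Z}_6$, and since every member of $\mc{Z}_6$ has $|Z| = 1$ this forces $\mc{Z}_1$. If $J'$ is a thickening of a circular interval trigraph, Lemma~\ref{lem:indecomposable-circular} again yields $\mc{Z}_1 \cup \mc{Z}_6$, hence $\mc{Z}_1$ (short circular interval stripes with $|Z| = 2$ lie in $\mc{Z}_1$ by definition). If $J'$ is a thickening of a line trigraph and is not a union of two strong cliques, then $J'$ admits a biclique by Lemma~\ref{lem:linetrigraph-biclique} (applicable since $J'$ has no twins), and here a direct analysis of the underlying graph $H$ --- using that the three cliques of $J'$ correspond to three stars of $H$ covering $E(H)$, that $Z'$ consists of two pendant edges, and that $J'$ has no twins, no $0$-join and no pseudo-$1$- or pseudo-$2$-join --- shows that $(J',Z')$ is a thickening of a member of $\mc{Z}_3$ (or is the exceptional stripe listed in $\mc{Z}_4$). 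For every remaining indecomposable type, Lemma~\ref{lem:indecomposable-indecomposable} places $(J',Z')$ in some $\mc{Z}_i$ with $i \in \{1,2,5,6,7,8,9\}$: the classes $\mc{Z}_6,\mc{Z}_7,\mc{Z}_8,\mc{Z}_9$ have $|Z| = 1$ and are excluded by $|Z'| = 2$, while a finite check rules out $\mc{Z}_5$, since no XX-trigraph retaining two simplicial vertices is a union of three strong cliques (the induced six-cycle $v_1 \cdots v_6$ together with the apex vertices cannot be covered by three cliques). This leaves only $\mc{Z}_1$ and $\mc{Z}_2$.

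For the hex-join branch I would follow Claw-free Graphs V, \S13 (as was done for Lemmas~\ref{lem:indecomposable-circular} and~\ref{lem:indecomposable-indecomposable}): an almost-unbreakable three-cliqued stripe with $|Z| = 2$ admitting a hex-join has both sides of the join of bounded size, and inspecting the finitely many resulting trigraphs shows they are thickenings of members of $\mc{Z}_1 \cup \mc{Z}_2$ together with the exceptional stripe of $\mc{Z}_4$. Combining the branches and transferring back along the thickening $(J,Z)$ of $(J',Z')$ finishes the proof. The step I expect to be the real obstacle is pinning down the two ``new'' classes: verifying that the three-cliqued line-trigraph stripes are, up to thickening, exactly the members of $\mc{Z}_3$, and isolating the exceptional configuration of $\mc{Z}_4$ within the line-trigraph/hex-join analysis. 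Both require reproducing the detailed adjacency bookkeeping of the Chudnovsky--Seymour proof rather than invoking a prior lemma; everything else is a one-line citation or a bounded finite check.
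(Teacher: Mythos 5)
The paper's own ``proof'' of this lemma is a single line: ``Same proof as [ChudnovskyS2008-5], 13.2.''\ That is, the authors do not reprove it; they cite the corresponding three-cliqued stripe classification from Chudnovsky--Seymour V. What you have written is, in effect, a reconstruction of the skeleton of that external argument, and your self-diagnosis at the end is accurate: you have not actually proved the two cases that constitute the mathematical content.

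Concretely, the parts of your outline that lean on lemmas already in the paper are fine. Passing to a minimal almost-unbreakable $(J',Z')$, observing via Lemma~\ref{lem:twins} that $J'$ has no twins when $|Z'|=2$, invoking the dichotomy of Lemma~\ref{lem:stripe-indecomposable}, and then filtering by $|Z'|=2$ to kill $\mc{Z}_6,\mc{Z}_7,\mc{Z}_8,\mc{Z}_9$ --- all of that is sound and matches the structure of CS~13.2. Your finite check that an XX-trigraph retaining both $v_7$ and $v_8$ cannot be covered by three strong cliques (forcing $v_3$ and $v_6$, which are nonadjacent, into the same clique) is also correct and rules out $\mc{Z}_5$. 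The two-clique and circular-interval branches reduce to $\mc{Z}_1$ via Lemma~\ref{lem:strong-cliques} and Lemma~\ref{lem:indecomposable-circular}, as you say.

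The gap is exactly where you flag it. The classes $\mc{Z}_3$ and $\mc{Z}_4$ do not appear in any of the paper's prior lemmas: Lemma~\ref{lem:indecomposable-indecomposable} covers only $i\in\{1,2,5,6,7,8,9\}$, and there is no lemma in the paper characterizing almost-unbreakable three-cliqued stripes arising from line trigraphs or from hex-joins. To establish that the $\mc{S}_0$ (line trigraph) branch produces precisely thickenings of $\mc{Z}_3$ --- that the underlying graph $H$ really is a $5$-vertex path $h_1\cdots h_5$ with all edges incident to $h_2,h_3,h_4$, given only that the three cliques of $J'$ are edge-stars of $H$, that $Z'$ is a pair of pendant edges, and that $J'$ is almost-unbreakable --- is a nontrivial case analysis, not a one-line consequence of the biclique existence statement in Lemma~\ref{lem:linetrigraph-biclique} (which in the main theorem is used to \emph{recurse}, not to classify). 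Similarly, identifying the hex-join outcomes as $\mc{Z}_2$ and the single exceptional trigraph $\mc{Z}_4$ requires bounding the sides of the hex-join and enumerating the small configurations; you assert this from ``inspecting the finitely many resulting trigraphs'' without producing the bound or the enumeration. These two case analyses are the entire content of CS~13.2; deferring them means the proposal, as written, has the same logical status as the paper's one-line citation, except that the paper is explicit that it is citing rather than proving.

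One smaller point worth a caveat: you assert that three-strong-clique-ness transfers from $V(J)$ down to $V(J')$ ``as in Corollary~\ref{cor:strong-cliques}.'' The paper does make the analogous transfer for two cliques there, but it does so without proof, and in a trigraph the representative-picking argument gives you only that the representatives form a \emph{clique} in $J'$, not immediately a \emph{strong} clique (a semiadjacent pair could in principle have both $X$-sets inside one side of a cover of $J$). This is likely resolvable with a short argument, and CS presumably handles it, but it is an assertion, not a derivation, in your write-up.
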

\begin{proof}
  Let $Z = \{z_{1},z_{2}\}$. Since $(J,Z)$ is a stripe, no vertex is adjacent to both $z_{1}$ and $z_{2}$.
  Moreover, $|V(J)| > 2$.

  If $J$ is the union of two strong cliques, then it follows from Lemma~\ref{lem:strong-cliques} that $(J,Z) \in \mc{Z}_{1}$ (note that $(J,Z) \not\in \mc{Z}_{6}$, as $|Z| = 2$).
  Hence, we may assume that $J$ is not the union of two strong cliques.
  By Lemma~\ref{lem:twins}, this implies that $J$ does not admit twins.

  Let $J'$ be the trigraph that is obtained from $J$ by making $z_{1},z_{2}$ semiadjacent.
  Note that $J'$ is claw-free, as $z_{1}$ and $z_{2}$ are strongly simplicial in $J$. 

  We claim that $J'$ does not admit a W-join.
  Suppose otherwise and let $(A,B)$ be a W-join of $J'$.
  Since $(J,Z)$ is almost-unbreakable, at least one of $z_{1},z_{2}$ must be in $A \cup B$.
  As $z_{1},z_{2}$ are semiadjacent in $J'$, this implies by the definition of a W-join that (without loss of generality) $z_{1} \in A$ and $z_{2} \in B$.
  It follows that $A$ and $B$ satisfy all the conditions of Proposition~\ref{prp:indecomposable-helper} in $J'$ (adding a semi-edge cannot create a $0$-join or pseudo-$2$-join), and thus Proposition~\ref{prp:indecomposable-helper} implies that $J'$ (and thus also $J$) is the union of two strong cliques, a contradiction.
  This proves the claim.

  Using Lemma~\ref{lem:threecliques}, we may observe that $J'$ and $z_{1},z_{2}$ satisfy all conditions of~\cite[Theorem~11.1]{ChudnovskyS2008-4}.
  Consider the six possible conclusions from applying~\cite[Theorem 11.1]{ChudnovskyS2008-4}.
  In the first case, note that $J'$ does not admit twins as $J$ does not and that $J'$ does not admit a W-join by the claim, a contradiction.
  In the second case, $(J,Z) \in \mc{Z}_{1}$.
  In the third case, $J'$ is a line trigraph of a graph $H$ such that $z_{1}$ and $z_{2}$ (as edges in $H'$) are both incident on the same vertex of degree two in $H'$.
  By `splitting' this vertex, we obtain a graph $H$ where $z_{1}$ and $z_{2}$ have no common end and are each incident on a pendant vertex of $H$.
  It follows that $J$ is a line trigraph of $H$.
  In the fourth case, it can be seen that $(J,Z) \in \mc{Z}_{3}$.
  In the fifth case, either $(J,Z) \in \mc{Z}_{4}$ or $J$ admits a generalized $2$-join, where the latter contradicts that $J$ does not admit a pseudo-$2$-join.
  In the sixth case, $(J,Z) \in \mc{Z}_{2}$.
  The lemma follows.
\end{proof}

\begin{corollary}
\label{cor:threecliques-zis2}
  Let $(J,Z)$ be a thickening of an almost-unbreakable stripe with $|Z| = 2$, such that~$J$ is the union of three strong cliques. 
  Then $(J,Z)$ is a thickening of a member of $\mc{Z}_{1} \cup \mc{Z}_{2} \cup \mc{Z}_{3} \cup \mc{Z}_{4}$ or $J$ is a thickening of a line trigraph.
\end{corollary}
\begin{proof}
  Suppose that $(J,Z)$ is a thickening of an almost-unbreakable stripe $(J',Z')$ with $|Z| = 2$ such that $J$ is the union of three strong cliques.
  Since each $z' \in Z'$ is strongly simplicial, it follows that $V(J) \setminus N[Z]$ is a thickening of $V(J') \setminus N[Z']$. As $V(J) \setminus N[Z]$ is empty or a strong clique by Lemma~\ref{lem:threecliques}, so is $V(J') \setminus N[Z']$.
  Hence, $J'$ is the union of three strong cliques.
  It follows from Lemma~\ref{lem:threecliques-zis2} that $(J',Z')$ is a member of $\mc{Z}_{1} \cup \mc{Z}_{2} \cup \mc{Z}_{3} \cup \mc{Z}_{4}$ or $J'$ is a line trigraph.
\end{proof}

\begin{lemma}
\label{lem:hexjoin}
  Let $(J,Z)$ be a stripe such that $J$ admits a hex-join.
  Then $J$ is the union of two strong cliques, or $V(J)\setminus N[Z] \not= \emptyset$ and $\alpha(J) \leq 3$.
\end{lemma}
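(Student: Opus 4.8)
The plan is to establish the two parts of the conclusion almost separately. The bound $\alpha(J)\le 3$ is immediate from the hypothesis, and the dichotomy ``$J$ is a union of two strong cliques, or $J\backslash N[Z]\neq\emptyset$'' reduces, using the bound $|Z|\le 2$ from Lemma~\ref{lem:hexjoin-z}, to unwinding the definition of a stripe. So the substantive work has already been carried out; what remains is bookkeeping over the possible sizes of $Z$.

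First I would unwind the hypothesis: ``$J$ admits a hex-join'' means there are strong cliques $A,B,C$ with $(J;A,B,C)$ a three-cliqued trigraph, so $V(J)=A\cup B\cup C$ is covered by three strong cliques; hence any strongly stable set meets each of $A,B,C$ in at most one vertex, giving $\alpha(J)\le 3$ no matter which alternative of the conclusion holds. Next I would invoke Lemma~\ref{lem:hexjoin-z} to get $|Z|\le 2$ and argue by contradiction, assuming $V(J)=N[Z]$ while $J$ is \emph{not} a union of two strong cliques. Since $(J,Z)$ is a stripe, each vertex of $Z$ is strongly simplicial, so $N[z]$ is a strong clique for $z\in Z$, and since $Z$ is strongly stable one checks that $N[Z]$ equals the union of the cliques $N[z]$, $z\in Z$. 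If $Z=\emptyset$ then $V(J)=N[Z]=\emptyset$, and the empty trigraph is (vacuously) a union of two strong cliques; if $Z=\{z\}$ then $V(J)=N[z]$ is a single strong clique; and if $Z=\{z_1,z_2\}$ then $V(J)=N[z_1]\cup N[z_2]$ is a union of two strong cliques (this last case can equally be quoted from Lemma~\ref{lem:threecliques}, since a hex-join exhibits $V(J)$ as a union of three strong cliques, forcing $V(J)\backslash N[Z]$ to be a strong clique, which if empty makes $V(J)=N[Z]$). In every case $J$ is a union of two strong cliques, contradicting the assumption; hence $J\backslash N[Z]\neq\emptyset$, and combined with $\alpha(J)\le 3$ this is the lemma.

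I do not anticipate a real obstacle: the only nontrivial ingredient is Lemma~\ref{lem:hexjoin-z}, and the hex-join is used only to bound $\alpha(J)$ (and to note that $J$ is covered by three strong cliques). The points needing care are purely organisational — handling the degenerate cases $Z=\emptyset$ and $J$ a single clique, agreeing on the convention that a single strong clique and the empty trigraph count as ``a union of two strong cliques'', and verifying the identity $N[Z]=\bigcup_{z\in Z}N[z]$ for $|Z|\le 2$, which follows from $Z$ being strongly stable and consisting of strongly simplicial vertices. Fixing these conventions up front keeps the $|Z|\le 1$ subcases from cluttering the argument.
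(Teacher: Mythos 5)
Your proof is correct and follows essentially the same route as the paper: both use Lemma~\ref{lem:hexjoin-z} to get $|Z|\le 2$, observe that each $N[z]$ is a strong clique since $z$ is strongly simplicial (so $V(J)=N[Z]$ would make $J$ a union of at most two strong cliques), and read $\alpha(J)\le 3$ off the three-clique cover furnished by the hex-join. You simply spell out the degenerate $|Z|\le 1$ cases and the identity $N[Z]=\bigcup_{z\in Z}N[z]$ that the paper leaves implicit.
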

\begin{proof}
  We may assume that $J$ is not the union of two strong cliques.
  Since the neighborhood of each $z \in Z$ is a strong clique and $|Z| \leq 2$ by Lemma~\ref{lem:hexjoin-z}, $V(J)\setminus N[Z] \not= \emptyset$.
  Moreover, $J$ is the union of three strong cliques by the definition of a hex-join, and thus $\alpha(J) \leq 3$.
\end{proof}

\subsection{Almost-Unbreakable Stripes and Indecomposable Members}
We prove strong relations between almost-unbreakability and indecomposability.

The following lemma is similar to~\cite[Theorem~10.4]{ChudnovskyS2008-5}, and essentially shows that it goes through under the weaker assumption that $(J,Z)$ is almost-unbreakable.
\begin{lemma}
\label{lem:stripe-indecomposable}
  Let $(J,Z)$ be a thickening of an almost-unbreakable stripe.
  Then either $J$ is a thickening of an indecomposable member of $\mathcal{S}_{0},\ldots,\mathcal{S}_{7}$ or $J$ admits a hex-join.
\end{lemma}
\begin{proof}
  Let $(J,Z)$ be a thickening of an almost-unbreakable stripe $(J',Z')$ (where possibly $J = J'$).
  Since $(J',Z')$ is almost-unbreakable, $J'$ does not admit a $0$-join, a pseudo-$1$-join, nor a pseudo-$2$-join.
  Following the observations near the definitions of pseudo-$1$-join and pseudo-$2$-join, this means that~$J'$ does not admit a $0$-join, a $1$-join, nor a generalized $2$-join.
  Then, following \cite[Theorem~10.4]{ChudnovskyS2008-5}, either $J'$ is a thickening of an indecomposable member of $\mathcal{S}_{0},\ldots,\mathcal{S}_{7}$ or $J'$ admits a hex-join.
  The lemma follows using Proposition~\ref{prp:join-thicken}.
\end{proof}
We prove the following corollary.

\begin{corollary}
  Let $(J,Z)$ be a thickening of an almost-unbreakable stripe.
  Then either $(J,Z)$ is a thickening of an almost-unbreakable stripe $(J',Z')$ such that $J'$ is an indecomposable member of one of $\mc{S}_{0},\ldots,\mc{S}_{7}$, or $J$ is the union of three strong cliques and $|Z| \leq 2$.
\end{corollary}
\begin{proof}
  Following Lemma~\ref{lem:stripe-indecomposable}, either $J$ is a thickening of an indecomposable member of $\mathcal{S}_{0},\ldots,\mathcal{S}_{7}$, or $J$ admits a hex-join.
  In the latter case, it follows from the definition of a hex-join and Lemma~\ref{lem:hexjoin-z} that $J$ is the union of three strong cliques and $|Z| \leq 2$.
\end{proof}

A vertex $v$ of a trigraph $G$ is \emph{near-simplicial} if $v$ is semiadjacent to some vertex and the set of strong neighbors of $v$ is a strong clique~\cite{ChudnovskyS2008-5}.

\begin{lemma}[{Chudnovsky and Seymour~\cite[Theorem~11.1]{ChudnovskyS2008-5}}]
\label{lem:indecomposable-simplicial}
  Let $J \in \mathcal{S}_{i}$ for some $i \in \{1,2,4,5,6,7\}$ and suppose that $J$ is indecomposable and not the union of two strong cliques.
  \begin{enumerate}
    \item\label{lem:indecomposable-simplicial:1} If $z \in V(J)$ is a simplicial vertex, let $Z$ be the set of all simplicial vertices of $J$. Then $|Z| \leq 2$ and $(J,Z) \in \mc{Z}_{j}$ for some $j \in \{2,5,7,8,9\}$.
    \item\label{lem:indecomposable-simplicial:2} If $z \in V(J)$ is a near-simplicial vertex semiadjacent to $z'$, let $Z = \{z,z'\}$.
      Then $(J',Z) \in \mathcal{Z}_{2} \cup \mathcal{Z}_{5}$, where $J'$ is the trigraph obtained from $G$ by making $z,z'$ strongly antiadjacent.
  \end{enumerate}
\end{lemma}
The following lemma and its corollary show that~\cite[Theorem~12.2]{ChudnovskyS2008-5} goes through under the weaker assumption that $(J,Z)$ is almost-unbreakable.
We repeat the proof only to be more self-contained, and to make the connection to our earlier lemmas explicit.

\begin{lemma}
\label{lem:indecomposable-indecomposable}
  Let $(J,Z)$ be an almost-unbreakable stripe with $Z \not= \emptyset$.
  If $J$ is a thickening of an indecomposable member of $\mathcal{S}_{i}$, where $i \in \{1,\ldots,7\}$, then $(J,Z) \in \mathcal{Z}_{j}$, where $j \in \{1,2,5,6,7,8,9\}$.
\end{lemma}
\begin{proof}
  Let $J'$ be a trigraph such that $J'$ is an indecomposable member of $\mathcal{S}_{i}$ for some $i \in \{1,\ldots,7\}$ and that there is a set $\mathcal{W} = \{W_{v'} \mid v' \in V(G')\}$ that is a thickening of $J'$ to $J$.
  By Lemma~\ref{lem:twins}, either $J$ does not admit twins or $J$ is a strong clique with $|V(J)| = 2$ and $|Z|=1$.
  In the latter case, $(J,Z) \in \mc{Z}_{6}$ trivially.
  Hence, we may assume that $J$ does not admit twins.
  If $J'$ is the union of two strong cliques, then so is $J$ by Proposition~\ref{prp:union-two}.
  If $J$ is the union of two strong cliques, then the result follows from Lemma~\ref{lem:strong-cliques}.
  If $i=3$, then $(J,Z) \in \mathcal{Z}_{1} \cup \mathcal{Z}_{6}$ by Lemma~\ref{lem:indecomposable-circular}.
  So assume that neither $J'$ nor $J$ is the union of two strong cliques and that $i\not=3$.

  We claim that $|W_{v'}| = 1$ for each $v' \in V(J')$.
  Suppose otherwise.
  Since $J$ does not admit twins, there exist $u',v'\in V(J')$ such that $u',v'$ are semiadjacent and $(W_{u'},W_{v'})$ is a W-join in $J$.
  As $(J,Z)$ is almost-unbreakable, without loss of generality, $Z \cap W_{u'} \not= \emptyset$.
  Because each $z \in Z$ is strongly simplicial, $u'$ is near-simplicial in $J'$.
  This, combined with the assumption that $J'$ is not the union of two strong cliques, $i\not=3$, and $J'$ is indecomposable, implies by Lemma~\ref{lem:indecomposable-simplicial}:\ref{lem:indecomposable-simplicial:2} that $(J'',Z'')$, where~$J''$ is obtained from $J'$ by making $u',v'$ strongly antiadjacent and $Z'' = \{u',v'\}$, is a stripe.
  Hence, $v'$ is also near-simplicial in $J'$ and no vertex of $J'$ is adjacent to both $u'$ and $v'$.
  It follows that $A = W_{u'}$ and $B = W_{v'}$ satisfy all the conditions of Proposition~\ref{prp:indecomposable-helper} in $J$, and thus Proposition~\ref{prp:indecomposable-helper} implies that $J$ is the union of two strong cliques, a contradiction.
  This proves the claim.

  Following the claim, $J$ and $J'$ are isomorphic.
  Hence, $J \in \mathcal{S}_{i}$. Since $Z \not= \emptyset$, $J$ has a simplicial vertex.
  By Lemma~\ref{lem:indecomposable-simplicial}:\ref{lem:indecomposable-simplicial:1}, $(J,Z) \in \mathcal{Z}_{j}$ where $j \in \{2,5,7,8,9\}$.
  The lemma follows.
\end{proof}
The following corollary is immediate.
\begin{corollary}
  Let $(J,Z)$ be an almost-unbreakable stripe such that $Z \not= \emptyset$ and $J$ is a thickening of an indecomposable member of one of $\mathcal{S}_{1},\ldots,\mathcal{S}_{7}$.
  Then $|Z| \leq 2$.
\end{corollary}

\subsection{Stability Numbers}
We prove bounds on the stability number of certain trigraphs.
We start with the following observation.

\begin{proposition}[{Hermelin~\etal\cite[Proposition~2]{HermelinMvL2014}}] \label{prp:independent:observation}
  Let $G$ be a trigraph that is a thickening of a trigraph $G'$.
  Then $\alpha(G) = \alpha(G')$.
\end{proposition}

\begin{lemma}
  \label{lem:independent}
  Let $G$ be a trigraph.
  If $G \in \mc{S}_{1} \cup \mc{S}_{4} \cup \mc{S}_{5} \cup \mc{S}_{6} \cup \mc{S}_{7}$, then $\alpha(G) \leq 3$.
  If $G \in \mc{S}_{2}$, then $\alpha(G) \leq 4$.
\end{lemma}
\begin{proof}
  The result for the case that $G \in \mc{S}_{2}$ was shown by Hermelin~\etal\cite[Proposition~1]{HermelinMvL2014}.

  Throughout, let $G'$ be the graph obtained from $G$ by removing all semi-edges.
  Note that deleting (semi-)edges can only increase $\alpha$, and thus $\alpha(G) \leq \alpha(G')$.
  Therefore, we show an upper bound on~$\alpha(G')$.
  Let $I$ be any stable set of $G'$ and suppose that $|I| > 3$.
  Consider the various cases depending on to which class of graphs $G$ belongs:
  \begin{itemize}
    \item[$\mathcal{S}_{1}$:] Consider the definition of $\mc{S}_{1}$.
      Since deleting vertices or adding (semi)edges can only reduce~$\alpha$, it suffices to show that $\alpha(G) \leq 3$ if $G = G_{0}$.
      So let $v_{1},\ldots,v_{12}$ be as in the definition of~$G_{0}$.
      If $v_{11}, v_{12} \in I$, then $|I| = 2$, a contradiction.
      Let $I' = I \cap \{v_{1},\ldots,v_{10}\}$ and let $J$ denote the set of indices of the vertices in $I'$.
      If $J$ contains only even or only odd integers, then $|I'| \leq 2$ as $v_{i}$ is adjacent to $v_{i+2}$ (indices modulo $10$) for $1 \leq i \leq 10$.
      Hence, $|I| \leq 3$, a contradiction.
      But then $J$ contains both odd and even integers and thus $v_{11},v_{12} \not\in I$. If $j \in J$, then $j-2, j-1, j+1, j+2 \not\in J$ (integers modulo $10$).
      Hence, $|I| = |J| \leq \lfloor 10/3 \rfloor = 3$, a contradiction.
      Therefore, $\alpha(G) \leq \alpha(G') \leq 3$.
    \item[$\mathcal{S}_{4}$:] Let $H, h_{1},\ldots,h_{7}$ be as in the definition of $\mathcal{S}_{4}$.
      Let $E_{6}$ denote the set of edges of $H$ incident with $h_{6}$ and let $x$ be the vertex added to $L(H)$ to obtain $G$. Note that $x$ is strongly adjacent to the edges and chords of the cycle $C = \{h_{1},\ldots,h_{5}\}$; denote this set of edges and chords by $E(C)$.
      Observe that $V(G') = E_{6} \cup E(C) \cup \{x\}$.
      Since $E_{6}$ is a strong clique in $G'$, $|I \cap E_{6}| \leq 1$.
      Hence, if $x \in I$, then $|I| \leq 2$, a contradiction.
      Otherwise, as $G'[E(C)]$ is can be covered by five strong cliques such that each vertex is in at least two of these cliques, $|I \cap E(C)| \leq 2$ and thus $|I| \leq 3$, a contradiction.
      Therefore, $\alpha(G) \leq \alpha(G') \leq 3$.
    \item[$\mathcal{S}_{5}$:] Let $A, B, C, d_{1},\ldots,d_{5},X$ be as in the definition of $\mathcal{S}_{5}$.
      Since $A,B,C$ are strong cliques, $|I \cap (A \cup B \cup C)| \leq 3$.
      Suppose that $|I \cap (A \cup B \cup C)| = 3$.
      Then $I$ has precisely one vertex from each of $A,B,C$, say $a_{i}, b_{j}, c_{k}$ respectively.
      Since $c_{k}$ is strongly adjacent to $b_{j}$ if and only if $j \not= k$, $j = k$.
      Similarly, $i = k$.
      But then $i = j$, thus $a_{i}$ and $b_{j}$ are adjacent in $G$.
      By the construction of $G'$, $a_{i}$ and $b_{j}$ must be semiadjacent in $G$ and thus  $c_{k} \in X$.
      Hence, $c_{k} \not\in V(G)$ and thus $c_{k} \not\in V(G')$, a contradiction.
      Hence, $|I \cap (A \cup B \cup C)| \leq 2$.
      Note that $d_{3},d_{4},d_{5}$ are pairwise strongly adjacent, so $|I \cap \{d_{3},d_{4},d_{5}\}| \leq 1$.
      Therefore, $d_{1} \in I$ or $d_{2} \in I$.
      As $d_{1}$ is strongly complete to $A \cup B \cup C$, $d_{1} \in I$ implies that $|I| \leq 3$, a contradiction.
      Hence, $d_{1} \not\in I$ and $d_{2} \in I$.
      As $d_{2}$ is strongly complete to $A \cup B$, $|I \cap (A \cup B \cup C)| \leq 1$, implying that $|I| \leq 3$, a contradiction. Therefore, $\alpha(G) \leq \alpha(G') \leq 3$.
    \item[$\mathcal{S}_{6}$:] Observe that $G$ is the union of three strong cliques, and thus so is $G'$.
      Therefore, $|I| \leq 3$ and $\alpha(G) \leq \alpha(G') \leq 3$.
    \item[$\mathcal{S}_{7}$:] Since for any $X \subseteq V(G)$ with $|X| = 4$, at least two pairs of vertices in $X$ are strongly adjacent, the same holds with respect to $G'$.
      Applying this to any subset of $I$ of size four, we obtain a contradiction to the assumption that $I$ is stable.
\end{itemize}
This completes the proof.
\end{proof}
From this lemma and Proposition~\ref{prp:independent:observation}, we immediately obtain the following corollary.

\begin{corollary}
\label{cor:independent}
  Let $G$ be a trigraph that is a thickening of a trigraph $G'$.
  If $G' \in \mc{S}_{1} \cup \mc{S}_{4} \cup \mc{S}_{5} \cup \mc{S}_{6} \cup \mc{S}_{7}$, then $\alpha(G) \leq 3$.
  If $G' \in \mc{S}_{2}$, then $\alpha(G) \leq 4$.
\end{corollary}

\subsection{Supporting Lemma}
Finally, we present a crucial supporting lemma.
Although the lemma is ours, the proof is somewhat inspired by the proof of~\cite[Theorem~7.2]{ChudnovskyS2008-5}.

\begin{lemma}
\label{lem:tos2}
  Let $G$ be a connected claw-free graph with $\alpha(G) > 3$ such that $G$ does not admit twins or proper W-joins.
  If $G$ is not a line graph nor a proper circular-arc graph, then $G$ is a thickening of a member of $\mathcal{S}_{2}$, or $G$ admits a pseudo-$1$-join or a pseudo-$2$-join.
\end{lemma}
\begin{proof}
  Suppose that $G$ does not admit a pseudo-$1$-join or a pseudo-$2$-join.
  Hence, $G$ does not admit a $1$-join or a generalized $2$-join.
  As $G$ is connected, $G$ does not admit a $0$-join neither.
  Since $\alpha(G) > 3$, $G$ is not the union of three strong cliques, and thus $G$ does not admit a hex-join.
  Let $G'$ be a trigraph with $|V(G')|$ minimum among all trigraphs of which $G$ is a thickening (\ie $G$ is a thickening of $G'$). Observe that $G'$ does not admit a $0$-join, a $1$-join, a generalized $2$-join, nor a hex-join, as~$G$ does not.
  Moreover, $G'$ does not admit twins nor W-joins, as $|V(G')|$ is minimum.
  Hence, $G'$ is indecomposable, and thus $G' \in \mathcal{S}_{0} \cup \cdots \cup \mathcal{S}_{7}$ following Theorem~\ref{thm:chudsey-main}.
  From Corollary~\ref{cor:independent} and the assumption that $\alpha(G) > 3$, it follows that $G' \in \mathcal{S}_{0} \cup \mathcal{S}_{2} \cup \mathcal{S}_{3}$ and thus $G$ is a thickening of a member of $\mathcal{S}_{0} \cup \mathcal{S}_{2} \cup \mathcal{S}_{3}$.

  Suppose that $G$ is a thickening of a member of $\mathcal{S}_{0}$.
  Following Lemma~\ref{lem:linetrigraph-join2} and the fact that $\alpha(G) > 3$, $G$ is a thickening of a line graph.
  As $G$ does not admit twins, $G$ is in fact a line graph, a contradiction.

  Suppose that $G$ is a thickening of a member of $\mathcal{S}_{3}$.
  Since $G$ does not admit a proper W-join, it follows from Lemma~\ref{lem:circular} that $G$ is a proper circular-arc graph, a contradiction.

  Therefore, $G$ is a thickening of a member of $\mathcal{S}_{2}$, and the lemma follows.
\end{proof}

\section{An Algorithmic Decomposition for Claw-Free Graphs}
\label{sec:decomp}
In this section, we obtain our algorithmic decomposition theorem for claw-free graphs following from the decomposition approach for claw-free graphs by Chudnovsky and Seymour~\cite{ChudnovskyS2008-5}. 
The main idea is to decompose the graph along $0$-joins, pseudo-$1$-joins, and pseudo-$2$-joins, so that we obtain a strip-structure of the graph where each strip is a spot or a thickening of an almost-unbreakable stripe.
Almost-unbreakable stripes that are line graphs are decomposed even further into trivial line graph strips (spots or two-vertex stripes).
We then apply the supporting structural lemmas of Sect.~\ref{sec:structure} to show that the other strips indeed belong to a few basic graph classes, \ie the ones described in Sect.~\ref{sec:defs:stripes}.

As before, given a graph $G$, we use $n = |V(G)|$ and $m = |E(G)|$.

\subsection{Finding a Strip-Structure}
We give an algorithm that finds a strip-structure of a claw-free graph such that the resulting strips are spots or thickenings of almost-unbreakable stripes (recall Definition~\ref{def:almost-unbreakable}).
The proof of this theorem is a combination of the observation that~\cite[Theorem~8.1 and Theorem 9.1]{ChudnovskyS2008-5} go through if only we need to decompose to spots and almost-unbreakable stripes, and the algorithms to find twins and joins that we developed in Sect.~\ref{sec:joins}.
We give (most of) the proof only to be self-contained, and to make the dependence on the algorithms that find twins and joins explicit.

\begin{theorem}
\label{thm:strip-structure}
  Every claw-free trigraph $G$ admits a purified strip-structure with nullity zero such that all its strips are either spots or thickenings of almost-unbreakable stripes.
  If $G$ is a graph, such a strip-structure can be found in $\stripTime$ time.
\end{theorem}
\begin{proof}
  Let $H$ be such that $V(H) = \emptyset$, $E(H) = \{F\}$, and the incidence relation is empty, and let $\eta$ be such that $\eta(F) = V(G)$.
  Then $(H,\eta)$ is a purified strip-structure for $G$ with nullity zero.

  We now define five conditions on the strips of $H$; when a particular condition is met for a strip $F\in E(H)$, we apply a transformation to the strip-structure.
  We call a condition and its transformation a \emph{rule}.
  We iteratively apply the rules: we do not apply rule $i$ until rule $i-1$ cannot be applied to any strip.
  Moreover, after applying a rule, we check for rule $1$ again, etc., until no more rules can be applied to any strip.
  The rules closely follow those proposed by Chudnovsky and Seymour~\cite[Theorem~8.1 and Theorem~9.1]{ChudnovskyS2008-5}.

  Consider any $F \in E(H)$ and its corresponding strip $(J,Z)$.
  Assume that $(J,Z)$ is not a spot.
  Let $\overline{F} = \{h_{1}, \ldots,h_{k}\}$ and let $Z = \{z_{1},\ldots,z_{k}\}$.
  \begin{quote}
    (1) \emph{$F$ is not purified}
  \end{quote}
  If $k \leq 1$, then $F$ is purified.
  So suppose, \wloge that $k \geq 2$ and $\eta(F,h_{1}) \cap \eta(F,h_{2}) \not= \emptyset$.
  Let $W = \eta(F,h_{1}) \cap \eta(F,h_{2})$.
  If $k \geq 3$, then $W \cap \eta(F,h_{i}) = \emptyset$ for $3 \leq i \leq k$ by the definition of a circus.
  Moreover, again by the definition of a circus, $W$ is strongly anticomplete to $\eta(F)\setminus(\eta(F,h_{1}) \cup \eta(F,h_{2}))$.
  Finally, by the definition of a strip-structure, $W$ is a strong clique of $G$.

  Let $W = \{w_{1},\ldots,w_{\ell}\}$.
  Create a new edge $F'_{i}$ for each $w_{i}$ and construct a new strip-structure $(H',\eta')$ for $G$ as follows:
  \begin{itemize}
    \item $V(H') = V(H)$ and $E(H') = E(H) \cup \{F'_{1},\ldots,F'_{\ell}\}$,
    \item for each $F_{0} \in E(H)$ and $h \in V(H)$, $F_{0}$ is incident with $h$ in $H'$ if and only if they are incident in $H$,
    \item each $F'_{i}$ is incident only with $h_{1},h_{2}$ for all $1 \leq i \leq \ell$,
    \item for each $F_{0} \in E(H) \setminus\{F\}$, $\eta'(F_{0}) = \eta(F_{0})$ and $\eta'(F_{0},h) = \eta(F_{0},h)$ for all $h \in \overline{F_{0}}$,
    \item $\eta'(F) = \eta(F)\setminus W$, $\eta'(F,h_{1}) = \eta(F,h_{1})\setminus W$, $\eta'(F,h_{2}) = \eta(F,h_{2})\setminus W$, and $\eta'(F,h_{i}) = \eta(F,h_{i})$ for all $3 \leq i \leq k$,
    \item $\eta'(F'_{i}) = \{w_{i}\}$, $\eta'(F'_{i},h_{1}) = \{w_{i}\}$, and $\eta'(F'_{i},h_{2}) = \{w_{i}\}$ for all $1 \leq i \leq \ell$.
  \end{itemize}
  If $\eta(F) = W$, we remove $F$ from $H'$.
  It can be quickly verified that $(H',\eta')$ is a strip-structure for $G$.
  Furthermore, $F'_{1},\ldots,F'_{\ell}$ are all purified.
  Finally, note that $(H',\eta')$ can be constructed in linear time.

  \medskip
  Observe that the nullity of $(H',\eta')$ might be higher than the nullity of $(H,\eta)$.
  This happens only if $\eta(F,h_{1})=W$ or $\eta(F,h_{2}) = W$.
  Hence, the nullity of $(H',\eta')$ is at most two more than the nullity of $(H,\eta)$.
  Thus, we need a rule to reduce the nullity.

  \begin{quote}
    (2) \emph{$\eta(F,h_{i}) = \emptyset$ for some $h_{i} \in \overline{F}$}
  \end{quote}
   Construct a new strip-structure $(H',\eta')$ for $G$ as follows.
  \begin{itemize}
    \item $V(H') = V(H)$ and $E(H') = E(H)$,
    \item for each $F_{0} \in E(H)\setminus\{F\}$ and $h \in V(H)$, $F_{0}$ is incident with $h$ in $H'$ if and only if they are incident in $H$,
    \item for each $h \in V(H) \setminus\{h_{i}\}$, $F$ is incident with $h$ in $H'$ if and only if they are adjacent in $H$ (note that this implies that $F$ is not incident with $h_{i}$ in $H'$),
    \item for each $F_{0} \in E(H) \setminus\{F\}$, $\eta'(F_{0}) = \eta(F_{0})$ and $\eta'(F_{0},h) = \eta(F_{0},h)$ for all $h \in \overline{F_{0}}$,
    \item $\eta'(F) = \eta(F)$ and $\eta'(F,h) = \eta(F,h)$ for all $h \in \overline{F}\setminus\{h_{i}\}$.
  \end{itemize}
  Clearly, $(H',\eta')$ is a strip-structure for $G$.
  Note that $(H',\eta')$ can be constructed in linear time.

  \medskip
  After exhaustively applying Rule (1) and (2), $(H,\eta)$ is a purified strip-structure of nullity zero. In particular, $(J,Z)$ is a stripe.

  \begin{quote}
    (3) \emph{$J$ admits a $0$-join $(V_{1},V_{2})$}
  \end{quote}
  For $j = 1,2$, let $Z_{j} = Z \cap V_{j}$ and let $P_{j} = \{h_{i} \mid 1 \leq i \leq k \mbox{\ and\ } z_{i} \in Z_{j}\}$. Clearly, $P_{1} \cap P_{2} = \emptyset$ and $P_{1} \cup P_{2} = \overline{F}$.
  As $(H,\eta)$ has nullity zero, each $z_{i}$ has a neighbor in $J \setminus Z$.
  Hence $V_{1} \setminus Z_{1} \not= \emptyset$ and $V_{2} \setminus Z_{2} \not= \emptyset$.

  Now create two new edges $F_{1}'$ and $F_{2}'$ and let $(H',\eta')$ be obtained from $(H,\eta)$ as follows.
  \begin{itemize}
    \item $V(H') = V(H)$ and $E(H') = (E(H) \setminus\{F\}) \cup \{F_{1}', F_{2}'\}$,
    \item for each $F_{0} \in E(H) \setminus\{F\}$ and $h \in V(H)$, $F_{0}$ is incident with $h$ in $H'$ if and only if they are incident in $H$,
    \item for $j=1,2$ and $h \in V(H)$, $F_{j}'$ is incident with $h$ in $H'$ if and only if $h \in P_{j}$,
    \item for all $F_{0} \in E(H) \setminus\{F\}$, $\eta'(F_{0}) = \eta(F_{0})$ and $\eta'(F_{0},h) = \eta(F_{0},h)$ for all $h \in \overline{F_{0}}$,
    \item for $j=1,2$, $\eta'(F_{j}') = V_{j}\setminus Z_{j}$ and $\eta'(F_{j}',h) = \eta(F,h)$ for all $h \in P_{j}$.
  \end{itemize}
  Observe that $(H',\eta')$ is again a strip-structure for $G$ and that it has the same nullity as $(H,\eta)$.
  Note that $(H',\eta')$ can be constructed in linear time.

  \begin{quote}
    (4) \emph{$J$ admits a pseudo-$1$-join $(V_{1},V_{2})$}
  \end{quote}
  As $J$ has no $0$-join, $A_{i} \not= \emptyset$ for $i=1,2$.
  For $j=1,2$, since $V_{j}$ is not strongly stable, $V_{j} \setminus Z \not= \emptyset$.
  For $j=1,2$, let $Z_{j} = V_{j} \cap Z$ and let $P_{j} = \{h_{i} \mid 1 \leq i \leq k \mbox{\ and\ } z_{i} \in Z_{j}\}$.
  Clearly, $P_{1} \cap P_{2} = \emptyset$ and $P_{1} \cup P_{2} = \overline{F}$.
  Moreover, $V_{j} \setminus Z_{j} \not= \emptyset$.

  If $Z \cap (A_{1} \cup A_{2}) \not= \emptyset$, suppose that $z_{1} \in A_{1}$.
  Then since $A_{1} \cup A_{2}$ is a strong clique and $(H,\eta)$ is purified, $Z \cap (A_{1} \cup A_{2}) = \{z_{1}\}$ and $Z \setminus\{z_{1}\}$ is strongly anticomplete to $A_{1} \cup A_{2}$.
  Furthermore, $z_{1}$ is strongly anticomplete to $V(J) \setminus (A_{1} \cup A_{2})$, since every vertex in this set has an antineighbor in $A_{1} \cup A_{2}$ and $z_{1}$ is strongly simplicial.

  Now create two new edges $F_{1}'$ and $F_{2}'$.
  If $Z \cap (A_{1} \cup A_{2}) = \emptyset$, then create a new vertex $h'$ as well; otherwise, assume that $z_{1} \in A_{1}$ and let $h' = h_{1}$.
  Let $(H',\eta')$ be obtained from $(H,\eta)$ as follows:
  \begin{itemize}
    \item $V(H') = V(H) \cup \{h'\}$, and $E(H') = (E(H) \setminus\{F\}) \cup \{F_{1}', F_{2}'\}$,
    \item for each $F_{0} \in E(H) \setminus\{F\}$ and $h \in V(H)$, $F_{0}$ is incident with $h$ in $H'$ if and only if they are incident in $H$,
    \item for $j=1,2$ and $h \in V(H)$, $F_{j}'$ is incident with $h$ in $H'$ if and only if $h \in P_{j}$,
    \item $F_{1}'$ and $F_{2}'$ are incident with $h'$,
    \item for all $F_{0} \in E(H) \setminus\{F\}$, $\eta'(F_{0}) = \eta(F_{0})$ and $\eta'(F_{0},h) = \eta(F_{0},h)$ for all $h \in \overline{F_{0}}$,
    \item for $j=1,2$, $\eta'(F_{j}') = V_{j}\setminus Z_{j}$ and $\eta'(F_{j}',h) = \eta(F,h)$ for all $h \in P_{j} \setminus \{h'\}$,
    \item for $j=1,2$, $\eta'(F_{j}',h') = A_{j}\setminus Z_{j}$.
  \end{itemize}
  Observe that since $J$ does not admit a $0$-join, $A_{j}\setminus Z_{j} \not= \emptyset$ for $j = 1,2$.
  It can now be readily verified that $(H',\eta')$ is a strip-structure of $G$ and that it has the same nullity as $(H,\eta)$.
  Note that $(H',\eta')$ can be constructed in linear time.
  \begin{quote}
    (5) \emph{$J$ admits a pseudo-$2$-join}
  \end{quote}
  We again compute a new strip-structure of $G$ in a manner similar to Rule~(4).
  To be precise, we apply the procedure as described in Chudnovsky and Seymour~\cite[Theorem~9.1:~Condition~(3)]{ChudnovskyS2008-5}.
  Note that this procedure takes linear time and does not increase the nullity.

  Let $(H,\eta)$ be the resulting strip-structure and let $(J,Z)$ be any stripe of the strip-decomposition.
  Choose $(J', Z')$ with $|V(J')|$ minimum such that $(J,Z)$ is a thickening of $(J',Z')$.
  Since $|V(J')|$ is minimum, no two vertices in $V(J') \setminus Z'$ are twins in $J'$ and there is no W-join $(A,B)$ in $J'$ with $Z' \cap A, Z' \cap B = \emptyset$.
  Moreover, by Rule~(3)-(5), $J$ does not admit a $0$-join, a pseudo-$1$-join, or a pseudo-$2$-join, and thus neither does $J'$ by Proposition~\ref{prp:join-thicken}.
  This shows that $(J,Z)$ is a thickening of an almost-unbreakable stripe.

  Observe that in applying one of Rules~(1),(3)-(5), the number of edges of the strip-structure increases by at least one.
  By definition, a strip-structure can have at most $|V(G)|$ edges, and thus we need only apply these rules at most $|V(G)|$ times. 
  Rule~(2) only needs to be applied after Rule~(1) has been applied, as all other rules do not increase the nullity.
  Since Rule~(1) can increase the nullity by at most two, Rule~(2) is applied at most twice as often as Rule~(1).
  Furthermore, Rule~(1) only needs to be applied after one of Rules (3)-(5) have been applied.
  Hence, Rule~(1) is applied at most $|V(G)|/2$ times and Rule~(2) at most $|V(G)|$ times.

  Note that applying the transformation of a rule takes at most linear time.
  If $G$ is a graph, then by Corollary~\ref{cor:join-finder}, it takes $\joinTime$ time to test the condition of a rule. Hence, the total run time of the algorithm is $\stripTime$.
\end{proof}

\subsection{Decomposing Line Graphs}
We first show that we can find a good strip-structure of line graphs in a fast and easy way, without using Theorem~\ref{thm:strip-structure}.
The arguments here resemble those in~\cite[Theorem~8]{HermelinMvL2014}. For this, we need the following definition.

\begin{definition}
\label{def:linestrip}
  Let $(J,Z)$ be a strip such that $J$ is a line graph.
  Then this is a \emph{trivial line graph strip} if $(J,Z)$ is a spot or if $|V(J)| = 2$ and $|Z|=1$. Otherwise, we call $(J,Z)$ a \emph{nontrivial line graph strip}.
\end{definition}
Observe that if $(J,Z)$ is a trivial line graph strip, then $|V(J) \setminus Z| = 1$.

\begin{lemma}
\label{lem:line-structure}
  Let $G$ be a connected line graph.
  Then in linear time, we can find a strip-structure of $G$ such that each strip is a trivial line graph strip.
\end{lemma}
\begin{proof}
  Recall that a vertex of $G$ is \emph{pendant} if it has degree~$1$. 
  Compute the pre-image $G'$ of $G$ (\ie a graph $G'$ such that $G$ is the line graph of $G'$); this takes linear time~\cite{Roussopoulos1973}.
  Then consider the following strip-graph $H$: $V(H)$ is equal to the set of vertices of $G'$ that are not pendant; $E(H)$ is obtained from $E(G')$ be removing a vertex from an edge of $E(G')$ if that vertex is pendant. 
  For each $F \in E(H)$, let $\eta(F)$ be the set containing just the vertex of $J$ that corresponds to the edge of $G$ that corresponds to $F$.
  For each $F \in E(H)$ and each $h \in \overline{F}$, let $\eta(F,h) = \eta(F)$.
  We then output the strip-structure $(H,\eta)$.
  Observe that $(H,\eta)$ is indeed a valid strip-structure, and that each strip is trivial line graph strip.
\end{proof}
We then show that if we are given a strip-structure of a claw-free graph, then we can decompose each nontrivial line graph strip into trivial line graph strips using Lemma~\ref{lem:line-structure}.

\begin{lemma}
\label{lem:strips:line}
  Let $G$ be a graph and let $(H,\eta)$ be a purified strip-structure of nullity zero of $G$ such that each strip is connected and either a spot or a stripe.
  Then, in linear time, we can find a purified strip-structure of nullity zero of $G$ such that if a strip $(J,Z)$ of $(H,\eta)$ is not a nontrivial line graph strip, then $(J,Z)$ is a strip of the resulting strip-structure, and all other strips of the resulting strip-structure are trivial line graph strips.
\end{lemma}
\begin{proof}
  Consider each strip $(J,Z)$ of $(H,\eta)$ in turn.
  If $(J,Z)$ is a trivial line graph strip, then we proceed to the next strip.
  Since spots are trivial line graph strips, we may thus assume that $(J,Z)$ is a stripe.
  We then check whether $J$ is a line graph; this takes linear time~\cite{Roussopoulos1973}.
  If $J$ is indeed a line graph, then $(J,Z)$ is a nontrivial line graph strip.
  We now aim to `shatter' $(J,Z)$ into several trivial line graph strips.

  Suppose that $J$ is the line graph of $J'$.
  Apply Lemma~\ref{lem:line-structure} to compute a strip-structure $(H',\eta')$ of $J$ such that each strip is a trivial line graph strip. We now incorporate $(H',\eta')$ into $(H,\eta)$. 
  For each $z \in Z$, let $F_{z}$ denote the edge of $H'$ such that $\eta'(F_{z}) = \{z\}$ and let $h_z$ denote the vertex of $H$ that corresponds to $z$.
  Let $F$ denote the edge of $H$ that corresponds to the strip $(J,Z)$. Now construct a new strip-structure $(H'',\eta'')$ by merging $(H,\eta)$ and $(H',\eta')$ as follows:
  \begin{itemize}
    \item $V(H'') = V(H) \cup (V(H') \setminus (\bigcup_{z \in Z} \overline{F_{z}}))$ and $E(H'') = (E(H') \setminus (\bigcup_{z \in Z} \{F_{z}\})) \cup (E(H) \setminus \{F\})$,
    \item for each $F_{0} \in E(H) \setminus \{F\}$ and $h \in V(H)$, $F_{0}$ is incident with $h$ in $H''$ if and only if they are incident in $H$,
    \item for each $F_{0} \in E(H') \setminus (\bigcup_{z \in Z} \{F_{z}\})$ and $h \in V(H')\setminus (\bigcup_{z \in Z} \overline{F_{z}})$, $F_{0}$ is incident with $h$ in $H''$ if and only if they are incident in $H'$,
    \item for each $F_{0} \in E(H') \setminus (\bigcup_{z \in Z} \{F_{z}\})$, $F_{0}$ is incident with $h_{z}$ in $H''$ if and only if $F_{0}$ is incident with a vertex of $\overline{F_{z}}$ in $H'$,
    \item for each $F_{0} \in E(H) \setminus \{F\}$, $\eta''(F_{0}) = \eta(F_{0})$ and $\eta''(F_{0},h) = \eta(F_{0},h)$ for each $h \in \overline{F_{0}}$,
    \item for each $F_{0} \in E(H') \setminus (\bigcup_{z \in Z} \{F_{z}\})$, $\eta''(F_{0}) = \eta'(F_{0})$ and $\eta''(F_{0},h) = \eta'(F_{0},h)$ for each $h \in \overline{F_{0}} \setminus (\bigcup_{z \in Z} \overline{F_{z}})$,
    \item for each $z \in Z$ and each $F_{0} \in E(H') \setminus (\bigcup_{z \in Z} \{F_{z}\})$ such that $\overline{F_{z}} \cap \overline{F_{0}} \not= \emptyset$, $\eta''(F_{0},h_{z}) = \eta'(F_{0}, h'_{z})$, where $\overline{F_{z}} \cap \overline{F_{0}} = \{h'_{z}\}$.
  \end{itemize}
  Since $(J,Z)$ is a stripe, no vertex of $J$ is adjacent to two or more vertices of $Z$ and $Z$ is strongly stable. Therefore, $(H'',\eta'')$ is indeed a valid strip-structure of $G$.
  Moreover, it is purified and has nullity zero.
  Observe also that each strip that corresponds to an edge of $E(H') \setminus (\bigcup_{z \in Z} F_{z})$ is a trivial line graph strip of $(H'',\eta'')$, and that each strip that corresponds to an edge of $E(H) \setminus \{F\}$ is still a strip of $(H'',\eta'')$.

  Proceeding iteratively, we indeed find a strip-structure of $G$ such that no strip is a nontrivial line graph strip.
  Moreover, it follows from the description of the algorithm that if a strip of $(H,\eta)$ was not a nontrivial line graph strip, it is still a strip of the resulting strip-structure. 
  Since each of the strips resulting from the `shattering' of a nontrivial line graph strip is a trivial line graph stripe and the `shattering' procedure takes linear time, the total run time of the algorithm is linear.
\end{proof}

\subsection{Auxiliary Algorithm}
Let $\omega$ denote the matrix-multiplication constant; currently $\omega < 2.373$~\cite{Williams2012,LeGall2014}.

\begin{lemma}
\label{lem:zis1:runtime}
  Let $(J,Z)$ be a stripe such that $J$ is a graph.
  We can check in $O(n^{\omega})$ time whether $|Z| = 1$, $V(J) \setminus N[Z] \not= \emptyset$, and $\alpha(J) \leq 3$.
\end{lemma}
\begin{proof}
  It can be checked in linear time whether $|Z|=1$ and $V(J) \setminus N[Z] \not= \emptyset$.
  Then there is a largest stable set of $J$ that contains the lone element $z$ of $Z$, as $z$ is simplicial.
  Hence, it remains to verify that there is no triangle in the complement of $J \setminus N[Z]$, which takes $O(n^{\omega})$ time~\cite{ItaiR1978}.
\end{proof}

\subsection{Main Theorems}
We now state our main structural and algorithmic results. 

\begin{theorem}
\label{thm:main-base}
  Let $G$ be a connected claw-free graph with $\alpha(G) > 3$ such that $G$ does not admit twins or proper W-joins. Then
  \begin{itemize}
    \item $G$ is a thickening of a member of $\mathcal{S}_{2}$ (\ie $G$ is a thickening of an XX-trigraph),
    \item $G$ is a proper circular-arc graph, or
    \item $G$ admits a strip-structure such that for each strip $(J,Z)$
    \begin{itemize}
      \item $(J,Z)$ is a trivial line graph strip, or
      \item $(J,Z)$ is a stripe for which $J$ is connected and
      \begin{itemize}
        \item $|Z|=1$, $\alpha(J) \leq 3$, and $V(J) \setminus N[Z] \not= \emptyset$,
        \item $|Z|=1$, $J$ is a proper circular-arc graph, and either $J$ is a strong clique or $\alpha(J) > 3$,
        \item $|Z|=2$ and $J$ is a proper interval graph, or
        \item $(J,Z)$ is a thickening of a member of $\mc{Z}_{2} \cup \mc{Z}_{3} \cup \mc{Z}_{4} \cup \mathcal{Z}_{5}$.
      \end{itemize}
    \end{itemize}
  \end{itemize}
  Moreover, we can distinguish the cases and find the strip-structure in $\stripTime$ time.
\end{theorem}
\begin{proof}
  We first check whether $G$ is a line graph; this takes linear time~\cite{Roussopoulos1973}.
  If $G$ is indeed a line graph, then we use Lemma~\ref{lem:line-structure} to compute, in linear time, a strip-structure of $G$ such that each strip is a trivial line graph strip.
  Note that such strips are either spots or proper circular-arc graphs with $|Z|=1$.
  Hence, we can output this strip-structure.

  We then check whether $G$ is a proper circular-arc graph; this takes linear time~\cite{DengHH1996}.
  If it is, then we output $G$.

  We may thus assume that $G$ is neither a line graph nor a proper circular-arc graph.
  By Lemma~\ref{lem:tos2}, this implies that $G$ is a thickening of a member of $\mathcal{S}_{2}$, or $G$ admits a pseudo-$1$-join or a pseudo-$2$-join.
  Following Corollary~\ref{cor:join-finder}, we can decide in $\joinTime$ time whether $G$ admits a pseudo-$1$-join or a pseudo-$2$-join.
  If not, then we output $G$, as $G$ must be a thickening of a member of $\mathcal{S}_{2}$.

  We may thus assume that $G$ admits a pseudo-$1$-join or a pseudo-$2$-join.
  By Theorem~\ref{thm:strip-structure}, we can find in $\stripTime$ time a purified strip-structure $(H,\eta)$ of nullity zero of $G$ such that all strips are spots or thickenings of almost-unbreakable stripes.
  Since $G$ is connected and thus does not admit a $0$-join, by inspecting the proof of Theorem~\ref{thm:strip-structure}, it follows that each edge of $H$ is incident on at least one vertex of $H$.
  Hence, each strip of the strip-structure has $|Z| \geq 1$.

  Now apply the algorithm of Lemma~\ref{lem:strips:line} in linear time and (by abuse of notation) call the resulting strip-structure $(H,\eta)$ as well.
  Observe that $(H,\eta)$ is purified, has nullity zero, and all its strips are trivial line graph strips or thickenings of almost-unbreakable stripes.
  Moreover, still each strip $(J,Z)$ of the strip-structure has $|Z| \geq 1$.

  Consider each strip $(J,Z)$ of the strip-structure in turn.
  We first check whether $(J,Z)$ is a trivial line graph strip; this takes constant time.
  If it is, then we can proceed to the next strip.
  Hence, we may assume that $(J,Z)$ is a thickening of an almost-unbreakable stripe.
  Since the stripe is almost-unbreakable, it follows that by the definition of thickenings that $J$ is connected.

  We then check whether $|Z| = 1$, $\alpha(J) \leq 3$, and $V(J)\setminus N[Z]\not= \emptyset$; this takes $O(|V(J)|^{\omega})$ time by Lemma~\ref{lem:zis1:runtime}.
  If it is, then we proceed to the next strip.

  We then check whether $|Z|=1$ and $J$ is a proper circular-arc graph; this takes linear time~\cite{DengHH1996}.
  If it is, then note that either $V(J) \setminus N[Z] = \emptyset$ and thus $J$ is a strong clique, or $\alpha(J) > 3$, and we proceed to the next strip.
  We then check whether $|Z|=2$ and $J$ is a proper interval graph; this takes linear time~\cite{DengHH1996}.
  If it is, then we proceed to the next strip. 
  \begin{cclaim}
  \label{c:main-base}
    At this stage, $(J,Z)$ is a thickening of a member of $\mc{Z}_{2} \cup \mc{Z}_{3} \cup \mc{Z}_{4} \cup \mc{Z}_{5}$.
  \end{cclaim}
  \begin{cproof}
    Observe that $(J,Z)$ is a thickening of some almost-unbreakable stripe $(J',Z')$.
    Note that $|V(J)| \geq 2$, as $|Z| \geq 1$ and each strip of $(J,Z)$ has nullity zero.
    If $|V(J)| = 2$ or $|V(J')| = 2$, then $|Z|=1$ and $J$ is a proper circular-arc graph, a contradiction at this stage.
    Hence, $|V(J)| > 2$ and $|V(J')| > 2$.

    Suppose that $J$ contains a proper W-join $(A,B)$ such that $Z \cap A, Z \cap B = \emptyset$.
    Then $(A,B)$ is also a proper W-join in $G$, a contradiction.
    Hence, by Lemma~\ref{lem:properWjoin}, $J$ does not admit a proper W-join.

    Suppose that $(J,Z)$ is a thickening of a member of $\mc{Z}_{1}$ or of $\mc{Z}_{6}$.
    Hence, by Lemma~\ref{lem:interval} respectively Lemma~\ref{lem:circular}, $J$ is a proper interval graph and $|Z|=2$ respectively $J$ is a proper circular-arc graph and $|Z|=1$.
    This is a contradiction at this stage, and thus $(J,Z)$ is not a thickening of a member of $\mc{Z}_{1} \cup \mc{Z}_{6}$. In particular, $(J',Z')$ is not a member of $\mc{Z}_{1} \cup \mc{Z}_{6}$.

    Suppose that $J$ is the union of two strong cliques.
    By Corollary~\ref{cor:strong-cliques}, $(J,Z)$ is a thickening of a member of $\mc{Z}_{1} \cup \mc{Z}_{6}$, a contradiction. Hence, $J$ is not the union of two strong cliques.
    Therefore, by Proposition~\ref{prp:union-two}, $J'$ is not the union of two strong cliques. 

    Suppose that $J$ is a thickening of a line trigraph.
    Since $J$ is not the union of two strong cliques, it follows from Lemma~\ref{lem:linetrigraph-join2} that $J$ is a thickening of a line graph.
    As $|V(J)| > 2$, it follows from Lemma~\ref{lem:twins} that $J$ does not admit twins.
    Hence, $J$ is in fact a line graph, a contradiction (at this stage, we can have neither trivial nor nontrivial line graph strips).
    Hence, $J$ is not a thickening of a line trigraph.
    In particular, $J'$ is not a line trigraph and thus not a line graph.
    We can then repeat the same argument to show that $J'$ is not even a thickening of a line trigraph.

    Suppose that $J$ admits a hex-join.
    Then $|Z| \leq 2$ by Lemma~\ref{lem:hexjoin-z}.
    Suppose that $|Z|=1$.
    Since $J$ is not the union of two strong cliques, it follows from Lemma~\ref{lem:hexjoin} that $V(J) \setminus N[Z] \not= \emptyset$ and $\alpha(J) \leq 3$, a contradiction at this stage.
    Hence, $|Z|=2$.
    Then using the fact that $J$ is not a thickening of a line trigraph and that $(J,Z)$ is not a thickening of a member of $\mc{Z}_{1}$, it follows from Corollary~\ref{cor:threecliques-zis2} that $(J,Z)$ is a thickening of a member of $\mc{Z}_{2} \cup \mc{Z}_{3} \cup \mc{Z}_{4}$ and the claim would follow.

    Suppose that $J$ does not admit a hex-join. 
    Then $J'$ does not admit a hex-join either by Proposition~\ref{prp:join-thicken}.
    Hence, by Lemma~\ref{lem:stripe-indecomposable}, the fact that $(J',Z')$ is a (trivial) thickening of itself, and the fact that $J'$ is not a thickening of a line trigraph, $J'$ is a thickening of an indecomposable member of $\mc{S}_{1},\ldots,\mc{S}_{7}$. 
    It follows from Lemma~\ref{lem:indecomposable-indecomposable} and the fact that $(J',Z')$ is not a member of $\mc{Z}_{1} \cup \mc{Z}_{6}$, that $(J',Z')$ is a member of $\mathcal{Z}_{i}$, where $i \in \{2,5,7,8,9\}$.

    Suppose that $(J',Z')$ is a member of $\mathcal{Z}_{i}$, where $i \in \{7,8,9\}$. We show that $\alpha(J) \leq 3$ and $V(J) \setminus N[Z] \not= \emptyset$.
    For suppose that $(J',Z')$ is a member of
    \begin{itemize}
      \item[$\mathcal{Z}_{7}$:] Since $J' \in \mathcal{S}_{4}$, $\alpha(J) \leq 3$ by Corollary~\ref{cor:independent}.
      Since the edge $h_{6}h_{7}$ is not incident with any edges of the cycle $h_{1}\ldots h_{5}$, $V(J) \setminus N[Z] \not= \emptyset$.
      \item[$\mathcal{Z}_{8}$:] Since $J' \in \mathcal{S}_{5}$, $\alpha(J) \leq 3$ by Corollary~\ref{cor:independent}.
        Since $d_{5}$ is not incident with the nonempty strong cliques $A,B,C$, it follows that $V(J) \setminus N[Z] \not= \emptyset$.
      \item[$\mathcal{Z}_{9}$:] Since $J' \in \mathcal{S}_{7}$ ($J'$ is antiprismatic), $\alpha(J) \leq 3$ by Corollary~\ref{cor:independent}.
        Moreover, $z$ is strongly antiadjacent to the nonempty strong cliques $A,B,C$, so $V(J) \setminus N[Z] \not= \emptyset$.
    \end{itemize}
    In each of these three cases, trivially $|Z| = 1$.
    We thus reach a contradiction at this stage. Hence, $(J',Z') \in \mc{Z}_{2} \cup \mc{Z}_{5}$, proving the claim.
  \end{cproof}
  Observe that the total run time of the described algorithm is dominated by the run time of the algorithm of Theorem~\ref{thm:strip-structure}, which is $\stripTime$.
  This proves the theorem.
\end{proof}
By slightly adapting the algorithm and the analysis, we obtain the following results.

\begin{theorem}
\label{thm:main-base-kernel}
  Let $G$ be a connected claw-free graph with $\alpha(G) > 3$ such that $G$ does not admit twins or proper W-joins. Then
  \begin{itemize}
    \item $G$ is a thickening of a member of $\mathcal{S}_{2}$ (\ie $G$ is a thickening of an XX-trigraph),
    \item $G$ is a proper circular-arc graph, or
    \item $G$ admits a strip-structure such that for each strip $(J,Z)$
    \begin{itemize}
      \item $(J,Z)$ is a trivial line graph strip, or
      \item $(J,Z)$ is a stripe for which $J$ is connected and
      \begin{itemize}
        \item $|Z|=1$, $\alpha(J) \leq 3$, and $V(J) \setminus N[Z] \not= \emptyset$,
        \item $|Z|=1$, $J$ is a proper circular-arc graph, and either $J$ is a strong clique or $\alpha(J) > 3$,
        \item $|Z|=2$ and $J$ is a proper interval graph,
        \item $|Z|=1$, $\alpha(J) = 4$, and $V(J) \setminus N[Z] \not= \emptyset$, or
        \item $|Z|=2$ and $(J,Z)$ is a thickening $\mc{W}$ of a member $(J',Z')$ of $\mc{Z}_{2} \cup \mc{Z}_{3} \cup \mc{Z}_{4} \cup \mathcal{Z}_{5}$.
          Moreover, we know $\mc{W}$, $(J',Z')$, and the class that $(J',Z')$ belongs to.
      \end{itemize}
    \end{itemize}
  \end{itemize}
  Moreover, we can distinguish the cases and find the strip-structure in $\stripTime$ time.
\end{theorem}
\begin{proof}
  We apply the same algorithm as in the proof of Theorem~\ref{thm:main-base}.
  However, we modify it at the stage that we consider each $(J,Z)$ of a strip-structure of $G$.
  At the end (\ie at Claim~\ref{c:main-base}), we prove that $(J,Z)$ is a thickening $\mc{W}$ of a member $(J',Z')$ of $\mc{Z}_{2} \cup \mc{Z}_{3} \cup \mc{Z}_{4} \cup \mathcal{Z}_{5}$.
  We now extend the algorithm as follows.

  If $|Z| = 1$, then it follows from the definitions of $\mc{Z}_{2}$, $\mc{Z}_{3}$, $\mc{Z}_{4}$, and $\mathcal{Z}_{5}$ that $(J,Z)$ is a thickening of a member of $\mc{Z}_{5}$.
  By Corollary~\ref{cor:independent}, this implies that $\alpha(J) \leq 4$.
  If $V(J) \setminus N[Z] = \emptyset$, then $J$ is a (strong) clique, and in particular, $J$ is a proper circular-arc graph, a contradiction at this stage.
  Hence, $V(J) \setminus N[Z] \not= \emptyset$. Then $\alpha(J) \leq 3$ would form a contradiction at this stage.
  Therefore, $\alpha(J) = 4$.

  If $|Z| = 2$, then we run the recognition algorithms of Lemma~\ref{lem:recog:z2}, Lemma~\ref{lem:recog:z3}, Lemma~\ref{lem:recog:z4}, and Lemma~\ref{lem:recog:z5}.
  Then, in linear time, we know $\mc{W}$, $(J',Z')$, and the class that $(J',Z')$ belongs to.
\end{proof}

\begin{theorem}
\label{thm:main-base2}
  Let $G$ be a connected claw-free graph with $\alpha(G) > 3$ such that $G$ does not admit twins or proper W-joins.
  Then
  \begin{itemize}
    \item $G$ is a thickening of a member of $\mathcal{S}_{2}$ (\ie $G$ is a thickening of an XX-trigraph),
    \item $G$ is a proper circular-arc graph, or
    \item $G$ admits a strip-structure such that for each strip $(J,Z)$
    \begin{itemize}
      \item $(J,Z)$ is a trivial line graph strip, or
      \item $(J,Z)$ is a stripe for which $J$ is connected and
      \begin{itemize}
        \item $1\leq |Z| \leq 2$, $\alpha(J) \leq 3$, and $V(J) \setminus N[Z] \not= \emptyset$,
        \item $|Z|=1$, $J$ is a proper circular-arc graph, and either $J$ is a strong clique or $\alpha(J) > 3$,
        \item $|Z|=2$, $J$ is a proper interval graph, and $\alpha(J) > 3$, or
        \item $(J,Z)$ is a thickening of a member of $\mathcal{Z}_{5}$.
      \end{itemize}
    \end{itemize}
  \end{itemize}
  Moreover, we can distinguish the cases and find the strip-structure in $\stripTime$ time.
\end{theorem}
\begin{proof}
  We apply the same algorithm as in the proof of Theorem~\ref{thm:main-base}.
  However, we modify it at the stage that we consider each $(J,Z)$ of a strip-structure of $G$.
  At the stage where we test whether $|Z|=1$, $\alpha(J) \leq 3$ and $V(J) \setminus N[Z] \not= \emptyset$, we additionally test for each strip $(J,Z)$ whether $|Z|=2$, $\alpha(J) \leq 3$ and $V(J) \setminus N[Z] \not= \emptyset$.
  Since there is a largest stable set of $J$ that contains both elements of~$Z$, it remains to verify that there is no nonedge in $J \setminus N[Z]$, which takes linear time.
  Hence, the run time of the algorithm remains $\stripTime$.

  We then modify the final part of the analysis, where $(J',Z')$ has the same properties as in the final part of the proof of Theorem~\ref{thm:main-base}.
  Recall that in that part of the analysis, $J$ is not the union of two strong cliques, and thus neither is $J'$.
  We show that $\alpha(J) \leq 3$ and $V(J)\setminus N[Z] \not=\emptyset$.
  Suppose that $(J',Z')$ is a member of
  \begin{itemize}
    \item[$\mathcal{Z}_{2}$:] Since $J' \in \mathcal{S}_{6}$, $\alpha(J) \leq 3$ by Corollary~\ref{cor:independent}.
      As the strong clique $C$ in the definition of $\mathcal{S}_{6}$ has $|C\setminus X| \geq 2$ and both $a_{0}$ and $b_{0}$ are strongly antiadjacent to $C$, it follows that $V(J) \setminus N[Z] \not= \emptyset$.
    \item[$\mathcal{Z}_{3}$:] Note that $J$ is the union of three nonempty strong cliques.
      Hence, $\alpha(J) \leq 3$ and $V(J) \setminus N[Z] \not= \emptyset$.
    \item[$\mathcal{Z}_{4}$:] Note that $J$ is the union of three nonempty strong cliques.
      Hence, $\alpha(J) \leq 3$ and $V(J) \setminus N[Z] \not= \emptyset$.
  \end{itemize}
  Since $|Z|=2$ in all three cases, we would obtain a contradiction at this stage.
\end{proof}
Using Corollary~\ref{cor:independent}, the definition of a trivial line graph strip, and Theorem~\ref{thm:main-base2}, we obtain the following.

\begin{theorem}
  Let $G$ be a connected claw-free graph with $\alpha(G) > 4$ such that $G$ does not admit twins or proper W-joins.
  Then
  \begin{itemize}
    \item $G$ is a proper circular-arc graph, or
    \item $G$ admits a strip-structure such that for each strip $(J,Z)$
    \begin{itemize}
      \item $(J,Z)$ is a spot, or
      \item $(J,Z)$ is a stripe for which $J$ is connected and
      \begin{itemize}
        \item $1\leq |Z| \leq 2$, $\alpha(J) \leq 4$, and $V(J) \setminus N[Z] \not= \emptyset$,
        \item $|Z|=1$, $J$ is a proper circular-arc graph, and either $J$ is a strong clique or $\alpha(J) > 3$, or
        \item $|Z|=2$, $J$ is a proper interval graph, and either $J$ is a strong clique or $\alpha(J) > 3$.
      \end{itemize}
    \end{itemize}
  \end{itemize}
  Moreover, we can distinguish the cases and find the strip-structure in $\stripTime$ time.
\end{theorem}

\section*{Part II -- Algorithmic Applications on Claw-Free Graphs}

\section{Fixed-Parameter Algorithm for Dominating Set}
\label{sec:ds}
In this section, we show that {\sc Dominating Set} parameterized by solution size is fixed-parameter tractable on claw-free graphs.
The general idea of how to establish this is as follows.
We first show how to remove twins and proper W-joins from $G$ without changing the size of its smallest dominating set (see Sect.~\ref{sec:fpt-ds:twin-join}).
Moreover, if $\alpha(G) \leq 3$, then we can find a smallest dominating set of $G$ by exhaustive enumeration.
Then we can apply Theorem~\ref{thm:main-base2}, and $G$ either belongs to some basic class, or it can be decomposed into strips that each belong to a basic class.
If $G$ belongs to a basic class, then we can again find a smallest dominating set of $G$ in polynomial time (see Sect.~\ref{sec:fpt-ds:basic}).
If $G$ can be decomposed into strips, then we solve {\sc Dominating Set} separately on each strip in polynomial time (see Sect.~\ref{sec:fpt-ds:basic}), and then we present a fixed-parameter algorithm to stitch the solutions of the strips together (see Sect.~\ref{sec:fpt-ds:stitch}).

Throughout the section, we rely on the following notation.
Let $G$ be a graph.
We let $\ds{G}$ denote the smallest size of a dominating set of $G$.
More generally, for each subset $A\subseteq V(G)$, we let $\dsg{G}{A}$ denote the size of a smallest subset of $V(G)$ dominating all vertices in $V(G) \setminus A$.
An \emph{independent dominating set} is a subset of $V(G)$ that is both an independent set and a dominating set of $G$.
We need the following fact proved by Allan and Laskar~\cite{AllanL1978}.
\begin{lemma}[Allan and Laskar~\cite{AllanL1978}]
   \label{lem:ds:is}
  There is a (polynomial-time) algorithm that, given a claw-free graph $G$ and a dominating set $D$ of $G$, outputs an independent dominating set of $G$ of size at most $|D|$.
\end{lemma}
As a consequence of this lemma, we can assume throughout that any smallest dominating set that we consider is also an independent set.

We also need the following folklore fact.
\begin{proposition}
\label{prp:ds:is-size}
  Let $G$ be a graph.
  Then $\ds{G} \leq \alpha(G)$.
\end{proposition}

\subsection{Removing Twins and W-joins}
\label{sec:fpt-ds:twin-join}
We first show how to remove twins and (proper) W-joins from a graph $G$ without changing the size of its smallest dominating set. The reductions are powerful enough to operate on general graphs, while still maintaining claw-freeness.

\begin{lemma}
\label{lem:ds:twins}
  Let $a,b$ be twins of a graph $G$, and let $G' = G \setminus a$. Then $\ds{G} = \ds{G'}$. Moreover, if $G$ is claw-free, then so is $G'$.
\end{lemma}
\begin{proof}
  Let $D$ be a smallest dominating set of $G$.
  Since $N[a] = N[b]$ (in particular, $a$ and $b$ are adjacent) and $D$ is a smallest dominating set of $G$, at most one of $a,b$ belongs to $D$.
  If $a \in D$, then replace $a$ by $b$. Then the resulting set is still a dominating set of $G$ of the same size as $D$, and thus also a dominating set of $G' = G \setminus a$.

  Let $D'$ be a smallest dominating set of $G'$.
  Then $D' \cap N[b] \not= \emptyset$.
  Since $N[a] = N[b]$, $D'$ is a dominating set of $G$ as well.
\end{proof}

We remark here that a more general reduction exists that is even more powerful. The reduction removes vertex $a$ if there is an adjacent vertex $b$ for which $N[a] \subseteq N[b]$. Note that twins indeed satisfy the conditions of this reduction. Using this rule, all W-joins in the graph would be proper, as proven by Martin~\etal\cite[Lemma~14]{MartinPvL2018} (see also~\cite[Lemma~12]{MartinPvL2018-arxiv}). The absence of general W-joins would simplify several parts of the structural decomposition theorem presented in this work. Martin~\etal\cite{MartinPvL2018,MartinPvL2018-arxiv} provide a first example of such simplifications. We leave further explorations to future work.

\begin{lemma}
\label{lem:ds:Wjoins}
  Let $(A,B)$ be a W-join of a graph $G$.
  Construct a graph $G'$ from $G$ as follows:
  \begin{enumerate}
    \item if some $a_{0} \in A$ is complete to $B$ and some $b_{0} \in B$ is complete to $A$, then remove $A\setminus\{a_{0}\}$ and $B\setminus\{b_{0}\}$;
    \item otherwise, if some $a_{0} \in A$ is complete to $B$ and some $a_{1} \in A$ is antiadjacent to some $b_{0} \in B$, then remove all vertices of $A \setminus\{a_{0},a_{1}\}$ and all vertices of $B\setminus\{b_{0}\}$;
    \item otherwise, if some $b_{0} \in B$ is complete to $A$ and some $b_{1} \in B$ is antiadjacent to some $a_{0} \in A$, then remove all vertices of $B \setminus\{b_{0},b_{1}\}$ and all vertices of $A\setminus\{a_{0}\}$;
    \item otherwise, let $a_{0} \in A$ be a vertex that is antiadjacent to some $b_{0} \in B$, and remove $A\setminus\{a_{0}\}$ and $B\setminus\{b_{0}\}$.
  \end{enumerate}
  Then $\ds{G} = \ds{G'}$.
  Moreover, if $G$ is claw-free, then so is $G'$.
\end{lemma}
\begin{proof}
  Since $G'$ is obtained from $G$ by removing vertices, $G'$ is claw-free if $G$ is.
  It remains to show that $\ds{G} = \ds{G'}$. We do that by showing that, in each of the above four cases, we can construct from a (smallest) dominating set of $G$ a dominating set of $G'$ of equal or smaller size, and vice versa.

  \ccase{1} Some $a_{0} \in A$ is complete to $B$ and some $b_{0} \in B$ is complete to $A$.\\
  Let $D' \subseteq V(G')$ be a dominating set of $G'$.
  Consider any vertex $a \in A\setminus\{a_{0}\}$.
  If $b_{0} \in D'$ or $a_{0} \in D'$, then $D'$ dominates $a$, since $b_{0}$ is complete to $A$ by assumption and $A$ is a clique by the definition of a W-join respectively.
  If $a_{0},b_{0} \not\in D'$, then there is a vertex $v$ that is both in $N(a_{0}) \cap D'$ and in $V(G') \setminus \{a_{0},b_{0}\} = V(G) \setminus (A \cup B)$.
  Since every vertex of $V(G)\setminus(A \cup B)$ is either $A$-complete or $A$-anticomplete by the definition of a W-join and $v \in N(a_{0})$, it follows that $v$ is $A$-complete, and thus $D'$ dominates $a$.
  By symmetric arguments, $D'$ dominates every $b \in B \setminus\{b_{0}\}$.
  Hence, $D'$ is also a dominating set of $G$.

  Let $D \subseteq V(G)$ be a smallest dominating set of $G$.
  If $D \cap A \not= \emptyset$, then we can assume that $D \cap A = \{a_{0}\}$, because $N[a_{0}] \supseteq N[a]$ for each $a \in A$ by the definition of a W-join and by the assumption that $a_{0}$ is complete to $B$.
  Hence, we may assume that $D \cap A \subseteq \{a_{0}\}$.
  Similarly, we may assume that $D \cap B \subseteq \{b_{0}\}$.
  Therefore, $D$ is a dominating set of $G'$.

  \medskip
  In the remainder, we may thus assume that each vertex of $A$ is not complete to $B$ or that each vertex of $B$ is not complete to $A$.
  
  \ccase{2} Some $a_{0} \in A$ is complete to $B$ and some $a_{1} \in A$ is antiadjacent to some $b_{0} \in B$.\\
  Let $D' \subseteq V(G')$ be a dominating set of $G'$.
  Consider any vertex $a \in A \setminus\{a_{0},a_{1}\}$.
  If $a_{0} \in D'$ or $a_{1} \in D'$, then $D'$ dominates $a$, since $A$ is a clique.
  If $a_{0},a_{1} \not\in D'$, then because $b_{0}$ is antiadjacent to $a_{1}$ (both in $G$ and $G'$), it follows that there is a vertex $v$ that is both in $N(a_{1}) \cap D'$ and in $V(G') \setminus \{a_{0},a_{1},b_{0}\} = V(G) \setminus (A \cup B)$. Since every vertex of $V(G)\setminus(A \cup B)$ is either $A$-complete or $A$-anticomplete by the definition of a W-join and $v \in N(a_{1})$, it follows that $v$ is $A$-complete, and thus $D'$ dominates $a$.
  Consider any vertex $b \in B \setminus\{b_{0}\}$.
  Observe that $N_{G'}[a_{0}] \supseteq N_{G'}[a_{1}]$; hence, we may assume that $D' \cap \{a_{0},a_{1}\} \subseteq \{a_{0}\}$. 
  Then we can use similar arguments as in Case~1 to show that $D'$ dominates $b$.
  Therefore, $D'$ is also a dominating set of $G$.

  Let $D \subseteq V(G)$ be a smallest dominating set of $G$.
  If $D \cap A \not= \emptyset$, then we can assume that $D \cap A = \{a_{0}\}$, because $N[a_{0}] \supseteq N[a]$ for each $a \in A$ by the definition of a W-join and by the assumption that $a_{0}$ is complete to $B$.
  Hence, we may assume that $D \cap A \subseteq \{a_{0}\}$.
  If $|D \cap B| > 1$, then we can replace a vertex of $D \cap B$ by $a_{0}$ to obtain another smallest dominating set of $G$, because $A$ and $B$ are cliques and $N(b) \setminus A = N(b') \setminus A$ for each $b,b' \in B$ by the definition of a W-join. 
  Hence, we may assume that $|D \cap B| \leq 1$ and $D \cap A \subseteq \{a_{0}\}$.
  If $D \cap B = \{b\}$ for some vertex $b$, then $b$ is not complete to $A$ by assumption.
  Hence, $a_{0} \in D$ or there is a vertex $v$ both in $D$ and in $V(G) \setminus (A \cup B)$ that is adjacent to a vertex of $A$ that is not adjacent to $b$.
  In the second case, $v$ is $A$-complete by the definition of a W-join.
  Then each vertex of $A$ is dominated by $a_{0}$ or $v$ respectively.
  Hence, the only responsibility of $b$ is to dominate $N[b] \setminus A$.
  Since $N[b] \setminus A = N[b'] \setminus A$ for each $b' \in B$ by the definition of a W-join, we may assume that $b = b_{0}$. Therefore, we may assume that $D \cap A \subseteq \{a_{0}\}$ and $D \cap B \subseteq \{b_{0}\}$.
  Hence, $D$ is a dominating set of $G'$.

  \ccase{3} Some $b_{0} \in B$ is complete to $A$ and some $b_{1} \in B$ is antiadjacent to some $a_{0} \in A$.\\
  This case is symmetric to the previous one.

  \medskip
  In the remainder, we may thus assume that each vertex of $A$ is not complete to $B$ and that each vertex of $B$ is not complete to $A$.

  \ccase{4} Let $a_{0} \in A$ be a vertex that is antiadjacent to some $b_{0} \in B$.\\
  Let $D' \subseteq V(G')$ be a dominating set of $G'$.
  Consider any vertex $a \in A \setminus\{a_{0}\}$.
  If $a_{0} \in D'$, then $D'$ dominates $a$, since $A$ is a clique.
  If $a_{0} \not\in D'$, then because $b_{0}$ is antiadjacent to $a_{0}$ (both in $G$ and in~$G'$), there is a vertex $v$ that is both in $N_{G'}(a_{0}) \cap D'$ and in $V(G') \setminus \{a_{0},b_{0}\} = V(G) \setminus (A \cup B)$.
  Since every vertex of $V(G)\setminus(A \cup B)$ is either $A$-complete or $A$-anticomplete by the definition of a W-join and $v \in N(a_{0})$, it follows that $v$ is $A$-complete, and thus $D'$ dominates $a$.
  By symmetric arguments, $D'$ dominates every $b \in B \setminus\{b_{0}\}$.
  Hence, $D'$ is also a dominating set of $G$.

  Let $D$ be a smallest dominating set of $G$.
  If $|D \cap A| > 1$, then we replace a vertex of $D \cap A$ by $b_{0}$ to obtain another smallest dominating set of $G$, because $A$ and $B$ are cliques and $N[a] \setminus B = N[a'] \setminus B$ for each $a,a' \in A$ by the definition of a W-join.
  Hence, we may assume that $|D \cap A| \leq 1$.
  If $D \cap A = \{a\}$ for some vertex $a$, then $a$ is not complete to $B$ by assumption.
  Hence, $b \in D$ for some $b \in B$ or there is a vertex $v$ both in $D$ and in $V(G) \setminus (A \cup B)$ that is adjacent to a vertex of $B$ that is not adjacent to $a$.
  In the second case, $v$ is $B$-complete by the definition of a W-join.
  Then each vertex of $B$ is dominated by $b$ or $v$ respectively.
  Hence, the only responsibility of $a$ is to dominate $N[a] \setminus B$.
  Since $N[a] \setminus B = N[a'] \setminus B$ for each $a' \in A$ by the definition of a W-join, we may assume that $a = a_{0}$. Therefore, we may assume that $D \cap A \subseteq \{a_{0}\}$ and, similarly, that $D \cap B \subseteq \{b_{0}\}$.
  Therefore, $D$ is a dominating set of $G'$.
\end{proof}
The final case of the lemma implies that we can remove proper W-joins, since in a proper W-join $(A,B)$ each vertex of $A$ is not complete to $B$ and each vertex of $B$ is not complete to $A$.

\begin{corollary}
\label{cor:ds:Wjoins}
  Let $(A,B)$ be a proper W-join of a graph $G$ and let $a_{0} \in A$ be a vertex that is antiadjacent to some $b_{0} \in B$. Create a graph $G'$ from $G$ by removing $A\setminus\{a_{0}\}$ and $B\setminus\{b_{0}\}$.
  Then $\ds{G} = \ds{G'}$.
  Moreover, if $G$ is claw-free, then so is $G'$.
\end{corollary}

\subsection{Dominating Set in Basic Classes}
\label{sec:fpt-ds:basic}
Let $G$ be a claw-free graph.
Through the reductions of Lemma~\ref{lem:ds:twins} and Corollary~\ref{cor:ds:Wjoins}, we may assume that $G$ admits no twins and proper W-joins.
Consider the following lemma.
\begin{lemma}
\label{lem:ds:constant}
  Let $G$ be a graph and $k$ an integer.
  Then in $O(n^{k+1})$ time we can compute $\ds{G}$ or correctly decide that $\ds{G} > k$.
\end{lemma}
\begin{proof}
  Use exhaustive enumeration to find a smallest set $D \subseteq V(G)$ with $|D| \leq k$ such that $|N[D]| = |V(G)|$, or report that no such set exists.
  This takes $O(n^{k+1})$ time.
\end{proof}

\begin{corollary}
\label{lem:ds:alpha}
  Let $G$ be a graph such that $\alpha(G) \leq 3$.
  Then we can compute $\ds{G}$ in $O(n^{4})$ time.
\end{corollary}
\begin{proof}
  By Proposition~\ref{prp:ds:is-size}, $\ds{G} \leq \alpha(G) \leq 3$, and the result follows from Lemma~\ref{lem:ds:constant}.
\end{proof}
Intuitively, we may now assume that the claw-free graph $G$ admits no twins, admits no proper W-joins, and satisfies $\alpha(G) > 3$.
Therefore, we can use the implications of Theorem~\ref{thm:main-base2} for $G$.
(A formal proof of these facts follows later.)

First, we show that if $G$ is a proper circular-arc graph or a thickening of an XX-trigraph, then we can compute $\ds{G}$ in polynomial time. 

\begin{theorem}[Hsu and Tsai~\cite{HsuT1991}]
\label{thm:ds:circular}
  Let $G$ be a circular-arc graph. Then $\ds{G}$ can be computed in linear time.
\end{theorem}

\begin{lemma}
\label{lem:ds:s2}
  Let $G$ be a graph that is a thickening of an XX-trigraph.
  Then $\ds{G}$ can be computed in $O(n^{4})$ time.
\end{lemma}
In order to show this lemma, we need the following auxiliary result.

\begin{lemma}
\label{lem:ds:tri}
  Let $G$ be a graph that is thickening of a trigraph $G'$, and let $G''$ be the graph obtained from $G'$ by removing all semi-edges from $G'$.
  Then $\ds{G} \leq \ds{G''}$.
\end{lemma}
\begin{proof}
  Let $\mathcal{W}$ be a thickening of $G'$ to $G$, and let $D'$ be any dominating set of $G''$.
  Construct a set $D \subseteq V(G)$ as follows: for each $v' \in D'$, pick an arbitrary vertex $v \in W_{v'}$. We claim that $D$ is a dominating set of $G$.
  Consider any $w \in V(G) \setminus D$ and let $w' \in V(G')$ be such that $w \in W_{w'}$.
  If $w' \in D'$, then because $W_{w'}$ is a (strong) clique and $W_{w'} \cap D \not= \emptyset$ by construction, $w$ is dominated by~$D$.
  Otherwise, there is a $u' \in D' \subseteq V(G'')$ such that $w'$ and $u'$ are adjacent.
  By the construction of~$G''$, this implies that $w'$ and $u'$ are strongly adjacent in $G'$.
  By the definition of a thickening, each vertex of $W_{u'}$ is (strongly) complete to $W_{w'}$. By construction, $D \cap W_{u'} \not= \emptyset$, and thus $w$ is dominated by $D$.
  The claim follows.
  Since $|D| = |D'|$, $\ds{G} \leq \ds{G''}$.
\end{proof}
It is now straightforward to prove Lemma~\ref{lem:ds:s2}.

\begin{proof}[Proof of Lemma~\ref{lem:ds:s2}]
  Consider an XX-trigraph $G'$ such that $G$ is a thickening of $G'$. Remove all semi-edges from $G'$ and call the resulting graph $G''$. By the definition of XX-trigraphs, it follows that $\{v_{2},v_{4},v_{6}\}$ is a dominating set of $G''$. Hence, by Lemma~\ref{lem:ds:tri}, $\ds{G} \leq \ds{G''} \leq 3$, and the result follows from Lemma~\ref{lem:ds:constant}.
\end{proof}

Intuitively, Theorem~\ref{thm:ds:circular} and Lemma~\ref{lem:ds:s2} imply that Theorem~\ref{thm:main-base2} yields a strip-structure.
Therefore, we turn to the basic classes of strips of Theorem~\ref{thm:main-base2}.
For reasons that will become clear later, we need stronger results for strips $(J,Z)$ than just being able to compute a smallest dominating set.
However, intuitively, if we compute $\ds{J}$, then we enforce that any dominating set that attains this bound contains a vertex of $N[Z]$.
Sometimes we might want to enforce this, but sometimes we do not.
Similarly, it might be that $N(Z)$ is already dominated by a vertex from another strip, and then we do not need to dominate it, but can of course still include a vertex of it in the dominating set.
Therefore, we want to compute $\dsg{J\setminus(Q \cup R)}{N[R]}$ for any disjoint $Q,R \subseteq Z$.
We now do this for each strip type of Theorem~\ref{thm:main-base2}.

\begin{lemma}
\label{lem:ds:circular-stripe}
  Let $(J,Z)$ be a stripe such that $J$ is a proper circular-arc graph and either $J$ is a (strong) clique or $\alpha(J) > 3$.
  For any disjoint $Q,R \subseteq Z$, $\dsg{J\setminus(Q \cup R)}{N[R]}$ can be computed in linear time.
\end{lemma}
\begin{proof}
  First, test whether $J$ is a (strong) clique.
  If so, then $\dsg{J\setminus(Q \cup R)}{N[R]}$ is trivial to compute for any disjoint $Q,R \subseteq Z$; this all takes linear time.
  So assume that $J$ is not a (strong) clique.
  Then find a set of arcs $I_{1},\ldots,I_{n}$ of the sphere $\mathbb{S}_{1}$ that forms a representation of $J$ as a proper circular-arc graph (that is, $I_{i} \not\subseteq I_{j}$ for each $i\not=j$).
  Such a set of arcs can be found in linear time~\cite{DengHH1996}.

  We now show this representation has the Helly property, that is, any three arcs that pairwise intersect have a common intersection point.
  To this end, it suffices to show that no two or three arcs cover the circle (\ie their union is equal to $\mathbb{S}_{1}$)~\cite[Theorem~7]{LinSS2013}.
  For sake of contradiction, suppose that there are three arcs, say $I_{1},I_{2},I_{3}$, that jointly cover the circle (we implicitly allow that $I_{1}$ and $I_{2}$ already cover the circle).
  Let $i \in \{1,2,3\}$ and consider any arc $I_{a}$ that intersects $I_{i}$ (possibly $a=i$).
  Since $J$ is a proper circular-arc graph and $I_{i}$ intersects $I_{i'}$ for each $i' \in \{1,2,3\}\setminus\{i\}$, $I_{a}$ covers an endpoint of $I_{i}$ as well as an endpoint of $I_{i'}$ for some $i' \in \{1,2,3\}\setminus\{i\}$.
  Hence, $I_{a}$ covers at least two endpoints of the arcs $I_{1},I_{2},I_{3}$.
  Since $I_{1},I_{2},I_{3}$ cover the circle, it follows that any arc corresponding to a vertex of an independent set of $G$ must cover at least two endpoints of the arcs $I_{1},I_{2},I_{3}$.
  Therefore, $\alpha(J) \leq 3$, a contradiction.
  Hence, the representation has the Helly property.

  From the definition of a stripe, each $z \in Z$ is strongly simplicial.
  Since the representation has the Helly property, there is a point $p_{z} \in \mathbb{S}_{1}$ for each $z \in Z$ such that the arcs containing $p_{z}$ are precisely those corresponding to $N[z]$.
  We can assume that $p_{z}$ is contained in the interior of each of these intervals.

  Now consider disjoint sets $Q,R \subseteq Z$. For each $z \in R$, remove the interval $[p_{z}-\epsilon,p_{z}+\epsilon]$ from each arc for some infinitesimally small $\epsilon > 0$.
  Let $I'_{1},\ldots,I'_{n'}$ be the resulting set of arcs and $J'$ the intersection graph of these arcs.
  Note that $n' = n + |N[R]|$ and that $J'$ is a circular-arc graph.

  Consider some $z \in R$. Observe that both copies of $z$ in $J'$ correspond to either a `leftmost' or a `rightmost' interval of the representation. For each copy of $z$, add a new vertex to $J'$ that is adjacent only to this copy of $z$. Let $J''$ be the graph obtained by adding these new vertices for all $z \in R$, and removing $Q$. By the preceding observation, $J''$ is still a circular-arc graph. 

  The vertices added to $J'$ ensure that both copies of $z$, for each $z \in R$, will belong to some smallest dominating set of $J''$. In particular, this ensures that there is a smallest dominating set of $J''$ that dominates $N[R]$. Hence, $\ds{J''} - 2|R| = \dsg{J \setminus (Q \cup R)}{N[R]}$. As $J''$ is a circular-arc graph, $\ds{J''}$ can be computed in linear time following Theorem~\ref{thm:ds:circular}.
\end{proof}

\begin{lemma}
\label{lem:ds:z5}
  Let $(J,Z)$ be a thickening of a stripe $(J',Z') \in \mc{Z}_{5}$ such that $J$ is a graph.
  For any disjoint $Q,R \subseteq Z$, $\dsg{J\setminus (Q \cup R)}{N[R]}$ can be computed in $O(n^{5})$ time.
\end{lemma}
\begin{proof}
  Consider disjoint sets $Q,R \subseteq Z$.
  For any $z \in R$, add a new vertex adjacent to $z$ only, and remove $Q$; call the resulting graph $H$.
  By construction, there is a smallest dominating set of $H$ that contains $R$, and thus $\ds{H} - |R| = \dsg{J \setminus (Q \cup R)}{N[R]}$.

  It remains to compute $\ds{H}$.
  Since $J$ is a thickening of $J'$, where $J'$ is an XX-trigraph, $H$ is a thickening of a trigraph $H'$ that is obtained from $J'$ by possibly adding a vertex that is strongly adjacent to $v_{7}$, possibly adding a vertex that is strongly adjacent to $v_{8}$, and possibly removing~$v_{7}$ or $v_{8}$.
  Let $H''$ be the graph obtained from $H$ by removing all semi-edges.
  From the definition of XX-trigraphs, one of $\{v_{4},v_{6},v_{7}\}$, $\{v_{2},v_{6},v_{8}\}$, $\{v_{3},v_{6},v_{7},v_{8}\}$, or $\{v_{2},v_{4},v_{6}\}$ is a dominating set of~$H''$.
  Hence, using Lemma~\ref{lem:ds:tri}, $\ds{H} \leq \ds{H''} \leq 4$, and the result follows from Lemma~\ref{lem:ds:constant}.
\end{proof}

\begin{lemma}
\label{lem:ds:stripe-is}
  Let $(J,Z)$ be a stripe such that $J$ is a graph and $1 \leq |Z| \leq 2$, $\alpha(J) \leq 3$, and $V(J)\setminus N[Z] \not= \emptyset$. For any disjoint $Q,R \subseteq Z$, $\dsg{J\setminus (Q \cup R)}{N[R]}$ can be computed in $O(n^{4})$ time.
\end{lemma}
\begin{proof}
  Consider disjoint sets $Q,R \subseteq Z$. For any $z \in R$, add a new vertex adjacent to $z$ only, and remove $Q$; call the resulting graph $J'$.
  By construction, there is a smallest dominating set of $J'$ that contains $R$, and thus $\ds{J'} - |R| = \dsg{J \setminus (Q \cup R)}{N[R]}$.

  It remains to compute $\ds{J'}$. Since each $z \in R$ is simplicial and $R$ is an independent set (recall that $(J,Z)$ is a stripe), there is a maximum independent set of $J$ that contains $R$.
  Hence, using Proposition~\ref{prp:ds:is-size}, $\ds{J'} \leq \alpha(J) \leq 3$.
  The result then follows from Lemma~\ref{lem:ds:constant}.
\end{proof}
Since trivial line graph strips have at most three vertices (recall Definition~\ref{def:linestrip}), the following lemma is immediate.
\begin{lemma}
\label{lem:ds:spot}
  Let $(J,Z)$ be a trivial line graph strip such that $J$ is a graph.
  For any disjoint $Q,R \subseteq Z$, $\dsg{J\setminus (Q \cup R)}{N[R]}$ can be computed in constant time.
\end{lemma}
We will also rely on the following observation.

\begin{proposition}
\label{prp:ds:strip-boundary}
  Let $(J,Z)$ be a strip such that $J$ is a graph.
  For any disjoint $Q,R \subseteq Z$, there is a set $D \subseteq V(J) \setminus Z$ of size $\dsg{J\setminus (Q \cup R)}{N[R]}$ that dominates $V(J) \setminus (Q \cup N[R])$.
\end{proposition}
\begin{proof}
  Let $D \subseteq V(J) \setminus (Q \cup R)$ be a set of size $\dsg{J\setminus (Q \cup R)}{N[R]}$ that dominates $V(J) \setminus (Q \cup N[R])$.
  Suppose that $D$ contains some $z \in Z$. By the definition of a strip, $z$ is a simplicial vertex that is not adjacent to any $z' \in Z \setminus \{z\}$.
  Hence, we can replace $z$ by any other vertex of $N(z)$, and the resulting set still is a subset of $V(J) \setminus (Q \cup R)$ that dominates $V(J) \setminus (Q \cup N[R])$.
\end{proof}

\subsection{Stitching Dominating Sets}
\label{sec:fpt-ds:stitch}
We now describe a method to stitch dominating sets for individual strips of a strip-structure together to form a dominating set of the entire graph. To this end, we need several supporting definitions and lemmas.

\begin{definition}
\label{def:ds:striped}
  Let $G$ be a graph and let $(H,\eta)$ be a purified strip-structure of nullity zero for $G$.
  We say $h \in V(H)$ is \emph{striped} if there is an $F \in E(H)$ such that $h \in \overline{F}$ and the strip corresponding to $F$ is a stripe.
  The subset of $V(H)$ that is striped is denoted $\mc{P}(H)$.
\end{definition}
We can show the following useful lemma.

\begin{lemma}
\label{lem:ds:striped}
  Let $G$ be a graph, let $(H,\eta)$ be a purified strip-structure of nullity zero for $G$, let $k$ be an integer, and let $d = \max_{F \in E(H)} |\overline{F}|$.
  If $\ds{G} \leq k$, then $|\mc{P}(H)| \leq dk$.
\end{lemma}
\begin{proof}
  Suppose that $\ds{G} \leq k$, but $|\mc{P}(H)| > dk$.
  Let $D$ be a dominating set of $G$ of size at most~$k$.
  Define a set $M \subseteq V(H)$ where $h \in M$ if and only if there is an $F \in E(H)$ such that $\eta(F) \cap D \not= \emptyset$ and $h \in \overline{F}$.
  Since $|D| \leq k$ and $d = \max_{F \in E(H)} |\overline{F}|$, $|M \cap \mc{P}(H)| \leq dk$.
  Let $h \in \mc{P}(H) \setminus M$; as $|\mc{P}(H)| > dk$, $h$ is properly defined.
  As $h \in \mc{P}(H)$, there is an $F \in E(H)$ such that $h \in \overline{F}$ and the strip $(J,Z)$ corresponding to $F$ is a stripe. Consider the vertices in $\eta(F,h)$.
  Note that $\eta(F) \cap D = \emptyset$, because $h \not\in M$.
  Hence, the vertices in $\eta(F,h)$ can only be dominated by a vertex in $\eta(h) \cap D$, since $\eta(F,h') \cap \eta(F,h) = \emptyset$ for any $h' \in \overline{F} \setminus \{h\}$, as $(J,Z)$ is a stripe.
  However, because $h \not\in M$, there is no $F' \in E(H)$ such that $\eta(F') \cap D \not= \emptyset$, and thus $\eta(h) \cap D \not= \emptyset$.
  Because $(H,\eta)$ has nullity zero, $\eta(F,h) \not= \emptyset$, and therefore, there is a vertex not dominated by $D$, a contradiction.
\end{proof}
We now define a set of auxiliary edge-weighted multigraphs (each possibly with parallel edges) with each an associated integer.
The goal will be to show that if $\ds{G} \leq k$, then there is at least one such an auxiliary multigraph with an edge dominating of weight bounded by the associated integer. 

Let $G$ be a graph, let $(H,\eta)$ be a purified strip-structure of nullity zero for $G$ such that $1 \leq |\overline{F}| \leq 2$ for each $F \in E(H)$, and let $P \subseteq \mc{P}(H)$. 
The idea of the construction is to ensure that $h \in P$ if and only if there is an $F \in E(H)$ that corresponds to a stripe for which the dominating set has a vertex in $\eta(F,h)$.
To this end, we define the triple $(K_{P}, w_{P}, k_{P})$ of a multigraph, an edge-weight function, and an integer as follows. Initially, $K_{P}$ is empty and $k_{P} = 0$.
For each $h \in V(H)$, add a vertex $v_h$ to~$K_{P}$.
If $h \in P$, then also add a vertex $v'_h$ to $K_{P}$ as well as an edge between $v_h$ and $v'_h$ of weight $\infty$.
For each $F \in E(H)$, there are several cases:

\ccase{1} $\overline{F} = \{h\}$ for some $h \in V(H)$.\\
Let $(J,Z)$ be the strip corresponding to $F$.
Note that $(J,Z)$ is in fact a stripe.
There are several cases:

\csubcase{1a} $h \in P$.\\
Add a vertex $v_F$ and an edge $e_F$ between $v_h$ and $v_F$ of weight $\ds{J} - \dsg{J\setminus Z}{N[Z]}$ to $K_{P}$; additionally, increase $k_P$ by $\dsg{J\setminus Z}{N[Z]}$.

\csubcase{1b} $h \not\in P$.\\
Increase $k_{P}$ by $\ds{J\setminus Z}$.

\ccase{2} $\overline{F} = \{h,h'\}$ for some $h, h' \in V(H)$. The strip $(J,Z)$ corresponding to $F$ is a stripe.\\
Let $Z = \{z,z'\}$, where $z$ corresponds to $h$ and $z'$ to $h'$. There are several cases:

\csubcase{2a}  $h,h' \in P$ and $\dsg{J \setminus Z}{N[Z]} = \dsg{J \setminus \{z'\}}{N[z']} = \dsg{J \setminus \{z\}}{N[z]} = \ds{J}-1$. \\
Add two vertices $v_{F}^{1},v_{F}^{2}$ to $K_{P}$ and three edges $e_{F}^{h},e_{F}^{h'},e_{F}^{1}$:
\begin{itemize}
  \item $e_{F}^{h}$ connects $v_{h}$ and $v_{F}^{1}$, and has weight $1$.
  \item $e_{F}^{h'}$ connects $v_{h'}$ and $v_{F}^{1}$, and has weight $1$.
  \item $e_{F}^{1}$ connects $v_{F}^{1}$ and $v_{F}^{2}$, and has weight $\infty$.
\end{itemize}
Increase $k_P$ by $\dsg{J \setminus Z}{N[Z]}-1$.

\csubcase{2b} $h,h' \in P$ and not $\dsg{J \setminus Z}{N[Z]} = \dsg{J \setminus \{z'\}}{N[z']} = \dsg{J \setminus \{z\}}{N[z]} = \ds{J}-1$.\\
Add two vertices $v_{F}^{h}, v_{F}^{h'}$ to $K_{P}$, and three edges $e_{F}^{h}, e_{F}^{h'}, e_{F}^{h,h'}$:
\begin{itemize}
  \item $e_{F}^{h,h'}$ connects $v_{h}$ and $v_{h'}$, and has weight $\ds{J} - \dsg{J \setminus Z}{N[Z]}$.
  \item $e_{F}^{h}$ connects $v_{F}^{h}$ and $v_{h}$, and has weight $\dsg{J \setminus \{z'\}}{N[z']} - \dsg{J \setminus Z}{N[Z]}$.
  \item $e_{F}^{h'}$ connects $v_{F}^{h'}$ and $v_{h'}$, and has weight $\dsg{J \setminus \{z\}}{N[z]} - \dsg{J \setminus Z}{N[Z]}$. 
\end{itemize}
Increase $k_P$ by $\dsg{J \setminus Z}{N[Z]}$.

\csubcase{2c} $h \in P$ and $h' \not\in P$.\\
Add a vertex $v_{F}$ to $K_{P}$ as well as an edge $e_{F}$ of weight $\ds{J \setminus \{z'\}} - \dsg{J \setminus Z}{N[z]}$ between $v_{F}$ and~$v_{h}$, and increase $k_P$ by $\dsg{J \setminus Z}{N[z]}$.

\csubcase{2d} $h' \in P$ and $h \not\in P$.\\
Add a vertex $v_{F}$ to $K_{P}$ as well as an edge $e_{F}$ of weight $\ds{J \setminus\{z\}} - \dsg{J \setminus Z}{N[z']}$ between $v_{F}$ and~$v_{h'}$, and increase $k_{P}$ by $\dsg{J \setminus Z}{N[z']}$.

\csubcase{2e} $h,h' \not\in P$. \\
Increase $k_{P}$ by $\ds{J \setminus Z}$.

\ccase{3} $\overline{F} = \{h,h'\}$ for some $h, h' \in V(H)$. The strip $(J,Z)$ corresponding to $F$ is a spot.\\
Add an edge $e_{F}$ between $v_{h}$ and $v_{h'}$ of weight $1$.

\medskip\noindent
Since $1 \leq |\overline{F}| \leq 2$ for each $F \in E(H)$ and $(H,\eta)$ is purified, the cases are exhaustive.

It is worth observing that $k_{P} \geq 0$.
Indeed, the only case where possibly a negative number could be added to $k_P$ is Case~2a.
However, we can argue that the number added to $k_P$ in Case~2a is nonnegative.
Let $h$, $h'$, $F$, and $(J,Z)$ be as in Case~2a.
Since $(J,Z)$ is a stripe, $N[z] \cap N[z'] = \emptyset$, and since the strip-structure has nullity zero, $N[z] \not= \emptyset$. Hence, any dominating set of $J \setminus \{z'\}$ that does not necessarily dominate $N[z']$ has at least one vertex in order to dominate $N[z]$.
Therefore, $\dsg{J \setminus \{z'\}}{N[z']} \geq 1$, and thus $\dsg{J \setminus Z}{N[Z]}-1 = \dsg{J \setminus \{z'\}}{N[z']} - 1 \geq 0$.
Then the number added to $k_P$ in this case is nonnegative.
Hence, $k_P \geq 0$.
Moreover, each edge of $K_{P}$ received a nonnegative weight.

Now recall that an \emph{edge dominating set} of a graph $G$ is a set $D \subseteq E(G)$ such that each edge of~$E(G)$ has an endpoint in common with an edge of $D$.

\begin{lemma}
\label{lem:ds:reduction}
  Let $G$ be a graph, let $(H,\eta)$ be a purified strip-structure of nullity zero for $G$ such that $1 \leq |\overline{F}| \leq 2$ for each $F \in E(H)$, and let $k$ be an integer.
  Then $\ds{G} \leq k$ if and only if there is a $P \subseteq \mc{P}(H)$ such that there is an edge dominating set $D$ of $K_{P}$ for which $w_{P}(D) \leq k - k_P$.
  Moreover, such a set $D$ satisfies $|D| \leq k$.
\end{lemma}
\begin{proof}
  Suppose that there is a $P \subseteq \mc{P}(H)$ such that there is an edge dominating set $D$ of $K_{P}$ for which $w_{P}(D) \leq k - k_P$.

  We first slightly modify $D$ without increasing its weight.
  Consider any $F \in E(H)$ where $\overline{F} = \{h,h'\}$ for which the corresponding strip $(J,Z)$ is a stripe.
  Let $Z = \{z,z'\}$, where $z$ corresponds to $h$ and $z'$ to $h'$.
  Suppose additionally that $h,h' \in P$ and not $\dsg{J \setminus Z}{N[Z]} = \dsg{J \setminus \{z'\}}{N[z']} = \dsg{J \setminus \{z\}}{N[z]} = \ds{J}-1$.
  In short, $F$ corresponds to a stripe in Case~2b.
  Suppose that $e^{h,h'}_{F} \in D$ and at least one of $e^{h}_{F},e^{h'}_{F}$ is in $D$.
  Then by the construction of $K_{P}$, we can remove $e^{h}_{F}$ and $e^{h'}_{F}$ from $D$, and the resulting set would still be an edge dominating set of $K_{P}$; clearly, the weight has not increased.
  Suppose that $e^{h,h'}_{F} \not\in D$ and $e^{h}_{F},e^{h'}_{F} \in D$.
  Then we claim that we can replace $e^{h}_{F}$ and $e^{h'}_{F}$ by just $e^{h,h'}_{F}$.
  Indeed, by the construction of $K_{P}$, the resulting set is still an edge dominating set of $K_{P}$.
  Moreover, note that
  \begin{eqnarray*}
                       \ds{J} - 1 & \leq & \dsg{J \setminus \{z'\}}{N[z']},\\
    \dsg{J \setminus \{z\}}{N[z]} & \leq & \ds{J},\\
    \dsg{J \setminus Z}{N[Z]}     & \leq & \dsg{J \setminus \{z'\}}{N[z']},\\
    \dsg{J \setminus\{z\}}{N[z]}  & \leq & \dsg{J \setminus Z}{N[Z]}+1,
  \end{eqnarray*}
  and thus, using the assumption of Case~2b,
  \begin{equation*}
    \ds{J} - \dsg{J \setminus Z}{N[Z]} \leq (\dsg{J \setminus\{z'\}}{N[z']} - \dsg{J \setminus Z}{N[Z]}) + (\dsg{J \setminus\{z\}}{N[z]} - \dsg{J \setminus Z}{N[Z]}),
  \end{equation*}
  and therefore, $w_{P}(e_{F}^{h,h'}) \leq w_{P}(e_{F}^{h}) + w_{P}(e_{F}^{h'})$.
  Hence, the weight of the resulting set is at most $w_{P}(D)$.

  Apply the above modification to $D$ for each $F \in E(H)$ that corresponds a stripe in Case~2b.
  By abuse of notation, we call the resulting set $D$ as well.
  Observe that $D$ is still an edge dominating set of $K_{P}$ for which $w_{P}(D) \leq k - k_P$, but now with the additional property that for each $F \in E(H)$ that is a stripe in Case~2b, at most one of $e_{F}^{h}$, $e_{F}^{h'}$, and $e_{F}^{h,h'}$ is in $D$.

  We construct a set $D' \subseteq V(G)$ of size at most $k$ as follows (and later show that it is in fact a dominating set of $G$).
  Initially, $D'=\emptyset$.
  For each $F \in E(H)$, we consider the same cases as before:

  \ccase{1} $\overline{F} = \{h\}$ for some $h \in V(H)$.\\
  Let $(J,Z)$ be the strip corresponding to $F$.
  Note that $(J,Z)$ is in fact a stripe.
  There are several cases:

  \csubcase{1a} $h \in P$.\\
  There are two cases:

  \csubcase{1a-i} $e_{F} \in D$.\\
  Add a smallest subset of $V(J) \setminus Z$ to $D'$ that dominates all vertices of $V(J)$---note that the weight of~$e_{F}$ is $\ds{J} - \dsg{J\setminus Z}{N[Z]}$ and $k_{P}$ was increased by $\dsg{J\setminus Z}{N[Z]}$ in the construction; the sum is equal to the size of the set added to $D'$ (using Proposition~\ref{prp:ds:strip-boundary}).

  \csubcase{1a-ii} $e_{F} \not\in D$.\\
  Add a smallest subset of $V(J) \setminus Z$ to $D'$ that dominates all vertices of $V(J) \setminus N[Z]$---note that $k_{P}$ was increased by $\dsg{J\setminus Z}{N[Z]}$ in the construction, which is equal to the size of the set added to~$D'$.

  \csubcase{1b} $h \not\in P$.\\
  Add a smallest subset of $V(J) \setminus Z$ to $D'$ that dominates all vertices of $V(J) \setminus Z$---note that $k_{P}$ was increased by $\ds{J \setminus Z}$ in the construction, which is equal to the size of the set added to $D'$.

  \ccase{2} $\overline{F} = \{h,h'\}$ for some $h, h' \in V(H)$. The strip $(J,Z)$ corresponding to $F$ is a stripe.\\
  Let $Z = \{z,z'\}$, where $z$ corresponds to $h$ and $z'$ to $h'$. There are several cases:

  \csubcase{2a} $h,h' \in P$ and $\dsg{J \setminus Z}{N[Z]} = \dsg{J \setminus \{z'\}}{N[z']} = \dsg{J \setminus \{z\}}{N[z]} = \ds{J}-1$.\\
  Then at least one of $e^{h}_{F}$ and $e_{F}^{h'}$ is in $D$ by the construction of $K_P$, because $w_{P}(e^1_F) = \infty$ and thus $e^1_F \not\in D$.
  This leads to several cases:

  \csubcase{2a-i} $e^{h}_{F},e^{h'}_{F} \in D$.\\
  Add a smallest subset of $V(J) \setminus Z$ to $D'$ that dominates all vertices of $V(J)$ --- note that the weight of $e^{h}_{F},e^{h'}_{F}$ combined is $2 = \ds{J} - (\dsg{J \setminus Z}{N[Z]}-1)$ and $k_{P}$ was increased by $\dsg{J\setminus Z}{N[Z]}-1$ in the construction; the sum is equal to the size of the set added to $D'$ (using Proposition~\ref{prp:ds:strip-boundary}).

  \csubcase{2a-ii} $e^{h}_{F} \in D$ and $e^{h'}_{F} \not\in D$.\\
  Add a smallest subset of $V(J) \setminus Z$ to $D'$ that dominates all vertices of $V(J) \setminus N[z']$ --- note that the weight of $e^{h}_{F}$ is $1 = \dsg{J \setminus\{z'\}}{N[z']} - (\dsg{J \setminus Z}{N[Z]}-1)$ and $k_{P}$ was increased by $\dsg{J\setminus Z}{N[Z]}-1$ in the construction; the sum is equal to the size of the set added to $D'$ (using Proposition~\ref{prp:ds:strip-boundary}).

  \csubcase{2a-iii} $e^{h'}_{F} \in D$ and $e^{h}_{F} \not\in D$.\\
  Add a smallest subset of $V(J) \setminus Z$ to $D'$ that dominates all vertices of $V(J) \setminus N[z]$ --- note that the weight of $e^{h'}_{F}$ is $1 = \dsg{J \setminus\{z\}}{N[z]} - (\dsg{J \setminus Z}{N[Z]}-1)$ and $k_{P}$ was increased by $\dsg{J\setminus Z}{N[Z]}-1$ in the construction; the sum is equal to the size of the set added to $D'$ (using Proposition~\ref{prp:ds:strip-boundary}).

  \csubcase{2b} $h,h' \in P$ and not $\dsg{J \setminus Z}{N[Z]} = \dsg{J \setminus \{z'\}}{N[z']} = \dsg{J \setminus \{z\}}{N[z]} = \ds{J}-1$.\\
  After the above modification of $D$, we can assume that at most one of $e_{F}^{h}$, $e_{F}^{h'}$, and $e_{F}^{h,h'}$ is in $D$. This leads to several cases:

  \csubcase{2b-i} $e_{F}^{h,h'} \in D$.\\
  Add a smallest subset of $V(J) \setminus Z$ to $D'$ that dominates all vertices of $V(J)$ --- note that the weight of $e_{F}^{h,h'}$ is $\ds{J} - \dsg{J \setminus Z}{N[Z]}$ and $k_{P}$ was increased by $\dsg{J \setminus Z}{N[Z]}$; the sum is equal to the size of the set added to $D'$ (using Proposition~\ref{prp:ds:strip-boundary}).

  \csubcase{2b-ii} $e_{F}^{h} \in D$.\\
  Add a smallest subset of $V(J) \setminus Z$ to $D'$ that dominates all vertices of $V(J) \setminus N[z']$ and that does not contain a vertex of $Z$---note that the weight of $e_{F}^{h}$ is $\dsg{J \setminus \{z'\}}{N[z']} - \dsg{J \setminus Z}{N[Z]}$ and~$k_{P}$ was increased by $\dsg{J \setminus Z}{N[Z]}$; the sum is equal to the size of the set added to $D'$ (using Proposition~\ref{prp:ds:strip-boundary}).

  \csubcase{2b-iii} $e_{F}^{h'} \in D$.\\
  Add a smallest subset of $V(J) \setminus Z$ to $D'$ that dominates all vertices of $V(J) \setminus N[z]$ and that does not contain a vertex of $Z$---note that the weight of $e_{F}^{h'}$ is $\dsg{J \setminus \{z\}}{N[z]} - \dsg{J \setminus Z}{N[Z]}$ and~$k_{P}$ was increased by $\dsg{J \setminus Z}{N[Z]}$; the sum is equal to the size of the set added to $D'$ (using Proposition~\ref{prp:ds:strip-boundary}).

  \csubcase{2b-iv} none of $e_{F}^{h}$, $e_{F}^{h'}$, and $e_{F}^{h,h'}$ is in $D$.\\
  Add a smallest subset of $V(J) \setminus Z$ to $D'$ that dominates all vertices of $V(J) \setminus N[Z]$ --- note that $k_{P}$ was increased by $\dsg{J\setminus Z}{N[Z]}$ in the construction, which is equal to the size of the set added to~$D'$.

  \csubcase{2c} $h \in P$ and $h' \not\in P$.\\
  There are two cases:

  \csubcase{2c-i} $e_{F} \in D$.\\
  Add a smallest subset of $V(J) \setminus Z$ to $D'$ that dominates all vertices of $V(J) \setminus \{z'\}$---note that the weight of $e_{F}$ is $\dsg{J \setminus\{z'\}}{N[z']} - \dsg{J \setminus Z}{N[z]}$ and that~$k_{P}$ was increased by $\dsg{J \setminus Z}{N[z]}$ in the construction; the sum is equal to the size of the set added to $D'$ (using Proposition~\ref{prp:ds:strip-boundary}).

  \csubcase{2c-ii} $e_{F} \not\in D$.\\
  Add a smallest subset of $V(J) \setminus Z$ to $D'$ that dominates all vertices of $V(J) \setminus (Z \cup N[z])$---note that~$k_{P}$ was increased by $\dsg{J\setminus Z}{N[z]}$ in the construction, which is equal to the size of the set added to~$D'$.

  \csubcase{2d} $h' \in P$ and $h \not\in P$.\\
  There are two cases:

  \csubcase{2d-i} $e_{F} \in D$.\\
  Add a smallest subset of $V(J) \setminus Z$ to $D'$ that dominates all vertices of $V(J) \setminus \{z\}$---note that the weight of $e_{F}$ is $\ds{J \setminus\{z\}} - \dsg{J \setminus Z}{N[z']}$ and that $k_{P}$ was increased by $\dsg{J \setminus Z}{N[z']}$ in the construction; the sum is equal to the size of the set added to $D'$ (using Proposition~\ref{prp:ds:strip-boundary}).

  \csubcase{2d-ii} $e_{F} \not\in D$.\\
  Add a smallest subset of $V(J) \setminus Z$ to $D'$ that dominates all vertices of $V(J) \setminus (Z \cup N[z'])$---note that~$k_{P}$ was increased by $\dsg{J\setminus Z}{N[z']}$ in the construction, which is equal to the size of the set added to~$D'$.

  \csubcase{2e} $h,h' \not\in P$.\\
  Add a smallest subset of $V(J) \setminus Z$ to $D'$ that dominates all vertices of $V(J) \setminus Z$ --- note that $k_{P}$ was increased by $\ds{J \setminus Z}$ in the construction, which is equal to the size of the set added to $D'$.

  \ccase{3} $\overline{F} = \{h,h'\}$ for some $h, h' \in V(H)$.
  The strip $(J,Z)$ corresponding to $F$ is a spot.\\
  If $e_{F} \in D$, then add the vertex of $J \setminus Z$ to $D'$ --- note that the weight of $e_F$ is $1$, which is equal to the size of the set added to $D'$.
  If $e_F \not\in D$, then do nothing.

  \medskip\noindent
  From the construction of $D'$ and the analysis in each of the above cases, it follows that $|D'| \leq w_P(D) + k_P = k$.

  We now show that $D'$ is a dominating set of $G$.
  Let $v$ be an arbitrary vertex of $G$.
  Since $(H,\eta)$ is a strip-structure for $G$, by definition there is an $F \in E(H)$ such that $v \in \eta(F)$.
  Let $(J,Z)$ be the strip corresponding to~$F$.
  Then $v \in N(Z)$ or $v \in V(J) \setminus N[Z]$. 
  If $v \in V(J) \setminus N(Z)$, then $(J,Z)$ is not a spot, and regardless of in which of the above subcases of Case~1 and~2 that $F$ falls, $v$ will be dominated by the dominating set that is added to $D'$ for $F$.
  So assume that $v \in N(z)$ for some $z \in Z$ and let $h \in V(H)$ correspond to $z$. If $(J,Z)$ is a spot, then we have a choice in $z$ (and thus also in $h$): we prefer that $h \in P$ if possible.

  If $h \not\in P$ and $F$ falls into Case~1 or Case~2, then regardless of in which of the above subcases $F$ falls (Case~1b, 2c, 2d, or 2e), $N(z)$ and thus $v$ will be dominated by the dominating set that is added to $D'$ for $F$.

  If $h \not\in P$ and $F$ falls into Case~3, then let $h'$ denote the single element of $\overline{F} \setminus\{h\}$.
  Since we preferred~$h$ to $h'$, $h' \not\in P$.
  By the construction of $K_P$, there is an edge between $v_h$ and $v_{h'}$ corresponding to~$F$.
  Moreover, since $h,h' \not \in P$ and by the construction of $K_P$, every edge incident with $v_h$ and $v_{h'}$ corresponds to an edge introduced for a spot in Case~3.
  Therefore, since $D$ is an edge dominating set of $K_P$, there is an edge in $D$ that is incident with $v_h$ or $v_h'$ which corresponds to an $F' \in E(H)$ (possibly $F = F'$) with $h \in \overline{F'}$ or $h' \in \overline{F'}$ such that $F'$ corresponds to a spot.
  By the construction of $D'$, the single vertex of $\eta(F')$ is in $D'$, that is, $\eta(F') \subseteq D'$.
  Assume without loss of generality that $h \in \overline{F'}$.
  Then $v$ is dominated, because $\eta(F',h) = \eta(F') \subseteq D'$ and thus $\eta(h) \cap D' \not= \emptyset$, $v \in \eta(F,h) \subseteq \eta(h)$, and~$\eta(h)$ is a clique.

  If $h \in P$, then due to the presence of the edge $e$ between $v_{h}$ and $v'_{h}$, there is an edge $e'$ incident with $v_h$ that is in $D$. Since $e$ has weight $\infty$, $e' \not=e$.
  Hence, there is an $F' \in E(H)$ (possibly $F' = F$) with $h \in \overline{F'}$ such that an edge associated with $F'$ in the construction is in $D$.
  Then, by inspecting the above cases (Case~1a-i, 2a-i, 2b-i, 2b-ii, 2b-iii, 2c-i, 2d-i, 3) for the strip $(J',Z')$ corresponding to $F'$, there is a vertex of $D'$ in $N(z')$, where $z'$ is the vertex of $Z'$ that corresponds to $h$.
  In other words, $\eta(F',h) \cap D' \not= \emptyset$, and thus $\eta(h) \cap D' \not= \emptyset$.
  Since $\eta(h)$ is a clique and $\eta(F',h) \subseteq \eta(h)$, each vertex in $\eta(h)$ is dominated, and in particular each vertex in $\eta(F,h)$ is dominated.
  As $v \in N[z] = \eta(F,h)$, $v$ is dominated.
  Hence, $D'$ is a dominating set of $G$ of size at most $k$.

  \bigskip
  Suppose that $\ds{G} \leq k$.
  Let $D'$ be a smallest dominating set of $G$; hence, $|D'| \leq k$.
  Let $P$ be such that $h \in P$ if and only if $h \in \mc{P}(H)$ and there is an $F \in E(H)$ with $h \in \overline{F}$ and $D' \cap \eta(F,h) \not= \emptyset$. 

  We construct a set $D \subseteq E(K_{P})$ of weight at most $k-k_{P}$ that is an edge dominating set of $K_{P}$. Initially, $D = \emptyset$.
  For each $F \in E(H)$, we consider the following cases.

  \ccase{1} $\overline{F} = \{h\}$ for some $h \in V(H)$.\\
  If $h \in P$ and $\eta(F,h) \cap D' \not= \emptyset$, then add $e_{F}$ to $D$ --- note that the weight of $e_{F}$ is $\ds{J} - \dsg{J\setminus Z}{N[Z]}$, which is at most $|D' \cap \eta(F)| - \dsg{J\setminus Z}{N[Z]}$.

  \ccase{2} $\overline{F} = \{h,h'\}$ for some $h, h' \in V(H)$.
  The strip $(J,Z)$ corresponding to $F$ is a stripe.\\
  Let $Z = \{z,z'\}$, where $z$ corresponds to $h$ and $z'$ to $h'$.
  There are several cases:

  \csubcase{2a}  $h,h' \in P$ and $\dsg{J \setminus Z}{N[Z]} = \dsg{J \setminus \{z'\}}{N[z']} = \dsg{J \setminus \{z\}}{N[z]} = \ds{J}-1$.
  \begin{itemize}
    \item If $\eta(F,h) \cap D' \not= \emptyset$ and $\eta(F,h') \cap D' \not= \emptyset$, then add $e_{F}^{h}$ and $e_{F}^{h'}$ to $D$ --- note that the weight of the added edges is $2=\ds{J} - (\dsg{J\setminus Z}{N[Z]}-1)$, which is at most $|D' \cap \eta(F)| - (\dsg{J\setminus Z}{N[Z]}-1)$.
    \item If $\eta(F,h) \cap D' \not= \emptyset$ and $\eta(F,h') \cap D' = \emptyset$, then add $e_{F}^{h}$ to $D$ --- note that the weight of the added edges is $1=\dsg{J \setminus \{z'\}}{N[z']} - (\dsg{J\setminus Z}{N[Z]}-1)$, which is at most $|D' \cap \eta(F)| - (\dsg{J\setminus Z}{N[Z]}-1)$.
    \item If $\eta(F,h') \cap D' \not= \emptyset$ and $\eta(F,h) \cap D' = \emptyset$, then add $e_{F}^{h'}$ to $D$ --- note that the weight of the added edges is $1=\dsg{J \setminus \{z\}}{N[z]} - (\dsg{J\setminus Z}{N[Z]}-1)$, which is at most $|D' \cap \eta(F)| - (\dsg{J\setminus Z}{N[Z]}-1)$.
    \item If $\eta(F,h) \cap D' = \emptyset$ and $\eta(F,h') \cap D' = \emptyset$, then add $e_{F}^{h}$ to $D$ --- note that the weight of the added edges is $1=\dsg{J \setminus Z}{N[Z]} - (\dsg{J\setminus Z}{N[Z]}-1)$, which is at most $|D' \cap \eta(F)| - (\dsg{J\setminus Z}{N[Z]}-1)$.
  \end{itemize}

  \csubcase{2b} $h,h' \in P$ and not $\dsg{J \setminus Z}{N[Z]} = \dsg{J \setminus \{z'\}}{N[z']} = \dsg{J \setminus \{z\}}{N[z]} = \ds{J}-1$.
  \begin{itemize}
    \item If $\eta(F,h) \cap D' \not= \emptyset$ and $\eta(F,h') \cap D' \not= \emptyset$, then add $e_{F}^{h,h'}$ to $D$ --- note that the weight of $e_{F}^{h,h'}$ is $\ds{J} - \dsg{J\setminus Z}{N[Z]}$, which is at most $|D' \cap \eta(F)| - \dsg{J\setminus Z}{N[Z]}$.
    \item If $\eta(F,h) \cap D' \not= \emptyset$ and $\eta(F,h') \cap D' = \emptyset$, then add $e_{F}^{h}$ to $D$ --- note that the weight of $e_{F}^{h}$ is $\dsg{J \setminus \{z'\}}{N[z']} - \dsg{J\setminus Z}{N[Z]}$, which is at most $|D' \cap \eta(F)| - \dsg{J\setminus Z}{N[Z]}$.
    \item If $\eta(F,h') \cap D' \not= \emptyset$ and $\eta(F,h) \cap D' = \emptyset$, then add $e_{F}^{h'}$ to $D$ --- note that the weight of $e_F^{h'}$ is $\dsg{J \setminus \{z\}}{N[z]} - \dsg{J\setminus Z}{N[Z]}$, which is at most $|D' \cap \eta(F)| - \dsg{J\setminus Z}{N[Z]}$.
    \item If $\eta(F,h) \cap D' = \emptyset$ and $\eta(F,h') \cap D' = \emptyset$, then do nothing --- note that $|D' \cap \eta(F)| - \dsg{J\setminus Z}{N[Z]} \geq 0$.
  \end{itemize}

  \csubcase{2c} $h \in P$ and $h' \not\in P$.\\
  If $\eta(F,h) \cap D' \not= \emptyset$, then add $e_{F}$ to $D$ --- note that the weight of $e_{F}$ is $\ds{J \setminus \{z'\}} - \dsg{J\setminus Z}{N[z]}$, which is at most $|D' \cap \eta(F)| - \ds{J \setminus Z}{N[z]}$.

  \csubcase{2d} $h' \in P$ and $h \not\in P$.\\
  If $\eta(F,h') \cap D' \not= \emptyset$, then add $e_{F}$ to $D$ --- note that the weight of $e_{F}$ is $\ds{J \setminus \{z\}} - \dsg{J\setminus Z}{N[z']}$, which is at most $|D' \cap \eta(F)| - \dsg{J \setminus Z}{N[z']}$.

  \csubcase{2e} $h,h' \not\in P$. \\
  Do nothing; note that $\ds{J \setminus Z} \leq |D' \cap \eta(F)|$.

  \ccase{3} $\overline{F} = \{h,h'\}$ for some $h, h' \in V(H)$. The strip $(J,Z)$ corresponding to $F$ is a spot.\\
  If the vertex of $J\setminus Z$ is in $D'$, then add $e_{F}$ to $D$ --- note that the weight of $e_{F}$ is $|D' \cap \eta(F)|$. 
  Otherwise, do nothing.

  \medskip\noindent
  From the construction of $D$ and the analysis in each of the above cases, it follows that $w_{P}(D) \leq k - k_{P}$.

  To see that $|D| \leq k$, observe that when we add an edge to $D$, we can map it to a unique vertex of~$D'$.
  Indeed, in Case~3, this is clear, because we add $e_F$ when the vertex of the corresponding spot is in $D'$.
  In the other two cases (and their subcases), whenever we add an edge to $D$, then this edge is incident on a vertex~$v_h$ with $h \in P$.
  Moreover, for the considered $F \in E(H)$, there is a vertex in $\eta(F,h) \cap D'$.
  As for stripes the sets $\eta(F,h')$ where $h' \in \overline{F}$ are pairwise disjoint and we add at most one edge per element of $\overline{F}$ to $D$, this indeed presents the claimed mapping from the edges added to~$D$ to the vertices of $D'$.
  Therefore, $|D| \leq k$, as claimed.

  We now show that $D$ is an edge dominating set of $K_{P}$.
  Let $e \in E(K_{P})$. 
  If $e$ is incident on a vertex $v_{h}$ with $h \in P$, then there is an $F \in E(H)$ with $h \in \overline{F}$ such that $\eta(F,h) \cap D' \not= \emptyset$.
  For this $F$, by going through the above cases, an edge incident on $v_h$ will be added to $D$ and $e$ is dominated.
  If $e$ is not incident on any vertex $v_{h}$ for $h \in V(H)$, then $e$ is the edge $e^1_F$ of the construction of Case 2a.
  Then, by the construction of $D$, $e$ is dominated. 
  We can now assume that $e$ is incident only on vertices of $v_h$ for $h \in V(H) \setminus P$.
  By the construction of $K_P$, this implies that $e$ was added in Case~3 due to an $F \in E(H)$ that corresponds to a spot.
  Hence, there is an $F' \in E(H)$ (possibly $F=F'$) such that $h \in \overline{F'}$ or $h' \in \overline{F'}$, and respectively $\eta(F',h) \cap D \not= \emptyset$ or $\eta(F',h') \cap D \not= \emptyset$.
  Assume it is the former.
  Note that $F'$ corresponds to a spot; otherwise, $h \in P$ because $\eta(F',h) \cap D \not= \emptyset$, a contradiction.
  It follows that the edge $e_{F'}$ is in $D$, and thus $e$ is dominated.
  Hence, $D$ is an edge dominating set.
\end{proof}
The previous lemma shows that it suffices to resolve certain instances of the \textsc{Weighted Edge Dominating Set} problem on multigraphs.
Therefore, we would like to use the following result:

\begin{theorem}[Iwaide and Nagamochi~\cite{IwaideN2015}]
  Let $G$ be a graph, $w : E(G) \rightarrow \mathbb{N}^{+}$ a weight function, and $k \in \mathbb{N}^{+}$. Then we can decide in $O^{*}(2.2351^{k})$ time whether $G$ has an edge dominating set~$D$ with $w(D) \leq k$.
\end{theorem}
This theorem extends immediately to multigraphs: parallel edges can be replaced by a single edge with weight equal to the minimum of the weights of the parallel edges, and a self-loop $e$ of a vertex $v$ can be replaced by adding a pendant vertex $v'$ and modifying $e$ so that it is between $v$ and~$v'$. 
Unfortunately, however, our weight functions $w_P$ might set the weight of certain edges of~$K_P$ to~$0$.
Hence, the above theorem might not be directly applicable, as it assumes positive edge weights.
However, there is a straightforward workaround, inspired by a similar construction by Fernau~\cite[Proposition~1]{Fernau2006} for a different problem.

\begin{lemma}
  Let $G$ be a graph, $w: E(G) \rightarrow \mathbb{Z}$ a weight function, and $k \in \mathbb{Z}$.
  Then in linear time, we can construct a graph $G'$, a weight function $w': E(G') \rightarrow \mathbb{N}^{+} \cup \{\infty\}$, and a $k' \in \mathbb{N}^{+}$ such that $G$ has an edge dominating set of weight at most $k$ (under $w$) if and only if $G'$ has an edge dominating set of weight at most $k'$ (under $w'$).
  In particular, $k' \leq \max\{1,1+k-w(N)\}$, where $N \subseteq V(G)$ denotes the set of edges in $G$ of nonpositive weight (under $w$).
\end{lemma}
\begin{proof}
  If $k < w(N)$, then $G$ has no edge dominating set of weight at most $k$, so we can return a trivial ``no''-instance (for example, a graph with a single edge of weight $2$ and $k' = 1$).
  So assume that $k \geq w(N)$.
  If $N = \emptyset$, then we return $G' = G$, $w' = w$, $k' = k$ if $k > 0$, a trivial ``yes''-instance if $E(G) = \emptyset$ and $k\geq 0$, and a trivial no-instance otherwise. 

  We then define $G'$ as follows.
  Let $X$ denote the set of vertices in $G$ incident on an edge of $N$.
  We identify all vertices of $X$ into a single vertex $x$, and remove any loops that arise (this includes all edges of nonpositive weight).
  We then add two vertices $s,t$ and edges $(x,s)$ of weight $1$ and $(s,t)$ of weight $\infty$. Note that this defines a weight function $w'$ on $E(G')$ that is the same for all edges of $E(G') \cap E(G)$.
  Finally, set $k' = 1+k-w(N)$. Since $k \geq w(N)$, indeed $k' \in \mathbb{N}^{+}$.

  Suppose that $G$ has an edge dominating set $D$ with $w(D) \leq k$.
  Since the edges of $N$ have nonpositive weight, we can assume that $N \subseteq D$. 
  Then, we can also assume that $D$ contains no edges of $E(G) \setminus N$ for which both endpoints are in $X$, which means that no edges of $D \setminus N$ will be removed by removing loops that arise in the construction of $G'$.
  We now claim that $D' = \{(x,s)\} \cup (D\setminus N)$ is an edge dominating set of $G'$ with $w'(D') \leq k'$.
  Suppose that $e \in E(G')$ is not dominated by $D'$.
  Then $e$ is not incident on $x$ or $s$ by the construction of $D'$, so $e \in E(G)$ and $e$ is incident on two vertices in $V(G) \setminus X$.
  As $D$ is an edge dominating set of $G$, there is an edge $f \in D$ incident on one of the endpoints of $e$.
  Moreover, $f \in E(G')$, since the endpoint shared with $e$ is not in $X$.
  Hence, $f \in D'$ and $e$ is dominated, a contradiction.
  Therefore, $D'$ is an edge dominating set of $G'$. To see that it has weight at most $k'$, recall that $k \geq w(N)$, $N \subseteq D$, and that $D \setminus N$ does not contain any edges for which both endpoints are in $X$, and thus $w'(D') = 1 + w'(D \setminus N) = 1 + w(D\setminus N) \leq 1 + k-W(N) = k'$.

  Suppose that $G'$ has an edge dominating set $D'$ with $w'(D') \leq k'$.
  By the construction of $G'$ and~$w'$, the edge $(x,s)$ is in $D'$, but the edge $(s,t)$ is not.
  We now claim that $D = (D'\setminus\{(x,s)\}) \cup N$ is an edge dominating set of $G$ with $w(D) \leq k$. Indeed, $w(D) = w(D' \setminus\{(x,s)\}) + w(N) = w'(D' \setminus\{(x,s)\}) + w(N) \leq k'-1 + w(N) = k$.
  Suppose that $e \in E(G)$ is not dominated by $D$.
  Then $e$ is not incident on a vertex of $X$, so $e \in E(G')$ and $e$ is incident on two vertices of $V(G') \setminus \{x,s,t\}$.   
  As~$D'$ is an edge dominating set of $G'$, there is an edge $f \in D'$ incident on one of the endpoints of~$e$.
  Moreover, $f \in E(G)$, since the endpoint shared with $e$ is not in $\{x,s,t\}$.
  Hence, $f \in D$ and $e$ is dominated, a contradiction.
  Therefore, $D$ is an edge dominating set of $G$.
\end{proof}

\begin{corollary}
\label{cor:weds}
  Let $G$ be a multigraph, $w: E(G) \rightarrow \mathbb{Z}$ a weight function, and $k \in \mathbb{Z}$.
  Then we can decide in $O^{*}(2.2351^{\max\{1,1+k-w(N)\}})$ time whether $G$ has an edge dominating set $D$ with $w(D) \leq k$, where $N \subseteq V(G)$ denotes the set of edges in $G$ of nonpositive weight (under $w$).
\end{corollary}
We present an alternative algorithm, which gives an explicit bound on the polynomial factor and (in the exponent) will yield an almost equally fast algorithm as the one through Corollary~\ref{cor:weds}.

\begin{lemma}
\label{lem:ds:compute-eds}
  Let $G$ be a connected graph, let $(H,\eta)$ be a purified strip-structure of nullity zero for $G$ such that $1 \leq |\overline{F}| \leq 2$ for each $F \in E(H)$, let $k$ be an integer, and let $P \subseteq \mc{P}(H)$.
  Then we can decide in $O(2^{2k-|P|}\, |V(K_P)|^{3})$ time whether there is an edge dominating set $D$ of $K_{P}$ of smallest weight such that $w_{P}(D) \leq k - k_P$ and $|D| \leq k$.
\end{lemma}
\begin{proof}
  We adapt an algorithm of Fernau~\cite{Fernau2006}.
  He essentially showed that a minimum-weight edge dominating set of a graph can be found by enumerating all minimal vertex covers $C$ of the graph, computing a minimum-weight generalized edge cover with respect to each $C$, and then take the generalized edge cover of smallest weight over all $C$ as the solution.
  Here, a \emph{generalized edge cover} with respect to $C$ is a subset of the edges of the graph such that each vertex of $C$ is incident on an edge of this subset.

  To see that this approach is correct, observe the following.
  Let $K$ be a graph and let $w$ be an edge-weight function $E(K) \rightarrow \mathbb{N} \cup \{\infty\}$.
  If $D$ is an edge dominating set of $K$ of minimum weight (with respect to $w$), then let $C'$ denote the set of endpoints of $D$.
  Since $D$ is an edge dominating set of $K$, $C'$ is a vertex cover of $K$, and thus there is a minimal vertex cover $C \subseteq C'$. Clearly, $D$ is a generalized edge cover with respect to $C$.
  Conversely, if $C$ is a (minimal) vertex cover of $K$, then any generalized edge cover with respect to $C$ is an edge dominating set of $K$.

  In our context, we modify Fernau's approach to limit the number of minimal vertex covers that need to be enumerated.
  Again, let $K$ be a graph and let $w$ be an edge-weight function $E(K) \rightarrow \mathbb{N} \cup \{\infty\}$.
  Suppose that $K$ has an edge-dominating set of weight not $\infty$ and that no two pendant vertices of $K$ are adjacent.
  Then we claim that it suffices to enumerate all minimal vertex covers $C$ of $K$ such that $C$ does not contain a pendant vertex. Indeed, if $D$ is an edge dominating set of $K$ of minimum weight (with respect to $w$), then let $C'$ denote the set of endpoints of $D$.
  Since $D$ is an edge dominating set of $K$, $C'$ is a vertex cover of $K$.
  Now remove all pendant vertices from $C'$ and call the resulting set $C''$.
  As pendant vertices are only present in $C'$ if the edge incident on them is in $D$ and no two pendant vertices are adjacent, $C''$ is still a vertex cover of $K$.
  Then there is a minimal vertex cover $C \subseteq C''$ of $K$ that contains all neighbors of all pendant vertices in $C'$.
  Hence, $D$ is still a generalized edge cover with respect to $C$.
  Moreover, by construction, $C$ does not contain a pendant vertex nor a vertex that is only incident on vertices of weight $\infty$. This proves the claim.

  Following the claim and the construction of $K_{P}$, we can construct a set $L$ of vertices that have to be in any minimal vertex cover that we enumerate.
  To be precise, $L$ contains all vertices $v_{h}$ where $h \in P$ (these vertices $v_h$ have a pendant neighbor, but are not pendant themselves, because they are in $P$), and all vertices $v_{F}^{1}$ that are added in Case~2a (note that $v_F^2$ is pendant).
  By the claim, it suffices to enumerate all minimal vertex covers of $K_{P}$ that contain $L$.

  Suppose that $D$ has an edge dominating set $D$ such that $w_{P}(D) \leq k - k_{P}$ and $|D| \leq k$. Let $C$ be the minimal vertex cover that corresponds to $D$ following the modified approach above.
  Then $|P|$ of the endpoints of $D$ are of the type $v_h$ for $h \in P$, due to the presence of the edge $(v_h,v'_h)$. It follows that $|C-L| \leq 2k-|P|$, because no vertices $v'_h$ for $h \in P$ are in $C$.

  The final algorithm then enumerates all minimal vertex covers of $K_{P} \setminus L$ of size at most $2k-|P|$.
  This takes $O(|E(K_{P})| + k^{2} 2^{2k-|P|})$ time using an algorithm of Damaschke~\cite{Damaschke2006}, and there are most $2^{2k-|P|}$ such vertex covers.
  If no such vertex cover is found, then there is no edge dominating set~$D$ of $K_{P}$ such that $w_{P}(D) \leq k - k_P$ and $|D| \leq k$.
  Otherwise, for each such minimal vertex cover $C$ found, we compute a minimum-weight generalized edge cover of $K_{P}$ with respect to $C \cup L$, which takes $O(|V(K_P)|^3)$ time using an algorithm of Fernau~\cite{Fernau2006} (this algorithm straightforwardly extends to multigraphs).
  We output the smallest generalized edge cover found.
  The correctness of this algorithm was argued above, and the run time is as claimed.
\end{proof}
Using this lemma, we can finally prove the main result of this section.

\begin{theorem}
\label{thm:ds:fpt}
  Let $G$ be a claw-free graph with $n$ vertices and $m$ edges, and let $k$ be an integer.
  Then there is an algorithm that runs in $9^k \cdot O(n^{5})$ time and that correctly reports that either $\ds{G} > k$ or returns a smallest dominating set of $G$.
\end{theorem}
\begin{proof}
  We first preprocess the graph.
  We iteratively apply the following procedure.
  We check whether $G$ admits twins; this can be done in $O(n+m)$ time by Theorem~\ref{thm:twins-algo}.
  If so, then following Lemma~\ref{lem:ds:twins}, we can find a claw-free graph $G'$ with $|V(G')| < |V(G)|$ such that $\ds{G} = \ds{G'}$ in linear time, and continue to the next iteration with $G=G'$.
  Otherwise, $G$ does not admit twins, and we check whether $G$ admits a proper W-join; this can be done in $O(n^{2}m)$ time by Theorem~\ref{thm:proper-wjoin-algo}.
  If so, then following Corollary~\ref{cor:ds:Wjoins}, we can find a claw-free graph $G'$ with $|V(G')| < |V(G)|$ (note that proper W-joins of twin-less graphs have at least three vertices by definition, and two vertices remain after applying the corollary) such that $\ds{G} = \ds{G'}$ in linear time, and continue to the next iteration with $G=G'$.
  Since the number of vertices decreases in every iteration, there are at most $n$ iterations, and the run time of the procedure is $O(n^5)$.
  After the procedure finishes, we end up with a claw-free graph that does not admit twins nor proper W-joins for which the size of its smallest dominating set is unchanged.
  By abuse of notation, we call this graph $G$ as well.

  Suppose for now that $G$ is connected.
  We test whether $\ds{G} \leq 3$ using the algorithm of Lemma~\ref{lem:ds:constant} in $O(n^{4})$ time.
  If so, then the algorithm actually gives a dominating set of size $\ds{G}$, and we are done.
  Now we may assume that $\ds{G} > 3$. If $\alpha(G) \leq 3$, then $\ds{G} \leq 3$ by Proposition~\ref{prp:ds:is-size}, a contradiction.
  Hence, $\alpha(G) > 3$.

  We then apply the algorithm of Theorem~\ref{thm:main-base2} in $\stripTime$ time.
  If it outputs that $G$ is a proper circular-arc graph, then we can compute $\ds{G}$ in linear time by Theorem~\ref{thm:ds:circular}.
  If it outputs that $G$ is a thickening of an XX-trigraph, then we can compute $\ds{G}$ in $O(n^{4})$ time by Lemma~\ref{lem:ds:s2}.
  It remains that it outputs that $G$ admits a strip-structure such that for each strip $(J,Z)$:
  \begin{itemize}
    \item $(J,Z)$ is a trivial line graph strip, or
    \item $(J,Z)$ is a stripe for which
    \begin{itemize}
      \item $1\leq |Z| \leq 2$, $\alpha(J) \leq 3$, and $V(J) \setminus N[Z] \not= \emptyset$,
      \item $|Z|=1$, $J$ is a proper circular-arc graph, and either $J$ is a (strong) clique or $\alpha(J) > 3$,
      \item $|Z|=2$, $J$ is a proper interval graph, and either $J$ is a (strong) clique or $\alpha(J) > 3$, or
      \item $(J,Z)$ is a thickening of a member of $\mathcal{Z}_{5}$.
    \end{itemize}
  \end{itemize}
  We now determine the set $\mc{P}(H)$, which takes linear time.
  Then for each $P \subseteq \mc{P}(H)$, compute the triple $(K_{P},w_P,k_{P})$.
  Using Proposition~\ref{prp:ds:strip-boundary} and Lemma~\ref{lem:ds:circular-stripe}, \ref{lem:ds:z5}, \ref{lem:ds:stripe-is}, and~\ref{lem:ds:spot}, we can compute a set $D \subseteq V(J) \setminus Z$ of size $\dsg{J\setminus (Q \cup R)}{N[R]}$ that dominates $J \setminus (Q \cup N[R])$ for any disjoint $Q,R \subseteq Z$ for the strips $(J,Z)$ of the strip-structure. Following these lemmas, we can compute the triple $(K_{P},w_P,k_{P})$ in $O(n^{5})$ time.
  Note that the number of vertices of $K_P$ is in the order of $|E(H)|$, which is $O(n)$, because each strip contains at least one vertex of $G$ that is unique to it.
  Now, using Lemma~\ref{lem:ds:compute-eds}, decide in $O(2^{2k-|P|} n^{3})$ time whether there is an edge dominating set $D$ of $K_{P}$ of smallest weight such that $w_{P}(D) \leq k - k_P$ and $|D| \leq k$.
  If so, then by Lemma~\ref{lem:ds:reduction}, $\ds{G} \leq k$.
  Moreover, since~$D$ is an edge dominating set of $K_P$ of smallest weight, the transformation of Lemma~\ref{lem:ds:reduction} gives a smallest dominating set of $G$.
  If Lemma~\ref{lem:ds:compute-eds} returns for all $P \subseteq \mc{P}(H)$ that there is no edge dominating set $D$ of $K_{P}$ such that $w_{P}(D) \leq k - k_P$ and $|D| \leq k$, then we can return a ``no''-instance by Lemma~\ref{lem:ds:reduction}.
  Since $|\mc{P}(H)| \leq 2k$ by Lemma~\ref{lem:ds:striped}, the run time is $$O\left(\sum_{i=0}^{|\mc{P}(H)|} {|\mc{P}(H)| \choose i} (n^5 + 2^{2k-i}n^3)\right) = O\left(n^{5}\  \sum_{i=0}^{2k} {2k \choose i} 2^{2k-i}\right) = 9^k \cdot O(n^5).$$

  If $G$ is not connected, then note that the above algorithm computes a smallest dominating set of an $n$-vertex connected graph in time $9^k \cdot O(n^5)$ if it has size at most $k$, or correctly decides that such a dominating set has size more than $k$. Therefore, we apply this algorithm to each connected component of $G$ in turn.
  If the algorithm answers that every dominating set of the component has size more than $k$, then this answer is also true for $G$.
  Otherwise, we return the union of the answers found for each component, which is indeed a smallest dominating set of $G$.
  The total run time is $9^k \cdot O(n^5)$.
\end{proof}

\begin{corollary}
\label{cor:ds:fpt}
  Let $G$ be a claw-free graph with $n$ vertices, and let $k$ be an integer.
  Then we can decide in $9^k \cdot O(n^{5})$ time whether $\ds{G} \leq k$.
\end{corollary}
Instead of relying on Lemma~\ref{lem:ds:compute-eds} in the above theorem, we could also apply the algorithm of Corollary~\ref{cor:weds}.
This yields a slight improvement in the exponential factor.

\begin{theorem}
  Let $G$ be a claw-free graph with $n$ vertices, and let $k$ be an integer.
  Then we can decide in $O^*(8.9404^k)$ time whether $\ds{G} \leq k$.
\end{theorem}
Note that the polynomial factor hidden in the $O^{*}$ is at least $n^5$.

\section{Fixed-Parameter Algorithm for Connected Dominating Set}
\label{sec:cds}
In this section, we prove that {\sc Connected Dominating Set} parameterized by solution size is fixed-parameter tractable on claw-free graphs.
The general idea of how to establish this is similar to the approach for {\sc Dominating Set} on claw-free graphs.

Throughout the section, we rely on the following notation.
Let $G$ be a graph.
We let $\cds{G}$ denote the size of a smallest connected dominating set of $G$.
Furthermore, for each subset $A\subseteq V(G)$, let $\cdsa{G}{A}$ denote the size of a smallest subset $D$ of $V(G)$ dominating $V(G) \setminus A$ such that each connected component of $G[D]$ contains at least one vertex of $A$.
We also need the following well-known relation between $\ds{G}$ and $\cds{G}$.

\begin{proposition}
\label{prp:cds:boundbyds}
  For any graph $G$, $\ds{G} \leq \cds{G} \leq 3 \cdot \ds{G} - 2$.
\end{proposition}

\subsection{Removing Twins and Proper W-joins}
We first show how to remove twins and proper W-joins from a graph $G$ without changing the size of its smallest connected dominating set.
The reductions are powerful enough to operate on general graphs, while maintaining claw-freeness and connectivity.

\begin{lemma}
\label{lem:cds:twins}
  Let $a,b$ be twins of a connected graph $G$, and let $G' = G \setminus a$. Then $G'$ is connected and $\cds{G} = \cds{G \setminus a}$.
  Moreover, if $G$ is claw-free, then so is $G'$.
\end{lemma}
\begin{proof}
  Let $D$ be a smallest connected dominating set of $G$.
  Since $N[a] = N[b]$ (in particular, $a$ and $b$ are adjacent) and $D$ is a smallest connected dominating set of $G$, at most one of $a,b$ belongs to $D$.
  If $a \in D$, then replace $a$ by $b$.
  Then the resulting set is still a connected dominating set of $G$ of the same size as $D$, and thus also a connected dominating set of $G'=G \setminus a$.

  Let $D'$ be a smallest connected dominating set of $G'$. Then $D' \cap N[b] \not= \emptyset$. Since $N[a] = N[b]$, $D'$ is a connected dominating set of $G$ as well.
\end{proof}
While removing twins has a similar effect for smallest connected dominating sets as for smallest dominating sets, the situation for proper W-joins is somewhat different in the two problems.
The reason is that removing all edges crossing a W-join might disconnect the graph.

\begin{lemma}
\label{lem:cds:properWjoins}
  Let $(A,B)$ be a proper W-join of a connected graph $G$, and let $G'$ be the graph obtained from $G$ as follows:
  \begin{enumerate}
    \item remove $A$ and $B$,
    \item add vertices $a,a',b,b'$, and
    \item connect $a,a'$ to $N(A) \setminus B$, $b,b'$ to $N(B)\setminus A$, $a$ to $a'$, $b$ to $b'$, and $a$ to $b$.
  \end{enumerate}
  Then $G'$ is connected and $\cds{G} = \cds{G'}$.
  Furthermore, if $G$ is claw-free, then so is $G'$.
\end{lemma}
\begin{proof}
  Let $D$ be a smallest connected dominating set of $G$.
  We first show that we can assume that $|D \cap A| \leq 1$.
  For suppose that $|D \cap A| \geq 2$.
  Note that $N[x]\setminus B$ is the same for each $x \in A$ and $N[y] \setminus A$ is the same for each $y \in B$.
  However, we cannot remove all but one vertex of $D \cap A$ from $D$, because this would lead to a smaller connected dominating set and thus a contradiction.
  Therefore, the vertices of $D \cap A$ are the only vertices dominating $B$.
  But then we can replace a vertex of $D \cap A$ by a vertex of $B$ adjacent to a second vertex of $D \cap A$.
  Such vertices exist because the W-join $(A,B)$ is proper.
  It follows that we can assume $|D \cap A|\leq 1$, and similarly, $|D \cap B| \leq 1$.
  
  Now, since the W-join $(A,B)$ is proper, no vertex of $A$ (resp.~$B$) dominates all of $B$ (resp.~$A$).
  Hence, $D \cap (N[A]\setminus B) \not= \emptyset$ and $D \cap (N[B]\setminus A) \not= \emptyset$.
  Let $D'$ be the set containing all vertices of $D$ that belong to $G'$.
  Additionally, if $D \cap A \not= \emptyset$, then add $a$ to $D'$, and if $D \cap B \not= \emptyset$, then add $b$ to $D'$.
  By construction, $D'$ is a connected dominating set of $G'$ and $|D'| \leq |D|$.

  Let $D'$ be a smallest connected dominating set of $G'$.
  From the definition of a W-join, recall that every vertex of $V(G)\setminus(A \cup B)$ is either $A$-complete or $A$-anticomplete, and is either $B$-complete or $B$-anticomplete.
  Furthermore, $A$ and $B$ are cliques.
  Let $D$ be the set containing all vertices of $D'$ that belong to $G$, and additionally,
  \begin{itemize}  
    \item if $D'\cap\{a,a'\}\not=\emptyset$ and $D'\cap\{b,b'\}\not=\emptyset$, then also add any two adjacent vertices, one of $A$ and one of $B$, to $D$.
    \item otherwise, if one of $a,a' \in D'$, then add any vertex of $A$ to $D$, and if one of $b,b \in D'$, then add any vertex of $B$ to $D$.
  \end{itemize}
  Then $D$ is a connected dominating set of $G$ and $|D| \leq |D'|$.

  Finally, consider the case that $G$ is claw-free.
  For the sake of contradiction, suppose that $G'$ has a claw $\{w;x,y,z\}$ with center $w$.
  If at most one vertex of $w,x,y,z$ belongs to $\{a,a',b,b'\}$, then $\{w;x,y,z\}$ directly corresponds to a claw in $G$, a contradiction.
  Because both $\{a,a'\}$ and $\{b,b'\}$ form a clique, we have $N[a]\setminus\{b\} = N[a']$ and $N[b]\setminus\{a\} = N[b']$, and thus at most two of $w,x,y,z$ belong to $\{a,a',b,b'\}$.
  If $w \in V(G) \cap V(G')$, then assume without loss of generality that $x = a'$ and $y\in\{b,b'\}$.
  But then we can replace $x$ and $y$ by any antiadjacent pair of vertices of $A$ and~$B$ to obtain a claw in $G$, a contradiction.
  Hence, $w \in \{a,a',b,b'\}$.
  Then assume, without loss of generality, that $w = a$ and $x = b$.
  But then we can replace $w$ and $x$ by any adjacent pair of vertices of $A$ and $B$ to obtain a claw in $G$, a contradiction. Therefore, $G'$ is indeed claw-free.
\end{proof}

\subsection{Connected Dominating Set in Basic Classes}
Let $G$ be a connected claw-free graph.
Through the reductions of Lemma~\ref{lem:cds:twins} and Lemma~\ref{lem:cds:properWjoins}, we may essentially assume that $G$ admits no twins and proper W-joins. Consider the following lemma.

\begin{lemma}
\label{lem:cds:constant}
  Let $G$ be a graph and $k$ an integer.
  Then in $O(n^{k+1})$ time we can compute $\cds{G}$ (resp.\ $\cdsa{G}{A}$ for given $A \subseteq V(G)$) or correctly decide that $\cds{G} > k$ (resp.\ $\cdsa{G}{A} > k$).
\end{lemma}
The proof of the above lemma is similar to the proof of Lemma~\ref{lem:ds:constant}.

\begin{corollary}
\label{cor:cds:constant}
  Let $G$ be a graph such that $\alpha(G) \leq 3$.
  Then we can compute $\cds{G}$ in $O(n^{8})$ time.
\end{corollary}
\begin{proof}
  By Proposition~\ref{prp:ds:is-size} and~\ref{prp:cds:boundbyds}, $\cds{G} \leq 3\ds{G} -2 \leq 3\alpha(G)-2 \leq 7$, and the result follows from Lemma~\ref{lem:cds:constant}.
\end{proof}
Intuitively, we may thus assume that the connected claw-free graph $G$ admits no twins, admits no proper W-joins, and satisfies $\alpha(G) > 3$. Therefore, we can use the implications of Theorem~\ref{thm:main-base2} for $G$.
(A formal proof of these facts follows later.)

First, we show that if $G$ is a proper circular-arc graph or a thickening of an XX-trigraph, then we can compute $\cds{G}$ in polynomial time.

\begin{theorem}[Chang \cite{Chang1998}]
\label{thm:cds:circular}
  Let $G$ be a circular-arc graph.
  Then $\cds{G}$ can be computed in linear time.
\end{theorem}

\begin{lemma}
\label{lem:cds:s2}
  Let $G$ be a graph that is a thickening of an XX-trigraph.
  Then $\cds{G}$ can be computed in $O(n^{5})$ time.
\end{lemma}
In order to show this lemma, we need the following auxiliary result, which is similar to Lemma~\ref{lem:ds:tri}.

\begin{lemma}
\label{lem:cds:tri}
  Let $G$ be a graph that is thickening of a trigraph $G'$, and let $G''$ be the graph obtained from $G'$ by removing all semi-edges from $G'$.
  Then $\cds{G} \leq \cds{G''}$.
\end{lemma}
\begin{proof}
  Let $\mathcal{W}$ be a thickening of $G'$ to $G$, and let $D'$ be any connected dominating set of $G''$.
  Construct a set $D \subseteq V(G)$ as follows: for each $v' \in D'$, pick an arbitrary vertex $v \in W_{v'}$.
  We claim that $D$ is a dominating set of $G$.
  Consider any $w \in V(G) \setminus D$ and let $w' \in V(G')$ be such that $w \in W_{w'}$.
  If $w' \in D'$, then because $W_{w'}$ is a (strong) clique and $W_{w'} \cap D \not= \emptyset$ by construction, $w$ is dominated by $D$.
  Otherwise, there is a $u' \in D' \subseteq V(G'')$ such that $w'$ and $u'$ are adjacent.
  By the construction of $G''$, this implies that $w'$ and $u'$ are strongly adjacent in $G'$. By the definition of a thickening, each vertex of $W_{u'}$ is (strongly) complete to $W_{w'}$.
  By construction, $D \cap W_{u'} \not= \emptyset$, and thus $w$ is dominated by $D$.
  The claim follows.
  It remains to show that $G[D]$ is connected.
  Observe that if $u',v' \in D'$ are adjacent in $G''$, then $u'$ and $v'$ are strongly adjacent in $G'$, and thus each vertex of $W_{u'}$ is (strongly) complete to $W_{v'}$.
  Therefore, the vertices $u$ and $v$ chosen in $D \cap W_{u'}$ and $D \cap W_{v'}$ respectively are (strongly) adjacent.
  Hence, $G[D]$ is indeed connected.
  Since $|D| = |D'|$, $\cds{G} \leq \cds{G''}$.
\end{proof}
It is now straightforward to prove Lemma~\ref{lem:cds:s2}.
\begin{proof}
  Consider an XX-trigraph $G'$ such that $G$ is a thickening of $G'$.
  Remove all semi-edges from~$G'$ and call the resulting graph $G''$.
  By the definition of XX-trigraphs, it follows that $\{v_{1},v_{2},v_{3},v_{4}\}$ forms a connected dominating set of $G''$. Hence, by Lemma~\ref{lem:cds:tri}, $\cds{G} \leq \cds{G''} \leq 4$.
  The result then follows from Lemma~\ref{lem:cds:constant}.
\end{proof}

Intuitively, Theorem~\ref{thm:cds:circular} and Lemma~\ref{lem:cds:s2} imply that Theorem~\ref{thm:main-base2} yields a strip-structure.
Therefore, we turn to the basic classes of strips of Theorem~\ref{thm:main-base2}.
For reasons that will become clear later, we need stronger results for strips $(J,Z)$ than just being able to compute a smallest connected dominating set.
Intuitively, if we compute $\cds{J}$, then we enforce that any connected dominating set that attains this bound contains a vertex in $N(z)$ for each $z \in Z$.
Sometimes we might want to enforce this, but sometimes we do not.
Hence, we should be able to compute $\cds{J \setminus Q}$ for each $Q \subseteq Z$.
Also, we sometimes do not want to enforce that the dominating set is connected, but instead we want every component of the dominating set to contain a vertex of $N(Z)$, and therefore, we need to compute $\cdsa{J}{N[Z]}$.
We now do this for each strip type of Theorem~\ref{thm:main-base2}.

\begin{lemma}
\label{lem:cds:circular-stripe}
  Let $(J,Z)$ be a stripe such that either $J$ is a proper circular-arc graph and $|Z| = 1$, or $J$ is a proper interval graph and $|Z| = 2$. 
  Then we can compute $\cds{J \setminus Q}$ for each $Q \subseteq Z$ and $\cdsa{J}{N[Z]}$ in $O(nm)$ time, where $n$ and $m$ are the number of vertices and edges of $J$ respectively.
\end{lemma}
\begin{proof}
  By Theorem~\ref{thm:cds:circular}, we can compute $\cds{J \setminus Q}$ for each $Q \subseteq Z$ in linear time.

  Suppose that $J$ is a proper circular-arc graph and $|Z| = 1$.
  Then $\cdsa{J}{N[Z]} = \cds{J}$ if $V(J) \not= N[Z]$, and $\cdsa{J}{N[Z]} = 0$ otherwise.
  Hence, we can compute $\cdsa{J}{N[Z]}$ in linear time by Theorem~\ref{thm:cds:circular}.

  Suppose that $J$ is a proper interval graph and $|Z|=2$.
  Let $Z = \{z,z'\}$. First, find a set of intervals $I_{1},\ldots,I_{n}$ of the line that represent $J$ as a proper interval graph; such a representation can be found in linear time~\cite{DengHH1996}.
  Suppose that $I_{z}$ and $I_{z'}$ are the intervals representing the vertices $z$ and $z'$ of $Z$.
  Since $(J,Z)$ is a stripe, $N[z] \cap N[z'] = \emptyset$.
  Hence, without loss of generality, the right endpoint $p_{r}$ of~$I_{z}$ is to the left of the left endpoint $p_{l}$ of $I_{z'}$.

  Consider a smallest dominating set $D$ of $V(J)$ dominating $V(J) \setminus N[Z]$ such that each connected component of $J[D]$ contains at least one vertex of $N[Z]$, \ie $|D| = \cdsa{J}{N[Z]}$.
  Assume that $V(J) \setminus N[Z] \not= \emptyset$.
  Moreover, since $N[z]$ and $N[z']$ are cliques, $D$ has at most two connected components: at most one (denoted $C$) that contains a vertex of $N[z]$ and at most one (denoted $C'$) that contains a vertex of $N[z']$.
  If $D \cap N[z] = \emptyset$, then $\cdsa{J}{N[Z]} = \cds{J \setminus N[z]}$.
  Similarly, if $D \cap N[z'] = \emptyset$, then $\cdsa{J}{N[Z]} = \cds{J \setminus N[z']}$.
  It remains the case that $D$ contains a vertex of both $N[z]$ and $N[z']$. 
  
  Then either $\cdsa{J}{N[Z]} = \cds{J}$, or there is a point between $p_{r}$ and $p_{l}$ that is not covered by any interval of $D$.
  In the latter case, $C$ and $C'$ exist, $C \not= C'$, and in particular, there is a point $p$ between $p_{r}$ and $p_{l}$ such that each interval starting to the right of $p$ is dominated by $C'$, and each interval starting to the left of $p$ is dominated by $C$.

  This analysis implies the following algorithm.
  For each point $p$ between $p_{l}$ and $p_{r}$, split $J$ into two induced subgraphs: one induced by the intervals with their left endpoint to the left of $p$ and one induced by the intervals with their left endpoint to the right of $p$.
  Using Theorem~\ref{thm:cds:circular}, we can compute a smallest connected dominating set of both induced subgraphs in linear time, and sum their sizes.
  Let $M$ denote the smallest value over all points $p$.
  Since there are only $O(n)$ points $p$ for which the split of $J$ we made yields different induced subgraphs, computing $M$ takes $O(nm)$ time.
  Then $\cdsa{J}{N[Z]}$ is simply the minimum of $M$, $\cds{J \setminus N[z]}$, and $\cds{J \setminus N[z']}$ if $V(J) \setminus N[Z] \not= \emptyset$ and $0$ otherwise.
  Using Theorem~\ref{thm:cds:circular}, this implies that $\cdsa{J}{N[Z]}$ can be computed in $O(nm)$ time.
\end{proof}

\begin{lemma}
\label{lem:cds:z5}
  Let $(J,Z)$ be a thickening of a stripe $(J',Z') \in \mc{Z}_{5}$ such that $J$ is a graph.
  Then we can compute $\cds{J \setminus Q}$ for each $Q \subseteq Z$ and $\cdsa{J}{N[Z]}$ in $O(n^{5})$ time.
\end{lemma}
\begin{proof}
  Let $J''$ be the graph obtained from $J'$ by removing all semi-edges. By the definition of XX-trigraphs $\{v_1,v_2,v_3,v_4\}$ forms a connected dominating set of $J''$.
  Hence, by Lemma~\ref{lem:cds:tri}, $\cds{J} \leq \cds{J''} \leq 4$. 
  Observe that $\cdsa{J}{N[Z]} \leq \cds{J}$ by definition.
  The result then follows by Lemma~\ref{lem:cds:constant}.
\end{proof}

\begin{lemma}
\label{lem:cds:stripe-is}
  Let $(J,Z)$ be a stripe such that $J$ is a graph with $1 \leq |Z| \leq 2$, $\alpha(J) \leq 3$, $V(J) \setminus N[Z] \not= \emptyset$.
  Then we can compute $\cds{J \setminus Q}$ for each $Q \subseteq Z$ and $\cdsa{J}{N[Z]}$ in $O(n^{8})$ time.
\end{lemma}
\begin{proof}
  By Proposition~\ref{prp:ds:is-size} and~\ref{prp:cds:boundbyds}, $\cds{J} \leq 3\ds{J} -2 \leq 3\alpha(J)-2 \leq 7$. Observe that $\cdsa{J}{N[Z]} \leq \cds{J}$ by definition.
  The result then follows by Lemma~\ref{lem:cds:constant}.
\end{proof}
The following lemma is immediate from the definition of trivial line graph strips (see Definition~\ref{def:linestrip}).

\begin{lemma}
\label{lem:cds:spot}
  Let $(J,Z)$ be a trivial line graph strip such that $J$ is a graph.
  Then we can compute $\cds{J \setminus Q}$ for each $Q \subseteq Z$ and $\cdsa{J}{N[Z]}$ in constant time.
\end{lemma}
We also rely on the following observation.

\begin{proposition}
\label{prp:cds:strip-boundary}
  Let $(J,Z)$ be a strip such that $J$ is a graph.
  There is a connected set $D \subseteq V(J) \setminus Z$ of size $\cds{J \setminus Q}$ for each $Q \subseteq Z$ or $\cdsa{J}{N[Z]}$ that respectively dominates $V(J) \setminus Q$ or dominates $V(J) \setminus N[Z]$ while each component contains a vertex of $N[Z]$.
\end{proposition}
\begin{proof}
  Consider any $z \in Z$.
  By the definition of a strip, $z$ is a simplicial vertex.
  Therefore, $z$ can be replaced by any vertex of $N(z)$ without destroying the domination property or connectivity of a connected set needed to attain $\cds{J \setminus Q}$ for each $Q \subseteq Z$ or $\cdsa{J}{N[Z]}$.
\end{proof}

\subsection{Stitching Connected Dominating Sets}
We now describe a method to stitch connected dominating sets for individual strips of a strip-structure together to form a connected dominating set of the entire graph. The main idea is essentially the same as the one we employed for dominating set in the previous section, except that the connectivity requirement makes the stitching significantly simpler.

Let $G$ be a connected graph that is not a complete graph, let $(H,\eta)$ be a purified strip-structure of nullity zero for $G$ such that $1 \leq |\overline{F}| \leq 2$ for each $F \in E(H)$.

\begin{definition} 
  We say that $P \subseteq \mc{P}(H)$ (recall Definition~\ref{def:ds:striped}) is \emph{covering}  if $\overline{F} \cap P \not= \emptyset$ for each $F \in E(H)$ that corresponds to a stripe.
\end{definition}

Now let $P \subseteq \mc{P}(H)$ be covering.
We again define the triple $(K_{P}, w_{P}, k_{P})$ of a multigraph, an edge-weight function, and an integer as follows.
Initially, $K_{P}$ is empty and $k_{P} = 0$.
For each $h \in V(H)$, add a vertex $v_h$ to $K_{P}$.
If $h \in P$, then also add a vertex $v'_h$ to $K_{P}$ as well as an edge between $v_h$ and~$v'_h$ of weight $\infty$.
For each $F \in E(H)$, there are several cases:

\ccase{1} $\overline{F} = \{h\}$ for some $h \in V(H)$.\\
Let $(J,Z)$ be the strip corresponding to $F$. Note that $(J,Z)$ is in fact a stripe. Since $P$ is covering, $h \in P$.
Add vertices $v_F, v_F'$, an edge $e_F$ between $v_h$ and $v_F$ of weight $\cdsa{J}{N[Z]}$, and an edge $e_F'$ between $v_F$ and $v_F'$ of weight $\infty$ to $K_P$.

\ccase{2} $\overline{F} = \{h,h'\}$ for some $h,h' \in V(H)$. The strip $(J,Z)$ corresponding to $F$ is a stripe.\\
Let $Z = \{z,z'\}$, where $z$ corresponds to $h$ and $z'$ to $h'$. Since $P$ is covering, $h \in P$ or $h' \in P$. There are several cases:

\csubcase{2a} $h,h' \in P$.\\
Add an edge $e_F$ between $v_h$ and $v_{h'}$ of weight $\cds{J} - \cdsa{J}{N[Z]}$; additionally, increase $k_P$ by $\cdsa{J}{N[Z]}$.

\csubcase{2b} $h \in P$, $h' \not\in P$.\\
Add vertices $v_F, v_F'$, an edge $e_F$ between $v_h$ and $v_F$ of weight $\cds{J \setminus z'}$, and an edge $e_F'$ between $v_F$ and $v_F'$ of weight $\infty$ to $K_P$.

\csubcase{2c} $h \not\in P$, $h' \in P$.\\
Add vertices $v_F, v_F'$, an edge $e_F$ between $v_h$ and $v_F$ of weight $\cds{J \setminus z}$, and an edge $e_F'$ between $v_F$ and $v_F'$ of weight $\infty$ to $K_P$.

\ccase{3} $\overline{F} = \{h,h'\}$ for some $h, h' \in V(H)$. The strip $(J,Z)$ corresponding to $F$ is a spot.\\
Add an edge $e_{F}$ between $v_{h}$ and $v_{h'}$ of weight $1$.

\medskip\noindent
Since $1 \leq |\overline{F}| \leq 2$ for each $F \in E(H)$ and $(H,\eta)$ is purified, the cases are exhaustive. Moreover, $k_P \geq 0$ by construction and each edge of $K_{P}$ has received nonnegative weight.

Now recall that a \emph{connected edge dominating set} of a graph $G$ is an edge dominating set $D \subseteq E(G)$ such that $G[D]$ is connected. Here, we use $G[D]$ as a shorthand for the subgraph $G'$ of $G$ with $V(G')$ equal to the set of endpoints of the edges in $D$ and with $E(G') = D$.

\begin{lemma}
\label{lem:cds:reduction}
  Let $G$ be a connected graph that is not a complete graph, let $(H,\eta)$ be a purified strip-structure of nullity zero for $G$ such that $1 \leq |\overline{F}| \leq 2$ for each $F \in E(H)$, and let $k$ be an integer.
  Then $\cds{G} \leq k$ if and only if there is a $P \subseteq \mc{P}(H)$ such that $P$ is covering and there is a connected edge dominating set $D$ of $K_{P}$ such that $w_{P}(D) \leq k - k_P$.
  Moreover, such a set $D$ satisfies $|D| \leq k$.
\end{lemma}
\begin{proof}
  Suppose that there is a $P \subseteq \mc{P}(H)$ such that $P$ is covering and there is a connected edge dominating set $D$ of $K_{P}$ such that $w_{P}(D) \leq k - k_P$.
  We construct a set $D' \subseteq V(G)$ of size at most $k$ as follows (and later show that it is in fact a connected dominating set of $G$). Initially, $D' = \emptyset$. For each $F \in E(H)$, we consider the same cases as before:

  \ccase{1} $\overline{F} = \{h\}$ for some $h \in V(H)$.\\
  Let $(J,Z)$ be the strip corresponding to $F$.
  By construction, $e_F \in D$.
  Add a smallest set $B \subseteq V(J) \setminus Z$ to $D'$ such that $B$ dominates all vertices of $V(J) \setminus N[Z]$ and each component of $J[B]$ contains a vertex of $N[Z]$---note that the weight of $e_F$ is equal to the size of the set added to $D'$ (using Proposition~\ref{prp:cds:strip-boundary}).

  \ccase{2} $\overline{F} = \{h,h'\}$ for some $h,h' \in V(H)$. The strip $(J,Z)$ corresponding to $F$ is a stripe.\\
  Let $Z = \{z,z'\}$, where $z$ corresponds to $h$ and $z'$ to $h'$. Since $P$ is covering, $h \in P$ or $h' \in P$. There are several cases:

  \csubcase{2a} $h,h' \in P$.\\
  If $e_F \in D$, then add a smallest connected subset of $V(J) \setminus Z$ to $D'$ that dominates all vertices of $V(J)$---note that the weight of $e_F$ is $\cds{J} - \cdsa{J}{N[Z]}$ and $k_P$ was increased by $\cdsa{J}{N[Z]}$ in the construction; the sum is equal to the size of the set added to $D'$ (using Proposition~\ref{prp:cds:strip-boundary}).
  If $e_F \not\in D$, then add a smallest set $B \subseteq V(J) \setminus Z$ to $D'$ such that $B$ dominates all vertices of $V(J) \setminus N[Z]$ and each component of $J[B]$ contains a vertex of $N[Z]$---note that $k_P$ was increased by $\cdsa{J}{N[Z]}$ in the construction, which is equal to the size of the set added to $D'$ (using Proposition~\ref{prp:cds:strip-boundary}).

  \csubcase{2b} $h \in P$, $h' \not\in P$.\\
  By construction, $e_F \in D$.
  Add a smallest connected subset of $V(J) \setminus Z$ to $D'$ that dominates all vertices of $V(J) \setminus z'$---note that the weight of $e_F$ is equal to the size of the set added to $D'$ (using Proposition~\ref{prp:cds:strip-boundary}).

  \csubcase{2c} $h \not\in P$, $h' \in P$.\\
  By construction, $e_F \in D$.
  Add a smallest connected subset of $V(J) \setminus Z$ to $D'$ that dominates all vertices of $V(J) \setminus z$ --- note that the weight of $e_F$ is equal to the size of the set added to $D'$ (using Proposition~\ref{prp:cds:strip-boundary}).

  \ccase{3} $\overline{F} = \{h,h'\}$ for some $h, h' \in V(H)$.
  The strip $(J,Z)$ corresponding to $F$ is a spot.\\
  If $e_{F} \in D$, then add the vertex of $J \setminus Z$ to $D'$---note that the weight of $e_F$ is $1$, which is equal to the size of the set added to $D'$.
  If $e_F \not\in D$, then do nothing.

  \medskip\noindent
  From the construction of $D'$ and the analysis in each of the above cases, it follows that $|D'| = w_P(D) + k_P \leq k$.

  We now show that $D'$ is a dominating set of $G$.
  Let $v$ be an arbitrary vertex of $G$.
  Since $(H,\eta)$ is a strip-structure for $G$, by definition there is an $F \in E(H)$ such that $v \in \eta(F)$. Let $(J,Z)$ be the strip corresponding to $F$. Then $v \in N(Z)$ or $v \in V(J) \setminus N[Z]$. 
  If $v \in V(J) \setminus N(Z)$, then $(J,Z)$ is not a spot, and regardless of in which of the above subcases of Case~1 and~2 that $F$ falls, $v$ will be dominated by the dominating set that is added to $D'$ for $F$.
  So assume that $v \in N(z)$ for some $z \in Z$ and let $h \in V(H)$ correspond to $z$.
  If $(J,Z)$ is a spot, then we have a choice in $z$ (and thus also in $h$): we prefer that $h \in P$ if possible.

  If $h \not\in P$, then $F$ cannot fall into Case~1, since $P$ is covering.
  If $F$ falls into Case~2, then regardless of in which of the above subcases of Case~2 $F$ falls, $N(z)$ and thus $v$ will be dominated by the dominating set that is added to $D'$ for $F$.
  If $F$ falls into Case~3, then let $h'$ denote the single element of $\overline{F} \setminus\{h\}$.
  Since we preferred $h$ to $h'$, $h' \not\in P$.
  By the construction of $K_P$, there is an edge between~$v_h$ and $v_{h'}$ corresponding to $F$.
  Moreover, since $h,h' \not \in P$ and by the construction of $K_P$, every edge incident with $v_h$ and $v_{h'}$ corresponds to an edge introduced for a spot in Case~3.
  Therefore, since $D$ is a (connected) edge dominating set of $K_P$, there is an edge in $D$ that is incident with $v_h$ or $v_h'$ which corresponds to an $F' \in E(H)$ (possibly $F = F'$) with $h \in \overline{F'}$ or $h' \in \overline{F'}$ such that $F'$ corresponds to a spot.
  By the construction of $D'$, the single vertex of $\eta(F')$ is in $D'$, that is, $\eta(F') \subseteq D'$.
  Assume without loss of generality that $h \in \overline{F'}$.
  Then $v$ is dominated, because $\eta(F',h) = \eta(F') \subseteq D'$ and thus $\eta(h) \cap D' \not= \emptyset$, $v \in \eta(F,h) \subseteq \eta(h)$, and $\eta(h)$ is a clique.

  If $h \in P$, then due to the presence of the edge $e$ between $v_{h}$ and $v'_{h}$, there is an edge $e'$ incident with $v_h$ that is in $D$. Since $e$ has weight $\infty$, $e' \not= e$.
  Hence, there is an $F' \in E(H)$ (possibly $F' = F$) with $h \in \overline{F'}$ such that an edge associated with $F'$ in the construction is in $D$.
  In particular, we can choose $F'$ such that $|\overline{F'}| = 1$ and $\eta(F') \setminus \eta(F',h) \not= \emptyset$ or $|\overline{F'}| = 2$, since $G$ is not a complete graph.
  Then, by inspecting the above cases (Case~1, 2a, 2b, 2c, 3) for the strip $(J',Z')$ corresponding to $F'$, there is a vertex of $D'$ in $N(z')$, where $z'$ is the vertex of $Z'$ that corresponds to $h$.
  In other words, $\eta(F',h) \cap D' \not= \emptyset$, and thus $\eta(h) \cap D' \not= \emptyset$.
  Since $\eta(h)$ is a clique and $\eta(F',h) \subseteq \eta(h)$, each vertex in $\eta(h)$ is dominated, and in particular each vertex in $\eta(F,h)$ is dominated. 
  As $v \in N(z) = \eta(F,h)$, $v$ is dominated.
  Hence, $D'$ is a dominating set of $G$ of size at most $k$.

  It remains to prove that $G[D']$ is connected.
  Observe that in the construction of $K_P$, we add exactly one edge $e_F$ to $K_P$ for each $F \in E(H)$ (and possibly an edge $e_F'$ of weight $\infty$).
  Moreover, if $e_F \in D$ for some $F \in E(H)$, then $G[D' \cap \eta(F)]$ is connected and $D'$ contains a vertex of $\eta(F,h)$ for each $h \in \overline{F}$ by the construction of $D'$.

  Now let $v,v' \in D'$, where $v \in \eta(F)$ and $v' \in \eta(F')$ for $F,F' \in E(H)$.
  If $e_F, e_{F'} \in D$, then $v$ and $v'$ are connected in $G[D']$ by the above observation and the fact that $D$ is a connected edge dominating set.
  Suppose that $e_F \in D$ and $e_{F'} \not\in D$.
  This only happens if $F'$ falls into Case~2a.
  Moreover, there is a connected component of the set of vertices added to $D'$ for $F'$ that contains both $v'$ and a vertex of $\eta(F',h)$ for some $h \in \overline{F'}$ by construction.
  Since $D$ is an edge dominating set and an edge $(v_h,v'_h)$ exists in $K_P$, there is an edge $e_{F''} \in D$ for some $F'' \in E(H)$.
  Hence, by the above observation, there is a vertex $w \in \eta(F'',h) \cap D'$.
  Thus, $w$ and $v$ are in the same connected component of $G[D']$.
  But then $v$ and $v'$ are in the same connected component of $G[D']$, as $\eta(h)$ is a clique.
  It follows that $G[D']$ is connected.
  Similar arguments can be made if $e_{F} \in D$ and $e_{F'} \not\in D$ or if $e_{F},e_{F'} \not\in D$.
  Hence, $D'$ is a connected dominating set of $G$ of size at most $k$.

  \bigskip
  Suppose that $\cds{G} \leq k$.
  Let $D'$ be a smallest connected dominating set of $G$; hence, $|D'| \leq k$.
  Let~$P$ be such that $h \in P$ if and only if $h \in \mc{P}(H)$ and there is an $F \in E(H)$ with $h \in \overline{F}$ and $D' \cap \eta(F,h) \not= \emptyset$. 
  Then $P$ is covering. 
  Indeed, suppose that $\overline{F} \cap P \not= \emptyset$ for some $F \in E(H)$ that corresponds to a stripe, then $\eta(h) \cap D' = \emptyset$ for each $h \in \overline{F}$, and thus there is a vertex of $\eta(F)$ that is not dominated by $D'$ or~$D'$ has at least two connected components.

  We construct a set $D \subseteq E(K_{P})$ of weight at most $k-k_{P}$ that is a connected edge dominating set of $K_{P}$. Initially, $D = \emptyset$.
  For each $F \in E(H)$, we consider the following cases.

  \ccase{1} $\overline{F} = \{h\}$ for some $h \in V(H)$.\\
  Let $(J,Z)$ be the strip corresponding to $F$.
  Add $e_F$ to $D$---note that the weight of $e_F$ is $\cdsa{J}{N[Z]}$, which is at most $|D' \cap \eta(F)|$.

  \ccase{2} $\overline{F} = \{h,h'\}$ for some $h,h' \in V(H)$.
  The strip $(J,Z)$ corresponding to $F$ is a stripe.\\
  Let $Z = \{z,z'\}$, where $z$ corresponds to $h$ and $z'$ to $h'$.
  Since $P$ is covering, $h \in P$ or $h' \in P$. There are several cases:

  \csubcase{2a} $h,h' \in P$.\\
  If $G[D' \cap \eta(F)]$ is connected, $\eta(F',h) \cap D' \not= \emptyset$, and $\eta(F',h') \cap D' \not= \emptyset$, then add $e_F$ to $D$---note that the weight of $e_F$ is $\cds{J} - \cdsa{J}{N[Z]}$, which is at most $|D' \cap \eta(F)| - \cdsa{J}{N[Z]}$.
  Otherwise, do nothing.

  \csubcase{2b} $h \in P$, $h' \not\in P$.\\
  Add $e_F$ to $D$ --- note that the weight of $e_F$ is $\cds{J \setminus z'}$, which is at most $|D' \cap \eta(F)|$, since $\eta(h) \setminus \eta(F,h) \not= \emptyset$ and $h' \not\in P$.

  \csubcase{2c} $h \not\in P$, $h' \in P$.\\
  Add $e_F$ to $D$---note that the weight of $e_F$ is $\cds{J \setminus z}$, which is at most $|D' \cap \eta(F)|$, since $\eta(h') \setminus \eta(F,h') \not= \emptyset$ and $h \not\in P$.

  \ccase{3} $\overline{F} = \{h,h'\}$ for some $h, h' \in V(H)$. The strip $(J,Z)$ corresponding to $F$ is a spot.\\
  If the vertex of $J\setminus Z$ is in $D'$, then add $e_{F}$ to $D$ --- note that the weight of $e_{F}$ is $|D' \cap \eta(F)|$.

  \medskip\noindent
  From the construction of $D$ and the analysis in each of the above cases, it follows that $w_P(D) \leq |D| - k_P \leq k - k_P$. 
  To see that $|D| \leq k$, observe that when we add an edge $e_F$ to $D$ for some $F \in E(H)$, we can map it to a unique vertex of $\eta(F) \cap D'$.

  We now show that $D$ is an edge dominating set of $K_{P}$.
  Let $e \in E(K_{P})$. 
  If $e$ is incident on a vertex $v_{h}$ with $h \in P$, then there is an $F \in E(H)$ with $h \in \overline{F}$ such that $\eta(F,h) \cap D' \not= \emptyset$.
  Note that it is possible to choose $F$ such that $G[D' \cap \eta(F)]$ is connected and $\eta(F,h') \cap D' \not= \emptyset$ for each $h' \in \overline{F}$.
  For this $F$, by going through the above cases, an edge incident on $v_h$ will be added to $D$ and $e$ is dominated.
  If~$e$ is not incident on any vertex $v_{h}$ for $h \in V(H)$, then $e$ is an edge $e_F'$ of the construction of Case 1 or 2.
  Then, by the construction of $D$, $e$ is dominated. 
  We can now assume that $e$ is incident only on vertices of $v_h$ for $h \in V(H) \setminus P$.
  By the construction of $K_P$, this implies that $e$ was added in Case~3 due to an $F \in E(H)$ that corresponds to a spot.
  Hence, there is an $F' \in E(H)$ (possibly $F=F'$) such that $h \in \overline{F'}$ or $h' \in \overline{F'}$, and respectively $\eta(F',h) \cap D \not= \emptyset$ or $\eta(F',h') \cap D \not= \emptyset$.
  Assume it is the former.
  Note that $F'$ corresponds to a spot; otherwise, $h \in P$ because $\eta(F',h) \cap D \not= \emptyset$, a contradiction.
  It follows that the edge $e_{F'}$ is in $D$, and thus $e$ is dominated.
  Hence, $D$ is an edge dominating set.

  It remains to show that $K_P[D]$ is connected.
  Define $L \subseteq V(H)$ such that $h \in L$ if and only if $\eta(h) \cap D' \not= \emptyset$.
  Note that $P \subseteq L$.
  By the construction of $D$, it follows that $e_F \in D$ for some $F \in E(H)$ if and only if $\eta(F,h) \cap D' \not= \emptyset$ for each $h \in \overline{F} \cap L$ and $G[D' \cap \eta(F)]$ is connected.
  Hence, $K_P[D]$ is connected, because $G[D']$ is connected.
\end{proof}
The previous lemma shows that it suffices to resolve certain instances of the \textsc{Weighted Connected Edge Dominating Set} problem on multigraphs.
Fortunately, this problem is fixed-parameter tractable.
We need the following auxiliary result.

\begin{theorem}
\label{thm:ds:compute-ceds}
  Let $G$ be a connected graph on $n$ vertices, $w: E(G) \rightarrow \mathbb{N}^{+}$ a weight function, and $k \in \mathbb{N}^{+}$. 
  Then we can find in $O^{*}(16^{k}W)$ time a connected edge dominating set $D$ of smallest weight with $|D| \leq k$, or correctly decide that no such set $D$ exists, where $W = \max_{e \in E(G)} \{w(e)\}$.
\end{theorem}
\begin{proof}
  We adapt the algorithm of Lemma~\ref{lem:ds:compute-eds}.
  Recall that, given a graph $G$ and a set $S \subseteq V(G)$, a \emph{Steiner tree} of $S$ is a connected subgraph of $G$ that contains all vertices of $S$.

  Suppose that $G$ has a connected edge dominating set $D \subseteq V(G)$ with $|D| \leq k$.
  Then $G$ has a minimal vertex cover $C$ with $|C| \leq 2k$, by taking (a subset of) the endpoints of the edges in $D$.
  Moreover, $D$ forms a Steiner tree of $C$.
  Conversely, if $C \subseteq V(G)$ is a vertex cover of $G$, then any Steiner tree of $C$ forms a connected edge dominating set of $G$.

  This suggests the following algorithm.
  Enumerate all minimal vertex covers of $G$ of size at most~$2k$.
  This takes $O(|E(G)| + k^{2} 2^{2k})$ time using an algorithm of Damaschke~\cite{Damaschke2006}, and there are most $2^{2k}$ such vertex covers.
  If no such vertex cover is found, then there is no connected edge dominating set $D$ of $G$ with $|D| \leq k$.
  Otherwise, for each such minimal vertex cover $C$ found, we compute a minimum-weight Steiner tree of $C$, which takes $O^{*}(2^{2k}W)$ time using an algorithm of Nederlof~\cite{Nederlof2013}.
  We output the smallest-weight Steiner tree found.
  The correctness of this algorithm was argued above, and the run time is as claimed.
\end{proof}
This theorem extends immediately to multigraphs: parallel edges can be replaced by a single edge with weight equal to the minimum of the weights of the parallel edges, and a self-loop $e$ of a vertex $v$ can be replaced by adding a pendant vertex $v'$ and modifying $e$ so that it is between $v$ and~$v'$. 
Unfortunately, our weight functions $w_P$ might set the weight of certain edges of $K_P$ to $0$.
Hence, the above theorem might not be directly applicable, as it assumes positive edge weights.
However, there is a straightforward workaround.

\begin{lemma}
\label{lem:wceds}
  Let $G$ be a graph, $w: E(G) \rightarrow \mathbb{Z}$ a weight function, and $k \in \mathbb{Z}$.
  Then in linear time, we can construct a graph $G'$, a weight function $w' : E(G') \rightarrow \mathbb{N}^{+} \cup \{\infty\}$, and a $k' \in \mathbb{N}^{+}$ such that $G$ has a connected edge dominating set of weight at most $k$ (under $w$) if and only if $G'$ has a connected edge dominating set of weight at most $k'$ (under $w'$).
  In particular, $k' \leq \max\{1,1 + k-w(N)\}$, where $N \subseteq V(G)$ denotes the set of edges in $G$ of nonpositive weight (under $w$).
\end{lemma}
\begin{proof}
  If $k < w(N)$, then $G$ has no connected edge dominating set of weight at most $k$, so we can return a trivial ``no''-instance (for example, a graph with a single edge of weight $2$ and $k' = 1$).
  So assume that $k \geq w(N)$.
  If $N = \emptyset$, then we return $G' = G$, $w' = w$, $k' = k$ if $k > 0$, a trivial ``yes''-instance if $E(G) = \emptyset$ and $k \geq 0$, and a trivial ``no''-instance otherwise.
  Now recall that the \emph{contraction} of an edge $(u,v) \in E(G)$ removes $u$ and $v$, adds a vertex $x$ that is adjacent to all neighbors of $u$ and of $v$, and removes self-loops (but not parallel edges). We call $u$ and $v$ \emph{contracted} vertices and $x$ a \emph{fused} vertex.
  Note that connectivity is maintained under contraction.

  We now define $G'$ as follows.
  Starting with $G$, contract all edges of $N$, and let $X \subseteq V(G')$ denote the set of fused vertices and $Y \subseteq V(G)$ the set of contracted vertices.
  For each $x \in X$, add two vertices $s_x,t_x$ and edges $(x,s_x)$ of weight $0$ and $(s_x,t_x)$ of weight $\infty$. For exactly one $x \in X$, we set the weight of $(x,s_x)$ to $1$.
  Note that this defines a weight function $w'$ on $E(G')$ that is the same for all edges of $E(G') \cap E(G)$. Finally, set $k' = 1+k-w(N)$.
  Since $k \geq w(N)$, indeed $k' \in \mathbb{N}^{+}$.

  Suppose that $G$ has a connected edge dominating set $D$ with $w(D) \leq k$.
  Since the edges of $N$ have nonpositive weight, we can assume that $N \subseteq D$. 
  Then, we can also assume that $D$ contains no edges of $E(G) \setminus N$ for which both endpoints are in the same connected component of $G[N]$, which means that no edges of $D$ will be removed by removing loops that arise in the construction of $G'$.
  We now claim that $D' = \{(x,s_x) \mid x \in X\} \cup (D\setminus N)$ is a connected edge dominating set of $G'$ with $w'(D') \leq k'$.
  Suppose that $e \in E(G')$ is not dominated by $D'$.
  Then $e$ is not incident on $x$ or $s_x$ for any $x \in X$ by the construction of $D'$, so $e \in E(G)$ and $e$ is incident on two vertices in $V(G) \setminus Y$.
  As $D$ is an edge dominating set of $G$, there is an edge $f \in D$ incident on one of the endpoints of $e$. Moreover, $f \in E(G')$, since the endpoint shared with $e$ is not in $Y$.
  Hence, $f \in D'$ and $e$ is dominated, a contradiction.
  Therefore, $D'$ is an edge dominating set of $G'$.
  To see that it has weight at most $k'$, recall that $k \geq w(N)$, $N \subseteq D$, and that $D \setminus N$ does not contain any edges for which both endpoints are in the same connected component of $G[N]$, and thus $w'(D') = 1 + w'(D \setminus N) = 1 + w(D\setminus N) \leq 1 + k-W(N) = k'$.
  Finally, $G'[D']$ is connected, because $N \subseteq D$ and $D'$ can be obtained from $D$ by contracting all edges of $N$ and by adding edges incident on the fused vertices, which retains connectivity.

  Suppose that $G'$ has an edge dominating set $D'$ with $w'(D') \leq k'$.
  By the construction of $G'$ and~$w'$, the edges $(x,s_x)$ for each $x \in X$ are in $D'$, but the edges $(s_x,t_x)$ are not.
  We now claim that $D = (D'\setminus\{(x,s_x) \mid x \in X\}) \cup N$ is a connected edge dominating set of $G$ with $w(D) \leq k$.
  Indeed, $w(D) = w(D' \setminus\{(x,s_x) \mid x \in X \}) + w(N) = w'(D' \setminus\{(x,s_x) \mid x \in X\}) + w(N) \leq k'-1 + w(N) = k$.
  Suppose that $e \in E(G)$ is not dominated by $D$.
  Then $e$ is not incident on $Y$, so $e \in E(G')$ and $e$ is incident on two vertices of $V(G') \setminus \{x,s_x,t_x \mid x \in X\}$.
  As $D'$ is an edge dominating set of $G'$, there is an edge $f \in D'$ incident on one of the endpoints of $e$. Moreover, $f \in E(G)$, since the endpoint shared with $e$ is not in $\{x,s_x,t_x \mid x \in X\}$. Hence, $f \in D$ and $e$ is dominated, a contradiction.
  Therefore, $D$ is an edge dominating set of $G$.
  Finally, $G[D]$ is connected, because $N \subseteq D$ and $D$ is obtained from $D'$ by uncontracting all edges of $N$, which retains connectivity since the uncontracted edge is added to $D$.
\end{proof}

\begin{corollary}
\label{cor:wceds}
  Let $G$ be a multigraph, $w: E(G) \rightarrow \mathbb{Z}$ a weight function, and $k \in \mathbb{Z}$.
  Then we can decide in $O^{*}(16^{\max\{1,1+k-w(N)\}}W)$ time whether $G$ has a connected edge dominating set $D$ with $w(D) \leq k$, where $N \subseteq V(G)$ denotes the set of edges in $G$ of nonpositive weight (under $w$) and $W = \max_{e \in E(G)} \{w(e)\}$.
\end{corollary}

\begin{theorem}
  Let $G$ be a connected claw-free graph with $n$ vertices and let $k$ be an integer.
  Then there is an algorithm that runs in $O^{*}(64^k)$ time and that either correctly reports that $\cds{G} > k$ or returns a smallest connected dominating set of $G$.
\end{theorem}
\begin{proof}
  We first preprocess the graph.
  We iteratively apply the following procedure.
  We check whether $G$ admits twins; this can be done in $O(n+m)$ time by Theorem~\ref{thm:twins-algo}.
  If so, then following Lemma~\ref{lem:cds:twins}, we can find a connected claw-free graph $G'$ with $|V(G')| < |V(G)|$ such that $\cds{G} = \cds{G'}$ in linear time, and continue to the next iteration with $G = G'$.
  Otherwise, $G$ does not admit twins, and we check whether $G$ admits a proper W-join; this can be done in $O(n^{2}m)$ time by Theorem~\ref{thm:proper-wjoin-algo}.
  If so, the following Lemma~\ref{lem:cds:properWjoins}, we can find a connected claw-free graph $G'$ with $|E(G')| < |E(G)|$ (note that proper W-joins have at least two edges by definition, and one remains after applying the lemma) such that $\cds{G} = \cds{G'}$ in linear time, and continue to the next iteration with $G=G'$.
  Since the number of edges decreases in every iteration, there are at most $m$ iterations, and the run time of the procedure is $O(n^6)$.
  After the procedure finishes, we end up with a connected claw-free graph that does not admit twins nor proper W-joins for which the size of its smallest dominating set is unchanged.
  By abuse of notation, we call this graph $G$ as well.

  We test whether $\cds{G} \leq 7$ using the algorithm of Lemma~\ref{lem:cds:constant} in $O(n^{8})$ time.
  If so, then the algorithm actually gives a connected dominating set of size $\cds{G}$, and we are done.
  Otherwise, if $\alpha(G) \leq 3$, then $\cds{G} \leq 7$ by Proposition~\ref{prp:ds:is-size} and Proposition~\ref{prp:cds:boundbyds}, a contradiction. Hence, $\alpha(G) > 3$.

  We then apply Theorem~\ref{thm:main-base2} in $\stripTime$ time.
  If it outputs that $G$ is a proper circular-arc graph, then we can compute $\cds{G}$ in linear time by Theorem~\ref{thm:cds:circular}.
  If it outputs that $G$ is a thickening of an XX-trigraph, then we can compute $\cds{G}$ in $O(n^{5})$ time by Lemma~\ref{lem:cds:s2}.
  It remains that it outputs that $G$ admits a strip-structure such that for each strip $(J,Z)$:
  \begin{itemize}
    \item $(J,Z)$ is a trivial line graph strip, or
    \item $(J,Z)$ is a stripe for which
    \begin{itemize}
      \item $1\leq |Z| \leq 2$, $\alpha(J) \leq 3$, and $V(J) \setminus N[Z] \not= \emptyset$,
      \item $|Z|=1$, $J$ is a proper circular-arc graph, and either $J$ is a (strong) clique or $\alpha(J) > 3$,
      \item $|Z|=2$, $J$ is a proper interval graph, and either $J$ is a (strong) clique or $\alpha(J) > 3$, or\item $(J,Z)$ is a thickening of a member of $\mathcal{Z}_{5}$.
    \end{itemize}
  \end{itemize}
  We now determine the set $\mc{P}(H)$, which takes linear time.
  Then for each $P \subseteq \mc{P}(H)$ such that $P$ is covering, compute the triple $(K_{P},w_P,k_{P})$.
  Using Proposition~\ref{prp:cds:strip-boundary} and Lemma~\ref{lem:cds:circular-stripe}, \ref{lem:cds:z5}, \ref{lem:cds:stripe-is}, and~\ref{lem:cds:spot}, we can to compute a set $D \subseteq V(J) \setminus Z$ of size $\cds{J \setminus Q}$ that dominates $J \setminus Q$ for any $Q \subseteq Z$ and a set $D \subseteq V(J) \setminus Z$ of size $\cdsa{J}{N[Z]}$ that dominates $J \setminus N[Z]$ and that has for each component a vertex of $N(Z)$ for the strips $(J,Z)$ of the strip-structure.
  Hence, we can compute the triple $(K_{P},w_P,k_{P})$ in $O(n^{8})$ time.
  Note that the number of vertices of $K_P$ is in the order of $|E(H)|$, which is $O(n)$, because each strip contains at least one vertex of $G$ that is unique to it. 
  Moreover, the weight of each edge of $K_P$ under $w_P$ is nonnegative and at most $k+O(1) = O(n)$, or we can answer ``no''.
  Then, using Corollary~\ref{cor:wceds}, we can decide in $O^{*}(16^{k})$ time whether $K_P$ has a connected edge dominating set $D$ of smallest weight such that $w_P(D) \leq k - k_P$.
  If so, then by Lemma~\ref{lem:cds:reduction}, $\cds{G} \leq k$.
  Moreover, since~$D$ is a connected edge dominating set of $K_P$ of smallest weight, the transformation of Lemma~\ref{lem:cds:reduction} gives a smallest dominating set of $G$.
  If Corollary~\ref{cor:wceds} returns for all $P \subseteq \mc{P}(H)$ that are covering that there is no connected edge dominating set $D$ of $K_{P}$ with $|D| \leq k$, then we can answer NO by Lemma~\ref{lem:cds:reduction}.
  Since $|\mc{P}(H)| \leq 2k$ by Lemma~\ref{lem:ds:striped}, the run time is $O^{*}(2^{2k} 16^k) = O^{*}(64^k)$.
\end{proof}
Note that the polynomial factor hidden in the $O^{*}$ is at least $n^8$.
Also observe that the above algorithm can be modified to take into account that in Corollary~\ref{cor:wceds}/Theorem~\ref{thm:ds:compute-ceds} we are only interested in minimal vertex covers that contain every vertex $v_h$ for $h \in P$.
This can be argued in a manner similar to Lemma~\ref{lem:ds:compute-eds}.
Hence, we only need to enumerate minimal vertex covers of size $2k-|P|$ in Theorem~\ref{thm:ds:compute-ceds}. Then the run time improves to
$$\sum_{i=0}^{2k} {2k \choose i} 2^{2k-i} 2^{2k} = 2^{2k} \sum_{i=0}^{2k} {2k \choose i} 2^{2k-i} = 2^{2k}3^{2k}.$$

\begin{corollary}
\label{cor:cds:fpt}
  Let $G$ be a connected claw-free graph and let $k$ be an integer.
  Then we can decide in $O^{*}(36^k)$ time whether $\cds{G} \leq k$.
\end{corollary}

\section{Polynomial Kernel for \textsc{Dominating Set}}
\label{sec:kernel}
We show that \textsc{Dominating Set} has a polynomial kernel on claw-free graphs.
The basic idea of our kernel is to replace each stripe of the strip-structure given in Theorem~\ref{thm:main-base-kernel} by a stripe of size at most constant or linear in $k$.
We show how to do this in Sect.~\ref{sec:kernel:stripes}.
We then reduce the strip-structure itself to have polynomial size in Sect.~\ref{sec:kernel:reduce}.
We combine these ideas in Sect.~\ref{sec:kernel:kernel} to give the actual kernel.

\subsection{Reducing Stripes} \label{sec:kernel:stripes}
In this subsection, we show how to reduce the stripes of a strip-structure for a claw-free graph to have small size, and essentially provide a kernel for individual stripes.
These kernels preserve the value of $\dsg{J\setminus(Q \cup R)}{N[R]}$ for any disjoint $Q,R \subseteq Z$, in line with the ideas behind the  algorithm described in Sect.~\ref{sec:ds}.
We show that each `stripe kernel' has linear size, and often even constant size. 

In order to actually find and use such kernels, we need the following lemmas.
The first lemma shows that we can replace one stripe with another that, when appropriately chosen, yields an equivalent instance of \textsc{Dominating Set}.

\begin{lemma}
\label{lem:kernel:equiv}
  Let $G$ be a graph, let $(H,\eta)$ be a purified strip-structure of nullity zero for $G$, and let $F \in E(H)$ correspond to a stripe $(J,Z)$.
  Let $(J',Z')$ be a claw-free stripe and $k' \geq 0$ be an integer such that $Z = Z'$ and $\dsg{J\setminus(Q \cup R)}{N[R]} = \dsg{J'\setminus(Q \cup R)}{N[R]} + k'$ for any disjoint $Q,R \subseteq Z$.
  Let $G'$ be obtained from $G$ by replacing the vertices of $\eta(F) = V(J) \setminus Z$ by those of $V(J') \setminus Z$ and let $k$ be an integer.
  Then $G'$ has a purified strip-structure $(H',\eta')$ of nullity zero such that $G'$ and $(H',\eta')$ have the following properties:
  \begin{itemize}
    \item $V(H) = V(H')$, $E(H) = E(H')$, $\overline{F'}$ in $H$ is equal to $\overline{F'}$ in $H'$ for each $F' \in E(H) = E(H')$, 
    \item $\eta'(F') = \eta(F')$ for each $F' \in E(H) \setminus \{F\}$ and $\eta'(F) = V(J') \setminus Z$,
    \item $\ds{G} \leq k$ if and only if $\ds{G'} \leq k - k'$, and
    \item if $G$ is claw-free, then so is $G'$.
  \end{itemize}
\end{lemma}
\begin{proof}
  The statements about the existence of $(H',\eta')$ are straightforward.
  Hence, we focus on proving the final two statements of the lemma.

  Let $D$ be a smallest dominating set of $G$, and assume $|D| \leq k$.
  We first construct two disjoint sets $Q,R \subseteq Z$.
  Initially, $Q=R=\emptyset$.
  Consider each $h \in \overline{F}$ and let $z \in Z$ correspond to $h$.
  If $D \cap \eta(h) = \emptyset$, then add $z$ to $Q$. If $D \cap \eta(h) \not= \emptyset$ and $D \cap \eta(F,h) = \emptyset$, then add $z$ to $R$.

  Observe that the construction of $Q,R$ implies that if a set $S \subseteq V(J) \setminus Z$ dominates $V(J) \setminus (Q \cup N[R])$, then $(D \setminus \eta(F)) \cup S$ is a dominating set of $G$.
  Indeed, let $h \in \overline{F}$ with corresponding $z \in Z$. If $D \cap \eta(h) = \emptyset$, then $S$ dominates $\eta(F,h) = N(z)$.
  If $D \cap \eta(h) \not= \emptyset$ and $D \cap \eta(F,h) = \emptyset$, then $S$ does not necessarily dominate $\eta(F,h) = N(z)$, because it is already dominated by a vertex in $D \cap (\eta(h) \setminus \eta(F,h)) \not= \emptyset$.
  Otherwise, the vertex $z$ ensures that $N(z)$ contains a vertex of $S$.
  All other vertices are straightforwardly dominated by $D \setminus \eta(F)$ or $S$.
  It is important to note that this observation holds regardless of the inner structure of $(J,Z)$.

  Since $D$ is a smallest dominating set of $G$, the above observation implies that $|D \cap \eta(F)| \leq \dsg{J\setminus(Q \cup R)}{N[R]}$.
  A similar argument as in the above observation also implies that if $S' \subseteq V(J') \setminus Z$ dominates $V(J') \setminus (Q \cup N[R])$, then $(D \setminus\eta(F)) \cup S'$ is a dominating set of $G'$.
  Using this with a set~$S'$ that attains $\dsg{J'\setminus(Q \cup R)}{N[R]}$ implies that 
  \begin{equation*}
    \ds{G'} \leq |(D \setminus \eta(F)) \cup S'|
            \leq \ds{G} - \dsg{J\setminus(Q \cup R)}{N[R]} + \dsg{J'\setminus(Q \cup R)}{N[R]}
            \leq \ds{G} - k' \leq k - k' \enspace .
  \end{equation*}
  For the converse, we apply the same argument.
  Since $\eta(h)$ is a clique for each $h \in V(H)$, it follows immediately from the fact that $J'$ is claw-free that $G'$ is claw-free if $G$ is.
\end{proof}
We also need the following definition in order to simplify the algorithms that find `stripe kernels'.

\begin{definition}
\label{def:kernel:semi}
  Let $(J,Z)$ be a stripe that is a thickening $\mc{W}$ of a stripe $(J',Z')$ such that $J$ is a graph.
  Then $(J,Z)$ is called \emph{semi-thickened} if $W_{v'} \cup W_{w'}$ does not contain twins in $J$ for any $v',w' \in V(J') \setminus Z'$ (possibly $v'=w'$) that are not incident on a semi-edge in $J'$.
\end{definition}
Note that this definition implies in particular that $v',w' \in V(J') \setminus Z'$ are not twins in $J'$ if $v',w'$ are not incident on a semi-edge.

We now show that we get this property essentially for free for stripes that arise in a graph that does not admit twins.

\begin{lemma}
\label{lem:kernel:semi}
  Let $G$ be a graph, let $(H,\eta)$ be a purified strip-structure of nullity zero for $G$, and let $F \in E(H)$ correspond to a stripe $(J,Z)$ such that $J$ is a graph and $(J,Z)$ is a thickening $\mc{W}$ of a stripe $(J',Z')$.
  If $G$ does not admit twins, then $(J,Z)$ is semi-thickened.
\end{lemma}
\begin{proof}
  Suppose that $G$ does not admit twins.
  Let $v',w' \in V(J') \setminus Z'$ (possibly $v'=w'$) that are not incident on a semi-edge in $J'$.
  Suppose that $W_{v'} \cup W_{w'}$ contains two vertices $v,w$ that form twins in $J$. Since $v',w' \not\in Z'$, $v,w \in V(G)$, and the definition of a strip-structure implies that $G$ admits twins $v,w$, a contradiction.
  Hence, $(J,Z)$ is semi-thickened.
\end{proof}
We now show a basic operation to reduce the size of certain stripes.
This operation preserves claw-freeness and semi-thickedness.

\begin{lemma}
\label{lem:kernel:zis2-base}
  Let $G$ be a graph, let $(H,\eta)$ be a purified strip-structure of nullity zero for $G$, and let $F \in E(H)$ correspond to a stripe $(J,Z)$ such that $J$ is a graph and $(J,Z)$ is a thickening $\mc{W}$ of a stripe $(J'',Z'')$.
  Then in linear time, we can compute a graph $G'$ and a purified strip-structure $(H',\eta')$ of nullity zero such that:
  \begin{itemize}
    \item $V(H) = V(H')$, $E(H) = E(H')$, $\overline{F'}$ in $H$ is equal to $\overline{F'}$ in $H'$ for each $F' \in E(H) = E(H')$, 
    \item $\eta'(F') = \eta(F')$ for each $F' \in E(H) \setminus \{F\}$, 
    \item $\ds{G} = \ds{G'}$, 
    \item the stripe $(J',Z')$ corresponding to $F$ under $\eta'$ is a thickening $\mc{W'}$ of $(J'',Z'')$,
    \item $|W'_{w''}| = 1$ if $w'' \in V(J'')$ is not incident on a semi-edge in $J''$, 
    \item $|W'_{v''}| + |W'_{w''}| \leq 3$ if $v'',w'' \in V(J'')$ are semiadjacent in $J''$,
    \item if $G$ is claw-free, then so is $G'$, and
    \item if $J$ is semi-thickened, then so is $J'$.
  \end{itemize}
\end{lemma}
It is important to note that this lemma crucially relies on knowing both $(J'',Z'')$ and the thickening $\mc{W}$, and we assume them to be given as input to the algorithm.
\begin{proof}
  Let $w'' \in V(J'') \setminus Z''$ and suppose that $w''$ is not incident on a semi-edge.
  If $|W_{w''}| \geq 2$, then any two vertices of $W_{w''}$ form twins in $G$. Using Lemma~\ref{lem:ds:twins}, we can remove one of these vertices from $G$ without changing the size of the smallest dominating set and without disturbing the absence of claws.

  Let $v'',w'' \in V(J'') \setminus Z''$ and suppose that $v''$ and $w''$ are semiadjacent.
  Since $J$ is a graph, $(W_{v''},W_{w''})$ is a W-join in $G$.
  Using Lemma~\ref{lem:ds:Wjoins}, we can remove vertices of this W-join from $G$ without changing the size of the smallest dominating set and without disturbing the absence of claws.

  By applying this to all $w'' \in V(J'') \setminus Z''$ that are not incident on a semi-edge, and to all $v'',w'' \in V(J'') \setminus Z''$ that are semiadjacent, we arrive at the requested graph $G'$.
  Using the definition of a thickening, it is straightforward to see that $G'$ has a purified strip-structure $(H',\eta')$ of nullity zero such that $G'$ and $(H',\eta')$ have the required properties.
  Moreover, Lemma~\ref{lem:ds:twins} and~\ref{lem:ds:Wjoins} imply that $G'$ and $(H',\eta')$ can be computed in linear time.
  Since we only delete vertices, if $G$ is claw-free, then so is $G'$. The construction implies straightforwardly that $(J',Z')$ is a thickening of $(J'',Z'')$.
  Hence, if $J$ is semi-thickened, then so is $J'$.
\end{proof}
This lemma leads to the following definition.

\begin{definition}
  Let $G$ be a graph, let $(H,\eta)$ be a purified strip-structure of nullity zero for $G$, and let $F \in E(H)$ correspond to a stripe $(J,Z)$ such that $J$ is a graph and $(J,Z)$ is a thickening $\mc{W}$ of a stripe $(J'',Z'')$.
  Then $(J,Z)$ is \emph{reduced} if $|W_{w''}| = 1$ when $w'' \in V(J'')$ is not incident on a semi-edge in $J''$ and $|W'_{v''}| + |W'_{w''}| \leq 3$ when $v'',w'' \in V(J'')$ are semiadjacent in $J''$.
\end{definition}
We now describe how to find `stripe kernels' and consider stripes $(J,Z)$ with $|Z| = 1$ and $|Z| = 2$ separately.

\subsubsection{\texorpdfstring{$|Z|=1$}{|Z|=1}}
Stripes $(J,Z)$ with $|Z| = 1$ are reasonable straightforward to reduce to an equivalent stripe (in the sense of Lemma~\ref{lem:kernel:equiv}) of constant size.

\begin{lemma}
\label{lem:kernel:zis1}
  Let $(J,Z)$ be a stripe such that $J$ is a graph and $|Z|=1$.
  Given the values of $\dsg{J\setminus(Q \cup R)}{N[R]}$ for any disjoint $Q,R \subseteq Z$, we can compute in constant time a claw-free stripe $(J',Z')$ and an integer $k' \geq 0$ such that $Z = Z'$, $|V(J')| \leq 4$, and $\dsg{J\setminus(Q \cup R)}{N[R]} = \dsg{J'\setminus(Q \cup R)}{N[R]} + k'$ for any disjoint $Q,R \subseteq Z$.
\end{lemma}
\begin{proof}
  Let $Z = \{z\}$.
  First, we observe that $\dsg{J \setminus Z}{N[Z]} \leq \ds{J \setminus Z} \leq \ds{J}$.
  The first inequality holds by definition.
  To see the second inequality, recall that $N[Z]$ is a (strong) clique.
  Consider any dominating set $D$ of $J$.
  If $z \in D$, then replace it by any vertex in $N(z)$.
  The resulting set is still a dominating set of $J$, and by construction, also of $J \setminus Z$.

  Second, observe that $0 \leq \ds{J} - \dsg{J \setminus Z}{N[Z]} \leq 1$, because any dominating set of $J \setminus Z$ where $N[Z]$ does not need to be dominated can be made into a dominating set of $J$ by simply adding $z$ to it.

  From the above observations, there are three cases to consider.
  First, suppose that $\dsg{J \setminus Z}{N[Z]} + 1 = \ds{J \setminus Z} = \ds{J}$.
  Let $J'$ be a two-vertex path with $z$ as one of its ends and let $Z' = \{z\}$.
  Let $k' = \dsg{J \setminus Z}{N[Z]}$.
  Note that $\ds{J'} = \ds{J' \setminus Z'} = 1$ and that $\dsg{J' \setminus Z'}{N[Z']} = 0$. 
  Hence, $(J',Z')$ is as required by the lemma statement.

  Suppose that $\dsg{J \setminus Z}{N[Z]} = \ds{J \setminus Z} = \ds{J} - 1$.
  Let $J'$ be a four-vertex path with $z$ as one of its ends and let $Z' = \{z\}$.
  Let $k' = \dsg{J \setminus Z}{N[Z]}-1$. Note that $\ds{J'} = 2$ and that $\dsg{J' \setminus Z'}{N[Z']} = \ds{J' \setminus Z'} = 1$.
  Moreover, $k' \geq 0$, because $\ds{J \setminus Z} < \ds{J}$, implying that any smallest dominating set $D$ for $J \setminus Z$ satisfies $D \cap N(Z) = \emptyset$, even though the vertices in the nonempty set $N(Z)$ have to be dominated.
  Therefore, $V(J) \setminus N[Z] \not= \emptyset$, and thus $\dsg{J \setminus Z}{N[Z]} > 0$. Hence, $(J,Z')$ is as required by the lemma statement.

  Finally, suppose that $\dsg{J \setminus Z}{N[Z]} = \ds{J \setminus Z} = \ds{J}$.
  Let $J'$ be a three-vertex path with $z$ as one of its ends and let $Z' = \{z\}$. Let $k' = \ds{J} - 1$.
  Note that $\dsg{J' \setminus Z'}{N[Z']} = \ds{J' \setminus Z'} = \ds{J'} = 1$.
  Moreover, $\ds{J} \geq 1$ as $N(z)$ is nonempty, and thus $k' \geq 0$. Hence, $(J,Z')$ is as required by the lemma statement.

  Given the values of $\dsg{J\setminus(Q \cup R)}{N[R]}$ for any disjoint $Q,R \subseteq Z$, we can trivially compute the $(J',Z')$ prescribed by the three above cases in constant time.
\end{proof}

\subsubsection{\texorpdfstring{$|Z|=2$}{|Z|=2}}
Stripes $(J,Z)$ with $|Z|=2$ are significantly harder to reduce, because it seems infeasible to provide the generic construction like we provided for the case $|Z|=1$: there would be too many different cases to deal with.
Therefore, we consider the stripes with $|Z|=2$ that can occur in Theorem~\ref{thm:main-base-kernel}, and find an explicit `stripe kernel' for each of them.

\begin{lemma}
\label{lem:kernel:interval}
  Let $(J,Z)$ be a stripe such that $J$ is a proper interval graph and $|Z| = 2$.
  Then we can compute in linear time a claw-free stripe $(J',Z')$ such that $Z = Z'$, $|V(J')| \leq 18\,\ds{J}+2$, and $\dsg{J\setminus(Q \cup R)}{N[R]} = \dsg{J'\setminus(Q \cup R)}{N[R]}$ for any disjoint $Q,R \subseteq Z$.
\end{lemma}
\begin{proof}
  We find a proper interval representation of $J$; this takes linear time~\cite{DengHH1996}.
  From now on, we will not distinguish between the vertices and their corresponding intervals.
  Number the intervals in order of their left endpoints: $v_1,\ldots,v_n$, where $n = |V(J)|$.
  We make two crucial claims.
  \begin{cclaim}
  \label{c:interval:1}
    $\dsg{J\setminus(Q \cup R)}{N[R]} = \ds{J\setminus(Q \cup N[R])}$ for any disjoint $Q, R \subseteq Z$.
  \end{cclaim}
  \begin{cproof}
    Let $D \subseteq V(J)\setminus(Q \cup R)$ be a smallest set dominating $V(J) \setminus (Q \cup N[R])$ that (among all such smallest dominating sets) has the least number of vertices in $N(R)$.
    Suppose that $D \cap N(R) \not= \emptyset$, and let $v_{i} \in N(R) \cap D$.
    By the choice of $D$, $v_{i}$ must have a neighbor $v_{j}$ that is not in $N[R]$, nor in the dominating set, nor dominated by $D\setminus N[R]$; otherwise, either $D$ is not smallest or $D \cap N(R)$ is not smallest, a contradiction.
    We will show that we can replace $v_i$ by $v_j$.
    Let $v_k \not= v_j$ be any neighbor of $v_i$ that is not in $N[R]$, nor in the dominating set, nor dominated by $D\setminus N[R]$.
    Note that any such~$v_k$ is uniquely dominated by $v_i$, and might not be dominated by $v_j$.
    Since $v_{j},v_{k}$ are not in $N(R)$ and $J$ is a proper interval graph, $j,k > i$ or $j,k < i$.
    Assume without loss of generality that it is the former.
    Then $i < j < k$ or $i < k < j$, and thus the fact that $J$ is a proper interval graph implies that~$v_{j}$ and~$v_{k}$ are adjacent.
    Hence, $(D \setminus \{v_{i}\})\cup \{v_{j}\}$ is also a smallest subset of $V(J) \setminus (Q \cup R)$ dominating $V(J) \setminus (Q \cup R)$, but with a strictly smaller number of vertices in $N(R)$, contradicting the choice of $D$. Therefore, $D \cap N(R) = \emptyset$, and the claim follows.
  \end{cproof}

  \begin{cclaim}
  \label{c:interval:2}
    The following greedy algorithm finds a smallest dominating set of a proper interval graph: consider the leftmost still undominated vertex $v$ and add the rightmost neighbor of $v$ to the dominating set.
    Moreover, the algorithm runs in linear time.
  \end{cclaim}
  \begin{cproof}
  Straightforward by induction using that the graph is a proper interval graph.
  \end{cproof}

  We now show the following.
  Consider arbitrary disjoint $Q, R \subseteq Z$.
  Let $D$ denote a smallest dominating set for $J \setminus (Q \cup N[R])$ as found by the strategy of Claim~\ref{c:interval:2}.
  Let $D^{u}$ denote the set of leftmost undominated vertices as considered by the strategy.
  For any $A \subseteq V(J)$, consider the graph $(J[A \cup D \cup D^{u}])\setminus(Q \cup N[R])$.
  Observe that the vertices in $A$ are irrelevant to the algorithm of Claim~\ref{c:interval:2}, because the algorithm only uses the vertices in $D^{u}$ to compute $D$.
  Therefore, $\ds{(J[A \cup D \cup D^{u}])\setminus(Q \cup N[R])} = \ds{J \setminus (Q \cup N[R])}$. 

  We now construct $J'$ as follows. For each disjoint $Q, R \subseteq Z$, it follows from Claim~\ref{c:interval:1} that $\dsg{J\setminus(Q \cup R)}{N[R]} = \ds{J\setminus(Q \cup N[R])}$.
  Using the algorithm of Claim~\ref{c:interval:2}, find a smallest dominating set $D$ for $J\setminus(Q \cup N[R])$.
  Let $D^{u}$ be as in the above paragraph.
  Add $D$ and $D^{u}$ to $J'$.
  Also ensure that the vertices of $Z$ are added to $J'$ and set $Z' = Z$.
  The observation of the previous paragraph together with Claim~\ref{c:interval:1} now implies that $\dsg{J\setminus(Q \cup R)}{N[R]} = \dsg{J'\setminus(Q \cup R)}{N[R]}$ for any disjoint $Q,R \subseteq Z$.
  Moreover, the neighborhood of each $z \in Z$ is nonempty in $J'$ by construction.
  Hence, $J'$ is a claw-free stripe. Finally, since there are nine choices for $Q,R$, it follows that $|V(J')| \leq 18\,\ds{J} + 2$.
\end{proof}

\begin{lemma}
\label{lem:kernel:z2}
  Let $(J,Z)$ be a reduced stripe such that $(J,Z)$ is a thickening $\mc{W}$ of a member $(J'',Z'')$ of $\mc{Z}_{2}$, $J$ is a graph, and $(J,Z)$ is semi-thickened.
  Then we can compute in linear time a claw-free stripe $(J',Z')$ such that $Z = Z'$, $|V(J')| \leq 26$, and $\dsg{J\setminus(Q \cup R)}{N[R]} = \dsg{J'\setminus(Q \cup R)}{N[R]}$ for any disjoint $Q,R \subseteq Z$.
\end{lemma}
\begin{proof}
  Let $n$, $a_0$, $b_0$, $A = \{a_{1},\ldots,a_{n}\}$, $B=\{b_{1},\ldots,b_{n}\}$, $C=\{c_{1},\ldots,c_{n}\}$, and $X$ be as in the definition of $\mc{Z}_{2}$. As in Lemma~\ref{lem:recog:z2}, we may assume without loss of generality that $a_{1},b_{1}$ are semiadjacent (and thus $c_{1} \in X$), that $a_{2},c_{2}$ are semiadjacent (and thus $b_{2} \in X$), and that $b_{3},c_{3}$ are semiadjacent (and thus $a_{3} \in X$). Moreover, the analysis of Lemma~\ref{lem:recog:z2} implies that without loss of generality $|X \cap \{a_i,b_i,c_i\}| \in \{1,3\}$ for $i=1,2,3$. 

  Throughout the remainder of the proof, let $i,i' \in \{1,\ldots,n\}$.
  Suppose that $a_i, a_{i'} \not\in X$ and $b_i,b_{i'},c_i,c_{i'} \in X$.
  Then $i,i' \geq 4$ and $W_{a_i} \cup W_{a_{i'}}$ is a homogenous clique, which contradicts that $(J,Z)$ is semi-thickened, because $a_i$ and $a_{i'}$ are not incident on a semi-edge in $J'$.
  Hence, there is at most one $i \geq 4$ such that $a_i \not\in X$, $b_i,c_i \in X$, and $W_{a_i} \not= \emptyset$, and for such a value of $i$, $|W_{a_i}| \leq 1$ because $a_i$ is not incident on a semi-edge and $(J,Z)$ is reduced.
  Similar observations apply if $b_i \not\in X$ and $a_i,c_i \in X$, or $c_i \not\in X$ and $a_i,b_i \in X$.

  Suppose that $a_{i},a_{i'},b_{i},b_{i'} \not\in X$ and $c_i,c_{i'} \in X$.
  Observe that any vertex in $W_{a_i}$ combined with any vertex in $W_{b_{i'}}$ dominates $J$, but no single vertex of $W_{a_{i}}$, $W_{a_{i'}}$, $W_{b_{i}}$, or $W_{b_{i'}}$ does.
  Hence, we can remove all vertices of $W_{a_{i''}}$ and $W_{b_{i''}}$ for any $i'' \geq 4$ with $i'' \not= i,i'$, $a_{i''},b_{i''} \not\in X$, and $c_{i''} \in X$.
  Without loss of generality, $i' \geq 4$.
  Since neither $a_{i'}$ nor $b_{i'}$ are incident on a semi-edge, $|W_{a_{i'}}|, |W_{b_{i'}}| = 1$, because $(J,Z)$ is reduced.
  If $i \geq 4$, then we can similarly observe that $|W_{a_{i}}|, |W_{b_{i}}| = 1$.
  Otherwise, $i=1$.
  Since $a_1$ and $b_1$ are semiadjacent, $|W_{a_1} \cup W_{b_1}| \leq 3$, because $(J,Z)$ is reduced.
  Similar reductions apply if $a_{i},a_{i'},c_{i},c_{i'} \not\in X$ and $b_i,b_{i'} \in X$, or $b_{i},b_{i'},c_{i},c_{i'} \not\in X$ and $a_i,a_{i'} \in X$.
  (Note that in these cases, for example, $W_{a_i}$ combined with $W_{c_{i'}}$ dominates $J$ except one vertex from $Z$; however, this does not influence the correctness of the reduction rule that is applied here.)

  Suppose that $a_{i},a_{i'},b_{i},b_{i'},c_{i},c_{i'} \not\in X$.
  Then $i,i' \geq 4$.
  Again, any vertex in $W_{a_i}$ combined with any vertex in $W_{b_{i'}}$ dominates $J$, but no single vertex of $W_{a_{i}}$, $W_{a_{i'}}$, $W_{b_{i}}$, or $W_{b_{i'}}$ does.
  Hence, we can remove all vertices of $W_{a_{i''}}$ and $W_{b_{i''}}$ for any $i'' \geq 4$ with $i'' \not= i,i'$ and $a_{i''},b_{i''},c_{i''} \not\in X$.
  Since $a_{i},a_{i'},b_{i},b_{i'},c_{i},c_{i'}$ are all not incident on a semi-edge, $|W_{a_{i}}|, |W_{a_{i'}}|, |W_{b_{i}}|, |W_{b_{i'}}|,|W_{c_{i}}|, |W_{c_{i'}}| = 1$, because $(J,Z)$ is reduced.

  Let $J'$ denote stripe that remains after applying the above reduction operations, and let $Z'=Z$. From the above reductions, $|V(J')| \leq 15 + 9 + 2 = 26$.
  Moreover, it is straightforward to see that $(J',Z')$ is still a thickening of a member of $\mc{Z}_{2}$. Finally, $\dsg{J\setminus(Q \cup R)}{N[R]} = \dsg{J'\setminus(Q \cup R)}{N[R]}$ for any disjoint $Q,R \subseteq Z$ by construction.
\end{proof}

\begin{lemma}
\label{lem:kernel:z3}
  Let $(J,Z)$ be a reduced stripe such that $(J,Z)$ is a thickening $\mc{W}$ of a member $(J'',Z'')$ of $\mc{Z}_{3}$, $J$ is a graph, and $(J,Z)$ is semi-thickened.
  Then we can compute in linear time a claw-free stripe $(J',Z')$ such that $Z = Z'$, $|V(J')| \leq 23$, and $\dsg{J\setminus(Q \cup R)}{N[R]} = \dsg{J'\setminus(Q \cup R)}{N[R]}$ for any disjoint $Q,R \subseteq Z$.
\end{lemma}
\begin{proof}
  Let $H$, $h_1,\ldots,h_5$ be as in the definition of $\mc{Z}_{3}$, and let $P = \{h_1,\ldots,h_5\}$.
  Recall that $J''$ is a line trigraph of $H$, where the vertex corresponding to the edge $h_2h_3$ and the vertex corresponding to the edge $h_3h_4$ are made strongly antiadjacent or semiadjacent.
  By the assumptions of the lemma, we know $H$ and $P$.
  \begin{figure}
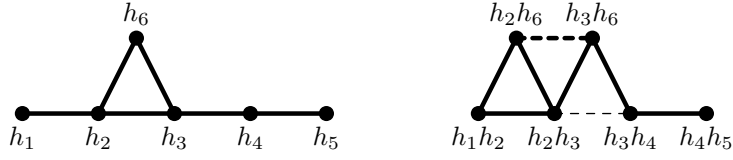

    \begin{center}
      \ig{z3-semi-ex1.mps}
    \end{center}
    \caption{On the left: A graph $H$ that follows the constraints of the definition of $\mc{Z}_{3}$ with one vertex adjacent to~$h_2$ and $h_{3}$. On the right: A line trigraph of $H$. In particular, this is a member of $\mc{Z}_{3}$. The thick dotted edge is an edge or a semi-edge. The thin dotted edge is a semi-edge or a non-edge.}
  \label{fig:z3:example}
  \end{figure}
  \begin{cclaim}
  \label{c:z3:degree}
    Without loss of generality, no two vertices of $P$ are adjacent to more than one vertex of $V(H) \setminus P$ of degree~$2$.
  \end{cclaim}
  \begin{cproof}
    Let $T$ denote the set of vertices in $V(H) \setminus P$ that have degree two and that are adjacent to~$h_2$ and $h_3$ (see Fig.~\ref{fig:z3:example}).
    Suppose that $|T| \geq 2$.
    Let $E$ denote the set of edges in $H$ incident on $h_2$ and a vertex of $T$, and let $F$ denote the set of edges in $H$ incident on $h_3$ and a vertex of $T$.
    Note that $|E|=|F|=|T| \geq 2$.
    Moreover, since $h_2$ and $h_3$ both have degree at least~$3$, no semi-edge of $(J'',Z'')$ is incident on exactly one vertex corresponding to an edge of $E$ or $F$.

    We provide a modified member $(J''',Z''')$ of $\mc{Z}_{3}$ and a modified thickening $\mc{W}'''$ to $(J,Z)$.
    Initially, $(J''',Z''')$ is equal to $(J'',Z'')$ and $\mc{W}''' = \mc{W}$, and in particular, $H'''$ and $H$ are the same.
    Now add a new vertex $v$ to $H'''$ and make it adjacent to $h_2$ and $h_3$. Let $e$ and $f$ denote the corresponding edges. 
    Make $e$ and $f$ semiadjacent in $J'''$.
    Now set $W'''_e = \bigcup_{a \in E} W_a$ and $W'''_f = \bigcup_{a \in F} W_{a}$.
    Remove $E$, $F$, and $T$ from $H$ and blank $W'''_a$ for each $a \in E \cup F$.
    Since each $a \in E$ is adjacent in $J''$ to exactly one $b \in E$ and vice versa, it follows from the assumption that $|E|=|F| \geq 2$ that $W'''_e$ is neither complete nor anticomplete to $W'''_f$.
    By construction, $(J''',Z''')$ is a member of $\mc{Z}_{3}$ based on $H'''$, and $(J,Z)$ is the thickening $\mc{W}'''$ of $(J''',Z''')$.

    Similar arguments can be made with respect to $h_2$ and $h_4$, and $h_3$ and $h_4$.
  \end{cproof}
  Note that the above modification could possibly destroy the property that $(J,Z)$ is reduced.
  However, then we simply apply Lemma~\ref{lem:kernel:zis2-base}.
  Therefore, we may assume that $(J,Z)$ is reduced and that no two vertices of $P$ are adjacent to more than one vertex of $V(H) \setminus P$ of degree~$2$.

  The fact that $(J,Z)$ is reduced now implies that vertices of $V(H) \setminus P$ of degree~$2$ are responsible for at most $3 \cdot 3 = 9$ vertices of $V(J)$.
  Also observe that if $v,v' \in V(H) \setminus P$ are of degree~$1$, then any two vertices from $W_{e} \cup W_{e'}$ would imply the existence of twins in $J$ (where $e,e'$ are the edges of $H$ incident on $v,v'$ respectively), which contradicts that $(J,Z)$ is semi-thickened.
  Hence, these vertices of $V(H) \setminus P$ are responsible for at most $3$ vertices of $V(J)$.
  Also observe that $h_2h_3$ and $h_3h_4$ are antiadjacent in $J''$, and therefore, the fact that $(J,Z)$ is reduced implies that $|W_{h_2h_3}| + |W_{h_3h_4}| \leq 3$.

  Now observe that any vertex from $W_{h_2h_3}$ and any vertex from $W_{h_3h_4}$ together dominate $J$.
  Hence, if $T$ is the set of vertices of $V(H) \setminus P$ of degree~$3$ and there exists a $T' \subseteq T$ of size~$2$, then $W_{e}$ can be removed for any $e$ incident on a vertex of $T \setminus T'$.
  Indeed, if for some disjoint $Q,R \subseteq Z$ a dominating set $D$ that attains $\dsg{J\setminus(Q \cup R)}{N[R]}$ contains a vertex $v$ from $W_{e}$ where $e$ is incident on a vertex of $T \setminus T'$, then there is a vertex $v'$ from $W_{e'}$ that is not dominated by $v$ where $e'$ is an edge incident on a vertex of $T'$.
  Hence, $|D| \geq 2$, and any vertex from $W_{h_2h_3}$ and any vertex from $W_{h_3h_4}$ combined would form an equally small dominating set.
  After applying this reduction, vertices of $V(H) \setminus P$ of degree~$3$ are responsible for at most $2 \cdot 3 = 6$ vertices of $V(J)$.

  Let $J'$ denote stripe that remains after applying the above reduction operations, and let $Z'=Z$.
  From the above reductions, $|V(J')| \leq 23$.
  Moreover, it is straightforward to see that $(J',Z')$ is still a thickening of a member of $\mc{Z}_{2}$.
  Finally, $\dsg{J\setminus(Q \cup R)}{N[R]} = \dsg{J'\setminus(Q \cup R)}{N[R]}$ for any disjoint $Q,R \subseteq Z$ by construction.
\end{proof}

\begin{lemma}
\label{lem:kernel:z4}
  Let $(J,Z)$ be a reduced stripe such that $(J,Z)$ is a thickening $\mc{W}$ of a member $(J'',Z'')$ of $\mc{Z}_{4}$ and $J$ is a graph.
  Then we can compute in constant time a claw-free stripe $(J',Z')$ such that $Z = Z'$, $|V(J')| \leq 11$, and $\dsg{J\setminus(Q \cup R)}{N[R]} = \dsg{J'\setminus(Q \cup R)}{N[R]}$ for any disjoint $Q,R \subseteq Z$.
\end{lemma}
\begin{proof}
  It suffices to observe that $(J',Z')$ has $9$ vertices and two semi-edges.
  Since $(J,Z)$ is reduced, it follows that $|V(J)| \leq 11$.
  Hence, we simply output $J' = J$ and $Z'=Z$.
\end{proof}

\begin{lemma}
\label{lem:kernel:z5}
  Let $(J,Z)$ be a reduced stripe such that $(J,Z)$ is a thickening $\mc{W}$ of a member $(J'',Z'')$ of $\mc{Z}_{5}$ and $J$ is a graph.
  Then we can compute in constant time a claw-free stripe $(J',Z')$ such that $Z = Z'$, $|V(J')| \leq 14$, and $\dsg{J\setminus(Q \cup R)}{N[R]} = \dsg{J'\setminus(Q \cup R)}{N[R]}$ for any disjoint $Q,R \subseteq Z$.
\end{lemma}
\begin{proof}
  It suffices to observe that $(J',Z')$ has at most $13$ vertices and one semi-edge. Since $(J,Z)$ is reduced, it follows that $|V(J)| \leq 14$.
  Hence, we simply output $J' = J$ and $Z'=Z$.
\end{proof}

\subsection{Reducing the Number of Strips}
\label{sec:kernel:reduce}
Throughout this subsection, let $G$ be a connected claw-free graph that has a purified strip-structure $(H,\eta)$ of nullity zero such that $1 \leq |\overline{F}| \leq 2$ for each $F \in E(H)$. Let $k$ be an integer.

Observe that the strip-graph $H$ can be seen as a multigraph (with self-loops) with vertex set $V(H)$ and edge set $E(H)$.
Indeed, each $F \in E(H)$ with $|\overline{F}| = 2$ can be seen as a normal edge, and each $F \in E(H)$ with $|\overline{F}| = 1$ can be seen as a self-loop.

\begin{definition}
  Let $\HT$ denote the multigraph obtained from $H$ (seen as a multigraph) by adding a self-loop to each $h \in \mc{P}(H)$.
  We call these self-loops \emph{force edges}.
\end{definition}
Recall Definition~\ref{def:ds:striped} for the definition of $\mc{P}(H)$.

Now recall that $\eds{\HT}$ is the size of a smallest edge dominating set of $\HT$.
The following lemma will be very helpful to reduce the number of strips of $(H,\eta)$.

\begin{lemma}
\label{lem:kernel:eds}
  If $\ds{G} \leq k$, then $\eds{\HT} \leq k$.
\end{lemma}
\begin{proof}
  Let $D$ be a dominating set of $G$ of size at most $k$.
  Let $L$ denote the set of all $F \in E(H) \subseteq E(\HT)$ for which $D \cap \eta(F) \not= \emptyset$.
  By construction, $|L| \leq |D|$.
  We claim that $L$ is an edge dominating set of $\HT$.
  For sake of contradiction, let $F \in E(\HT)$ not share an endpoint with an edge of $L$.
  We consider two cases, and obtain a contradiction in both, proving the claim and thus the lemma.

  Suppose that $F$ is a force edge, incident on $h \in V(H)$.
  By definition, there exists an $F' \in E(H)$ such that $h \in \overline{F'}$ and $F'$ corresponds to a stripe.
  By assumption, $\eta(h) \cap D = \emptyset$ and $\eta(F') \cap D = \emptyset$.
  However, since $F'$ corresponds to a stripe, it follows by the definition of stripes and strip-structures that every vertex of $G$ that is adjacent to a vertex of $\eta(F',h)$ is in $\eta(h) \cup \eta(F')$.
  Hence, every vertex of $\eta(F',h)$ is not dominated by $D$, a contradiction.

  Suppose that $F \in E(H)$.
  By assumption, $\eta(F) \cap D = \emptyset$ and $\eta(h) \cap D = \emptyset$ for each $h \in \overline{F}$.
  However, it follows by the definition of strip-structures that every vertex of $G$ that is adjacent to a vertex of $\eta(F)$ is in $\eta(F) \cup \bigcup_{h \in \overline{F}} \eta(h)$.
  Hence, the vertices of $\eta(F)$ are not dominated by $D$, a contradiction.
\end{proof}
The preceding lemma enables the transfer of ideas from polynomial kernels for \textsc{Edge Dominating Set}~\cite{Fernau2006,XiaoKP2013}.
Note that the presence of stripes makes a direct application of these kernels infeasible.
Although it is possible to use the basic ideas from any one of these kernels\footnote{In the conference version of this paper, we relied on the kernel of Fernau~\cite{Fernau2006}. Using the ideas from the kernel by Xiao~\etal\cite{XiaoKP2013}, we end up with a slightly smaller bound on the size of the kernel, as well as a slightly easier description.}, we will in fact use a simplification of the ideas in the kernel by Xiao~\etal\cite{XiaoKP2013}.
For this, we need the following straightforward corollary of Lemma~\ref{lem:kernel:eds}.
Recall that $\mm{\HT}$ denotes the size of a largest matching of~$\HT$.

\begin{corollary}
\label{cor:kernel:mm}
  If $\ds{G} \leq k$, then $\mm{\HT} \leq 2k$.
\end{corollary}
\begin{proof}
  Let $L$ denote an edge dominating set of $\HT$ of size at most $k$; $L$ exists by Lemma~\ref{lem:kernel:eds}.
  Let~$M$ be a largest matching of $\HT$. If $F \in M$, then at least one of the endpoints must be shared with an $F' \in L$. 
  Hence, $|M| \leq 2|L| \leq 2k$.
\end{proof}
Given a set $M \subseteq E(\HT)$, let $\overline{M}$ denote the set of endpoints of the edges of $M$. Note that $\overline{M} \subseteq V(H)$.

\begin{definition} \label{def:kernel:ve}
  Let $M = M(\HT)$ be any largest matching of $\HT$.
  Let $\mc{V}(H)$ be the union of $\overline{M}$ and any inclusion-wise minimal set $B$ of vertices of $\HT$ such that each $v \in \overline{M}$ has at least $\min\{2k+1,|N_{\HT}(v)|\}$ neighbors in $\HT$ in the set $\overline{M} \cup B$.
  Let $\mc{E}(H)$ denote the set of $F \in E(\HT)$ such that $\overline{F} \subseteq \mc{V}(\HT)$.
  The strip-graph induced by $\mc{V}(H)$ and $\mc{E}(H)$ is denoted $\mc{H}$. The restriction of $\eta$ to $\mc{E}(H)$ is denoted $\zeta$.
  Let $\mc{G}$ be the graph obtained from $G$ by removing each vertex of $\eta(F)$ for each $F \in E(H) \setminus \mc{E}(H)$.
\end{definition}
We make the following observation.

\begin{proposition}
\label{prp:kernel:ht-obs}
  Let $M = M(\HT)$. For any $F \in E(H)$, if $|\overline{F}|=2$, then $\overline{F} \cap \overline{M} \not= \emptyset$.
  If $F$ corresponds to a stripe, then $F \in \mc{E}(H)$.
  Finally, $(\mc{H},\zeta)$ is a purified strip-structure of nullity zero for $\mc{G}$ with $\eta(F) = \zeta(F)$ for any $F \in \mc{E}(H)$.
\end{proposition}
\begin{proof}
  For the first part, since $M$ is a largest matching of $\HT$, there is no $F \in E(H)$ with $\overline{F} \cap \overline{M} = \emptyset$ and $|\overline{F}| = 2$.

  For the second part, note that since $M$ is a largest matching of $\HT$, the force edges ensure that $\mc{P}(H) \subseteq \overline{M}$.
  Hence, if $F \in E(H)$ corresponds to a stripe, then $\overline{F} \subseteq \mc{P}(H) \subseteq \overline{M}$, and thus $F \in \mc{E}(H)$.

  For the third part, it suffices to observe that for any $h \in \mc{V}(H)$, there are distinct $F,F' \in \mc{E}(H)$ with $h \in \overline{F} \cap \overline{F'}$.
  This is indeed true, because $F'' \in E(H) \setminus \mc{E}(H)$ if and only if there is a $h \in \overline{F''}$ such that $h \not\in \mc{V}(H)$.
  Then we note that for any $h \in V(H)$, there are distinct $F,F' \in E(H)$ with $h \in \overline{F} \cap \overline{F'}$ by the definition of a strip-structure.
\end{proof}

We now prove two interesting consequences of Definition~\ref{def:kernel:ve}.

\begin{lemma}
  $\ds{G} \leq k$ if and only if $\ds{\mc{G}} \leq k$.
\end{lemma}
\begin{proof}
  Let $M = M(\HT)$.
  Suppose that $\ds{G} \leq k$.
  Let $D$ denote a dominating set of $G$ of size at most $k$.
  We construct a dominating set $D'$ of $\mc{G}$ of size at most $|D|$.
  Consider each $v \in D$ in turn.
  Suppose that $v \in \eta(F)$ for some $F \in \mc{E}(H)$.
  Then add $v$ to $D'$.
  Suppose that $v \in \eta(F)$ for some $F \in E(H) \setminus \mc{E}(H)$. 
  If $\overline{F} \cap \mc{V}(H) = \emptyset$, then $|\overline{F}| = 1$; otherwise, by Proposition~\ref{prp:kernel:ht-obs}, $\overline{F} \cap \overline{M} \not= \emptyset$, which using the fact that $\overline{M} \subseteq \mc{V}(H)$ would contradict the assumption that $\overline{F} \cap \mc{V}(H) = \emptyset$.
  However, if $|\overline{F}|=1$, then $F$ corresponds to a stripe, and by Proposition~\ref{prp:kernel:ht-obs}, $F \in \mc{E}(H)$, a contradiction. 
  It follows that $\overline{F} \cap \mc{V}(H) \not= \emptyset$.
  Since $F$ would be in $\mc{E}(H)$ if $\overline{F} \subseteq \mc{V}(H)$, $|\overline{F}| = 2$.
  Hence, by Proposition~\ref{prp:kernel:ht-obs}, $\overline{F} \cap \overline{M} \not= \emptyset$.
  Let $\overline{F} = \{h,h'\}$, and assume without loss of generality that $h \in \overline{M}$.
  Add $v'$ to $D'$, where $v'$ is any vertex of $\eta(F',h)$ and $F'$ satisfies $F' \in \mc{E}(H)$ and $h \in \overline{F'}$.
  Indeed, $v$ only dominates vertices in $\eta(h) \cup \eta(h') \cup \eta(F)$.
  However, $h' \not\in \mc{V}(H)$, and thus, by construction, $F'' \not\in \mc{E}(H)$ for each $F'' \in E(H)$ for which $h' \in \overline{F''}$.
  Hence, $(\eta(h') \cup \eta(F)) \cap V(\mc{G}) = \emptyset$.
  Since $\eta(h)$ is a clique, $v'$ dominates $\eta(h)$, and thus the resulting set $D'$ is indeed a dominating set of $\mc{G}$. Therefore, $\ds{\mc{G}} \leq k$.

  Suppose that $\ds{\mc{G}} \leq k$.
  Let $D$ denote a dominating set of $\mc{G}$ of size at most $k$.
  We claim that $D$ is also a dominating set of $G$. To this end, it suffices to prove that any $w \in \eta(F)$ for any $F \in E(H) \setminus \mc{E}(H)$ has a neighbor in $D$.
  First, observe that the force edges ensure that $\mc{P}(H) \subseteq \overline{M}$.
  Hence, $F$ does not correspond to a stripe; otherwise, since $\overline{F} \subseteq \mc{P}(H) \subseteq \overline{M} \subseteq \mc{V}(H)$, $F \in \mc{E}(H)$, a contradiction.
  It follows that $F$ corresponds to a spot. Second, observe that by Proposition~\ref{prp:kernel:ht-obs}, $\overline{M} \cap \overline{F} \not= \emptyset$. Let $h \in \overline{M} \cap \overline{F}$. 
  By assumption, there is an $h' \in \overline{F} \setminus \mc{V}(H)$.
  Since $h' \not \in \mc{V}(H)$, it follows from the construction of $\mc{V}(H)$ that at least $2k+1$ neighbors of $h$ in $\HT$ are in $\mc{V}(H)$.
  Let $N$ denote this set of neighbors of $h$, and let $\mc{F}$ denote the set of $F \in \mc{E}(H)$ with $\overline{F} \subseteq N \cup \{h\}$.
  Call $h' \in N$ marked if $D \cap \eta(h') \not= \emptyset$ or if $\eta(F') \cap D \not=\emptyset$ for some $F' \in \mc{F}$ with $h' \in \overline{F'}$.
  Since $|\overline{F'}| \leq 2$ for each $F' \in E(H)$, at most $2|D| \leq 2k$ neighbors in $N$ can be marked, and in particular, there is a $h' \in N$ that is unmarked.
  Consider any $F' \in \mc{F}$ with $h' \in \overline{F'}$; note that $F'$ is indeed well-defined.
  Since $h'$ is unmarked, $D \cap \eta(h') \not= \emptyset$ and $\eta(F') \cap D \not=\emptyset$.
  Hence, $h \in \overline{F'}$ and the vertices of $\eta(F')$ are dominated by a vertex of $\eta(h)$; that is, $\eta(h) \cap D \not= \emptyset$.
  Since $F$ corresponds to a spot, $\eta(F) = \eta(F,h) \subseteq \eta(h)$. As $\eta(h)$ is a clique, $\eta(F)$ is dominated by $D$.
  Therefore, $D$ is a dominating set of $G$.
\end{proof}

\begin{lemma}
\label{lem:kernel:bounds}
  Let $M = M(\HT)$.
  If $\ds{G} \leq k$, then $|\mc{V}(H)| \leq |\mc{V}(H)| \leq (2k+2) |\overline{M}| \leq 8k(k+1)$, $|\{\overline{F} \mid F \in \mc{E}(H), h \in \overline{F}\}| \leq |\mc{V}(H)| \leq 8k(k+1)$ for any $h \in \overline{M}$, and $|\{\overline{F} \mid F \in \mc{E}(H)\}| \leq (|\overline{M}|+1)\, |\mc{V}(H)| \leq 8k(4k+1)(k+1)$.
\end{lemma}
\begin{proof}
  Since $\ds{G} \leq k$, it follows from Corollary~\ref{cor:kernel:mm} that $|M| \leq 2k$, and thus $|\overline{M}| \leq 4k$.
  For each $v \in \overline{M}$, mark $\min\{2k+1,|N_{\HT}(v)|\}$ neighbors of $v$ in $\HT$ that are in $\mc{V}(H)$.
  Let $B$ denote the set used in the construction of $\mc{V}(H)$.
  Then at most $(2k+1)\, |\overline{M}|$ vertices of $B$ are marked, and since $B$ is inclusion-wise minimal, $|B| \leq (2k+1)\, |\overline{M}|$.
  Hence, $|\mc{V}(H)| \leq (2k+2) |\overline{M}| \leq 8k(k+1)$.

  The bound on $|\{\overline{F} \mid F \in \mc{E}(H), h \in \overline{F}\}|$ for each $h \in \overline{M}$ is immediate from the definition of $\mc{E}(H)$.

  By Proposition~\ref{prp:kernel:ht-obs}, $\overline{F} \cap \overline{M} \not= \emptyset$ for every $F \in \mc{E}(H)$ with $|\overline{F}| = 2$.
  The bound on $|\{\overline{F} \mid F \in \mc{E}(H)\}|$ is immediate from the bound on $|\{\overline{F} \mid F \in \mc{E}(H), h \in \overline{F}\}|$ for each $h \in \overline{M}$.
\end{proof}
Note that this lemma does not yet bound $|\mc{E}(H)|$; this requires a more careful argument that we make in the next subsection when we present the final kernel.

\subsection{The Kernel}
\label{sec:kernel:kernel}
We are now ready to present the kernel.

\begin{theorem}
  Let $G$ be a connected claw-free graph with $n$ vertices and $m$ edges, and let $k$ be an integer.
  Then there is an algorithm that runs in $O(n^5)$ time and that returns a graph $G'$ with $O(k^{3})$ vertices and an integer $k'$ such that $\ds{G} \leq k$ if and only if $\ds{G'} \leq k'$.
\end{theorem}
\begin{proof}
  We first preprocess the graph as in Theorem~\ref{thm:ds:fpt}.
  In $O(n^{5})$ time, this returns a graph that has no twins and no proper W-joins, which we call $G$ as well by abusing notation.

  We test whether $\ds{G} \leq 3$ using the algorithm of Lemma~\ref{lem:ds:constant} in $O(n^{4})$ time.
  If so, then the algorithm actually determines $\ds{G}$; hence, we return a graph~$G'$ that consists of a single vertex and an integer $k'$ that is $1$ if $\ds{G} \leq k$ and that is $0$ otherwise.
  This is clearly correct.
  Now we may assume that $\ds{G} > 3$. If $\alpha(G) \leq 3$, then $\ds{G} \leq 3$ by Proposition~\ref{prp:ds:is-size}, a contradiction. Hence, $\alpha(G) > 3$.

  We then apply the algorithm of Theorem~\ref{thm:main-base-kernel} in $\stripTime$ time.
  If it outputs that $G$ is a proper circular-arc graph, then we can compute $\ds{G}$ in linear time by Theorem~\ref{thm:ds:circular}.
  We return a graph~$G'$ that consists of a single vertex and an integer $k'$ that is $1$ if $\ds{G} \leq k$ and that is $0$ otherwise.
  If it outputs that $G$ is a thickening of an XX-trigraph, then we can compute $\ds{G}$ in $O(n^{4})$ time by Lemma~\ref{lem:ds:s2}.
  Again, we return a graph $G'$ that consists of a single vertex and an integer $k'$ that is $1$ if $\ds{G} \leq k$ and that is $0$ otherwise.
  It remains that it outputs that $G$ admits a strip-structure $(H,\eta)$ such that for each strip $(J,Z)$ satisfies the condition that:
  \begin{itemize}
    \item $(J,Z)$ is a trivial line graph strip, or
    \item $(J,Z)$ is a stripe for which $J$ is connected and
      \begin{itemize}
        \item $|Z|=1$, $\alpha(J) \leq 3$, and $V(J) \setminus N[Z] \not= \emptyset$,
        \item $|Z|=1$, $J$ is a proper circular-arc graph, and either $J$ is a strong clique or $\alpha(J) > 3$,
        \item $|Z|=2$ and $J$ is a proper interval graph,
        \item $|Z|=1$, $\alpha(J) = 4$, and $V(J) \setminus N[Z] \not= \emptyset$, or
        \item $|Z|=2$ and $(J,Z)$ is a thickening $\mc{W}$ of a member $(J',Z')$ of $\mc{Z}_{2} \cup \mc{Z}_{3} \cup \mc{Z}_{4} \cup \mathcal{Z}_{5}$. Moreover, we know $\mc{W}$, $(J',Z')$, and the class that $(J',Z')$ belongs to.
      \end{itemize}
  \end{itemize}
  We now determine the set $\mc{P}(H)$ and the multigraph (with self-loops) $\HT$, which takes linear time.
  Then we can determine $M(\HT)$, $\mc{V}(H)$, $\mc{E}(H)$, $\mc{G}$, $\mc{H}$, and $\zeta$.
  By Proposition~\ref{prp:kernel:ht-obs}, $(\mc{H},\zeta)$ is a purified strip-structure of nullity zero for $\mc{G}$.
  Moreover, since $\mc{V}(H) \subseteq V(H)$, $\mc{E}(H) \subseteq E(H)$, and $\zeta$ is the restriction of $\eta$ to $\mc{E}(H)$, it is immediate from Proposition~\ref{prp:kernel:ht-obs} that each strip of $(\mc{H},\zeta)$ still satisfies the above condition.

  We provide a slight modification of $\mc{H}$.
  For any $F,F' \in \mc{E}(H)$ with $\overline{F} = \overline{F'}$ and $|\overline{F}| = |\overline{F'}| = 2$ such that $\zeta(F)$ and $\zeta(F')$ are a disjoint union of two cliques, remove $F'$ from $\mc{E}(H)$ and add $\zeta(F')$ to $\zeta(F)$.
  This replacement is indeed correct, because both $F$ and $F'$ correspond to a thickening of a stripe in $\mc{Z}_{3}$; this follows immediately from the definition of $\mc{Z}_{3}$.
  In fact, both stripes are a thickening of a four-vertex path where the middle edge is a semi-edge (see also the proof of Lemma~\ref{lem:recog:z3}).
  Now let $\overline{F} = \{h,h'\}$.
  Observe that $\zeta(F,h) \subseteq \zeta(h)$ and $\zeta(F',h) \subseteq \zeta(h)$, and thus $\zeta(F,h)$ is complete to $\zeta(F',h)$, because $\zeta(h)$ is a clique.
  The same observation holds mutatis mutandis with respect to $h'$.
  Hence, $\zeta(F) \cup \zeta(F')$ is a disjoint union of two cliques, and therefore, the stripe resulting from the previous replacement is a thickening of a member of $\mc{Z}_{3}$.
  Moreover, we know this thickening and the corresponding member of $\mc{Z}_{3}$.
  We perform the above replacement operation iteratively.
  Since it takes linear time to determine which stripes are a disjoint union of two cliques, the replacement operation can be performed in linear time in total.
  By abuse of notation, we still use the notation $\mc{E}(H)$ and $(\mc{H},\zeta)$ for the resulting set and strip-structure; note that the graph $\mc{G}$ has remained unchanged.

  To obtain the kernel, we distinguish between stripes $(J,Z)$ of $(\mc{H}, \zeta)$ with $|Z| = 1$ and those with $|Z|=2$.
  Let $k'' = 0$.
  If $|Z| = 1$, then we only consider those stripes for which $V(J) \setminus N[Z] \not= \emptyset$.
  Then either $\alpha(J) \leq 4$, or $J$ is a proper circular-arc graph and either $J$ is a strong clique or $\alpha(J) > 3$. Hence, using a straightforward extension of Lemma~\ref{lem:ds:stripe-is} or Lemma~\ref{lem:ds:circular-stripe} respectively, we can compute $\dsg{J\setminus(Q \cup R)}{N[R]}$ in $O(n^{5})$ time for any disjoint $Q,R \subseteq Z$.
  Then we apply the reduction of Lemma~\ref{lem:kernel:zis1} to $J$, which takes linear time and reduces the size of the stripe to have at most four vertices.
  Add the value $k'$ computed by the algorithm of Lemma~\ref{lem:kernel:zis1} to $k''$.
  The fact that this reduction is safe follows from Lemma~\ref{lem:kernel:equiv}.

  If $|Z| = 2$, then either $J$ is a proper interval graph or $(J,Z)$ is a thickening $\mc{W}$ of a member $(J',Z')$ of $\mc{Z}_{2} \cup \mc{Z}_{3} \cup \mc{Z}_{4} \cup \mathcal{Z}_{5}$.
  Suppose that $J$ is a proper interval graph.
  Then we compute $\ds{J}$ in linear time using Theorem~\ref{thm:ds:circular}.
  If $\ds{J} > k$, then we know that $\ds{G} > k$; hence, we return a graph $G'$ that consists of a single vertex and an integer $k'=0$.
  Otherwise, we apply the reduction of Lemma~\ref{lem:kernel:interval} to $J$, which takes linear time and reduces the size of the stripe to have at most $18k+2$ vertices.
  The fact that this reduction is safe follows from Lemma~\ref{lem:kernel:equiv}.

  Suppose that $(J,Z)$ is a thickening $\mc{W}$ of a member $(J',Z')$ of $\mc{Z}_{2} \cup \mc{Z}_{3} \cup \mc{Z}_{4} \cup \mathcal{Z}_{5}$.
  By assumption, we know $\mc{W}$, $(J',Z')$, and the class that $(J',Z')$ belongs to.
  Lemma~\ref{lem:kernel:semi} and the fact that $G$ does not admit twins imply that $(J,Z)$ is semi-thickened.
  We then apply Lemma~\ref{lem:kernel:zis2-base} to ensure that $(J,Z)$ is reduced.
  Note that the algorithm of Lemma~\ref{lem:kernel:zis2-base} runs in linear time and preserves that $(J,Z)$ is semi-thickened. 
  Depending on the class that $(J',Z')$ belongs to, we then apply Lemma~\ref{lem:kernel:z2}, \ref{lem:kernel:z3}, \ref{lem:kernel:z4}, or~\ref{lem:kernel:z5}.
  This takes linear time and reduces the size of the stripe to have at most $26$ vertices.
  The fact that these reductions are safe follows from Lemma~\ref{lem:kernel:equiv}.

  Let $G'$ be the resulting graph.
  By Lemma~\ref{lem:kernel:equiv}, we know that $\ds{G} \leq k$ if and only if $\ds{G'} \leq k - k''$, so let $k' = k-k''$.
  It thus suffices to bound the number of vertices in $G'$. To this end, we distinguish several types of strips $(J,Z)$ of $(\mc{H},\zeta)$:

  \ccase{1} $(J,Z)$ is a stripe with $|Z|=1$ and $V(J) \setminus N[Z] = \emptyset$.\\
  Suppose that for some $h \in \mc{V}(H)$ there are $F,F' \in \mc{E}(H)$ with $\overline{F} = \overline{F'} = \{h\}$ such that $F,F'$ correspond to a stripe of this type.
  Then any pair of vertices in $\zeta(F) \cup \zeta(F')$ forms twins in $G$, a contradiction.
  Hence, each such stripe has size at most~$2$ and there at most $|\mc{V}(H)| = O(k^{2})$ such stripes in total by Lemma~\ref{lem:kernel:bounds}.
  Hence, $O(k^{2})$ vertices exist in such stripes.

   \ccase{2} $(J,Z)$ is a stripe with $|Z| = 2$ and $V(J) \setminus N[Z] = \emptyset$.\\
  Observe that any such stripe is a disjoint union of two cliques.
  By the above modification of $(\mc{H},\zeta)$, there is at most one $F \in \mc{E}(H)$ that corresponds to such a stripe with $\overline{F} = \{h,h'\}$ for each $h,h' \in V(\mc{H})$.
  By Lemma~\ref{lem:kernel:bounds}, there are $O(k^{3})$ such stripes in total.
  As observed previously, any such stripe is a thickening of a member of $\mc{Z}_{3}$, and therefore, these stripes are reduced to have size at most $23$ by Lemma~\ref{lem:kernel:z3}.
  Hence, $O(k^{3})$ vertices remain in such stripes.

  \ccase{3} $(J,Z)$ is a stripe with $V(J) \setminus N[Z] \not= \emptyset$.\\
  Let $D$ be a dominating set of $G$.
  Then $D$ contains at least one vertex of $V(J) \setminus Z$ by the definition of a strip-structure.
  Hence, if $\ds{G} \leq k$, then there can be at most $k$ such stripes.
  Moreover, after the above reduction, each such stripe has only $O(k)$ vertices remaining.
  Hence, $O(k^{2})$ vertices remain in such stripes.

  \ccase{4} $(J,Z)$ is a spot.\\
  Suppose that for some $h,h' \in \mc{V}(H)$ there are $F,F' \in \mc{E}(H)$ with $\overline{F} = \overline{F'} = \{h,h'\}$ such that $F,F'$ correspond to a spot.
  Then the pair of vertices in $\zeta(F) \cup \zeta(F')$ forms twins in $G$, a contradiction. Hence, there $O(k^{3})$ spots by Lemma~\ref{lem:kernel:bounds}.
  Hence, $O(k^{3})$ vertices exist in spots.

  \medskip\noindent
  These four cases are exhaustive, and imply that $G'$ has $O(k^{3})$ vertices.
\end{proof}

\begin{corollary}
  \textsc{Dominating Set} on claw-free graphs has a kernel with $O(k^{3})$ vertices.
\end{corollary}

\section*{Part III -- Hardness Results and Lower Bounds}

\section{Hardness Results and Lower Bounds} \label{sec:hard}
In this section, we show that various generalizations and improvements of the parameterized and kernelization algorithms in this paper are unlikely to exist.
In particular, we show that:
\begin{itemize}
  \item \problemDS{} and \problemCDS{} are \Wh[1]-hard on $K_{1,4}$-free graphs;
  \item \problemWDS{} and \problemWCDS{} are \Wh[2]-hard on co-bipartite graphs, and thus also on claw-free graphs;
  \item \problemCDS{} on line graphs (and thus also on claw-free graphs) has no polynomial kernel, unless the polynomial hierarchy collapses to the third level, contrasting the polynomial kernel we saw for \problemDS{};
  \item \problemDS{} and \problemCDS{} on line graphs (and thus also on claw-free graphs) have no subexponential-time algorithm, unless the Exponential Time Hypothesis fails.
\end{itemize}
We prove each of these results in turn.

Recall that to prove the hardness of a parameterized problem $P$, one can give a \emph{parameterized reduction} from a $\mathsf{W}[\cdot]$-hard problem~$P'$ to $P$ so that every instance $I'$ of $P'$ with parameter $k'$ is mapped to an instance $I$ of $P$ with parameter $k \leq g(k')$ for some computable function $g$.
We demand that $I$ can be computed in $f(k') \cdot |I'|^{O(1)}$ time for some computable function $f$, and that $I$ is a ``yes''-instance if and only if $I'$ is a ``yes''-instance.
If~$f$ and $g$ are polynomials, then the reduction is called a \emph{polynomial parameter transformation}.

\subsection{Hardness on \texorpdfstring{$K_{1,4}$}{K1,4}-free Graphs} \label{sec:a:hardness}
We show that \problemDS{} is \Wh[1]-hard when restricted to $K_{1,4}$-free graphs.
Our hardness result is obtained by the so-called \emph{\problemMCC{} reduction technique}, introduced and explained in~\cite{FellowsHRV2009}.
The problem from which we reduce in this technique is defined as follows:
\defparproblem{\problemMCC}{$k$}
{A graph $H$ and a vertex coloring $c: V(H) \to \{1,2,\dots,k\}$ of $H$.}
{Does $(H,c)$ have a \emph{multicolored clique} (a set $K \subseteq V(H)$ such that $\{u,v\} \in E(H)$ and $c(u) \neq c(v)$ for all distinct $u,v \in K$) of size~$k$?}

\begin{theorem}[\cite{FellowsHRV2009}]
  \problemMCC{} is \Wh[1]-complete.
\end{theorem}

The general idea in the \problemMCC{} reduction technique is to organize gadgets into three categories: vertex selection, edge selection, and validation.
The role of the first two is to encode the selection of $k$ vertices and $\binom{k}{2}$ edges that together form the $k$-multicolored clique in the instance of the \problemMCC{} problem.
The task of the validation gadget is, as its name suggests, to validate the selection of vertices and edges.
In other words, it makes sure that the edges selected are in fact incident to the selected vertices.

We now use this technique to prove that \problemDS{} on $K_{1,4}$-free graphs is \Wh[1]-hard.

\begin{theorem} \label{thm:hardness:ds4}
  \problemDS{} on $K_{1,4}$-free graphs is \Wh[1]-hard.
\end{theorem}
\begin{proof}
  Consider an instance $(H,c,k)$ of \problemMCC{}.
  We construct an instance $(G,k')$ of \problemDS{} as follows: The graph $G$ consists of three big cliques, each one corresponding to a different gadget (\ie vertex selection, edge selection, and validation), and some additional dummy vertices.
  The vertex selection clique is formed by the vertices $A:=\{a_v : v \in V(H)\}$, the edge selection clique is formed by $B:=\{b_{\{u,v\}} : \{u,v\} \in E(H)\}$, and the validation clique by $C:=\{c_{(u,v)},c_{(v,u)} : \{u,v\} \in E(H)\}$.
  Note that there are two ``directed'' validation vertices, $c_{(u,v)}$ and~$c_{(v,u)}$, for each edge $\{u,v\} \in E(H)$.
  The dummy vertices will be denoted by $X:=\{x_i : 1 \leq i \leq k\}$ and $Y:=\{y_{\{i,j\}} : 1 \leq i < j \leq k\}$.
  These are all the vertices of $G$, \ie $V(G):=A \cup B \cup C \cup X \cup Y$.

  We next describe the edges connecting the vertices of $G$. The first three sets of edges connect each pair of vertices in the same clique:
  \begin{itemize}\setlength{\parskip}{-0.1cm}
    \item $E_1:= \big\{\{a_u,a_v\} : a_u,a_v \in A \textrm{ and } u \neq v\big\}$.
    \item $E_2:= \big\{\{b_{\{u,v\}},b_{\{u',v'\}}\} : b_{\{u,v\}},b_{\{u',v'\}} \in B \textrm{ and } \{u,v\} \neq \{u',v'\}\big\}$.
    \item $E_3:= \big\{\{c_{(u,v)},c_{(u',v')}\} : c_{(u,v)},c_{(u',v')} \in C \textrm{ and } (u,v) \neq (u',v')\big\}$.
  \end{itemize}
  The next two sets of edges connect the dummy vertices to vertices in the selection gadgets.
  These edges will ensure that exactly $k$ vertices will be chosen from the vertex selection gadget, one for each color, and exactly $\binom{k}{2}$ vertices will be chosen from the edge selection gadget, one for each pair of colors.

  \begin{itemize}
    \item $E_4:= \big\{\{x_i,a_v\} : x_i \in X, a_v \in A, \textrm{ and }c(v) = i\big\}$.
    \item $E_5:= \big\{\{y_{\{i,j\}},b_{\{u,v\}}\} : y_{\{i,j\}} \in Y,b_{\{u,v\}} \in B, \textrm { and }
    \{c(u),c(v)\} = \{i,j\}\big\}$.
  \end{itemize}
  Finally, we add the two sets of edges which connect vertices in the selection gadgets to vertices in the validation gadget:

  \begin{itemize}
    \item $E_6:= \big\{\{a_v,c_{(v,u)}\} : a_v \in A \textrm{ and } c_{(v,u)} \in C\big\}$.
    \item $E_7:= \big\{\{b_{\{u,v\}},c_{(u',v')}\} : b_{\{u,v\}} \in B,c_{(u',v')} \in C, \{c(u),c(v)\} =
\{c(u'),c(v')\}, \mbox{ and } \{u,v\} \neq \{u',v'\}\big\}$.
  \end{itemize}

  Setting $E(G):= \bigcup_{1 \leq i \leq 7} E_i$, and $k':=k+\binom{k}{2}$, completes the description of our construction. See Fig.~\ref{Figure: Construction} below for a illustrative example.
  Since the size of $G$ is polynomial in the size of $H$ and $k$ and~$k'$ is polynomial in $k$, this reduction is a parameterized reduction.
  Furthermore, observe that~$G$ is indeed $K_{1,4}$-free since the neighborhood of each vertex can be partitioned into at most three cliques.
  Thus, to complete our argument, we show that $(H,c)$ has a multicolored clique of size $k$ if and only if $G$ has dominating of size $k'$.
  \begin{figure}
    \begin{center}
      \ig[scale=1]{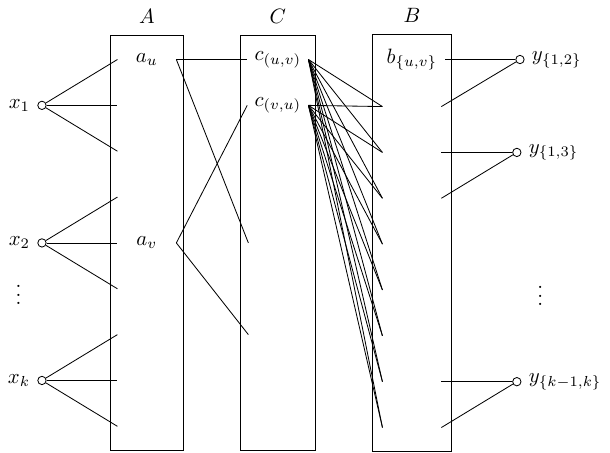}
    \end{center}
    \caption{A graphical example of the construction used in the reduction. Here vertices $u$ and $v$ are adjacent vertices in $H$ from different color classes, \ie $c(u)\neq c(v)$.
    Note that $c_{(u,v)}$ and $c_{(v,u)}$ are not adjacent to $b_{\{u,v\}}$, but are adjacent to all other vertices of $B$.}
  \label{Figure: Construction}
  \end{figure}

  Suppose that $(H,c)$ has a multicolored clique $K$ of size $k$.
  Then for any $i \in \{1,\ldots,k\}$, there is some $v \in K$ with $c(v) =i$, and for any distinct $u,v \in K$, we have $\{u,v\} \in E(H)$.
  We argue that $D:=\{a_v : v \in K\} \cup \{b_{\{u,v\}} : u,v \in K, u \neq v\}$ is a dominating set in $G$.
  Indeed, all vertices of $A$ and $B$ are dominated by some $a_v \in D$ and some $b_{\{u,v\}} \in D$.
  For any $i \in \{1,\ldots,k\}$, the dummy vertex $x_i$ is dominated by $a_v \in D$, where $v \in K$ and $c(v) = i$, and for any distinct $i,j \in \{1,\ldots,k\}$, the vertex $y_{\{i,j\}}$ is dominated by $a_{\{u,v\}} \in D$, where $u,v \in K$ and $\{c(u),c(v)\} = \{i,j\}$.
  Finally, take any vertex $c_{(u,v)} \in C$. If $u \in K$, then $a_u \in D$ dominates $c_{(u,v)}$, and otherwise, the vertex $b_{\{u',v'\}} \in D$ with $\{c(u'),c(v')\}=\{c(u),c(v)\}$  dominates $c_{(u,v)}$.
  Hence, $G$ has a dominating set of size~$k'$.

  Conversely, suppose that $G$ has a dominating set $D$ of size $k'$.
  Then it is not difficult to see that due to the dummy vertices, for each $i \in \{1,\ldots,k\}$, there is exactly one vertex $a_v \in D$ with $c(v) = i$, and for each distinct $i,j \in \{1,\ldots,k\}$, exactly one vertex $b_{\{u,v\}} \in D$ with $\{c(u),c(v)\} = \{i,j\}$.
  Furthermore, there are no other vertices in $D$. Let $K:=\{u : a_u \in D\}$.
  Observe that by the above arguments, $K$ has exactly one vertex for each color $i \in \{1,\ldots,k\}$, and so it remains to argue that~$K$ forms a clique in $H$.
  For this, it is enough to show that for any pair of distinct vertices $u,v \in K$, we must have $b_{\{u,v\}} \in D$.
  Suppose that this is not the case, and let $u$ and $v$ denote two vertices in $K$ with $b_{\{u,v\}} \notin D$.
  Let $b_{\{u',v'\}}$ be the vertex in $D$ such that $\{c(u'),c(v')\} = \{c(u),c(v)\}$.
  Then $\{u',v'\} \neq \{u,v\}$, so, \wloge $u' \neq u$ and $u' \neq v$.
  But then, by construction, the validation vertex $c_{(u',v')}$ is not dominated by $D$.
  Thus, we get that $b_{\{u,v\}} \in D$ for any pair of distinct vertices $u,v \in K$, which implies that $\{u,v\} \in E(H)$ for any pair of distinct vertices $u,v \in K$.
  Hence, $(H,c)$ has a multicolored clique of size $k$.
\end{proof}

\begin{theorem}
  \problemCDS{} on $K_{1,4}$-free graphs is \Wh[1]-hard.
\end{theorem}
\begin{proof}
  We slightly modify the construction of Theorem~\ref{thm:hardness:ds4}.
  Consider again an instance $(H,c,k)$ of \problemMCC{}, and construct the instance $(G,k')$ of \problemDS{} as in Theorem~\ref{thm:hardness:ds4}. Recall that $G$ was $K_{1,4}$-free, as the neighborhood of each vertex in $G$ is a union of at most three cliques.
  Let $A,B$ be as in that construction. Now add two new vertices $v$ and $w$, make $v$ complete to $A$ and $B$, and make $w$ adjacent to $v$.
  Call the resulting graph $G'$.
  Observe that the neighborhood of each vertex in $G'$ is still a union of at most three cliques, and thus $G'$ is $K_{1,4}$-free. 
  Moreover, any (connected) dominating set of $G'$ must contain $v$.
  Then the same arguments as in Theorem~\ref{thm:hardness:ds4} can be used to show that $G'$ has a connected dominating set of size $k'+1$ if and only if $H$ has a multicolored clique of size $k$.
\end{proof}

\subsection{Hardness of the Weighted Case on Claw-Free Graphs}
We define the weighted version of \problemDS{} as follows:

\defparproblem{\problemWDS}{$k'+K'$}{A graph $G'$, a weight function $w' : V(G) \rightarrow \mathbb{N}$, an integer $k'$, and an integer $K'$.}{Does $G'$ have a dominating set $D' \subseteq V(G')$ of size at most $k'$ such that $w(D') = \sum_{d' \in D'} w(d') \leq K'$~?}

We can similarly define \problemWCDS{}, by insisting that $G'[D']$ is connected. We show that both problems are unlikely to be fixed-parameter tractable on co-bipartite graphs, and thus also on claw-free graphs.

\begin{theorem}
  \problemWDS{} and \problemWCDS{} on co-bipartite graphs are \Wh[2]-hard.
\end{theorem}
\begin{proof}
  We give a polynomial parameter transformation from \problemDS{} on general graphs, which is known to be \Wh[2]-hard~\cite{DowneyF1992}.
  Let $(G,k)$ be an instance of \problemDS{}.
  Consider two copies $V_{1}$ and $V_{2}$ of $V(G)$, and define the graph $G'$ with vertex set $V_{1} \cup V_{2}$ and edges such that $V_{1}$ and $V_{2}$ each form a clique, and $v \in V_{1}$ is adjacent to $u \in V_{2}$ if and only if $u \in N_{G}[v]$. 
  Observe that $G'$ is indeed co-bipartite.
  Define a weight function $w'$ such that $w'(v) = 1$ for each $v \in V_{1}$ and $w'(v) = k+1$ for each $v \in V_{2}$.
  Finally, define $K' = k$ and $k' = k$.
  This completes the instance $(G',w',k',K')$ of \problemWDS{}. It suffices to observe that $(G,k)$ is a ``yes''-instance if and only if $(G',w',k',K')$ is.

  Observe that the same reduction also yields the transformation for \problemWCDS{}.
\end{proof}

\subsection{Polynomial Kernel for \problemCDS{} on Claw-Free Graphs Unlikely}
In contrast to \problemDS{}, which has a polynomial kernel on claw-free graphs, \problemCDS{} is unlikely to have a polynomial kernel on claw-free graphs. In fact, we show an even stronger result.

We need to define the following problems:
\defparproblem{\problemCVC}{$k$}{A graph $G$ and an integer $k$.}{Does $G$ have a \emph{connected vertex cover} (a set $C \subseteq V(G)$ such that $G[C]$ is connected, and $u\in C$ or $v \in C$ for each edge $\{u,v\} \in E(G)$) of size at most $k$?}

\defparproblem{\problemCEDS}{$k$}{A graph $G$ and an integer $k$.}{Does $G$ have a \emph{connected edge dominating set} (a set $D \subseteq E(G)$ such that $G[D]$ $=(\{v \mid \{u,v\} \in D\},D)$ is connected and $D$ contains an edge incident on $u$ or $v$ for any edge $\{u,v\} \in E(G)$) of size at most $k$?}

\begin{theorem}
\label{hardness:cds:nopoly}
  \problemCDS{} on line graphs has no polynomial kernel, unless the polynomial hierarchy collapses to the third level.
\end{theorem}
\begin{proof}
  We provide a polynomial parameter transformation from \problemCVC{}, which is known to not have a polynomial kernel on general graphs unless the polynomial hierarchy collapses to the third level~\cite{DomLS2009}.

  As an intermediate problem, consider \problemCEDS{}.
  To see a polynomial parameter transformation from \problemCVC{} to \problemCEDS{}, we claim that $G$ has a connected vertex cover of size at most~$k$ if and only if $G$ has a connected edge dominating set of size at most $k-1$.
  Indeed, if $C$ is a connected vertex cover of $G$, then the edges of any spanning tree of $G[C]$ form a connected edge dominating set of $G$ of size $|C|-1$.
  Conversely, if~$D$ is a connected edge dominating set of $G$, then $V(G[D])$ is a connected vertex cover of $G$ of size at most $|D| +1$.

  The polynomial parameter transformation from \problemCEDS{} to \problemCDS{} on line graphs is straightforward, as $G$ has a connected edge dominating set of size at most $k$ if and only if $L(G)$ has a connected dominating set of size at most $k$.
\end{proof}
Note that the same proof also implies that \problemCDS{} is \NPh-hard on line graphs, even on line graphs of planar graphs of maximum degree four, as \problemCVC{} is \NPh-hard already on planar graphs of maximum degree four~\cite{GareyJ1977}.
We remark here that \NPh-hardness of \problemCDS{} on line graphs was independently proven by Munaro~\cite[Lemma~45]{Munaro2017} and that \problemCVC{} is also \NPh-hard on line graphs~\cite[Lemma~36]{Munaro2017}.

\subsection{Subexponential-Time Algorithms on Claw-Free Graphs Unlikely}
We now show that it will be hard to substantially improve our parameterized algorithms on claw-free graphs under the so-called \emph{Exponential Time Hypothesis}~\cite{ImpagliazzoPZ2001,FominK2010}.
We require it in the following formulation: no algorithm can decide instances of \problemSAT{} in $2^{o(m)}$ time, where $m$ is the number of clauses of the instance.

\begin{theorem}
Unless the Exponential Time Hypothesis fails, no algorithm can decide \problemEDS{} instances $(G,k)$ in $2^{o(k)}\poly(|G|)$ time, even if $G$ is bipartite.
\end{theorem}
\begin{proof}
It suffices to observe that the \NPh-hardness reduction for \problemEDS{} on bipartite graphs by Yannakakis and Gavril~\cite[Theorem~2]{YannakakisG1980} implies a transformation from an instance of \problemSAT{} with $m$ clauses to an instance of \problemEDS{} where $G$ is bipartite and $k = O(m)$.
\end{proof}

\begin{corollary}
Unless the Exponential Time Hypothesis fails, no algorithm can decide \problemDS{} instances $(G,k)$ in $2^{o(k)}\poly(|G|)$ time, even if $G$ is the line graph of a bipartite graph.
\end{corollary}

\begin{theorem}
Unless the Exponential Time Hypothesis fails, no algorithm can decide \problemCEDS{} instances $(G,k)$ in $2^{o(k)}\poly(|G|)$ time.
\end{theorem}
\begin{proof}
  We observe that the \NPh-hardness reduction for \problemVC{} by Garey and Johnson~\cite[Theorem~3.3]{GareyJ1979} implies a transformation from an instance of \problemSAT{} with $m$ clauses to an instance $(G,k)$ of \problemVC{} where $k = O(m)$.
  Let $G'$ be the graph obtained from $G$ by adding a new vertex $v$ that is adjacent to all vertices of $G$, and then adding a new vertex that is only adjacent to $v$.
  Then $(G,k)$ is a ``yes''-instance of \problemVC{} if and only if $(G',k+1)$ is a ``yes''-instance of \problemCVC{}.
  It remains to observe that the polynomial parameter transformation from \problemCVC{} to \problemCEDS{} of Theorem~\ref{hardness:cds:nopoly} is in fact linear.
  Hence, we obtain a transformation from an instance of \problemSAT{} with $m$ clauses to an instance of \problemCEDS{} with $k = O(m)$.
\end{proof}

\begin{corollary}
Unless the Exponential Time Hypothesis fails, no algorithm can decide \problemCDS{} instances $(G,k)$ in $2^{o(k)}\poly(|G|)$ time, even if $G$ is a line graph.
\end{corollary}


\begin{thebibliography}{99}

\bibitem{AllanL1978}
Allan, R.B., Laskar, R., ``On Domination and Independent Domination Numbers of a Graph'', \emph{Discrete Mathematics} 23 (1978), pp.~73--76.

\bibitem{BjorklundHKK2007}
Bj\"{o}rklund, A., Husfeldt, T., Kaski, P., Koivisto, M., ``Fourier meets M\"{o}bius: Fast Subset Convolution'', \emph{STOC 2007}, ACM, 2007, pp.~67--74.

\bibitem{Chang1998}
Chang, M.-W., ``Efficient Algorithms for the Domination Problems on Interval and Circular-Arc Graphs'', \emph{SIAM Journal on Computing} 27:6 (1998), pp.~1671--1694.

\bibitem{CharbitMR2010}
Charbit, P., de Montgolfier, F., Raffinot, M., ``Linear Time Split Decomposition Revisited'', \emph{SIAM Journal on Discrete Mathematics} 26:2 (2012), pp.~499--514.

\bibitem{CharbitHTV2010}
Charbit, P., Habib, M., Trotignon, N., Vu\v{s}kovi\'{c}, K., ``Detecting 2-joins faster'', \emph{Journal of Discrete Algorithms} 17 (2012), pp.~60--66.

\bibitem{ChlebikC2008}
Chleb{\'\i}k, M., Chleb{\'\i}kov{\'a}, J., ``Approximation hardness of dominating set problems in bounded degree graphs'', \emph{Information and Computation} 206 (2008), pp.~1264--1275.

\bibitem{ChudnovskyF2010}
Chudnovsky, M., Fradkin, A.O., ``An approximate version of Hadwiger's conjecture for claw-free graphs'', \emph{Journal of Graph Theory} 63:4 (2010), pp.~259--278.

\bibitem{ChudnovskyK2012}
Chudnovsky, M., King, A.D., ``Optimal antithickenings of claw-free trigraphs'', arXiv:1110.5111 [cs.DM].

\bibitem{ChudnovskyS2005}
Chudnovsky, M., Seymour, P.D., ``The Structure of Claw-Free Graphs'' in \emph{London Mathematical Society Lecture Note Series: Surveys in Combinatorics}, Volume 327, 2005, pp.~153--171.

\bibitem{ChudnovskyS2007-1}
Chudnovsky, M., Seymour, P.D., ``Claw-free graphs. I. Orientable prismatic graphs'', \emph{Journal of Combinatorial Theory, Series B} 97:6 (2007), pp.~1373--1410.

\bibitem{ChudnovskyS2008-2}
Chudnovsky, M., Seymour, P.D., ``Claw-free graphs. II. Non-orientable prismatic graphs'', \emph{Journal of Combinatorial Theory, Series B} 98:2 (2008), pp.~249--290.

\bibitem{ChudnovskyS2008-3}
Chudnovsky, M., Seymour, P.D., ``Claw-free graphs. III. Circular interval graphs'', \emph{Journal of Combinatorial Theory, Series B} 98:4 (2008), pp.~812--834.

\bibitem{ChudnovskyS2008-4}
Chudnovsky, M., Seymour, P.D., ``Claw-free graphs. IV. Decomposition theorem'', \emph{Journal of Combinatorial Theory, Series B} 98:5 (2008), pp.~839--938.

\bibitem{ChudnovskyS2008-5}
Chudnovsky, M., Seymour, P.D., ``Claw-free graphs. V. Global structure'', \emph{Journal of Combinatorial Theory, Series B} 98:6 (2008), pp.~1373--1410.

\bibitem{ChudnovskyS2010-6}
Chudnovsky, M., Seymour, P.D., ``Claw-free graphs. VI. Coloring, \emph{Journal of Combinatorial Theory, Series B} 100:6 (2010), pp.~560--572.

\bibitem{ChudnovskyS2012-7}
Chudnovsky, M., Seymour, P.D., ``Claw-free graphs. VII. Quasi-line graphs'', \emph{Journal of Combinatorial Theory, Series B} 102:6 (2012), pp.~1267--1294.

\bibitem{ChudnovskyO2007}
Chudnovsky, M., Ovetsky, A., ``Coloring Quasi-Line Graphs'', \emph{Journal of Graph Theory} 54:1 (2007), pp.~41--50.

\bibitem{Cygan2012}
Cygan, M., ``Deterministic Parameterized Connected Vertex Cover'' in Fomin, F.V., Kaski, P.\ (eds.) \emph{Proc.\ SWAT 2012}, LNCS 7357, Springer-Verlag, Berlin, 2012, pp.~95--106.

\bibitem{cygan-book}
Cygan, M., Fomin, F.V., Kowalik, {\L}., Lokshtanov, D., Marx, D., Pilipczuk, M., Pilipczuk, M., Saurabh, S., \emph{Parameterized Algorithms}, Springer, 2015.

\bibitem{CyganPPPW2012}
Cygan, M., Philip, G., Pilipczuk, M., Pilipczuk, M., Wojtaszczyk, J.O., ``Dominating Set is Fixed Parameter Tractable in Claw-free Graphs'', \emph{Theoretical Computer Science} 412:50 (2012), pp~.6982--7000.

\bibitem{CornuejolsC1985}
Cornu\'{e}jols, G., Cunningham, W.H., ``Composition for perfect graphs'', \emph{Discrete Mathematics} 55:3 (1985), pp.~245--254.

\bibitem{Dahlhaus2004}
Dahlhaus, E., ``Parallel Algorithms for Hierarchical Clustering and Applications to Split Decomposition and Parity Graph Recognition'', \emph{Journal of Algorithms} 36:2 (2000), pp.~205--240.

\bibitem{Damaschke2006}
Damaschke, P., ``Parameterized Enumeration, Transversals, and Imperfect Phylogeny Reconstruction'', \emph{Theoretical Computer Science} 351:3 (2006), pp.~337--350.

\bibitem{DengHH1996}
Deng, X., Hell, P., Huang, J., ``Linear-Time Representation Algorithms for Proper Circular-Arc Graphs and Proper Interval Graphs'', \emph{SIAM Journal on Computing} 25:2 (1996), pp.~390--403.

\bibitem{DomLS2009}
Dom, M., Lokshtanov, D., Saurabh, S., ``Incompressibility through Colors and IDs'' in Albers, S., Marchetti-Spaccamela, A., Matias, Y., Nikoletseas, S., Thomas, W.\ (eds.) \emph{Proc.\ ICALP 2009}, Springer-Verlag, Berlin, 2009, pp.~378--389.

\bibitem{DowneyF1992}
Downey, R.G., Fellows, M.R., ``Fixed-parameter tractability and completeness'', \emph{Congressus Numerantium} 87 (1992), pp.~161--178.

\bibitem{DowneyF1999}
Downey, R.G., Fellows, M.R., \emph{Parameterized Complexity}, Springer, New York, 1999.

\bibitem{DowneyFellows2013}
Downey, R.G., Fellows, M.R., \emph{Fundamentals of Parameterized Complexity Theory}, Springer, 2013.

\bibitem{Edmonds1965}
Edmonds, J., ``Paths, trees, and flowers'', \emph{Canadian Journal of Mathematics} 17 (1965), pp.~449--467.

\bibitem{EisenbrandOSV2008}
Eisenbrand, F., Oriolo, G., Stauffer, G., Ventura, P., ``The Stable Set Polytope of Quasi-Line Graphs'', \emph{Combinatorica} 28:1 (2008), pp.~45--67.

\bibitem{FaenzaOS2014}
Faenza, Y., Oriolo, G., Stauffer, G., ``Solving the Weighted Stable Set Problem in Claw-Free Graphs via Decomposition'', \emph{Journal of the ACM} 61:4 (2014), pp.~20:1--20:41.

\bibitem{FaenzaOS2011}
Faenza, Y., Oriolo, G., Snels, C., ``A fast algorithm to remove proper and homogeneous pairs of cliques (while preserving some graph invariants)'', \emph{Operations Research Letters} 39:3 (2011), pp.~213--217.

\bibitem{FaudreeFR1997}
Faudree, R., Flandrin, E., Ryj\'{a}\v{c}ek, Z., ``Claw-free graphs -- A survey'', \emph{Discrete Mathematics} 164:1-3 (1997), pp.~87--147.

\bibitem{Feige1998}
Feige, U., ``A Threshold of $\ln n$ for Approximating Set Cover'', \emph{Journal of the ACM} 45:4 (1998), pp.~634--652.

\bibitem{FellowsHRV2009}
Fellows, M.R., Hermelin, D., Rosamond, F.A., Vialette, S., ``On the parameterized complexity of multiple-interval graph problems'', \emph{Theoretical Computer Science} 410:1 (2009), pp.~53-Ð61.

\bibitem{Fernau2006}
Fernau, H., ``Edge Dominating Set: Efficient Enumeration-Based Exact Algorithms'' in Bodlaender, H.L., Langston, M.A.\ (eds.) \emph{Proc.\ IWPEC 2006}, LNCS 4169, Springer-Verlag, Berlin, 2006, pp.~142--153.

\bibitem{FominK2010}
Fomin, F.V., Kratsch, D., \emph{Exact Exponential Algorithms}, Texts in Theoretical Computer Science, Springer, 2010.

\bibitem{FlumGrohe2006}
Flum, J., Grohe, M., \emph{Parameterized Complexity Theory}, Springer, 2006.

\bibitem{GalluccioGV2010}
Galluccio, A., Gentile, C., Ventura, P., ``The stable set polytope of claw-free graphs with large stability number'', \emph{Electronic Notes Discrete Mathematics} 36 (2010), pp.~1025--1032.

\bibitem{GareyJ1977}
Garey, M.R., Johnson, D.S., ``The Rectilinear Steiner Tree Problem is \NPh-Complete'', \emph{SIAM Journal on Applied Mathematics} 32:4 (1977), pp.~826--834.

\bibitem{GareyJ1979}
Garey, M.R., Johnson, D.S., \emph{Computers and Intractability: A Guide to the Theory of \NPh-Completeness}, W.H.~Freeman and Company, 1979.

\bibitem{GolovachPvL2015}
Golovach, P., Paulusma, D., van Leeuwen, E.J., ``Induced Disjoint Paths in Claw-Free Graphs'', \emph{SIAM Journal on Discrete Mathematics} 29:1 (2015), pp.~348--375.

\bibitem{GuhaK1998}
Guha, S., Khuller, S., ``Approximation Algorithms for Connected Dominating Sets'', \emph{Algorithmica} 20:4 (1998), pp.\  374--387.

\bibitem{HabibMPV2000}
Habib, M., McConnell, R., Paul, C., Viennot, L., ``Lex-BFS and partition refinement, with applications to transitive orientation, interval graph recognition and consecutive ones testing'', \emph{Theoretical Computer Science} 234 (2000), pp.~59--84.

\bibitem{HabibPV1998}
Habib, M., Paul, C., Viennot, L., ``A synthesis on partition refinement: A useful routine for strings, graphs, boolean matrices and automata'' in Morvan, M., Meinel, C., Krob, D.\ (eds.) \emph{Proc.\ STACS 1998}, LNCS 1373, Springer-Verlag, Berlin, 1998, pp.~25--38.

\bibitem{HaynesHS1998a}
Haynes, T.W., Hedetniemi, S.T, Slater, P.J., \emph{Fundamentals of Domination in Graphs}, Marcel Dekker Inc., New York, 1998.

\bibitem{HaynesHS1998b}
Haynes, T.W., Hedetniemi, S.T, Slater, P.J., \emph{Domination in Graphs: Advanced Topics}, Marcel Dekker Inc., New York, 1998.

\bibitem{HermelinMvL2014}
Hermelin, D., Mnich, M., van Leeuwen, E.J., ``Parameterized Complexity of Induced Graph Matching on Claw-Free Graphs'', \emph{Algorithmica} 70:3 (2014), pp.~513--560.

\bibitem{HsuT1991}
Hsu, W.-L., Tsai, K.-H., ``Linear-time algorithms on circular arc graphs'', \emph{Information Processing Letters}  40 (November 1991), pp.~123--129.

\bibitem{ImpagliazzoPZ2001}
Impagliazzo, R., Paturi, R., Zane, F., ``Which problems have strongly exponential complexity?'', \emph{Journal of Computer and System Sciences} 63:4 (2001), pp.~512--530.

\bibitem{ItaiR1978}
Itai, A., Rodeh, M., ``Finding a Minimum Circuit in a Graph'', \emph{SIAM Journal on Computing} 7:4 (1978), pp.~413--423.

\bibitem{IwaideN2015}
Iwaide, K., Nagamochi, H., ``An Improved Algorithm for Parameterized Edge Dominating Set Problem'' in Sohel Rahman, M., Tomita, E.\ (eds.) \emph{Proc.\ WALCOM 2015}, LNCS 8973, Springer-Verlag, Berlin, 2015, pp.~234--245.

\bibitem{Johnson1974}
Johnson, D.S., ``Approximation algorithms for combinatorial problems'', \emph{Journal of Computer and System Sciences} 9:3 (1974), pp.\ 256--278.

\bibitem{Karp1972}
Karp, R.M., ``Reducibility Among Combinatorial Problems'' in Miller, R.E., Thatcher, J.W.\ (eds.) \emph{Complexity of Computer Computations}, Plenum, New York, pp.~85--103.

\bibitem{King2009}
King, A.D., \emph{Claw-free graphs and two conjectures on $\omega$, $\Delta$, and $\chi$}, PhD Thesis, McGill University, Montreal, Canada, 2009.

\bibitem{KingR2008}
King, A.D., Reed, B.A., ``Bounding $\chi$ in Terms of $\omega$ and $\Delta$ for Quasi-Line Graphs'', \emph{Journal of Graph Theory} 59:3 (2008), pp.~215--228.

\bibitem{KingR2015}
King, A.D., Reed, B.A., ``Claw-Free Graphs, Skeletal Graphs, and a Stronger Conjecture on $\omega$, $\Delta$, and $\chi$'', \emph{Journal of Graph Theory} 78:3 (2015), pp.~157--194.

\bibitem{KloksKM2000}
Kloks, T., Kratsch, D., M{\"u}ller, H., ``Finding and counting small induced subgraphs efficiently'', \emph{Information Processing Letters} 74:3--4, pp.~115--121.

\bibitem{LeGall2014}
Le Gall, F., ``Powers of tensors and fast matrix multiplication'' in Nabeshima, K., Nagasaka, K., Winkler, F., and Sz{\'a}nt{\'o}, A.\ (eds.) \emph{Proc.\ ISSAC 2014}, ACM, 2014, pp.~296--303.

\bibitem{LinSS2013}
Lin, M.C., Soulignac, F.J., Szwarcfiter, J.L., ``Normal Helly circular-arc graphs and its subclasses'', \emph{Discrete Applied Mathematics} 161 (2013), pp.~1037--1059.

\bibitem{LoogesO1993}
Looges, P.J., Olariu, S., ``Optimal greedy algorithms for indifference graphs'', \emph{Computers \& Mathematics with Applications} 25:7 (1993), pp.~15--25.

\bibitem{LuY2008}
Lu, H.-I., Yeh, C.-C., ``Balanced Parentheses Strike Back'', \emph{ACM Transactions on Algorithms} 4:3 (June 2008), Article~28.

\bibitem{MaratheBHRR1995}
Marathe, M.V., Breu, H., Hunt III, H.B., Ravi, S.S., Rosenkrantz, D.J., ``Simple Heuristics for Unit Disk Graphs'', \emph{Networks} 25 (1995), pp.~59--68.

\bibitem{MartinPvL2018}
Martin, B., Paulusma, D., van Leeuwen, E.J., ``Disconnected Cuts in Claw-free Graphs'' in \emph{Proc.\ ESA 2018}, LIPIcs 112, Schloss Dagstuhl, Daghstuhl, Germany, 2018, pp.~61:1--61:14.

\bibitem{MartinPvL2018-arxiv}
Martin, B., Paulusma, D., van Leeuwen, E.J., ``Disconnected Cuts in Claw-free Graphs'', arXiv:1803.03663 [cs.DS].

\bibitem{Marx2006}
Marx, D., ``Parameterized Complexity of Independence and Domination on Geometric Graphs'' in Bodlaender, H.L., Langston, M.A.\ (eds.) \emph{Proc.\ IWPEC 2006}, LNCS 4169, Springer-Verlag, Berlin, 2006, pp.~154--165.

\bibitem{Minty1980}
Minty, G.J., ``On maximal independent sets of vertices in claw-free graphs'', \emph{Journal of Combinatorial Theory, Series B} 28:3 (1980), pp.~284--304.

\bibitem{MisraPRS2010}
Misra, N., Philip, G., Raman, V., Saurabh, S., ``The effect of girth on the kernelization complexity of Connected Dominating Set'' in Lodaya, K., Mahajan, M.\ (eds.) \emph{Proc.\ FSTTCS 2010}, LIPIcs 8, Schloss Dagstuhl, Daghstuhl, Germany, 2010, pp.~96--107.

\bibitem{Munaro2017}
Munaro, A., ``Boundary classes for graph problems involving non-local properties'', \emph{Theoretical Computer Science} 692 (2017), pp.~46--71.

\bibitem{NakamuraT2001}
Nakamura, D., Tamura, A., ``A revision of Minty's algorithm for finding a maximum weighted stable set of a claw-free graph'', \emph{Journal of the Operations Research Society of Japan} 44:2 (2001), pp.~194--204.

\bibitem{Nederlof2013}
Nederlof, J., ``Fast Polynomial-Space Algorithms Using Inclusion-Exclusion'', \emph{Algorithmica} 65:4 (April 2013), pp.~868--884.

\bibitem{NobiliS2015}
Nobili, P., Sassano, A., ``An $O(n^{2}\log(n))$ algorithm for the weighted stable set problem in claw-free graphs'', arXiv:1501.05775 [cs.DM], 2015.

\bibitem{OrioloPS2008}
Oriolo, G., Pietropaoli, U., Stauffer, G., ``A New Algorithm for the Maximum Weighted Stable Set Problem in Claw-Free Graphs'' in Lodi, A., Panconesi, A., Rinaldi, G.\ (eds.) \emph{Proc.\ IPCO 2008}, LNCS 5035, Springer-Verlag, Berlin, 2008, pp.~77--96.

\bibitem{PhilipRS2009}
Philip, G., Raman, V., Sikdar, S., ``Solving Dominating Set in Larger Classes of Graphs: FPT Algorithms and Polynomial Kernels'' in Fiat, A., Sanders, P.\ (eds.) \emph{Proc.\ ESA 2009}, LNCS 5757, Springer-Verlag, Berlin, 2009, pp.~694--705.

\bibitem{Plesnik1999}
Plesn\'{\i}k, J., ``Constrained weighted matchings and edge coverings in graphs'', \emph{Discrete Applied Mathematics} 92:2-3 (June 1999), pp.~229--241.

\bibitem{Roussopoulos1973}
Roussopoulos, N.D., ``A $\max\{m,n\}$ algorithm for determining the graph $H$ from its line graph $G$'', \emph{Information Processing Letters} 2:4 (1973), pp.~108--112.

\bibitem{Sbihi1980}
Sbihi, N., ``Algorithme de recherche d'un stable de cardinalit\'{e} maximum dans un graphe sans \'{e}toile'', \emph{Discrete Mathematics} 29:1 (1980), pp.~53--76.

\bibitem{Stauffer2011}
Stauffer, G., \emph{personal communication}.

\bibitem{Tarjan1975}
Tarjan, R.E., ``Efficiency of a Good But Not Linear Set Union Algorithm'', \emph{Journal of the ACM}, 22:2 (April 1975), pp.~215--225.

\bibitem{Williams2012}
Williams, V.V., ``Multiplying matrices faster than Coppersmith-Winograd'' in Karloff, H.J.\ and Pitassi, T.\ (eds.) \emph{Proc.\ STOC 2012}, ACM, 2012, pp.~887--898.

\bibitem{XiaoKP2013}
Xiao, M., Kloks, T., Poon, S., ``New parameterized algorithms for the edge dominating set problem'', \emph{Theoretical Computer Science} 511 (2013), pp.~147--158.

\bibitem{YannakakisG1980}
Yannakakis, M., Gavril, F., ``Edge Dominating Sets in Graphs'', \emph{SIAM Journal on Applied Mathematics} 38:3 (1980), pp.~364--372.

\end{thebibliography}
\end{document}